\tikzstyle{block}=[draw opacity=0.7,line width=1.4cm]
\pgfplotsset{width=10cm,compat=1.9}
\providecommand{\U}[1]{\protect\rule{.1in}{.1in}}
\newtheorem{theorem}{Theorem}
\newtheorem{claim}{Claim}
\newtheorem{definition}{Definition}
\newtheorem{lemma}{Lemma}
\newtheorem{corollary}{Corollary}
\newtheorem{exam}{Example}
\newcommand{\bN}{\mathbb{N}}
\newcommand{\bR}{\mathbb{R}}
\newcommand{\bRp}{\mathbb{R}_+}
\newcommand{\cD}{\mathcal{D}}
\newcommand{\cP}{\mathcal{P}}
\newcommand{\lr}{\mathcal{L}^0(\mathbb{R})}
\newcommand{\lrp}{\mathcal{L}^0(\mathbb{R}_+)}
\newcommand{\lxone}{\mathcal{L}^0(X_1)}
\newcommand{\lxtwo}{\mathcal{L}^0(X_2)}
\newcommand{\lxi}{\mathcal{L}^0(X_i)}
\newcommand{\lxj}{\mathcal{L}^0(X_j)}
\newcommand{\lxmi}{\mathcal{L}^0(X_{-i})}
\newcommand{\cone}{\overline{c}_1}
\newcommand{\ctwo}{\overline{c}_2}
\newcommand{\ci}{\overline{c}_i}
\newcommand{\lci}{\underline{c}_i}
\newcommand{\lcone}{\underline{c}_1}
\newcommand{\lctwo}{\underline{c}_2}
\DeclareMathOperator{\supp}{supp}
\definecolor{Redish}{HTML}{890F0F}
\newcommand*{\fullref}[1]{\hyperref[{#1}]{\ref{#1} \nameref{#1}}} 
\newcommand*{\figref}[1]{\hyperref[{#1}]{Figure \ref{#1}}}
\newcommand*{\secref}[1]{\hyperref[{#1}]{Section \ref{#1}}}
\newcommand*{\tabref}[1]{\hyperref[{#1}]{Table \ref{#1}}}
\newcommand*{\Equationref}[1]{\hyperref[{#1}]{Equation \ref{#1}}}
\newcommand*{\equationref}[1]{\hyperref[{#1}]{equation \ref{#1}}}
\newcommand*{\propref}[1]{\hyperref[{#1}]{Proposition \ref{#1}}}
\newcommand*{\defref}[1]{\hyperref[{#1}]{Definition \ref{#1}}}
\newcommand*{\lemmaref}[1]{\hyperref[{#1}]{Lemma \ref{#1}}}
\newcommand*{\thmref}[1]{\hyperref[{#1}]{Theorem \ref{#1}}}
\newcommand*{\cororef}[1]{\hyperref[{#1}]{Corollary \ref{#1}}}
\def\sym#1{\ifmmode^{#1}\else\(^{#1}\)\fi}
\newcommand{\up}[1]{\overline{{#1}}}
\newcommand{\dw}[1]{\underline{{#1}}}
\begin{document}
\baselineskip=20pt

\title{\Large \textsc{A Theory of Choice Bracketing under Risk}\thanks{ I am  deeply indebted to Faruk Gul, Wolfgang Pesendorfer and Pietro Ortoleva for their invaluable advice,   encouragement and support throughout this project.  I have greatly benefited from discussions with Xiaosheng Mu and Rui Tang. I also thank Roland B\'{e}nabou, 
Modibo Camara, Sylvain Chassang, Xiaoyu Cheng, Francesco Fabbri, Shaowei Ke, Shengwu Li, Alessandro Lizzeri, Dan McGee, Lasse Mononen, Evgenii Safonov, Ludvig Sinander, Jo\~{a}o Thereze, Can Urgan, Leeat Yariv and seminar participants at Princeton Microeconomic Theory Student Lunch Seminar for  helpful comments and discussions.}}

\author{ Mu Zhang \thanks{Department of Economics, Princeton University, \href{mailto:muz@princeton.edu}{muz@princeton.edu}}
}
\date{August 24, 2021}

\maketitle

\begin{abstract}
\baselineskip=20pt
Aggregating risks from multiple sources can be complex and demanding,  and decision makers usually adopt heuristics to simplify the evaluation process.  This paper axiomatizes two closed related and yet different heuristics, narrow bracketing and correlation neglect, by relaxing the independence axiom in the expected utility theory. The flexibility of our framework allows for applications in various economic problems. First, our model can explain the experimental evidence of narrow bracketing over monetary gambles. Second, when one source represents background risk, we can accommodate \cite{ecta2000calibration}'s critique and explain risk aversion over small gambles. Finally, when different sources represent consumptions in different periods, we unify three seemingly distinct models of time preferences and propose a novel model that simultaneously satisfies indifference to temporal resolution of uncertainty, separation of time and risk preferences, and recursivity in the domain of lotteries. As a direct application to macroeconomics and finance, we provide an alternative to \cite{ecta1989EZ} which avoids the unreasonably high timing premium discussed in \cite*{aer2014time_premium}.\bigskip

 
\end{abstract}

\newpage



\section{Introduction}\label{section_introduction}

Decision makers in the real world usually face multiple risky choice problems. For instance, an investor might need to take care of her investment accounts simultaneously in different financial markets, including stocks, bonds, and cryptocurrencies. One implicit assumption of the long-standing focus on single choice problems in economics is that agents can rationally aggregate and assess risks and consequences in multiple sources. However,  multi-source risk is naturally more complex and challenging than single-source risk  and decision makers usually adopt heuristics to simplify the evaluation process.  In this paper, we will focus on two such heuristics, narrow bracketing and correlation neglect.

 Narrow bracketing, formalized by \cite{MS1985mental} and \cite*{JRU1999choice}, describes the situation where a decision maker (DM) faced with multiple choice problems tends to choose an option in each decision without full regard to other decisions. This is a simplifying heuristic as searching for a local optimum is  less costly than searching for a global optimum. As the building block of many behavioral models, narrow bracketing helps to explain financial anomalies like the equity premium puzzle \citep{qje1995myopic} and the stock market participation puzzle \citep*{aer2006framing}. It can also make a complex model tractable by assuming that agents optimize each decision in isolation \citep*{BJW2020prospect}.   Besides direct experimental evidence (\citealp{S1981framing}, \citealp{aer2009narrow}, \citealp{EF2020revealNB}), narrow bracketing itself also serves as an important and implicit assumption in many experiments.  For example, show-up fees are ubiquitous but almost no experiments on choices under risk take them into account when estimating the subjects' risk attitudes.  Also, evidence on loss aversion is only valid when we ignore all wealth accrued from decisions outside of the laboratory.

The second heuristic,  correlation neglect, describes the tendency of agents to ignore the interdependence among different decision problems and treat them as if they are independent. This simplifies the decision process since only marginal distributions need to be considered. Experimental evidence of correlation neglect has been found in various economic scenarios including belief formation \citep{res2019correlation}, portfolio allocation \citep{sej2009correlation, EW2016correlation} and school choice \citep*{RST2020correlation}. Correlation neglect is also an important element in many behavioral models. For instance, \cite{aer2015overconfidence} uses it as the micro-foundation of overconfidence in political behavior.

Despite the popularity of these heuristics in behavioral and experimental economics, they have received little attention in the choice-theoretic literature. One possible reason is that  they are typically interpreted as  behavioral or ``irrational'' biases and supposed to deviate drastically from the standard framework for choices under risk, since, for example, narrow bracketing can induce a violation of first order stochastic dominance \citep{aer2009narrow}. Also,  in many applications, narrow bracketing and correlation neglect are confounded with each other, or other behavioral factors like loss aversion and reference dependence.  However, recent literature urges for a better understanding of these two heuristics. Take narrow bracketing as an example.   Empirically, through a novel revealed preference test,  \cite{EF2020revealNB} find that most subjects are best described as either narrow or broad bracketing, even if intermediate cases are allowed. They suggest that narrow bracketing might be better viewed as a ``heuristic'' instead of a ``bias'', and it may occur when agents ``are unaware of how to broadly bracket, or or are unaware that broad bracketing can lead to notably higher payoffs, or choose to employ  to simplify their decision-making''. Theoretically, based on an impossibility result, \cite*{mpst2020background} suggest that ``theories that do not account for narrow framing $\cdots$ cannot explain commonly observed choices among risky alternatives.''

To our best knowledge, this paper is the first to provide a choice-theoretic foundation for narrow bracketing and correlation neglect as simplifying heuristics under multi-source risks.  We consider the preference of a DM over lotteries of two-dimensional outcome profiles $\mathcal{P}$. An outcome profile can be interpreted as the consequences of two decision problems,  such as simultaneous monetary gambles, intertemporal choices and consumption choices involving multiple goods. We will call each dimension  a {\it source} of risk and  the marginal distribution in some source as a {\it marginal lottery}. We start with the benchmark where the preference admits an expected utility (EU) representation, which is characterized by the von Neumann–Morgenstern (vNM) independence axiom. Then we axiomatize narrow bracketing and correlation neglect by relaxing the independence axiom.

The axiomatic approach has the following advantages. First, it provides a unified framework to study and compare the heuristics. For instance, besides the expected utility benchmark, our characterization theorem allows for models with either only narrow bracketing, or only correlation neglect, or both of them. Then the differences of behaviors induced by those models can be exactly attributed to the two heuristics. Second, since the two heuristics can be captured by intuitive and simple deviations from the expected utility benchmark, we argue that narrow bracketing and correlation neglect are not more behavioral or ``irrational'' than other commonly used non-EU models in the literature like the certainty effect and the (cumulative) prospect theory. This suggests that they might deserve more attention in both theoretical and applied works. Actually, one can even replace EU with any desired non-EU model as the benchmark and study the interaction of narrow bracketing and correlation neglect with other behavioral factors. Finally, as we will show later,  the axiomatic framework is so flexible  that it can be applied to various choice domains.

 In our model, narrow bracketing and correlation neglect are closed related heuristics  but they differ in the following sense. A narrow bracketer acts if she can perceive the correlation among different sources in each lottery correctly, but she is optimizing in one or both sources in isolation. In contrast, a DM who ignores correlation understands how to aggregate outcomes and optimizes globally,  but she misperceives the interdependence of risks in difference sources.

Our main results are two representation theorems. We first assume Axiom Correlation Neglect, where the DM always ignores the correlation among two sources of risk. This is consistent with the experimental designs for testing narrow bracketing where risks in different decision problems are resolved independently (\citealp{aer2009narrow}, \citealp{EF2020revealNB}). We call a lottery with independent marginals  a {\it product lottery} and denote the set of all product lotteries as $\hat{\cP}\subset \cP$. Correlation neglect implies that we can just focus on the preference restricted to the set of product lotteries. We start with the  {\it EU with correlation neglect (EU-CN)} model where the DM does not narrowly bracket risks.
$$V^{EU-CN}(P)  = \sum_{x,y}w(x,y)P_1(x)P_2(y), ~\forall P\in {\cP}.$$

Here, $w$ represents the DM's preference over deterministic outcomes. The only difference between EU-CN and EU is that the DM uses the product of the marginal distributions in the calculation, instead of the correct joint distribution. As a result, she can rationally aggregate risks if they are independent across different sources, but she cannot appreciate the correlation.

By comparison, a DM exhibits narrow bracketing will evaluate the marginal lotteries in isolation by first taking the certainty equivalents of them and then evaluating the vector of certainty equivalents. The corresponding {\it Narrow Bracketing (NB)} representation is 
$$V^{NB}(P)  =w(CE_{v_1}(P_1),CE_{v_2}(P_2)), ~\forall P\in {\cP}.$$

We also consider the intermediate case of narrow bracketing where the DM only narrowly brackets risks in one source instead both. Consider the intertemporal interpretation where source $1$ represents today and source $2$ represents tomorrow and suppose that the DM only narrowly brackets tomorrow's risks. The model is called {\it backward induction bracketing with correlation neglect (BIB-CN)} since the DM adopts the following backward
induction evaluation process. First, she reduces tomorrow's risk to its certainty equivalent. Then, she evaluates today's risk using expected utility. 
$$V^{BIB-CN}(P)  = \sum_{x} w(x,CE_{v_{2}}(P_2))P_1(x), ~\forall P\in {\cP}.$$

Symmetrically, we can consider {\it forward induction bracketing with correlation neglect (FIB-CN)} representation where the DM only narrowly brackets today's risks:
$$V^{FIB-CN}(P)  = \sum_{y} w(CE_{v_{1}}(P_1),y)P_2(y), ~\forall P\in {\cP}.$$
Our characterization result also allows for piece-wise combinations of NB and BIB-CN or FIB-CN and we call them {\it generalized backward induction bracketing with correlation neglect (GBIB-CN)} and {\it generalized forward induction bracketing with correlation neglect (GFIB-CN)}. It is easy to see that NB, BIB-CN and FIB-CN will be special cases and hence we have three classes of models under correlation neglect: EU-CN where the DM exhibits broad bracketing, GBIB-CN and GFIB-CN, where the DM  narrowly brackets risks in at least one source.

We then provide an axiomatic foundation for those representations. Besides Axiom Correlation Neglect and standard axioms including weak order,  monotonicity and  continuity, we have Axiom Weak Independence, which contains two relaxations of the vNM independence axiom on the set of product lotteries. The first relaxation says that if we fix the same marginal lottery in one source, then the independence axiom should hold in the other source. This guarantees that for any fixed background risk, the DM's behavior is consistent with the EU benchmark. The second part states that the independence axiom should hold locally in a source $i\in\{1,2\}$ where the DM does not narrowly bracket risk, that is, where the conditional preference for marginal lotteries in source $i$ \textit{does depend} on marginal lottery, or background risk, in the other source. In this way, we maintain the vNM independence axiom as general as possible and interpret narrow bracketing as one intuitive violation of it. Whenever the DM does not exhibit narrow bracketing in some source, then the independence property should hold in that source at least locally. Our first main result states that  a preference satisfies Axiom Correlation Neglect, Axiom Weak Independence and standard axioms if and only if it admits a representation among the three classes of models: EU-CN, GBIB-CN and GFIB-CN.

Our second result discards the correlation neglect assumption and allows the DM to correctly perceive the interdependence of risks in different sources. We generalize previous models to {\it backward induction bracketing (BIB)} and  {\it forward induction bracketing (FIB)}. For instance, the BIB representation is 
$$V^{BIB}(P)  = \sum_{x} w(x,CE_{v_{2}}(P_{2|x}))P_1(x), \forall P\in {\cP},$$
where $P_1$ is the marginal lottery of $P$ in source $1$ and $P_{2|x}$ is the conditional lottery of $P$ in source $2$ given outcome $x$ in source $1$. In contrast to BIB-CN, now the DM correctly perceives the correlation and uses the conditional risk in source 2 instead of the marginal risk. This functional form combines the insights of narrow bracketing and backward induction and  will be our focus in the application to time preferences. We then weaken Axiom Correlation Neglect to Axiom Correlation Sensitivity, which states that independence holds if the correlation structures are not affected by the mixture. Our characterization results \thmref{thm_BIB}
 and \cororef{coro_FIB} extend \thmref{thm_cn} by allowing for representations EU, BIB and FIB.

 Unlike many decision theory papers that only involve a single representation, our results characterize several seemingly distinct and extreme functional forms. We interpret the distinction as an important and necessary feature of our framework, since our goal is to model choice bracketing and correlation neglect as simplifying and intuitive heuristics, while many ``intermediate'' functional forms in our setup would be either complicated or hard to interpret.  For instance, our model excludes the ``partial narrow bracketing'' representation in the literature (e.g., \citealp*{aer2006framing}, \citealp{aer2009narrow}, \citealp{EF2020revealNB}), which features a weighted average of the broad bracketing representation (EU) and NB.\footnote{The NB representation in those papers differs from ours. Please refer to Section \ref{section_money} for a detailed discussion.} We will justify this exclusion in two ways. Empirically, \cite{EF2020revealNB} document that very few (around 5\%) of their  subjects are best classified to partial narrow bracketing. This suggests that incorporating such intermediate cases might not necessarily have larger explanatory power in practice. Theoretically, computing the weighted average of the broad bracketing utility function and the narrow bracketing utility function is arguably more complex and involving than computing either of them. This contradicts with our interpretation that choice bracketing should simplify, rather than complexify, the evaluation process compared to the EU benchmark.

Now we discuss applications of our model to various economic scenarios. First,  two sources of risks represent simultaneous and independent monetary gambles, then our NB model can be used to explain the experimental evidence regarding choice bracketing in \cite{S1981framing}, \cite{aer2009narrow} and \cite{EF2020revealNB}. Second, When we interpret the risk in source 1  as the background risk and the risk in source 2 as the gamble at hand, then we can study \cite{ecta2000calibration}'s critique on EU and risk aversion over small gambles following \cite{ecta2000calibration}, \cite*{aer2006framing} and \cite*{mpst2020background}. When a DM admits a NB representation, then she will ignores the effect of the background risk and hence can exhibit reasonable risk aversion over small gambles without inducing unrealistic risk aversion over large gambles.

Finally, when outcomes in different sources represent consumptions in different periods, then our framework can be used to study time preferences.  For example, the EU model includes the standard expected discounted utility model and its generalization -- the Kihlstrom-Mirman model (\citealp{KM1974}, \citealp*{DGO2020SI}). The BIB model is the counterpart  of the history-independent models in \cite{KP1978} on the set of lotteries, which are originally defined on the set of temporal lotteries that allow for risks to be resolved in different periods. The NB model is essentially the Dynamic Ordinal Certainty Equivalent (DOCE) model studied in (\citealp{ecta1978DOCE}, \citealp{selden1978DOCE}, \citealp*{wp2020DOCE}). Although the above three models have been considered as distinct ones and studied separately in the literature,\footnote{See the discussion in  \cite{ecta1989EZ}.} our representation theorem provides a unified framework for them and shows that their distinctions can be attributed to narrow bracketing and correlation neglect. This is surprising ex-ante since our analysis is based solely on simplifying heuristics to deal with multi-source risk and contains no normative properties of time preferences. The result hence reveals a deep connection between the two economic issues.

Based on BIB, We propose a novel model called KM-BIB, which satisfies various desirable normative properties such as indifference to temporal resolution of uncertainty,  separation of time and risk preferences,  recursivity in the domain of lotteries, stationarity and discounted utility when there is no risk. We identify a new connection between the empirical evidence of narrow bracketing in experiments (\citealp{aer2009narrow}) and the theoretical difficulty to satisfy ordinal dominance in recursive preferences \citep*{ecta2017recursive}. Then we focus on a special case of KM-BIB, which is an alternative to the  CRRA-CES Epstein-Zin (EZ) model \citep{ecta1989EZ}.  The CRRA-CES EZ model is built on \cite{KP1978} and  has been commonly adopted to explain many long-standing financial puzzles including the equity premium puzzle \citep{jf2004LLR} due to its separation of time and risk preferences. However, using introspection, \cite*{aer2014time_premium} argue that the parameter values  in \cite{jf2004LLR} would predict absurdly high value of early resolution of uncertainty. In contrast, when extended naturally to the infinite horizon, our alternative model is not subject to this critique without dampening the separation of time and risk preferences.

\bigskip

\noindent {\bf Related Literature.} The most closely related work to ours is \cite{V2020bracketing}, where the author simultaneously and independently develops a choice-theoretic model for choice bracketing within the expected
utility framework. In the model, each DM is endowed with a broad preference and a narrow preference, both of which are EU and can be observed or identified in the experiment. She provides two axioms to connect the two preferences by interpreting narrow bracketing as deviations from broad bracketing by ignoring correlation and changing the EU index. Our paper differ from \cite{V2020bracketing} in three aspects. First, in our framework, a DM only has one preference and we try identify where she is subject to choice bracketing and/or correlation neglect from her choice data over lotteries. Second, \cite{V2020bracketing} maintains the EU paradigm, while we interpret choice bracketing and correlation neglect as deviations from the EU benchmark. Actually, in the case with two sources, her representation of narrow bracketing lies in the intersection of our EU and NB representations. Finally, we allow for more general forms of choice bracketing and separation between choice bracketing and correlation neglect, while \cite{V2020bracketing} regards correlation neglect as an ingredient of choice bracketing.

Our paper is also related to the growing literature on explaining narrow bracketing with other factors. The first strand of literature assumes DM's limited attention to price or preference shocks in her consumption behavior. For instance, \cite{qje2020attention} show that an rationally inattentive consumer with imperfect information about the shocks would exhibit mental budgeting and naiver diversification. \cite{res2020thinking} proposes a theory of narrow thinking where the DM makes each decision with imperfect information of other decision problems. As a result, the optimization problem is equivalent to solving an incomplete information, common interest game played by multiple selves.  In contrast, we adopt an choice-theoretic approach to axiomatize choice bracketing directly and model it as a simplifying heuristic to deal with multi-source risks. Also, our model can be applied in simple settings without shocks to prices or preferences like experiments on choices under objective risk (\citealp{aer2009narrow}, \citealp{EF2020revealNB}). Hence the two approaches are complementary to each other. More recently, \cite{C2021separable} introduces the notion of computational complexity from computer science to EU with high-dimensional decisions. He shows that computational tractability requires the EU index to satisfy a slightly weaker version of additive separability and the tractable algorithm involves narrow bracketing. By comparison, we model narrow bracketing as deviations from the the EU paradigm to simplify the evaluation of risks. Actually, EU with an additive separable index lies in the intersection of our EU and NB representations.

\section{Primitives} \label{section_primitives}

Consider a two-dimensional outcome space $X=X_1\times X_2$, where $X_i$ is the set of outcomes in source $i\in \{1,2\}$.  Throughout the paper, we assume that $X_i$ is a nontrivial closed interval on the real line  that includes $0$. Formally,  for each $i=1,2$, $X_i=[\lci,\ci]\cap \bR$, where $\ci>\lci$, $0\in [\lci,\ci]$ and $\ci,\lci\in \bR\cup \{-\infty,+\infty\}$. Note that the outcome space can be either bounded or unbounded. We call $(x_1,x_2)\in X$  an {\it outcome profile} and $x_i$  the \textit{outcome} in source $i$ for $i\in \{1,2\}$. A positive outcome can be interpreted as a gain, while a negative one is a  loss.


A {\it (joint) lottery} is a probability measure on $X$ with a finite support. Denote $\cP$ as the set of all lotteries endowed with the topology of weak convergence and the standard mixture operation.  For each $i\in\{1,2\}$, we define $-i\in \{1,2\}$ where $i\neq -i$. For each lottery $P\in \mathcal{P}$, denote the {\it marginal lottery} of $P$ in source $i$  as $P_i\in \mathcal{L}^0(X_i)$ such that for each $x_i\in X_i$, $P_i(x_i) = \sum_{x_{-i}\in X_{-i}}P(x_1,x_2)$.  $P_1$ represents the marginal risk of $P$ in source $1$. Sometimes, we might also call  $p\in \mathcal{L}^0(X_1)\cup \mathcal{L}^0(X_2)$ a {\it single-source} lottery and  $P\in \cP$ a {\it multi-source} lottery. For each lottery $P\in \cP$ and marginal lottery $p\in \lxone \cup \lxtwo$, we denote $\supp(P):=\{(x_1,x_2)\in X_1\times X_2: P(x_1,x_2)>0\}$ and $\supp(p):=\{x\in X_1\cup X_2: p(x)>0\}$.  When there is no confusion, we write the degenerate marginal lottery $\delta_{x}$ as $x$ for $x\in X_1\cup X_2$.

We are especially interested in a subspace of lotteries called  {\it product lotteries} $\hat{\mathcal{P}}= \mathcal{L}^0(X_1)\times \mathcal{L}^0(X_2)\subsetneq \cP$. A  product lottery $(p,q)\in \hat{\mathcal{P}}$ is a lottery where the  marginal lotteries   $p$ and $q$ are independent from each other and hence the risks in two sources are not interdependent. To see when $\hat{\mathcal{P}}$ is relevant, note that in most experiments on  narrow bracketing (e.g., \cite{aer2009narrow} and \cite{EF2020revealNB}), the instructions would stress that risks in different monetary gambles are resolved independently. Similarly, studies involving background noises, such as \cite{ecta2000calibration}, \cite{ecta2008calibration} and \cite{mpst2020background}, typically assume that the background risk is independent from the risky decision at hand. Hence,  focusing on the domain of product lotteries $\hat{\mathcal{P}}$ is sufficient in those applications.

The primitive of our analysis is a binary relation $\succsim$ on $\cP$. We define the {\it narrow preference in source $1$} $\succsim_1$ as the restriction of $\succsim$ on $\lxone\times \{0\}$, that is, $p \succsim_1 q$ if and only if $(p,0)\succsim (q,0)$ for each $p,q\in \lxone$. Since the marginal lottery in source $2$ is fixed at $\delta_0$, the comparison of lotteries $(p,0)$ and $(q,0)$ can be interpreted as the comparison of marginal lottery $p$ and $q$ in source $1$. In this case, the DM faces the traditional one-dimensional decision problem in source 1 as if source 2 does not exist. Symmetrically, we denote $\succsim_2$ as the narrow preference in source $2$ by restricting $\succsim$ on $ \{0\}\times\lxtwo$. These notions will prove useful when we define choice bracketing in Section \ref{section_representation}.

It is worthwhile to mention that our framework can accommodate many different  economic applications, depending on our interpretations of the two sources of outcomes. For example, in lab experiments on choices under risk, they can represent money or tokens in two different gambles; in individual portfolio choice problems, they can represent account balances on the stock market and the bitcoin market respectively; in intertemporal consumption-savings problems, they can represent consumptions in two different periods. We will elaborate more on those applications in Section \ref{section_app}.

\section{Representations} \label{section_representation}
In this section, we introduce different decision rules adopted by a DM faced with two-source risk. We start with the expected utility model as the benchmark. As is discussed in the introduction, people in practice usually deviate from the benchmark systematically by adopting some simplifying heuristics. In the following we will focus on two such heuristics: choice bracketing and correlation neglect.

\subsection{Benchmark: Expected Utility}\label{section_EU}

For each function $f:X\rightarrow \bR$ or $f:X_i\rightarrow \bR$ for some $i=1,2$, we say $f$ is {\it regular} if it is continuous, strictly monotone and bounded. If the domain of $f$ is compact (i.e., when $X=[0,\cone]\times [0,\ctwo]$), then boundedness is implied by continuity and hence redundant. The definition of an expected utility representation is standard. 

\begin{definition}[EU]\label{def_EU} Let $\succsim$ be a binary relation on $\mathcal{P}$ and let $w: \mathbb{R}_+^2\rightarrow \mathbb{R}$ be a regular function. The utility index $w$ is an  {\bf expected utility (EU)} representation of $\succsim$ if $\succsim$ is represented by $V^{EU}:{\mathcal{P}}\rightarrow \mathbb{R}$, which is defined by
$$V^{EU}(P)  = \sum_{x,y} w(x,y)P(x,y).$$
\end{definition}

\subsection{Heuristic One: Choice Bracketing}\label{section_NB}
We first consider choice bracketing, where the DM might evaluate risks in different sources in isolation and hence make choices in some source without regard to lotteries in the other source.  We start with the case where decisions in both sources are made separately.  For each $i\in \{1,2\}$ and each regular function $f:X_i\rightarrow \mathbb{R}$, we denote the {\it certainty equivalent} of $p\in \mathcal{L}^0(X_i)$ under $f$ as $CE_f(p)= f^{-1}(\sum_x f(x)p(x))$. As $X_i$ is a closed interval and $f$ is strictly monotone and continuous,  the certainty equivalent is well-defined and $CE_f(p)\in X_i$ for each $p\in \lxi$. The definition of narrow bracketing representation is as follows.

 \begin{definition}[NB]\label{def_NB} Let $\succsim$ be a binary relation on $\mathcal{P}$ and let $w: X_1\times X_2\rightarrow \mathbb{R}, v_i: X_i\rightarrow \mathbb{R}, i=1,2,$ be regular functions. The tuple $(w,v_1,v_2)$ is a  {\bf (fully) narrow bracketing (NB)} representation of $\succsim$ if $\succsim$ is represented by $V^{NB}:{\mathcal{P}}\rightarrow \mathbb{R}$, which is defined by
$$V^{NB}(P)  =w(CE_{v_1}(P_1),CE_{v_2}(P_2)).$$
\end{definition}
 
Intuitively, $v_i$ is the EU index of narrow preference $\succsim_i$ for $i=1,2$ and $w$ represents the DM's preference in the absence of risk.  If $\succsim$ admits a NB representation, then the DM evaluates each lottery by first reducing the marginal lotteries in both sources to their certainty equivalents under source-sensitive utility indices. Hence it captures the idea that choices made in source $1$ are independent of alternatives in source $2$ and vice versa.

Then we consider the case where the DM adopts partial narrow bracketing and choices made in one source are independent of alternatives in the other source while the reverse fails. For each lottery $P\in \mathcal{P}$ and $x$ in the support of $P_1$, i.e., $P_1(x)>0$, we denote the conditional lottery $P_{2|x}$ as the conditional distribution of outcomes in source 2 given $x$ in source  1, which represents the conditional risk in source $2$. Formally, for each $y\in X_2$, $P_{2|x}(y)=P(x,y)/P_1(x)$. Then we say a preference admits a backward induction bracketing representation if the DM first reduces the conditional risks in source $2$ to their certainty equivalents and then evaluates the risk in source $1$.   

 \begin{definition}[BIB]\label{def_BIB} Let $\succsim$ be a binary relation on $\mathcal{P}$ and let $w: X_1\times X_2\rightarrow \mathbb{R}, v_2: X_2\rightarrow \mathbb{R}$ be regular functions. The tuple $(w,v_2)$ is a  {\bf  backward induction bracketing (BIB)} representation of $\succsim$ if $\succsim$ is represented by $V^{BIB}:{\mathcal{P}}\rightarrow \mathbb{R}$, which is defined by
$$V^{BIB}(P)  = \sum_{x} w(x,CE_{v_{2}}(P_{2|x}))P_1(x).$$
\end{definition}

In a BIB representation $(w,v_2)$, the DM adopts the following backward induction evaluation process: i) conditional on each possible outcome $x$ in source 1, the DM first evaluates the conditional risk in source 2 by replacing the conditional lottery $P_{2|x}$ with its certainty equivalent under EU index $v_2$; ii) Then the DM evaluates the risk in source 1 using EU index $w$. It is important to notice that the evaluation of conditional risks is independent of the outcome in source $1$, which turns out to be the key behavioral deviation of BIB from the EU benchmark. Actually, conditional on outcome $x$ in source $1$, if we replace $v_2$ in the BIB representation with $w(\cdot, x)$, then we will exactly get the EU representation with index $w$.  Hence, BIB captures the idea of narrowly bracketing risks in source $2$. In Section  \ref{section_app}, we will discuss the implications of BIB in time preferences and compare it with the well-known Kreps-Porteus preferences, which also admits a  backward induction interpretation.

 Symmetrically, when the DM only narrowly brackets risks in source 1, then we derive the  forward induction bracketing representation. For each $P\in \mathcal{P}$ and $y$ in the support of $P_2$,  we denote $P_{1|y}$ as the conditional distribution of outcomes in source 1 given outcome $y$ in source 2.

\begin{definition}[FIB]\label{def_FIB} Let $\succsim$ be a binary relation on $\mathcal{P}$ and let $w: X_1\times X_2\rightarrow \mathbb{R}, v_1: X_1\rightarrow \mathbb{R}$ be regular functions. The tuple $(w,v_1)$ is a  {\bf  forward induction bracketing (FIB)} representation of $\succsim$ if $\succsim$ is represented by $V^{FIB}:{\mathcal{P}}\rightarrow \mathbb{R}$, which is defined by
$$V^{FIB}(P)  = \sum_{x} w(CE_{v_{1}}(P_{1|y}),y)P_2(x).$$
\end{definition}

In applications where there is a natural order on the two sources, FIB might be more appropriate than BIB. For instance, risk in source 1 can be interpreted as the background risk or endowment risk, while risk in source 2 can be interpreted as the risk in the current decision problem such as the portfolio choice.  In an experiment on choices under risk, background risk includes show-up fees, payoffs from other rounds in the experiment and wealth outside the laboratory.

\subsection{Heuristic Two: Correlation Neglect}\label{section_CN}

In this section, we consider the second simplifying heuristic: correlation neglect, where the DM finds it difficult to deal with the correlation structure of risks in different sources and hence treats the lottery as if its marginal lotteries are independent from each other. We will introduce the counterparts of previous representations by imposing correlation neglect.

We start with a DM who is only subject to correlation neglect compared to the EU benchmark. 

\begin{definition}[EU-CN]\label{def_EU-CN} Let $\succsim$ be a binary relation on $\mathcal{P}$ and let $w:  X_1\times X_2\rightarrow \mathbb{R}$ be a regular function. The utility index $w$ is an  {\bf expected utility with correlation neglect (EU-CN)} representation of $\succsim$ if $\succsim$ is represented by $V^{EU-CN}:{\mathcal{P}}\rightarrow \mathbb{R}$, which is defined by
$$V^{EU-CN}(P)  = \sum_{x,y}w(x,y)P_1(x)P_2(y).$$
\end{definition}

The behavior of a DM with an EU-CN representation agrees with the EU benchmark on the set of product lotteries $\hat{\cP}$, but she ignores the interdependence of risks from different sources even if they are not independent.  

Now we study the interplay of choice bracketing and correlation neglect. First, it is easy to see that NB satisfies correlation neglect as the DM takes certainty equivalents of the marginal lotteries directly.  Second, suppose that the DM narrowly brackets marginal risks in source 2 after ignoring the correlation structure, then we get the following representation.

\begin{definition}[BIB-CN]\label{def_BIB-CN} Let $\succsim$ be a binary relation on $\mathcal{P}$ and let $w: X_1\times X_2\rightarrow \mathbb{R}, v_2: X_2\rightarrow \mathbb{R}$ be regular functions. The tuple $(w,v_2)$ is a  {\bf  backward induction bracketing with correlation neglect (BIB-CN)} representation of $\succsim$ if $\succsim$ is represented by $V^{BIB-CN}:{\mathcal{P}}\rightarrow \mathbb{R}$, which is defined by
$$V^{BIB-CN}(P)  = \sum_{x} w(x,CE_{v_{2}}(P_{2}))P_1(x).$$

\end{definition}

Our characterization results in Section \ref{section_axioms} allow for a general representation that incorporates both NB and BIB-CN as special cases. The generalization is based on the idea that whether the DM narrow brackets the marginal risk in source 1 might depend on the marginal risk in source 2. For example, suppose the two sources represent today and tomorrow respectively and the DM's preference is represented by the following function for some fixed outcome $a\in X_2$ tomorrow: 
$$U(p,q)=\begin{cases}
w(CE_{v_1}(p),CE_{v_2}(q)), \hbox{~if~} CE_{v_2}(q)\leq a,\\
\sum_x w(x,CE_{v_2}(q))p(x), \hbox{~if~} CE_{v_2}(q)> a.\end{cases}$$
That is, the utility representation adopts a threshold structure and the DM might be either NB or BIB-CN depending on the certainty equivalent of tomorrow's marginal lottery. Intuitively, if tomorrow's stakes are low, then the DM might make today's choices independent of tomorrow's outcomes to simplify the decision process. If instead tomorrow's stakes are high enough, she would be more careful about evaluating today's risk by taking into account the income effect of tomorrow's lottery. To some extent, this example can be interpreted as a version of endogenous choice bracketing. In order to keep continuity on the boundary (i.e., when $CE_{v_2}(q)= a$), $w(\cdot,a)$ must be a positive affine transformation of $v_1$.

The next definition extends the above idea by generalizing the threshold structure that determines when the DM switches between NB and BIB-CN. 

\begin{definition}[GBIB-CN]\label{def_GBIB-CN} Let $\succsim$ be a binary relation on $\mathcal{P}$, let $w:  X_1\times X_2\rightarrow \mathbb{R}, v_i: X_i\rightarrow \mathbb{R}, i=1,2$, be regular functions and let $H_2$ be an open subset of $X_2$ with $0\not\in H_2$. The tuple $(w,v_1,v_2,H_2)$ is a  {\bf generalized backward induction bracketing with correlation neglect (GBIB-CN)} representation of $\succsim$ if $\succsim$ is represented by $V^{GBIB-CN}:{\mathcal{P}}\rightarrow \mathbb{R}$, which is defined by
$$V^{GBIB-CN}(P)= \begin{cases}
w(CE_{v_1}(P_1),CE_{v_2}(P_2)), \hbox{~if~} CE_{v_2}(P_2)\in X_2 \backslash H_2,\\
\sum_x w(x,CE_{v_2}(P_2))P_1(x), \hbox{~if~} CE_{v_2}(P_2)\in H_2,\end{cases}$$
where for any $y\in \partial H_2$, i.e., the boundary of set $H_2$, $w(\cdot,y)$ is a positive affine transformation of $v_1$.  

\end{definition}


Notice that an open subset of the  real line can be represented by a countable union of disjoint open intervals.\footnote{The proof is given by \lemmaref{lemma_open set} in the appendix.} Hence, the GBIB-CN representation captures the idea that locally the DM exhibits narrow bracketing either  in both sources, or just in source 2. Specifically, when $H_2$ is empty, GBIB-CN reduces to NB; when the closure of $H_2$ is $X_2$, GBIB-CN reduces to BIB-CN.

Symmetrically, we can modify the definitions of BIB-CN and GBIB-CN to accommodate the case where the DM narrowly brackets risks in source $1$.

\begin{definition}[FIB-CN]\label{def_FIB-CN} Let $\succsim$ be a binary relation on $\mathcal{P}$ and let $w: X_1\times X_2\rightarrow \mathbb{R}, v_1: X_1\rightarrow \mathbb{R}$ be regular functions. The tuple $(w,v_1)$ is a  {\bf  forward induction bracketing with correlation neglect (FIB-CN)} representation of $\succsim$ if $\succsim$ is represented by $V^{FIB-CN}:{\mathcal{P}}\rightarrow \mathbb{R}$, which is defined by
$$V^{BIB-CN}(P)  =\sum_y w(CE_{v_1}(P_1),y)P_2(y).$$

\end{definition}

\begin{definition}[GFIB-CN]\label{def_GFIB-CN} Let $\succsim$ be a binary relation on $\mathcal{P}$, let $w:  X_1\times X_2\rightarrow \mathbb{R}, v_i: X_i\rightarrow \mathbb{R}, i=1,2$, be regular functions and let $H_1$ be an open subset of $X_1$ with $0\not\in H_1$. The tuple $(w,v_1,v_2,H_1)$ is a  {\bf generalized forward induction bracketing with correlation neglect (GFIB-CN)} representation of $\succsim$ if $\succsim$ is represented by $V^{GFIB-CN}:{\mathcal{P}}\rightarrow \mathbb{R}$, which is defined by
$$V^{GFIB-CN}(P)= \begin{cases}
w(CE_{v_1}(P_1),CE_{v_2}(P_2)), \hbox{~if~} CE_{v_1}(P_1\in X_1 \backslash H_1,\\
\sum_y w(CE_{v_1}(P_1),y)P_2(y), \hbox{~if~} CE_{v_1}(P_1)\in H_1,\end{cases}$$
where  for any $x\in \partial H_1$, $w(x,\cdot)$ is a positive affine transformation of $v_2$.  

\end{definition}

We end this section with some remarks on the above representations: EU, EU-CN, BIB, GBIB-CN, FIB and GFIB-CN. It is worthwhile to mention that each of those functional forms has intuitive and clear implications on the extent to which the DM adopts choice bracketing and correlation neglect. Moreover, each deviation from the EU benchmark deals with the multi-source risk in a relatively simpler way in terms of computation. Our results in the next section characterize those seemingly extreme representations by relaxing the standard vNM independence axiom in a reasonable manner. This approach is different from a typical decision theory paper which would involve a universal representation. We interpret the distinction as an important and necessary feature of our framework instead of a drawback, since our goal is to model choice bracketing and correlation neglect as simplifying and intuitive heuristics, while many ``intermediate'' functional forms in our setup would be either complicated or hard to interpret.  

One natural way to unify two representations is to consider their weighted averages. For instance, one popular representation in the literature of choice bracketing (e.g., \citealp*{aer2006framing}, \citealp{aer2009narrow}, \citealp{EF2020revealNB}) is ``partial narrow bracketing'', which features an $\alpha$-mixture of EU and NB\footnote{Actually the NB representation used in the literature differs from our Definition \ref{def_NB}. We will discuss their distinction in Section \ref{section_app} and argue why our version might be more appropriate.}. Besides axiomatic reasons, we justify our exclusion of such ``intermediate'' representations in two ways. First, the computation of the weighted average utility of EU and NB is arguably more complex and involving than the computation of either representation. This contradicts with our interpretation that choice bracketing should simplify the evaluation process compared to the EU benchmark. Second, using three well-designed experiments, \cite{EF2020revealNB} show that very few (around 5\%) of their  subjects are best classified to partial narrow bracketing. This suggests that incorporating such intermediate cases might not necessarily have larger explanatory power in practice. Similar arguments can be employed to justify why we exclude intermediate representations of correlation neglect.

\section{Axioms}\label{section_axioms}
In this section, we present our axioms and characterization theorems. \thmref{thm_cn} focuses on the representations that exhibit correlation neglect, that is, EU-CN, BIB-CN and FIB-CN.  Then we extend the result to  \thmref{thm_BIB} and \cororef{coro_FIB} to incorporate models without correlation neglect.

We start with the axioms shared by the two characterization results. The first axiom assumes rationality of the DM. 

  \medskip

  \noindent{\bf Axiom  Weak Order}: $\succsim$ is complete and transitive.

  \medskip

The next axiom is about monotonicity of the preference $\succsim$ with respect to some notion of dominance. In the case with single-source risk, there is an agreed definition of first order stochastic dominance. However, its extension to multiple sources is not self-obvious. Luckily, we only need a weak notion of dominance, which only involves the comparison of a lottery with a degenerate lottery. For any lottery  $P\in \cP$ and degenerate lottery $(x_1,x_2)\in X_1\times X_2$, we say $P$ {\it dominates} $({x_1},{x_2})$ if $P\neq ({x_1},{x_2})$ and $y_1\geq x_1, y_2\geq x_2$ for all $y_1\in \supp(P_1), y_2\in\supp(P_2)$. Symmetrically,  we say $({x_1},{x_2})$ {\it dominates} $P$ if $P\neq ({x_1},{x_2})$ and $y_1\leq x_1, y_2\leq x_2$ for all $y_1\in \supp(P_1), y_2\in\supp(P_2)$. Then Axiom 2 states that the preference $\succsim$ is monotonic with respect to dominance. 

\medskip

\noindent {\bf Axiom  Monotonicity}: For each $P\in \cP$ and $(x_1,x_2)\in X_1\times X_2$, $P\succ ({x_1},{x_2})$ if $P$ dominates $({x_1},{x_2})$ and $({x_1},{x_2})\succ P$ if  $({x_1},{x_2})$ dominates $P$. 

\medskip

Now we will introduce the continuity axiom. One reasonable candidate is the standard topological continuity axiom, which guarantees that $\succsim$ has a continuous representation.

\medskip

\noindent{\bf Axiom  Continuity}: For each $Q\in {\mathcal{P}}$, the sets $\{P\in \mathcal{P}: P\succ Q\}$ and $\{P\in \mathcal{P}: Q\succ P\}$ are open subsets of ${\mathcal{P}}$.

    \medskip
    
However, an important observation is that BIB violates Axiom Continuity generically. To see why, recall that a DM with BIB evaluates each lottery $P$ by first replacing the conditional  lotteries in source 2 with its certainty equivalent and then taking expected utility for the constructed new lottery. This would result in discontinuity when a small change in the lottery leads to a drastic change in the conditional lotteries. For instance, suppose that  $\succsim$ admits a BIB representation $(w,v_2)$. For each positive integer $n$, define $P^n=1/2(\delta_1,\delta_2) + 1/2(\delta_{1-1/n},\delta_3)$. Easy to see that $P^n$ weakly converges to $P=1/2(\delta_1,\delta_2) + 1/2(\delta_{1},\delta_3)$. Then Axiom Continuity requires 
$$\frac{1}{2}w(1,2) + \frac{1}{2}w(1,3)= w(1,CE_{v_2}(\frac{1}{2}\delta_2 +\frac{1}{2} \delta_3))$$
which implies that $v_2$ should be related to $w(1,\cdot)$. Actually, we can show that under Axiom Continuity, a preference that admits a BIB representation also admits an EU representation. 

In the above example of $P^n$ and $P$, the drastic change in conditional lotteries results from the fact that $P^n$ are not product lotteries and  not all outcomes in the support of $P^n_1$ change as $n$ increases. This captures the key insights of how BIB violates Axiom Continuity. Similar arguments hold for FIB. In order to maintain continuity as strong as possible while allowing for BIB and FIB, we weaken Axiom Continuity into three parts. 

The first part guarantees that topological continuity holds on the set of product lotteries. 

\medskip

\noindent{\bf Axiom Topological Continuity over Product Lotteries}:  For each $Q\in {\mathcal{P}}$, the sets $\{P\in \hat{\mathcal{P}}: P\succ Q\}$ and $\{P\in \hat{\mathcal{P}}: Q\succ P\}$ are open subsets of $\hat{\mathcal{P}}$.

\medskip

The second part states that continuity holds if we only change the probability weights without changing the outcomes in the support. This is exactly the notion of mixture continuity.

\medskip

\noindent{\bf Axiom Mixture Continuity}:  For each $P,R,Q\in \mathcal{P}$, the sets $\{\alpha\in[0,1]: \alpha P + (1-\alpha)Q\succ R\}$ and $\{\alpha\in[0,1]: R\succ \alpha P + (1-\alpha)Q\}$ 	are open subsets of $[0,1]$ in the relative topology.
 
    \medskip

By comparison, the last part deals with continuity concerning changes of outcomes in the support instead of the probability weights. To avoid drastic variation in the conditional lotteries, we need to make sure that all outcomes in the same  source change by the same amount unless they have reached the bounds of the outcome space. This can be achievable by a modified notion of convolution with tight upper bounds. For each $P\in\cP$ and $a_1,a_2>0$, we define $P * (a_1,a_2)\in \cP$ such that the probability of $(x,y)\in X_1\times X_2$ in $P$ is transferred to $(\min\{x+a_1, \cone\},\min\{ y+a_2,\ctwo\})$.\footnote{The formal definition of $P * (a_1,a_2)$ is as follows. Recall that $X_i^o$ is the  interior of $X_i$, $i=1,2$. For each $(x,y)\in X_1\times X_2$, if $x+a_1\in X_1^o, y+a_2\in X_2^o$, $P *(\delta_{a_1},\delta_{a_2}) (x+a_1, y+a_2)=P(x,y)$; if $y+a_2\in X_2^o$, $P *(\delta_{a_1},\delta_{a_2}) (\cone, y+a_2) = \sum_{x+a_1 >\cone} P(x, y)$; if  $x+a_1\in X_2^o$, $P *(\delta_{a_1},\delta_{a_2}) (x+a_1, \ctwo) = \sum_{y+a_2 >\ctwo} P(x, y)$. In addition, $P *(\delta_{a_1},\delta_{a_2}) (\cone, \ctwo) = \sum_{x+a_1>\cone, y+a_2 >\ctwo} P(x, y)$.  } Intuitively,  lottery $P * (a_1,a_2)$ is lottery $P$ plus a sure gain of $a_i$ in source $i$ for $i=1,2$, up to the upper bounds imposed by the outcome space. Similarly, we can define $p *\delta_a$ for $p\in \lxone \cup\lxtwo$.  The third part of the continuity axiom guarantees that $\succsim$ is continuous as sure gains converge to $0$. 

\medskip

\noindent{\bf Axiom Continuity over Sure Gains}:  For each $P,Q\in \cP$ and any two sequences $\epsilon_n,\epsilon_n'$ such that for each $n$, $\epsilon_n,\epsilon_n'>0$, and $\epsilon_n,\epsilon_n'\rightarrow 0$ as $n\rightarrow \infty$,
  	$$P *(\delta_{\epsilon_n}, \delta_{\epsilon_n'})\succsim Q,~\forall n \Longrightarrow P\succsim Q \hbox{~and~} Q\succsim P *(\delta_{\epsilon_n}, \delta_{\epsilon_n'}),~\forall n \Longrightarrow Q\succsim P.$$ 
 
    \medskip

Our Axiom Weak Continuity summarizes the above three relaxations of Axiom Continuity. 

\medskip

\noindent{\bf Axiom Weak Continuity}:  $\succsim$ satisfies  Axiom Topological Continuity over Product Lotteries, Axiom Mixture Continuity and Axiom Continuity over Sure Gains.
 
    \medskip

Now consider the standard vNM independence axiom, which characterizes EU. 

  \medskip
 \noindent {\bf Axiom Independence}: For each $P,Q,R\in \hat{\mathcal{P}}$ and $\alpha\in (0,1)$,
  $$P\succ Q\Longrightarrow \alpha P + (1-\alpha) R\succ \alpha Q + (1-\alpha) R.$$
  
  \medskip

Under Axiom Weak Continuity, easy to see that Axiom  Independence is equivalent to the following stronger axiom. 

  \medskip
 \noindent {\bf Axiom  Bi-independence}: For each $P,Q,R, S\in \hat{\mathcal{P}}$ and $\alpha\in (0,1)$,
  $$P\succ Q, R\sim S\Longrightarrow \alpha P + (1-\alpha) R\succ \alpha Q + (1-\alpha) S.$$
  
  \medskip

In order to introduce our relaxation of Axiom Bi-independence, we first assume correlation neglect and focus on the set of product lotteries. Then we will relax correlation neglect to an independence axiom over lotteries which differ in the correlation structure.

\subsection{Correlation Neglect}\label{subsection_cn}

 {\bf Axiom Correlation Neglect}: For each $P\in\mathcal{P}$, $P\sim (P_1, P_2)$.
 
  \medskip
Axiom Correlation Neglect states that the DM is indifferent between each lottery and the product lottery with the same marginals. Then it suffices to study the preference $\succsim$ restricted on the set of  product lotteries $\hat{\mathcal{P}}$. This suits the applications where risks from different sources are independent such as experiments on choice bracketing \citep*{aer2006framing}.  Another interesting example is the Nash equilibrium in a two-player game. \cite{fishburn1982book} characterizes multilinear utility, which is exactly EU-CN restricted to $\hat{\mathcal{P}}$, as a foundation for expected utility in the 2-player game involving mixed strategies. The next axiom is key to \cite{fishburn1982book}'s results. 

 \medskip
  {\bf Axiom Multilinear Independence}: For each $P,Q,R,S\in \hat{\mathcal{P}}$, $\alpha\in (0,1)$ and $i,j\in\{1,2\}$, if $P_i=R_i, Q_j=S_j$, then
  $$P\succ Q, R\sim S\Longrightarrow \alpha P + (1-\alpha) R\succ \alpha Q + (1-\alpha) S.$$
  
  \medskip

In contrast to Axiom Bi-independence, Axiom Multilinear Independence imposes two restrictions on the independence property. First, we only consider product lotteries. Second, whenever we want to mix two product lotteries, their marginal lotteries should be the same in at least one source. Technically, this is required to guarantee that the mixed lottery also has independent marginals\footnote{Notice that the set of product lotteries $\hat{\cP}$ is not a mixture space under the mixture operation defined on $\cP$. For instance, $(\delta_0,\delta_0)$ and $(\delta_1,\delta_1)$ are product lotteries, but their mixture $1/2(\delta_0,\delta_0) + 1/2(\delta_1,\delta_1)$ is not.}. The following lemma directly follows from \cite{fishburn1982book} and characterizes EU-CN.

\begin{lemma}\label{lemma_eu-cn}\emph{\citep{fishburn1982book}} Let $\succsim$ be a binary relation on $\mathcal{P}$. The following statements are equivalent:

i). The relation $\succsim$ satisfies Weak Order, Monotonicity, Weak Continuity, Multilinear Independence and Correlation Neglect;

ii). There exists an EU-CN representation of $\succsim$.
	
\end{lemma}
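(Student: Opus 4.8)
The plan is to establish the two implications separately, with essentially all of the content in (i) $\Rightarrow$ (ii); the role of Axiom Correlation Neglect there is to collapse the whole problem onto the product lotteries $\hat{\mathcal{P}}$, where \cite{fishburn1982book}'s multilinear utility theorem applies directly. For the routine direction (ii) $\Rightarrow$ (i), I would verify each axiom from the functional form $V^{EU-CN}(P)=\sum_{x,y}w(x,y)P_1(x)P_2(y)$: Weak Order is immediate since $V^{EU-CN}$ is real-valued; Monotonicity follows from strict monotonicity of $w$; Axiom Correlation Neglect holds because $V^{EU-CN}(P)$ depends on $P$ only through $(P_1,P_2)$, so that $V^{EU-CN}(P)=V^{EU-CN}(P_1,P_2)$; Axiom Multilinear Independence holds because, once a common marginal is fixed in one source, $V^{EU-CN}$ is affine in the probability weights of the other source; and the three parts of Weak Continuity follow from continuity and boundedness of the regular index $w$.

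For (i) $\Rightarrow$ (ii), I would first restrict $\succsim$ to $\hat{\mathcal{P}}$. Since Axiom Correlation Neglect gives $P\sim(P_1,P_2)$ for every $P$, the relation on $\mathcal{P}$ is completely determined by its restriction $\succsim^{*}$ to $\hat{\mathcal{P}}$; it therefore suffices to produce a regular $w$ whose bilinear form represents $\succsim^{*}$, and then to define $V^{EU-CN}$ on all of $\mathcal{P}$ by the stated formula, the Correlation Neglect indifference guaranteeing that this extension represents $\succsim$. The next step is to match the hypotheses of \cite{fishburn1982book} on $\succsim^{*}$: Weak Order transfers verbatim; Axiom Multilinear Independence is precisely the restricted two-sided independence that Fishburn requires, the side conditions $P_i=R_i$ and $Q_j=S_j$ being exactly what keeps the relevant mixtures inside $\hat{\mathcal{P}}$; and the within-factor mixture continuity Fishburn uses is supplied by Axiom Mixture Continuity, since mixing two product lotteries that share a marginal in one source coincides with mixing within the simplex of the other source.

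Invoking \cite{fishburn1982book} then yields, on each finite sub-rectangle of outcomes, a bilinear representation of $\succsim^{*}$; by uniqueness of such representations up to a common positive affine transformation, I would fix two reference profiles to calibrate scale and location so that these local indices patch into a single function $w:X_1\times X_2\rightarrow\mathbb{R}$ with $V^{EU-CN}(p,q)=\sum_{x,y}w(x,y)p(x)q(y)$ representing $\succsim^{*}$. Strict monotonicity of $w$ then follows from Axiom Monotonicity applied to the degenerate profiles $(\delta_x,\delta_y)$, whose utility is $w(x,y)$, and continuity of $w$ follows from Axiom Topological Continuity over Product Lotteries, since $(\delta_{x_n},\delta_{y_n})$ converges weakly to $(\delta_x,\delta_y)$ whenever $(x_n,y_n)\to(x,y)$.

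The main obstacle is boundedness of $w$ when $X_1$ or $X_2$ is unbounded, the one part of regularity not delivered by Fishburn or by monotonicity. Here I would exploit that the topology on $\hat{\mathcal{P}}$ is weak convergence, under which probability mass can escape to infinity: if $w$ were unbounded above, then, choosing $R$ strictly preferred to some degenerate profile together with a sequence of product lotteries converging weakly to that profile while $V^{EU-CN}$ diverges, the open lower-contour set of $R$ would eventually contain lotteries that $V^{EU-CN}$ ranks above $R$, contradicting Axiom Topological Continuity over Product Lotteries; the case of $w$ unbounded below is symmetric. Making these escaping-mass sequences precise is where the bulk of the technical care lies; the reduction to $\hat{\mathcal{P}}$, the appeal to Fishburn, and the remaining regularity checks are otherwise routine. (Axiom Continuity over Sure Gains, not needed for EU-CN itself, is retained in the statement for uniformity with the later theorems that admit BIB and FIB.)
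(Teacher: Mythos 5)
Your proposal is correct and follows essentially the same route as the paper, which treats this lemma as a direct consequence of Fishburn's multilinear utility theorem after Axiom Correlation Neglect collapses the problem onto $\hat{\mathcal{P}}$; your escaping-mass argument for boundedness of $w$ is the same device the paper uses in its Lemma \ref{lemma_narrow pre}. The only cosmetic difference is that Fishburn's theorem already delivers a single multilinear representation on the full product of mixture sets, so the patching of local indices over finite sub-rectangles is unnecessary, though harmless.
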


However, when we incorporate choice bracketing, Axiom Multilinear Independence will also be violated. 

\begin{exam}\label{exam_MI}
	Suppose that $\succsim$ admits a NB representation $(w,v_1,v_2)$ with $w(x,y)=x+y$ and $v_1(x)=v_2(x)=\sqrt{x}$ for all $x,y\geq 0$.  Let $p_1=\delta_{25}$, $q_1=\delta_{16}$, $r=\delta_{0}$, $s=\delta_{9}$, $q_2=\delta_{25}$ and $p_2=\delta_{(4+\epsilon)^2}$ for some $\epsilon>0$. Then 
	\begin{equation*}
	V^{NB}(p_1,p_2) = 	25 + (4+\epsilon)^2>  16 +25 = V^{NB}(q_1,q_2) \Longrightarrow (p_1, p_2)\succ (q_1,q_2),
	\end{equation*}
\begin{equation*}
	V^{NB}(p_1,r) = 	25 + 0= 16 +9 = V^{NB}(q_1,s) \Longrightarrow (p_1, r)\sim (q_1,s).
	\end{equation*}
	However, for $\alpha=1/2$, the utilities of the mixed lotteries are 
	\begin{equation*}
	V^{NB}(p_1,\alpha p_2 + (1-\alpha)r) = 	25 + \frac{(4+\epsilon)^2}{4}, ~~~ V^{NB}(q_1,\alpha q_2 + (1-\alpha)s)= 16 + 16.
	\end{equation*}
	If $0<\epsilon< 2\sqrt{7}-4$, then
	\begin{equation*}
	V^{NB}(p_1,\alpha p_2 + (1-\alpha)r) < V^{NB}(q_1,\alpha q_2 + (1-\alpha)s)   \Longrightarrow (p_1,\alpha p_2 + (1-\alpha)r)\prec (q_1,\alpha q_2 + (1-\alpha)s).
	\end{equation*}
	Hence Axiom Multilinear Independence fails. 
\end{exam}

Now we introduce our main independence axiom in the set of product lotteries. For each $i\in\{1,2\}$, we denote $-i\in \{1,2\}$ with $i\neq -i$. 

\medskip

\noindent {\bf Axiom  Weak Independence}:
\begin{enumerate}
  	\item [(i)] Axiom Conditional Independence: For $p,q,r,s\in \lxone, p',q',r',s'\in\lxtwo$ and $\alpha\in (0,1)$,
  $$(s,p')\succ (s,q')\Longrightarrow  (s,\alpha p' + (1-\alpha)r')\succ (s,\alpha q' + (1-\alpha)r').$$
  $$(p,s')\succ (q,s')\Longrightarrow  (\alpha p + (1-\alpha)r,s')\succ (\alpha q + (1-\alpha)r,s').$$
  	\item [(ii)] Axiom Weak Multilinear Independence: For each $P,Q,R,S\in\hat{\mathcal{P}}$, $\alpha\in (0,1)$ and $i,j\in\{1,2\}$, if $P_i=R_i,  Q_j=S_j$, $P_{-i}\sim_{-i} R_{-i}$ and $Q_{-j}\sim_{-j} S_{-j}$, then
  $$P\succ Q, R\sim S\Longrightarrow \alpha P + (1-\alpha) R\succ \alpha Q + (1-\alpha) S.$$
  \end{enumerate}

\medskip

The first part of the axiom states that if we fix the marginal lottery in one source, then the vNM independence axiom holds for marginal lotteries in the other source. For each $p\in \lxone$, we denote $\succsim_{2|p}$ as the restriction of $\succsim$ on $\{p\}\times \lxtwo$. $\succsim_{2|p}$ can be interpreted as the conditional preference in source $2$ given lottery $p$ in source $1$. When $p=\delta_0$, $\succsim_{2|p}$ agrees with $\succsim_2$, the narrow preference in source $2$. Similarly, we can define $\succsim_{1|q}$ as the conditional preference in source $1$ given $q\in \lxtwo$. Along with Axiom Weak Order and Axiom Weak Continuity, Conditional Independence guarantees that each conditional preference admits an EU representation. Hence, choice bracketing differs from the EU benchmark in terms of how the evaluations of the two marginal lotteries are aggregated. 

The second part is a local version of Axiom Multilinear Independence. It requires that the independence property holds only if the two product lotteries that are  mixed are ``similar'' enough, in the sense that they should agree on the marginal lottery in one source, and their marginal lotteries  in the other source should be indifferent according to the narrow preference. To see why this reflects choice bracketing, suppose that the DM narrowly brackets risks in source $-i$,  then she will evaluate the marginal lottery in source $-i$ using the narrow preference $\succsim_{-i}$, regardless of the marginal lottery in source $i$.  $P_i=R_i$ and $P_{-i}\sim_{-i} R_{-i}$ suggests that $P$ should be indifferent to $R$, which implies $\alpha P + (1-\alpha) R\sim P$ by Conditional Independence. Also, we know $Q\succ\alpha Q + (1-\alpha) S$. Hence Weak Multilinear Independence holds trivially and the axiom is redundant. Similarly arguments hold if DM narrowly brackets risks in source $-j$. As a result, Weak Multilinear Independence is not redundant only if either the DM broadly brackets risks or she only narrowly brackets risks in source $i=j$. In the latter case, mixture of lotteries only occurs in source $-i$. To conclude, the second part of Axiom Weak Independence states that the independence property holds locally for a source if the DM does not narrowly brackets  risks in that source. This explains why we interpret choice bracketing as violations of the independence property.

Now we are ready to state our first representation theorem under correlation neglect.

\begin{theorem}\label{thm_cn}
 Let $\succsim$ be a binary relation on $\mathcal{P}$. The following statements are equivalent:

i). The relation $\succsim$ satisfies Weak Order, Monotonicity, Weak Continuity, Weak Independence and  Correlation Neglect;

ii). The relation $\succsim$ admits one of the following representations: EU-CN, GBIB-CN and GFIB-CN.

    Moreover, in all representations $H_1,H_2$ are unique, $v_1,v_2$ are unique up to a positive affine transformation and in EU-CN, $w$ is unique up to a positive affine transformation.
\end{theorem}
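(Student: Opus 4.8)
The plan is to prove the two directions separately, treating (ii)~$\Rightarrow$~(i) as a verification and concentrating the real work on (i)~$\Rightarrow$~(ii). For the easy direction I would check that each of EU-CN, GBIB-CN and GFIB-CN satisfies the five axioms. Correlation Neglect and Monotonicity are immediate from the functional forms given regularity of $w,v_1,v_2$, and since each representation depends on $P$ only through its marginals, for any such representation Continuity over Sure Gains reduces to weak convergence of the shifted marginals and hence follows from Topological Continuity over Product Lotteries. The only points needing care are (a) continuity of GBIB-CN across $\partial H_2$, which is exactly guaranteed by the requirement that $w(\cdot,y)$ be a positive affine transform of $v_1$ there, so that the NB-branch $w(CE_{v_1}(P_1),y)$ and the BIB-CN-branch $\sum_x w(x,y)P_1(x)$ agree; and (b) Weak Independence, where Conditional Independence holds because each branch is EU in each separate source, and Weak Multilinear Independence holds because the extra hypotheses $P_{-i}\sim_{-i}R_{-i}$ and $Q_{-j}\sim_{-j}S_{-j}$ force the mixtures to preserve the relevant source certainty equivalents, reducing the claim to ordinary linearity in the mixed source.

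For (i)~$\Rightarrow$~(ii) I would first invoke Correlation Neglect to restrict attention to $\hat{\cP}$, on which Weak Order and Topological Continuity over Product Lotteries yield a continuous representation. Applying Conditional Independence together with Mixture Continuity and Monotonicity to each slice $\{p\}\times\lxtwo$ and $\lxone\times\{q\}$, the classical vNM theorem gives, for every $p$, a strictly increasing index $v_{2|p}$ representing the conditional preference $\succsim_{2|p}$, and dually an index $v_{1|q}$; setting $w(x,y):=V(\delta_x,\delta_y)$, continuity and Monotonicity make $w$ regular and identify $\succsim_{1|\delta_y}$ as EU with index $w(\cdot,y)$ and $\succsim_{2|\delta_x}$ as EU with index $w(x,\cdot)$. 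The decisive dichotomy is then whether this conditional family is constant in the conditioning marginal: I would call source $2$ \emph{narrowly bracketed} when $\succsim_{2|p}$ is independent of $p$, and symmetrically for source~$1$.

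I would then split into three exhaustive cases. If source~$2$ is narrowly bracketed, normalize $v_2$ to be the index of $\succsim_2=\succsim_{2|\delta_0}$; then every lottery satisfies $(p,q)\sim(p,\delta_{CE_{v_2}(q)})$, and comparing certainty equivalents across levels shows $V(p,q)=\sum_x w(x,CE_{v_2}(q))p(x)$ is a valid global representation. Taking $v_1$ to be the index of $\succsim_1$ and $H_2:=\{c\in X_2:\succsim_{1|\delta_c}\neq\succsim_1\}$ recovers exactly the GBIB-CN form; continuity of $c\mapsto\succsim_{1|\delta_c}$ makes $H_2$ open, $\succsim_{1|\delta_0}=\succsim_1$ gives $0\notin H_2$, and $\partial H_2\subseteq H_2^c$ yields the boundary affinity automatically. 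The case where source~$1$ is narrowly bracketed is symmetric and produces GFIB-CN. If \emph{neither} source is narrowly bracketed, the goal is to show $V$ is EU-CN, and here I would upgrade Weak Multilinear Independence to full Multilinear Independence and invoke \lemmaref{lemma_eu-cn}.

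The main obstacle is precisely this upgrade in the ``neither'' case. The gap between the two axioms is the extra hypothesis $P_{-i}\sim_{-i}R_{-i}$, and the idea is to read non-narrowness as a transversality condition: since $\succsim_{-i|p_i}$ genuinely differs from the narrow preference $\succsim_{-i}$ for some conditioning value, the narrow and the conditional indifference classes are distinct, so given an arbitrary pair $(P,R)$ with $P_i=R_i$ one can (using Conditional Independence and continuity) replace $R$ by an indifferent $R'$ whose non-mixed marginal is \emph{narrowly} indifferent to that of $P$, reducing full Multilinear Independence to its weak version without disturbing any of the relevant comparisons. Showing that such an adjustment is always feasible when both sources are non-narrow, and that it is compatible with the simultaneous EU-in-each-coordinate structure so that Fishburn's conclusion applies, is the delicate part. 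The remaining tasks are comparatively routine: regularity of the indices follows from Monotonicity and continuity, openness of $H_1,H_2$ from continuity of the conditional family, and the uniqueness claims from uniqueness of vNM indices, from the observability of the deviation sets $H_1,H_2$, and from Fishburn's uniqueness of the multilinear $w$ in the EU-CN case.
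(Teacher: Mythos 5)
Your overall architecture matches the paper's: reduce to $\hat{\cP}$ via Correlation Neglect, obtain conditional EU indices on each slice, and split into cases according to which sources are narrowly bracketed. But the proposal skips the two steps that constitute essentially all of the work in the paper's proof, and in both places the omission is a genuine gap rather than a routine detail. First, in the GBIB-CN case you assert that $V(p,q)=\sum_x w(x,CE_{v_2}(q))p(x)$ with $w(x,y):=V(\delta_x,\delta_y)$ ``is a valid global representation.'' Conditional Independence only gives you an EU index on each slice $\lxone\times\{\delta_y\}$ separately, each unique up to its \emph{own} affine transformation; it does not give you a single function that is simultaneously linear in $p$ across different slices, which is what the displayed formula requires. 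Aligning the slice-wise representations is exactly where Axiom Weak Multilinear Independence must be used in this case (your proposal only invokes it in the EU-CN case), and the paper needs a substantial apparatus for it: establishing the independence property for cross-slice tuples $(P,Q,R,S)$ with $P_2=R_2=\delta_{y_1}$, $Q_2=S_2=\delta_{y_2}$ first locally (where WMI applies directly because the relevant marginals are narrowly indifferent), then on unions of overlapping regions, then globally by a Heine--Borel cover argument, and finally patching the resulting $MAP_1$ functions across all of $cl(\Sigma^2)$. Without this, the formula you write down is not known to represent $\succsim$ off any single slice.

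Second, in the ``neither source narrowly bracketed'' case, your transversality argument for upgrading Weak Multilinear Independence to Multilinear Independence presupposes that for the \emph{particular} conditioning marginal $P_i$ at hand, the conditional preference $\succsim_{-i|P_i}$ differs from the narrow preference $\succsim_{-i}$, so that the two indifference hyperplanes cross and an adjusted $R'$ exists. But ``not narrowly bracketed'' only says this happens for \emph{some} conditioning values: the sets $\Sigma^1,\Sigma^2$ of outcomes where bracketing fails are merely nonempty open sets, and on their complements the conditional and narrow indifference curves coincide, which is precisely where your adjustment is infeasible. The paper's Step 5 confronts this head-on: it derives multilinear representations on the products $Y(cl(J^1_{n_1}),cl(J^2_{n_2}))$ of components of $\Sigma^1\times\Sigma^2$ and then proves by contradiction that $cl(\Sigma^i)=X_i$ for $i=1,2$ --- i.e., it \emph{rules out} hybrid preferences that are multilinear on part of the domain and bracketing on another part. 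The paper itself flags this exclusion of intermediate cases as the interesting part of the argument, and your proposal contains no mechanism for it; as written, the ``neither'' case would also have to cover preferences for which full Multilinear Independence genuinely fails on a subregion, and the reduction to \lemmaref{lemma_eu-cn} would not go through.
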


It is worthwhile to mention that \thmref{thm_cn} characterizes three seemingly distant representations with correlation neglect and choice bracketing,  while the axioms do not seem to predict such a feature ex ante. Moreover, EU-CN, GBIB-CN and GFIB-CN all satisfy Axiom Continuity and we keep the Axiom Weak Continuity in \thmref{thm_cn}  just for consistency with \thmref{thm_BIB} below.

\subsection{Correlation Sensitivity}\label{subsection_cc}

In this section we discard Axiom Correlation Neglect to incorporate representations that are sensitive to the correlation structure of risks in different sources. Since Axiom Weak Independence only involves product lotteries, we need another independence axiom for lotteries whose marginals are not independent. 

\medskip

\noindent {\bf Axiom  Correlation Consistency}: Suppose $P\succ Q$ with  $P_i=Q_i$, $i=1,2$, $R\sim S$ and $$ supp(P_1)\cap supp(R_1)= supp(Q_1)\cap supp(S_1)=\emptyset,$$ then for all $\alpha\in (0,1)$ 
$$\alpha P + (1-\alpha)R \succ \alpha Q + (1-\alpha)S.$$

\medskip

Axiom Correlation Consistency relaxes Axiom Bi-independence as it focuses on the independence property of the correlation structure. Since $P$ and $Q$ share the same marginal lotteries in both sources, $P\succ Q$ means that the DM prefers the correlation structure of $P$ to that of $Q$. As she is indifferent between $R$ and $S$, if the mixture of $P,Q$ and $R,S$ does not ``infect'' the original correlation structures of $P$ and $Q$, then after mixture, the impacts of $R$ and $S$ will cancel out and the preference ranking between $P$ and $Q$ should remain unchanged. How do judge whether or not the correlation structure is ``infected'' after mixture? Notice that each lottery can be decomposed into a marginal lottery in source $1$ and a profile of conditional lotteries in source $2$. Hence one candidate measure of the correlation structure is the profile of conditional lotteries. That is why we need the additional qualification that $ supp(P_1)\cap supp(R_1)= supp(Q_1)\cap supp(S_1)=\emptyset$. It says that in source 1, the marginals  of two lotteries that are mixed have disjoint supports, which guarantees that the profile of conditional lotteries in source $2$ are not affected by the mixture. As a result, the preference over the correlation structures of $P$ and $Q$ persist after the mixture and hence the independence property holds. 

Moreover, notice that under Axiom Correlation Neglect, lotteries with the same marginals should always be indifferent and hence Axiom Correlation Consistency trivially holds. This implies that Axiom Correlation Consistency relaxes Axiom Bi-independence and Axiom Correlation Neglect. The next representation theorem generalizes \thmref{thm_cn} by simply replacing Axiom Correlation Neglect with Axiom Correlation Consistency.

\begin{theorem}\label{thm_BIB} Let $\succsim$ be a binary relation on $\mathcal{P}$. The following statements are equivalent:

i). The relation $\succsim$ satisfies Weak Order, Monotonicity, Weak Continuity, Weak Independence and    Correlation Consistency;

ii). The relation $\succsim$ admits one of the following representations: EU, BIB, EU-CN, GBIB-CN and GFIB-CN.

    Moreover, in all representations $H_1,H_2$ are unique, $v_1,v_2$ are unique up to a positive affine transformation and in EU, EU-CN, BIB, $w$ is unique up to a positive affine transformation.
\end{theorem}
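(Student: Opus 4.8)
The plan is to bootstrap from \thmref{thm_cn} in two stages: first pin down behavior on the product lotteries $\hat{\mathcal{P}}$, then use Axiom Correlation Consistency to determine how $\succsim$ extends to correlated lotteries. For the sufficiency direction (ii $\Rightarrow$ i), the three correlation-neglect representations EU-CN, GBIB-CN and GFIB-CN already satisfy every axiom by \thmref{thm_cn}, and since each neglects correlation it satisfies Correlation Consistency trivially (as observed just after that axiom). The only genuinely new checks are that EU and BIB satisfy Weak Independence, Weak Continuity and Correlation Consistency. For EU these are immediate from linearity of $V^{EU}$ in the joint distribution. For BIB the disjoint-support proviso in Correlation Consistency is exactly what guarantees that the admissible mixtures never blend the conditional lotteries $P_{2|x}$, so $V^{BIB}$ responds linearly to them; and Weak Continuity, rather than full Continuity, is precisely the weakening demanded by the BIB discontinuity example preceding Axiom Weak Continuity.

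For necessity (i $\Rightarrow$ ii), Step 1 reduces matters on $\hat{\mathcal{P}}$ to \thmref{thm_cn}. I would introduce the correlation-neglect surrogate $\succsim^{\ast}$ on $\mathcal{P}$ defined by $P \succsim^{\ast} Q \iff (P_1,P_2) \succsim (Q_1,Q_2)$. Since $\succsim^{\ast}$ coincides with $\succsim$ on $\hat{\mathcal{P}}$, and since Weak Order, Monotonicity, Weak Continuity and Weak Independence either constrain only product lotteries or factor through the marginal map $P \mapsto (P_1,P_2)$ (using $(\alpha P + (1-\alpha) Q)_i = \alpha P_i + (1-\alpha) Q_i$), the surrogate inherits all of them; and it satisfies Correlation Neglect by construction. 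Applying \thmref{thm_cn} to $\succsim^{\ast}$ therefore shows that $\succsim$ restricted to $\hat{\mathcal{P}}$ is represented by one of EU-CN, GBIB-CN, GFIB-CN.

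Step 2 extends the representation off $\hat{\mathcal{P}}$. Any lottery decomposes as $P = \sum_x P_1(x)\,(\delta_x \otimes P_{2|x})$ into blocks with pairwise disjoint source-1 supports, and Correlation Consistency is the independence axiom that makes $\succsim$ behave linearly along such disjoint-support recombinations; together with Conditional Independence it forces a conditional aggregator that is an expected-utility functional of $P_{2|x}$ for each fixed $x$. The resulting dichotomy is which product-lottery regime this aggregator is compatible with. When the $\hat{\mathcal{P}}$-representation is EU-CN, each degenerate block $\delta_x \otimes P_{2|x}$ is itself a product lottery valued at $\sum_y w(x,y) P_{2|x}(y)$, so linearizing along conditionals yields EU, while an aggregator that discards the conditionals reproduces EU-CN. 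When it is the boundary case BIB-CN (that is, GBIB-CN with $\overline{H_2}=X_2$), the same step upgrades the marginal certainty equivalent $CE_{v_2}(P_2)$ to the conditional $CE_{v_2}(P_{2|x})$, giving BIB, or leaves it unchanged as BIB-CN. The remaining claim is that genuine GBIB-CN (with $H_2$ empty or strictly intermediate) cannot be upgraded: its fully-bracketed regime evaluates source 1 through the marginal $CE_{v_1}(P_1)$ and selects regimes by the marginal threshold $CE_{v_2}(P_2) \in H_2$, both marginal objects that no correlation-sensitive conditional aggregator can reproduce; so correlation neglect is forced and the representation stays GBIB-CN, and symmetrically GFIB-CN.

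The hardest part is precisely this last dichotomy: establishing that Correlation Consistency admits exactly the two correlation-sensitive forms EU and BIB, and that any source-1 bracketing (the NB regime inside a genuine GBIB-CN, or all of GFIB-CN) blocks correlation sensitivity and forces the neglecting representation. The argument I would give proceeds by first using Correlation Consistency to reduce the conditional aggregator to a conditional expected-utility functional, then pinning that functional down from the already-identified $\hat{\mathcal{P}}$-representation via the degenerate blocks $\delta_x \otimes (\cdot)$, and finally checking case by case that a correlation-sensitive aggregator is reconcilable with a marginal threshold and a marginal source-1 certainty equivalent only in the degenerate BIB-CN boundary. Uniqueness then follows from the uniqueness in \thmref{thm_cn} on $\hat{\mathcal{P}}$, supplemented by the standard uniqueness of the conditional expected-utility index $w$ up to positive affine transformation in the EU and BIB cases.
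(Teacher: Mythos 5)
Your overall architecture matches the paper's: restrict to $\hat{\cP}$ and invoke \thmref{thm_cn} via a correlation-neglecting surrogate (this is exactly the paper's Step 1), then use Correlation Consistency to pin down the extension to correlated lotteries, and finish with a consistency/case analysis against the product-lottery representation. The sufficiency direction is also essentially as in the paper. However, there is a genuine gap at the pivot of the necessity argument. You write that ``Correlation Consistency is the independence axiom that makes $\succsim$ behave linearly along such disjoint-support recombinations,'' but the axiom as stated only licenses mixtures when $P$ and $Q$ share \emph{identical marginals in both sources} (its role is to compare correlation structures, not arbitrary lotteries). Upgrading this to the full independence property for arbitrary pairs with disjoint source-1 supports is the technical heart of the paper's proof: it requires first exploiting the failure of Correlation Neglect together with a local betweenness lemma (betweenness itself fails here) to derive a seed indifference relation, then propagating it through the sets $\Phi_{2,\delta_y}$ with finite-cover and induction arguments, and finally handling the boundary outcomes $\underline{c}_1,\overline{c}_1$ via a notion of weak compatibility. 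None of this is automatic, and your sketch does not indicate how you would obtain linearity across the blocks $\delta_x\otimes P_{2|x}$ from the same-marginals hypothesis alone.

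A second, related omission: even granting disjoint-support linearity, Conditional Independence only gives an EU functional on each slice $\{\delta_x\}\times\lxtwo$; it does not by itself produce the additive aggregation $\sum_x \hat{u}_D(x, CE_{v_x}(P_{2|x}))P_1(x)$ with a possibly $x$-dependent index $v_x$. The paper obtains this by embedding $\cP$ into the Kreps--Porteus space of temporal lotteries, extending $\succsim$ there, verifying vNM independence and mixture continuity for the extension, and applying the Mixture Space Theorem; the regularity of $\hat{u}_D$ then needs a separate argument because the embedding is not a homeomorphism. Finally, your closing dichotomy (``genuine GBIB-CN/GFIB-CN blocks correlation sensitivity'') has the right intuition, but the hard subcase is GFIB-CN when no two sections $w(x,\cdot)$ are affinely related: there the paper must show the connecting transformation $\phi$ is affine via a semi-derivative argument, collapsing the representation to EU. Your proposal identifies the right destinations but does not supply the machinery that gets there.
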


By symmetry, an alternative measure of the correlation structure is to decompose each lottery into a marginal lottery in source $2$ and a profile of conditional lotteries in source $1$. Then the qualification naturally changes to disjoint supports of marginal lotteries in source $2$. This observation leads to the following axiom and corollary. 

\medskip

\noindent {\bf Axiom  Forward Correlation Consistency}: Suppose $P\succ Q$ with  $P_i=Q_i$, $i=1,2$, $R\sim S$ and $$ supp(P_2)\cap supp(R_2)= supp(Q_2)\cap supp(S_2)=\emptyset,$$ then for all $\alpha\in (0,1)$ 
$$\alpha P + (1-\alpha)R \succ \alpha Q + (1-\alpha)S.$$

\medskip

\begin{corollary}\label{coro_FIB}
Let $\succsim$ be a binary relation on $\mathcal{P}$. The following statements are equivalent:

i). The relation $\succsim$ satisfies Weak Order, Monotonicity, Weak Continuity, Weak Independence and  Forward  Correlation Consistency;

ii). The relation $\succsim$ admits one of the following representations: EU, FIB, EU-CN, GBIB-CN and GFIB-CN.

    Moreover, in all representations $H_1,H_2$ are unique, $v_1,v_2$ are unique up to a positive affine transformation and in EU, EU-CN, FIB, $w$ is unique up to a positive affine transformation.
\end{corollary}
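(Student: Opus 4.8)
The plan is to prove \cororef{coro_FIB} by a source-relabeling (symmetry) argument that reduces it directly to \thmref{thm_BIB}. The key observation is that the corollary and \thmref{thm_BIB} differ only by swapping the roles of source $1$ and source $2$: FIB is the mirror image of BIB, GFIB-CN of GBIB-CN, and Forward Correlation Consistency of Correlation Consistency, while EU and EU-CN are already symmetric in the two sources.

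First I would introduce the swap operation. For each $P\in\cP$ on $X_1\times X_2$, let $\tilde P$ be the lottery on $X_2\times X_1$ given by $\tilde P(y,x)=P(x,y)$, and define a relation $\tilde\succsim$ on lotteries over $X_2\times X_1$ by $\tilde P\tilde\succsim\tilde Q\iff P\succsim Q$. Under this swap the marginals transform as $\tilde P_1=P_2$ and $\tilde P_2=P_1$, the conditionals as $\tilde P_{2|y}=P_{1|y}$, product lotteries map to product lotteries, and the modified convolution satisfies $\widetilde{P*(a_1,a_2)}=\tilde P*(a_2,a_1)$.

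Next I would verify that the four axioms shared by both results are invariant under this swap, so that $\succsim$ satisfies them iff $\tilde\succsim$ does. Weak Order is symmetric; Monotonicity is invariant because the dominance relation is symmetric in the two coordinates; each of the three components of Weak Continuity is invariant, since product lotteries, the mixture operation, and the convolution all transform compatibly. Weak Independence is invariant because both of its parts are already symmetric in the two sources: Conditional Independence is a pair of mirror-image statements, and Weak Multilinear Independence quantifies over all $i,j\in\{1,2\}$. The crucial step is that Forward Correlation Consistency for $\succsim$, whose qualification is $\supp(P_2)\cap\supp(R_2)=\supp(Q_2)\cap\supp(S_2)=\emptyset$, translates under the swap (using $\tilde P_1=P_2$) into exactly the qualification $\supp(\tilde P_1)\cap\supp(\tilde R_1)=\supp(\tilde Q_1)\cap\supp(\tilde S_1)=\emptyset$ of Axiom Correlation Consistency for $\tilde\succsim$. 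Having established that the hypotheses of \cororef{coro_FIB} for $\succsim$ are equivalent to the hypotheses of \thmref{thm_BIB} for $\tilde\succsim$, I would apply \thmref{thm_BIB} to $\tilde\succsim$ and then translate each representation back through the substitution $w(x,y)=\tilde w(y,x)$, $v_1=\tilde v_2$, $v_2=\tilde v_1$, $H_1=\tilde H_2$, checking directly from the functional forms that EU gives EU, EU-CN gives EU-CN, BIB for $\tilde\succsim$ gives FIB for $\succsim$ (since $\tilde P_{2|y}=P_{1|y}$ and $\tilde P_1=P_2$), GBIB-CN gives GFIB-CN (the threshold variable $CE_{\tilde v_2}(\tilde P_2)=CE_{v_1}(P_1)$ identifies the set $H_1$, with the boundary affine condition passing from $w(\cdot,y)$ to $w(x,\cdot)$), and GFIB-CN gives GBIB-CN. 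The uniqueness claims carry over verbatim under the same substitution.

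I expect the only real work to lie in the bookkeeping of the last two steps. First, one must confirm that \emph{every} shared axiom is genuinely invariant, since the whole argument collapses if any of them secretly breaks the source symmetry; the components of Weak Continuity and the two clauses of Weak Independence are the places to check most carefully. Second, the generalized representations, with their open sets $H_i$ and the affine-on-the-boundary conditions, must be matched exactly, so the placement of the threshold variable ($CE_{v_2}(P_2)$ versus $CE_{v_1}(P_1)$) has to be tracked precisely to ensure GBIB-CN and GFIB-CN are paired correctly rather than confused. None of this is conceptually difficult, but it is where an error would hide.
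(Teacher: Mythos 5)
Your proposal is correct and is exactly the argument the paper intends: the paper gives no separate proof of \cororef{coro_FIB}, justifying it only with the remark that it follows ``by symmetry'' from \thmref{thm_BIB}, and your swap construction, axiom-invariance checks, and representation translation (BIB $\leftrightarrow$ FIB, GBIB-CN $\leftrightarrow$ GFIB-CN) are a faithful, correct elaboration of that one-line justification. No gap.
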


Our representation results provide an axiomatic foundation for choice bracketing and correlation neglect. Across all representations we consider, we impose the implicit consistency condition that the preferences over riskless outcome profiles are the same. This provides a unified framework to compare different models, apply the same model across different economic settings and discover unexpected connections among distinct economic problems. We will provide some examples in Section \ref{section_app}.

\section{Proof Sketch}\label{section_sketch}
In this section, we briefly discuss the proof sketch of the two representation theorems in Section \ref{section_axioms}. We will focus on sufficiency of the axioms. It is worthwhile to note that \thmref{thm_cn} serves as an intermediate result in  our proof of  \thmref{thm_BIB}. As a result, although \thmref{thm_cn} seems like a corollary of \thmref{thm_BIB} , it needs to be proved first. 

For  \thmref{thm_cn}, by Axiom Correlation Neglect, it suffices to consider the preference over product lotteries $\hat{\cP}$. For any $q\in \lxone$ and $q'\in \lxtwo$, we denote the restriction of $\succsim$ on  $\mathcal{L}^0(X_1)\times \{q'\}$ as $\succsim_{1|q'}$ and the restriction of $\succsim$ on  $ \{q\} \times\mathcal{L}^0(X_2)$ as $\succsim_{2|q}$. We first show that $\succsim_{i|q}$ admits an EU representation for each $i\in \{1,2\}$ and $q\in \lxmi$. 

We define that the {\it independence property} holds for tuple $(P,Q,R,S)\in \hat{\cP}^4$ with $P_i=R_i$, $Q_j=S_j$ for some $i,j\in\{1,2\}$ and $P\succsim R, Q\succsim S$ if one of the following conditions hold:
\begin{itemize}
    \item $P\succ Q, R\sim S$ and for all $\alpha\in (0,1)$, $\alpha P + (1-\alpha) R\succ \alpha Q + (1-\alpha) S$;
    \item $P\sim Q, R\succ S$ and for all $\alpha\in (0,1)$, $\alpha P + (1-\alpha) R\succ \alpha Q + (1-\alpha) S$;
    \item $P\sim Q, R\sim S$ and for all $\alpha\in (0,1)$, $\alpha P + (1-\alpha) R\sim \alpha Q + (1-\alpha) S$;
    \item $P\succ Q, R\succ S$ and for all $\alpha\in (0,1)$, $\alpha P + (1-\alpha) R\succ \alpha Q + (1-\alpha) S$.
    \end{itemize}
We argue that it suffices to consider the case with $P\sim R$ and $Q\sim S$. Along with Axiom Weak Continuity,  Axiom Weak Multilinear Independence states that the independence property holds for any such tuple $(P,Q,R,S)$ with $P_{-i}\sim_{-i}R_{-i}$ and $Q_{-j}\sim_{-j}S_{-j}$. The rest of the proof proceeds as we discuss to what extent this ``local'' property can be generalized in different cases. 

If the DM narrowly brackets risks in both sources, then easy it is to show that the preference $\succsim$ admits an NB representation;

If the DM only narrowly brackets risks in one source, say, in source $2$, then it is sufficient to  focus on the subset of lotteries $\lxone \times X_2$ where the marginal lottery in source $2$ is degenerate. First, we notice that by assuming broad bracketing in source $1$, the independence property holds on a nontrivial set of product lotteries. Then we show that if the independence property holds on two sets of product lotteries respectively, then it also holds on their union. Finally, we apply the standard open cover arguments to extend the independence property and show that it would lead to a GBIB-CN representation. 

If the DM does not narrowly bracket risks in either source, then by a similar but more complex proof, we can show that $\succsim$ must admit an EU-CN representation. Actually, the interesting part is to exclude intermediate cases between EU-CN and GBIB-CN/GFIB-CN. 

The proof of \thmref{thm_BIB} can be decomposed into four steps. First, we restrict the preference $\succsim$ to product lotteries $\hat{\cP}$ and derive the corresponding partial representations on $\hat{\cP}$ by \thmref{thm_cn}. If further Axiom Correlation Neglect holds, then we are done. From now on, suppose that this axiom fails.  Second, we show that  Axiom Correlation Sensitivity can be strengthened to a natural relaxation of Axiom Independence, where the marginals of lotteries that are mixed have disjoint supports in source $1$. Third, by embedding the set of lotteries as a subspace of temporal lotteries in \cite{KP1978}, we can extend $\succsim$ to the set of temporal lotteries while satisfying the axioms in  \cite{KP1978}. Hence, $\succsim$ admits a KP-style representation on $\cP$. Finally, by making use of consistency of the two representations in the previous steps on product lotteries $\hat{\cP}$, we conclude that only representations stated in \thmref{thm_BIB} are feasible. 

\section{Applications and Discussions}\label{section_app}

\subsection{Simultaneous Monetary Prizes}\label{section_money}

In most economic applications like portfolio choices and labor supply decisions, the outcome in both sources is money or the numeraire. Also, the payoffs in experiments typically takes the form of tokens, which can be exchanged to money at a fixed rate. We assume that $X_1= X_2 =\bR$ in this section. 

When there is no risk and the DM receives money form both sources simultaneously, we argue that she will evaluate each outcome profile by adding up the monetary prizes. Consider a trivial example where a worker can choose between two payment schemes after finishing two identical tasks. In scheme 1, she will receive \$200 from the first task and \$220 from the second one. In scheme 2, she will get \$210 from the first task and \$200 from the second one. All payments are made at the same time by cash and tasks have already been done. Then arguably the worker should choose scheme 1, from which she can get \$10 more. The idea is summarized in the following axiom. 

\medskip

\noindent  {\bf Axiom Broad Bracketing without Risk}: For each $x,y\in\bR$, $(x,y)\sim (x+y,0)\sim (0,x+y)$.

\medskip

For any of the previous representations, this additional axiom just requires that the utility over degenerate lotteries $w:\mathbb{R}^2\rightarrow\mathbb{R}$ can be replaced by $u:\mathbb{R}\rightarrow\mathbb{R}$ such that $w(x,y)=u(x+y)$ for any $x,y\geq 0$. Suppose we further assume symmetry of the narrow preferences, that is, 

\medskip

\noindent  {\bf Axiom Symmetry}: For each $p,q\in \lr$, $(0,p)\succsim (0,q)$ if and only if $(p,0)\succsim (q,0)$.

\medskip
Then NB can be rewritten as 
	$$V^{NB}(P)= u( CE_{v}(P_1) +  CE_{v}(P_2)), \forall~P\in\cP.$$
where $u$ and $v$ are regular functions. Notice that $u$ is strictly monotone, the preference with a NB representation is also represented by 
	$$\hat{V}^{NB}(P)=  CE_{v}(P_1) +  CE_{v}(P_2), \forall~P\in\cP.$$
In other words, a narrow bracketer  evaluates a monetary lottery by the summing up the certainty equivalents of its marginal lotteries.

We contrast our representation with the commonly used functional form in the literature of narrow bracketing, which focuses on product lotteries. We adapt the utility function in \cite*{aer2006framing} and \cite{aer2009narrow} to our framework as follows: 
$$U(p,q)= \lambda \sum_{x,y}u(x+y)p(x)q(y) + (1-\lambda)[\sum_x u(x)p(x) +\sum_y u(y)q(y)],\forall~(p,q)\in\hat{\cP}, $$
where $\lambda \in [0,1]$ and $1-\lambda$ determines the degree of narrow bracketing. At the end of Section \ref{section_NB}, we already argued why such $\lambda$-mixture  models are excluded in our framework. Here we focus on the case with $\lambda=0$, that is, the DM admits (fully) narrow bracketing. Instead of summing up the certainty equivalents of marginal lotteries like in our NB, the above functional form sums up the expected utility of  marginal lotteries.\footnote{If we do not assume Axiom Broad Bracketing without Risk, then $U(p,q)=\sum_x u(x)p(x) +\sum_y u(y)q(y)$ is actually a special case of the intersection of NB and EU. This is the expected discounted utility model in  in time preference with discount factor $1$ (see Section \ref{section_unified}). However, as is argued above, Axiom Broad Bracketing without Risk is more reasonable in the setting with simultaneous monetary prizes. This again reflects our point that if we want to compare models of choice bracketing under risk, we need to maintain the same preference over degenerate lotteries.} This implies that the DM is subject to narrow bracketing even when there is no risk and hence she might prefer $(x,y)$ over $(x',y')$ even if $x+y< x'+y'$. 

One should notice that  \cite*{aer2006framing} and \cite{aer2009narrow} focus on the case where the choice problems in different sources are ``independent''. That is, besides assuming risks in two sources are resolved independently, they also assume that  the availability of gambles in one decision problem does not depend on available gambles in the other one. For instance, if $(x,y)$ and $(x',y')$ are available options, then $(x,y')$ and $(x',y)$ should also be available. In this restricted choice setup, their narrow bracketing representation will predict the same behavior as our NB model. However, if we consider the more general choice domain, the two models differ and theirs will predict narrow bracketing even without risk.

\subsubsection{Experimental Evidence of Narrow Bracketing}

Now we show how our model can accommodate the experimental evidence of narrow bracketing. Consider the following classic experiment introduced by \cite{S1981framing} and developed by \cite{aer2009narrow}. 

 \begin{exam}\label{ex0}
Suppose you face the following pair of concurrent decisions. All lotteries are independent. First examine both decisions, then indicate your choices. Both choices will be payoff relevant, i.e., the gains and losses will be added to your overall payment.

	{\it Decision (1): Choose between:}
	
	~~~~~{\it A. A sure gain of \$2.40.}
	
	~~~~~{\it B. A 25 percent chance to gain \$10.00, and a 75 percent chance to gain \$0.00.}

	{\it Decision (2): Choose between:}
	
	~~~~~{\it C. A sure loss of \$7.50.}
	
	~~~~~{\it D. A 75 percent chance to lose \$10.00, and a 25 percent chance to lose \$0.00.}
 \end{exam}

Since gains and losses from the two decision problems are aggregated, the DM should focus only on the distribution of overall monetary prizes. For example, the combination of $B$ and $C$ produces a lottery with a 1/4 chance of gaining \$2.50 and a 3/4 chance of losing \$7.50. By comparison, the lottery induced by the combination of $A$ and $D$ is a 1/4 chance of gaining \$2.40 and a 3/4 chance of losing \$7.60. The $BC$ combination is equal to the $AD$ combination plus a sure gain of  \$0.10 and hence the $AD$ combination is first-order stochastically dominated.  However, across different treatments in \cite{S1981framing} and \cite{aer2009narrow}, a reasonably large fraction (above 28\%) of subjects chose $A$ in decision (1) and $D$ in decision (2). Notice that $BC$ dominates $AD$ by adding a sure gain and they are reasonably similar in terms of complexity, hence previous models that are monotone or incorporate complexity aversion cannot explain the common choices of dominated options  without choice bracketing.

 Suppose the DM is narrowly bracketing with the following representation over product lotteries $\hat{\cP}$:
$$\hat{V}^{NB}(p,q)= CE_{v}(p) +  CE_{v}(q).$$
	where the utility index $u$ satisfies 
	\begin{equation}\label{eq_loss_averse}
	    v(x)= \begin{cases}
\sqrt{x}, \hbox{~if~} x\geq 0,\\
-2\sqrt{-x}, \hbox{~if~}  x<0.\end{cases}
	\end{equation}

This is a standard reference-dependent model with the reference point fixed at $0$. Easy to show that this model can accommodate the choice of A and D over B and C as
$$CE_{v}(p_A)=2.4> 0.625= CE_{v}(p_B), CE_{v}(p_D)=-5.625> -7.5= CE_{v}(p_C).$$

\subsubsection{Background Risk}\label{section_critique}

In this section, we interpret the risk in source $1$ as the background or endowment risk, and the risk in source $2$ as the risky decision at hand. \cite{ecta2000calibration} formally identifies a tension between expected utility and risk aversion regarding small gambles when choices only depend on the (distribution of) final wealth. His calibration theorem shows that a  low level of risk aversion with respect to small gambles leads to an absurdly high level of risk aversion with respect to large gambles. \cite{ecta2008calibration} then extend Rabin's calibration results to non-expected utility models satisfying certain differentiability conditions. \cite*{mpst2020background} further suggest that ``theories that do not account for narrow framing--whereby
independent sources of risk are evaluated separately by the decision maker--cannot explain
commonly observed choices among risky alternatives.''

One prominent thought experiment of Rabin's critique is as follows: if an EU maximizer turns down 50-50 gambles of losing \$1000 or gaining \$1050 for all initial wealth levels, then she would always turn down 50-50 gambles of losing \$20,000 or gaining any sum.

To avoid the unrealistic behavior by an EU maximizer, we consider a DM who has difficulty integrating risks in difference sources and will use narrow bracketing as a simplifying heuristic. Intuitively, outcomes in the gamble at hand are more ``accessible'' than background wealth levels \citep{k2011thinking} and the DM might not take into account the background risk when deciding whether to accept the gamble at hand. For example, suppose that the DM's NB utility function is given by Equation (\ref{eq_loss_averse}). One can easily show that the DM will always reject 50-50 of losing  \$1000 or gaining \$1050, and accept 50-50 gamble losing \$20,000 and gaining \$80,050. \footnote{An alternative way is to assume the DM does not fully ignore the background risk. Instead, she might only consider the certainty equivalent of the background risk and admits a FIB-CN representation. However, in order to accommodate the Rabin's critique, we need to extend the current model to allow for non-expected utility representations with first-order risk aversion over marginal lotteries like in \cite{ecta1991DA}. \label{footnote_FORA}}

\subsection{Time and Risk Preferences}\label{section_time}

From now on, we interpret the outcomes in two sources as consumptions or monetary prizes in two different periods. Concretely, source 1 is labeled as period 1 or present, and source 2 is labeled as period 2 or future. Sometimes we name an outcome profile as a consumption profile or consumption path. 

The rest of this section consists of two parts. The first part connects our representations with different seemingly distinct time preference models in the literature and provides a unified framework for them. It is worthwhile to emphasize that our characterization theorem originates from simplifying heuristics for  multi-source risks, without any ex-ante normative assumptions for intertemporal choices. In the second part, we study implications of some commonly studied axioms in time preferences and propose a new model that can accommodate many desirable properties, including separation of time and risk preferences, indifference to temporal resolution of uncertainty and stationarity.

\subsubsection{A Unified Framework}\label{section_unified}

In this section, we show that EU, BIB and NB have nice counterparts in time preferences that have been well studied in the literature. 

First, most commonly used time preferences are special cases of EU.  The most prominent example is the Expected Discounted Utility (EDU) model:
		$$V^{EDU}(P)=\mathbb{E}_{P_1}[u(x)]+\beta \mathbb{E}_{P_2}[u(y)].$$
where $u$ is the EU index in each period and $\beta\in [0,1]$ is the discount factor. The DM evaluates each lottery by the summation of expected utility of each marginal lottery weighted by the discount factor. One can easily show that this functional form is also a special case of NB. 

One natural extension of EDU is the Kihlstrom-Mirman (KM) model (\citealp*{KM1974}, \citealp*{DGO2020SI}) given by 
		$$V^{KM}(P)=\mathbb{E}_P\big[\phi\big( \frac{1}{\sum_t D(t)} \sum_{t=1}^2 D(t)u(x_t) \big)\big].$$
where the DM first evaluates each consumption path $(x_1,x_2)$ using discounted utility and then takes expected value of the utility profiles by applying additional curvature $\phi$. 

Second, Selden (1978) and Selden and Stux (1978) introduce a alternative time preference model to EU called Dynamic Ordinal Certainty Equivalent (DOCE), where the DM first takes the certainty equivalents of marginal lotteries in each period and then evaluate the profile of certainty equivalents. This exactly agrees the time preference interpretation of our NB model. 
 $$V^{DOCE}(P)=V^{NB}(P)  = w(CE_{v_1}(P_1),CE_{v_2}(P_2)).$$
One nice feature of DOCE/NB is that it can accommodate the tension between stochastic impatience and risk aversion over time lotteries introduced by \cite*{ecta2020time}, which is violated by most existing models of time preferences including  Epstein-Zin preferences \citep{ecta1989EZ} and risk-sensitive preferences \citep{ieee1995HS} as is shown in \cite*{DGO2020SI}.

Finally, we connect BIB with the the preferences in \cite{KP1978} and show that they have similar functional forms involving backward induction and offer similar predictions, despite their distinct behavioral motivations and characterizations.

 \cite{KP1978} extend lotteries over consumption paths to {\it temporal lotteries} in order to model temporal resolution of uncertainty. In the two-period setup, the set of temporal lotteries is $\mathcal{D}^*:=\mathcal{L}^0(X_1\times \mathcal{L}^0(X_2))$. To see why the set of temporal lotteries $\mathcal{D}^*$ is strictly larger than the set of lotteries $\mathcal{P}$, take the lottery $P=1/2(\delta_1,\delta_1) + 1/2(\delta_1,\delta_2)\in \mathcal{P}$ for an instance. There are two temporal lotteries that correspond to $P$: $d_1=1/2\delta_{(1,\delta_1)} + 1/2\delta_{(1,\delta_2)}$ and $d_2=\delta_{(1,1/2\delta_1+1/2\delta_2)}$. Notice that the marginal lottery of $P$ in the first period is deterministic and $P$ only involves risk in the second period. It remains unspecified about the timing of resolution of such risk. By comparison, temporal lotteries $d_1$ and $d_2$ contain exactly the same uncertainty in the second period as $P$, which resolves in the first period for temporal lottery $d_1$ and in the second period for $d_2$. In other words, there are two dated types of mixtures for deterministic consumption paths $(\delta_1,\delta_1)$ and $(\delta_1,\delta_2)$ in temporal lotteries, but only one type of mixture in lotteries. Generically, the DM can have strict ranking between temporal lotteries $d_1$ and $d_2$, which reflects her preference for early or late resolution of uncertainty. We define that a lottery $P\in \mathcal{P}$ is {\it induced} by the temporal lottery $d\in \mathcal{D}^*$ if for any $(x,y)\in X$, 
 $$P(x,y)=\sum_{(x,q)}d(x,q)q(y)$$
Following the above example of $P$ and $d_1$, $d_2$, we can show that every lottery is induced by some temporal lottery and there exist lotteries induced by more than one temporal lotteries. Thus the domain of temporal lotteries $\mathcal{D}^*$ is strictly richer than the set of lotteries $\mathcal{P}$.
 
 By applying the vNM axioms to temporal lotteries with mixture in period 1 and to temporal lotteries whose uncertainty only resolves in period 2 with mixture in period 2, \cite{KP1978} axiomatize a general class of Kreps-Porteus preferences (henceforth KP). To get proper comparison with BIB, we focus on the history-independent KP, whose representation $V^{KP}: \mathcal{D}^*\rightarrow \mathbb{R}$ is characterized by a tuple of regular functions $(w,v_2)$ such that $w: X\rightarrow \mathbb{R}, v_2: X_2\rightarrow \mathbb{R}$, and 
	$$V^{KP}(d)  = \sum_{(x,p)} w(x,CE_{v_{2}}(p))d(x,p).$$
By comparison, the BIB representation with the same tuple $(w,v_2)$ is given by 
$$V^{BIB}(P)  = \sum_{x} w(x,CE_{v_{2}}(P_{2|x}))P_1(x).$$

The above two representations share the same backward inductive procedure to evaluate multi-period risk. The DM first reduces the risk (resolving) in period 2 into its certainty equivalent under some history-independent expected utility index $v_2$. This transforms the original (temporal) lottery into one with only uncertainty (resolving) in period 1. Then the DM evaluates the new (temporal) lottery based on its expected utility under some index $w$. This recursive structure allows for the adoption of dynamic programming methods in optimization problems, and partially explains the popularity of the \cite{KP1978} framework in the past decades. Most recursive models, including the famous Epstein-Zin preferences \citep{ecta1989EZ} (henceforth EZ) and risk-sensitive preferences \citep{ieee1995HS} (henceforth HS)\footnote{\cite{ieee1995HS} originally formulate the risk-sensitive preference as an optimal control problem with risk-adjusted costs. \cite*{ecta2017recursive} show how it can be interpreted as a monotone recursive preference over temporal lotteries.}, are generalizations of \cite{KP1978} to temporal lotteries in an infinite horizon setting. 

It is worthwhile to mention three differences between BIB and history-independent KP preferences. First, the history-independent KP representations are defined on a strictly richer domain than BIB, which allows for resolution of uncertainty in different periods. $V^{KP}$ reduces to $V^{BIB}$ on the subdomain of temporal lotteries where uncertainty about outcomes in period $t$ resolves in period $t$ for each $t$. Second, as a direct implication of different domains, BIB exhibits indifference to temporal resolution of risk, while the history-independent KP preference satisfies this property if and only if it reduces to an expected utility representation. This distinction is essential in our discussion about asset market puzzles in Section \ref{section_puzzle}. Finally, the history-independent KP preferences satisfy the vNM independence axiom over temporal lotteries, while BIB violates it on the domain of lotteries. Actually, BIB satisfies Axiom Independence if and only if it also admits an EU representation\footnote{Another difference is that the history-independent KP preferences can satisfy topological continuity on its domain, which is not true for the BIB preferences. However, we can show that BIB satisfies Axiom Independence if and only if it satisfies topological continuity. In other words, backward induction bracketing can be interpreted as a joint relaxation of the independence and continuity properties.}. 

This suggests that shared recursive procedure in the two models is based on different rationales. In the history-independent KP preferences, it comes from non-indifference to temporal resolution of uncertainty and the model remains consistent with the expected utility paradigm. In the BIB representations, it results from choice bracketing, which is a simplifying heuristic to evaluate multi-source risk and can be behaviorally characterized via a deviation from the expected utility paradigm. They serve as two distinct and complementary justifications for the backward inductive procedure and neither of them is universally superior or inferior to the other.   One can actually enrich the domain of our framework to develop a model with choice bracketing and temporal resolution of uncertainty simultaneously. However, for some applications in Section \ref{section_puzzle}, we will argue that our framework based on choice bracketing might be more suitable.

Recall that our \thmref{thm_BIB} characterizes EU, BIB and NB among other representations by relaxing the vNM independence axiom. We provide a unified framework for those seemingly distinct or even competing models in the literature, based solely on simplifying heuristics for multi-source risks and no normative time preference properties. 
This reveals a deep connection between choice bracketing, which has usually been considered as an exotic behavioral bias or errors, and the commonly accepted models of time preferences.

\subsubsection{A New Model: KM-BIB}\label{section_KMBIB}

In this section, we will discuss some desirable normative properties of time preferences in the literature and then propose a new class of models based on BIB that can simultaneously satisfy all these properties except for one. For simplicity, assume that $X_1=X_2=\bRp$.

We start with the separation of time and risk preferences. It is well-known that in any EDU model, the inverse of the elasticity of intertemporal substitution (EIS) coincides with the coefficient of relative risk aversion (RRA). That is, the time preference and the risk preference and intertwined together. However, enormous empirical evidence in macroeconomics, finance and behavioral economics has shown the necessity to separate the two coefficients\footnote{Indirect evidence includes the failure to explain the equity premium puzzle with EDU \citep{jme1985puzzle}. See more discussion in Section \ref{section_puzzle}. For direct evidence, \cite*{qje1997evidence} find that RRA and EIS are uncorrelated through a cross section of American households and \cite{aer2012evidence} show similar results in a lab experiment.}. Actually, one motivation of EZ and DOCE is to achieve such separation of time and risk preferences. Suppose that $v_1$ and $v_2$ measure the risk preferences with in each period and $w$ measures the time preference over deterministic consumption paths. By separation of time and risk preferences, we mean that for each $w$, we can represent arbitrary risk preferences in each period by choosing appropriate $v_1$ and $v_2$, and vice versa.

The second property is indifference to temporal resolution of uncertainty, which is implicitly assumed by our choice domain of lotteries. Although EZ can have separate parameters for time and risk preferences, the separation depends on specific preferences for temporal resolution of uncertainty. For instance, if the DM is indifferent to when the uncertainty is resolved, then the two parameters must be the same and EZ agrees with EDU. Hence, despite the fact that many people might prefer  early or late resolution of uncertainty, it is still worthwhile to separate such preferences with other desirable properties in time preferences by assuming indifference to temporal resolution of uncertainty.

The third property is stationarity. The following two axioms illustrate the idea of \cite{ecta1960stationary} that ``the passage of time does not have an effect on preferences''. 

\medskip

  {\bf Axiom History Independence}: For any $x,y\geq 0$ and $p,q\in \lrp$, $(x,p)\succsim (x,q)$ if and only if $(y,p)\succsim (y,q)$.
  
  \medskip

  {\bf Axiom Stationarity}: For any  $p,q\in \lrp$, $p\succsim_2 q$ if and only if $p\succsim_1 q$.
  
  \medskip

Axiom History Independence states that, from the ex-ante perspective, the choice in period 2 in independent of the history of consumption in period 1. Axiom Stationarity is an adaption of Axiom 5 in \cite*{ecta2017recursive}, which is defined on temporal lotteries with infinite horizons, to our framework. It reflects the time-invariance of the DM's risk preference. Following \cite{ecta1960stationary} and \cite*{ecta2017recursive}, when there is no confusion, we use the term ``stationarity'' to represent the conjunction of both axioms.

The fourth property is recursivity, which requires that the mere passage of time does not affect the DM's preferences.\footnote{We use recursivity here since we adopt an ex-ante approach. Recursivity is essentially identical to the notion of time consistency or dynamic consistency in \cite{ecta1985tc} when we study the DM's preferences in every period and assume time invariance. }  It connects ex-ante and ex-post choices and permits the use of backward induction and dynamic programming methods.  We adapt the notion of  recursivity defined on the domain of temporal lotteries in \cite{CE1991recursive} and \cite{ecta2017recursive} to the domain of lotteries. 

 {\bf Axiom -- Recursivity}: For all $\pi_i\in (0,1)$, $x_i,x'_i\geq 0$ $q_i,q'_i \in \lrp$, $i=1,...,n$, such that $\sum_{i=1}^n\pi_i=1$, for each $i\neq j$, $x_i=x_j$ implies $q_i=q_j$ and $x'_i=x'_j$ implies $q'_i=q'_j$, if $(x_i,q_i)\succsim (x'_i,q'_i)$ for each $i$, then $\sum_i{\pi_i}(x_i,q_i)\succsim \sum_i{\pi_i}(x'_i,q'_i)$. Moreover, the latter preference is strict if $(x_i,q_i)\succ (x'_i,q'_i)$ for some $i$.

Axiom Recursivity states that if the DM prefers one lottery to another after the realization of consumption in period 1, then her preference over the two lotteries should be the same ex ante. It is easy to show that BIB satisfies recursivity.

The fifth property is called discounted utility without risk, which states that the preference over deterministic consumption paths agrees with EDU and can be represented by the summation of discounted utility in each period. 

  {\bf Assumption 1 -- Discounted Utility without Risk}: There exists a regular function $u:\mathbb{R}_+\rightarrow\mathbb{R}$ and $\beta\in [0,1]$ such that for all $x_1,x_2,y_1,y_2\geq 0$, $({x_1},{x_2})\succsim ({y_1},{y_2})$ if and only if $u(x_1)+\beta u(x_2)\geq u(y_1)+\beta u(y_2)$.
  
  \medskip

\cite*{DGO2020SI} show that KM is exactly the class of EU that admits a discounted utility representation when there is no risk. Similarly, a BIB representation $V^{BIB}$ satisfies Discounted Utility without Risk if and only if 
$$V^{BIB}(P)=\sum_{x}\phi\big(u(x)+\beta u(CE_{v_{2}}(P_{2|x}))\big)P_1(x)$$
where $\phi,u,v_2:\mathbb{R}_+\rightarrow\mathbb{R}$ are regular and $\beta\in [0,1]$ is the discount factor. If we further assume stationarity, then we derive the following KM-BIB representation: 
\begin{equation}\label{eq_KM-BIB}
    		V^{KM-BIB}(P)=\sum_{x}\phi\big(u(x)+\beta u(CE_{\phi\circ u}(P_{2|x}))\big)P_1(x).
\end{equation}
where $\phi,v_2:\mathbb{R}_+\rightarrow\mathbb{R}$ are regular and $\beta\in [0,1]$.

 By definition of the domain and the functional form, KM-BIB exhibits indifference to temporal resolution of uncertainty and history independence. Also, KM-BIB achieves a separation between time and risk preferences. The time preference  is determined by $u$. The risk preference in each period is represented by EU index $\phi\circ u$. This means that $\phi$ is the additional curvature used only in the case of risk preference and determines the separation between risk aversion and intertermporal substitution. We summarize those insights in the following claim.
 
 \begin{claim}\label{claim_KM-BIB}
 If the preference $\succsim$ admits a KM-BIB representation in (\ref{eq_KM-BIB}), then it satisfies discounted utility without risk, separation of time and risk preferences, indifference to temporal resolution of uncertainty, recursivity and stationarity. 
 \end{claim}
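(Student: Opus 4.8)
The plan is to verify each of the five properties directly from the functional form (\ref{eq_KM-BIB}), exploiting the fact that every property constrains $\succsim$ only on a simple subclass of lotteries (degenerate paths, the product lotteries $(x,p)$, $(p,0)$, $(0,p)$, or consistent mixtures) on which $V^{KM-BIB}$ collapses to a transparent expression. First I would record the deterministic evaluation: for a degenerate path $\delta_{(x_1,x_2)}$ one has $P_1=\delta_{x_1}$, $P_{2\mid x_1}=\delta_{x_2}$ and $CE_{\phi\circ u}(\delta_{x_2})=x_2$, so $V^{KM-BIB}(\delta_{(x_1,x_2)})=\phi\big(u(x_1)+\beta u(x_2)\big)$. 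Since $\phi$ is regular, hence strictly increasing, the induced order over paths agrees with $u(x_1)+\beta u(x_2)$, which is exactly \textbf{discounted utility without risk} with index $u$ and discount factor $\beta$. \textbf{Indifference to temporal resolution of uncertainty} is then immediate: the primitive is a preference on $\mathcal{P}$ and $V^{KM-BIB}$ is a function of $P$ alone, so any two temporal lotteries inducing the same $P$ receive the same value.

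For \textbf{recursivity} I would use that KM-BIB is a special case of BIB and re-derive the linearity that makes BIB recursive. Given a consistent collection as in the axiom (where $x_i=x_j$ implies $q_i=q_j$), the mixture $\sum_i \pi_i (x_i,q_i)$ has, for each first-period value $x$, a well-defined and unchanged conditional second-period lottery equal to the common $q$ attached to $x$. Hence the certainty-equivalent step commutes with mixing and $V^{BIB}\big(\sum_i \pi_i (x_i,q_i)\big)=\sum_i \pi_i\, w(x_i,CE_{v_2}(q_i))=\sum_i \pi_i\, V^{BIB}(x_i,q_i)$. This linearity delivers both the weak and the strict parts of recursivity, since all $\pi_i>0$.

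The substantive computations concern \textbf{stationarity} (the conjunction of History Independence and Stationarity). Evaluating the relevant product lotteries gives $V^{KM-BIB}(x,p)=\phi\big(u(x)+\beta u(CE_{\phi\circ u}(p))\big)$, $V^{KM-BIB}(p,0)=\sum_x \phi\big(u(x)+\beta u(0)\big)p(x)$ and $V^{KM-BIB}(0,p)=\phi\big(u(0)+\beta u(CE_{\phi\circ u}(p))\big)$. From the first expression and strict monotonicity of $\phi$, the comparison $(x,p)\succsim(x,q)$ reduces (for $\beta>0$; the case $\beta=0$ is trivial) to $CE_{\phi\circ u}(p)\ge CE_{\phi\circ u}(q)$, which is independent of $x$, giving \textbf{History Independence}. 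Unwinding $CE_{\phi\circ u}$ through the strictly increasing maps $\phi$ and $u$ shows that $\succsim_2$ is represented by the EU index $\phi\circ u$, whereas $\succsim_1$ is represented by $\phi(u(\cdot)+\beta u(0))$. The axiom $\succsim_1=\succsim_2$ therefore holds once $u(0)=0$, which is precisely the normalization built into (\ref{eq_KM-BIB}) (it is what forces the period-two risk index to be exactly $\phi\circ u$ rather than $\phi(u(\cdot)+\beta u(0))$). I expect this normalization step to be the main obstacle: one must argue that $u(0)=0$ is genuinely the standing normalization implicit in the representation rather than an extra hypothesis, since a naive affine rescaling of $u$ shifts the argument of $\phi$ and does not preserve the KM-BIB form.

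Finally, for \textbf{separation of time and risk preferences} I would note that the order over deterministic paths is governed by $(u,\beta)$ and is invariant to $\phi$ (because $\phi$ is strictly increasing and cancels on degenerate paths), while the per-period risk attitude is governed by the EU index $\phi\circ u$. Given any target time preference $(u,\beta)$ and any target regular per-period risk index $\rho$, setting $\phi:=\rho\circ u^{-1}$ (regular, as a composition of regular maps) yields $\phi\circ u=\rho$ while leaving $u(x_1)+\beta u(x_2)$ untouched. Hence time and risk attitudes can be chosen independently, establishing separation and completing the verification.
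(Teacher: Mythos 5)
Your verification is correct and takes the same direct route as the paper, which states this claim with only the informal discussion preceding it and gives no formal proof in the appendix, so your computations (the collapse of $V^{KM-BIB}$ on degenerate paths, the linearity in $P_1$ under the consistency condition of Recursivity, and the identification of the two narrow EU indices for Stationarity) supply exactly what is left implicit. Your observation that Stationarity genuinely requires the normalization $u(0)=0$ --- since $\succsim_1$ has EU index $\phi(u(\cdot)+\beta u(0))$ while $\succsim_2$ has index $\phi\circ u$, and these are not positive affine transformations of one another for general regular $\phi$ when $u(0)\neq 0$ --- is a real subtlety the paper glosses over, and your resolution (that the normalization is built into how (\ref{eq_KM-BIB}) is derived by imposing stationarity on the BIB-with-discounted-utility form) is the right one; the only remaining loose end is minor: in the separation argument $\phi:=\rho\circ u^{-1}$ is defined only on $u(\bRp)$ and must be extended regularly to the range of $u(x)+\beta u(y)$, which is harmless because the extension affects neither the ranking of deterministic paths nor the per-period risk index.
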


Now we introduce two notions of ``risk aversion'' across different periods for KM-BIB as an application. The first is correlation aversion introduced by \cite{EB2007correlation}. 

  {\bf Axiom -- Correlation Aversion}: For any $x_2>x_1$ and $y_2>y_1$, 
		$${\frac{1}{2}}(\delta_{x_1},\delta_{y_2})+{\frac{1}{2}}(\delta_{x_2},\delta_{y_1})  \succsim {\frac{1}{2}}(\delta_{x_1},\delta_{y_1})+{\frac{1}{2}}(\delta_{x_2},\delta_{y_2})$$

Notice the two lotteries agree on the marginal lotteries in both periods and only differ in the correlation structure. By monotonicity, the DM's most preferred outcome path is $(\delta_{x_2}, \delta_{y_2})$ and her least preferred outcome path is $(\delta_{x_1}, \delta_{y_1})$. Axiom Correlation Aversion requires that she prefers the mixture between intermediate outcome paths to the mixture between extreme outcome paths. We can show that a stationary KM-BIB satisfies Axiom Correlation Aversion if and only if the additional curvature $\phi$ is concave. \cite*{DGO2020SI} introduce a similar notion called residual risk aversion, which is also captured by the concavity of $\phi$. This suggests that  correlation aversion coincides with residual risk aversion in our framework.

Another relevant notion is called long-run risk aversion. Consider the following two consumption plans. In the first one, a coin is flipped independently in each period, and the payoff is \$1 if it lands on heads and \$0 if it lands on tails. This scenario is referred to as short-run risk. In the second plan, a coin is flipped only once at the beginning of period 1, and the payoff is either \$1 or \$0 in both periods. This scenario is referred to as long-run risk. EDU exhibits indifference between long-run risk and short-run risk, while the sensitivity to  long-run risk has been used to explain many financial puzzles \citep{jf2004LLR}. We adapt (and simplify) the notion of long-run risk aversion in \cite{ecta2013temporal} as follows.

{\bf Axiom -- Long-run Risk Aversion}: For any $x_2>x_1$, 
		$$ (\frac{1}{2}\delta_{x_1}+\frac{1}{2}\delta_{x_2}, \frac{1}{2}\delta_{x_1}+\frac{1}{2}\delta_{x_2} )  \succsim \frac{1}{2}(\delta_{x_1},\delta_{x_1})+\frac{1}{2}(\delta_{x_2},\delta_{x_2})$$

Easy to see that Axiom Long-run Risk Aversion is implied by Axiom Correlation Aversion for EU models, which include KM. However, for KM-BIB, the two notions differ and long-run risk attitude might depend on higher-order curvature of $\phi$. 

We end this section with a property that is generically violated by KM-BIB and relates it to the experimental evidence on choice bracketing in \cite{aer2009narrow}. Consider the following modification of the ordinal dominance property in \cite{jet1990non-eu} and the monotonicity condition in \cite*{ecta2017recursive}.

\medskip

  {\bf Axiom -- Ordinal Dominance}: For each $n>0$,  $(x_1^i,x_2^i),(y_1^i,y_2^i)\in\mathbb{R}^2_+$ for $i=1,...,n$ and $(\pi_1,...,\pi_n)\in[0,1]^n$ with $\sum_{i}\pi_i=1$, if either $x_1^i=x_1^j, y_1^i=y_1^j$ for all $i,j=1,...,n$, or $x_1^i\neq x_1^j, y_1^i\neq y_1^j$ for all $i\neq j$,  then
  $$(\delta_{x_1^i},\delta_{x_2^i})\succsim (\delta_{y_1^i},\delta_{y_2^i}), ~\forall~i=1,...,n\Longrightarrow \sum_{i=1}^n\pi_i(\delta_{x_1^i},\delta_{x_2^i})\succsim \sum_{i=1}^n \pi_i(\delta_{y_1^i},\delta_{y_2^i}).$$
  
    \medskip
    
Intuitively, ordinal dominance requires that the DM would never choose an action if another available action is preferable in every state of the world. Notice that the original ordinal dominance  property is defined over temporal lotteries and it holds for mixture of temporal lotteries in any period $t$ so long as they have the same deterministic history of consumptions before period $t$. In order to maintain similar interpretations in the space of lotteries, we require that ordinal dominance holds either when there is only uncertainty in the first period (i.e., the case where $x_1^i\neq x_1^j, y_1^i\neq y_1^j$ for all $i\neq j$) or in the second period (i.e., the case where $x_1^i=x_1^j, y_1^i=y_1^j$ for all $i,j$).

\cite*{ecta2017recursive} show that ordinal dominance is a tight restriction for recursive preferences. Specifically, the recursive KP preference satisfies ordinal dominance if and only if it is either  a risk-sensitive (HS) preference, where the risk attitude exhibits constant absolute risk aversion,  or belongs to the class of \cite{U1968time}, which is a special case of expected utility with infinite horizon. Similar results also hold in our framework. Axiom Ordinal Dominance is only generically satisfied by EU and it can be satisfied by other representations only if the DM has constant absolute risk aversion.

Suppose that instead of intertemporal choices, we interpret marginal lotteries in two sources as simultaneous monetary gambles and assume that the DM broadly brackets degenerate marginal lotteries. Then Axiom Ordinal Dominance exactly reduces to ``first-order stochastic dominance'' considered in \cite{aer2009narrow}. This implies a deep connection between the empirical evidence of narrow bracketing in experiments (\citealp{aer2009narrow}, \citealp{EF2020revealNB}) and the theoretical difficulty to satisfy ordinal dominance in recursive preferences \citep*{ecta2017recursive}.

\subsection{Asset Market Puzzles}\label{section_puzzle}

Since \cite{jme1985puzzle} introduced the equity premium puzzle, many puzzling facts of asset markets have been observed and challenged the validity of various models, including the standard expected discounted utility (EDU) model\footnote{Actually, the inflexibility of EDU to explain the  equity premium puzzle is one major motivation of the literature on recursive preferences. See \cite{ecta1989EZ} for a detailed discussion. }. It has been well understood that those puzzles are quantitative and explanations with extreme parameter values are usually regarded as inadequate.

One popular approach to address asset pricing puzzles is to use recursive preferences that permit the separation of time and risk preferences \citep{jpe1991EZ}. For instance, the long-run risks model of \cite{jf2004LLR} has provided a unified rationalization of several puzzling facts in asset markets  by combining the EZ preference with an endowment process featuring a persistent predictable component for consumption growth and its volatility. However, \cite*{aer2014time_premium} point out that the quantitative assessment of the preference for early resolution of uncertainty has been ignored in the macro-finance literature. The authors show that the parameter values used in \cite{jf2004LLR} imply that the DM is willing to give  up 25 or 30 percent of her lifetime consumption in order to have all risks resolved in period 1. Such timing premium is arguably too high as the risk is about consumption instead of income or asset returns, and there is no apparent instrumental value of information by early resolution of uncertainty. 

Note that in most applications of EZ in macroeconomics and finance,  temporal resolution of uncertainty is not explicitly involved. Instead, there is an implicit assumption that uncertainty about consumption and other state variables in period $t$  resolves in period $t$. Hence, if the main goal is to achieve the separation of time and risk preferences, then  our framework with narrow bracketing might be suitable since it only involves lotteries over deterministic consumption paths. As a result, indifference to temporal resolution of uncertainty  automatically holds and the timing premium is always zero.

We are not the first to include narrow bracketing to explain financial puzzles. \cite{qje1995myopic} provide an explanation of the equity premium puzzle by combining loss aversion and narrow bracketing. They argue that investors dislike stocks because they look at their portfolios frequently and evaluate the nominal changes in their accounts with loss aversion, even though they might save for a distant future. However, their approach differs from ours significantly. \cite{qje1995myopic} hinges on loss aversion and the behavioral assumption that agents care about nominal changes in the accounts. Also, it is not clear how their approach can be applied to  other puzzles studied in \cite{jf2004LLR}. By comparison, our approach is to adopt narrow bracketing to provide a foundation for an EZ-style preference which simultaneously satisfies separation of time and risk preferences and indifference to  temporal resolution of uncertainty. Note we are not claiming that zero timing premium is  normatively or positive appealing. Instead, our main message is that the attitude towards temporal resolution of uncertainty can be isolated from the separation of time and risk preferences. Actually, one can extend our model to the space of temporal lotteries and allow for the preference for either early or late resolution of uncertainty.

In the rest of the section, we will propose an alternative to the CRRA-CES EZ model used in \cite{jf2004LLR}. First,  consider the following  special case of  KM-BIB in two periods by assuming $u(x)=x^{\rho}$ and $\phi(x)=x^{\alpha/\rho}$ with $\rho<1, 0\neq \alpha<1$ and $0<\beta<1$: 
	$$U^{KM-BIB}(P)= \sum_{c_1} \big[(1-\beta)c_1^{\rho} + \beta[\mathbb{E}_{P_{2|c_1}}(c_2^{\alpha})]^{\rho/\alpha}\big]^{\alpha/\rho}P_1(c_1).$$
Like CRRA-CES EZ, the time preference parameter EIS is $1/(1-\rho)$ and the risk preference parameter RRA is $1-\alpha$, which reveals a separation of the two preferences.

Then we briefly discuss how to extend the above model to one with multiple periods. For simplicity, assume that the consumption space in each period $t=1,...,T$ is a compact interval $C$, where $T$ can be $+\infty$. The set of deterministic consumption paths is $C^{T}$ with a generic element $\mathbf{c}=(c_t)_{t=1}^T$. For each consumption path $\mathbf{c}\in C^T$, we denote the subsequence of consumptions in the first $t$ periods as $\mathbf{c}^t=(c_{\tau})_{\tau=1}^t$. 

The preference is defined on the lottery space $\mathcal{P}=\mathcal{L}(C^T)$. Here we allow for lotteries with infinite supports to accommodate applications in finance. For each lottery $P$, denote $P_{[t]}$ as the marginal lottery in the first $t$ periods, $1\leq t< T$. For each subsequence of consumptions $\mathbf{c}^t$ in the support of $P_{[t]}$, we define $\phi(P|\mathbf{c}^t)$ as the conditional lottery starting from period $t+1$, given that consumptions in the first $t$ periods are $\mathbf{c}^t$. When $T<+\infty$, $\phi(P|\mathbf{c}^t)\in \mathcal{L}(C^{T-t})$ and when $T= +\infty$, $\phi(P|\mathbf{c}^t)\in  \mathcal{L}(C^{\infty})$. 
Note that for each finite $T$, $\mathcal{L}(C^{T-t})$ is homeomorphic to a subset of $ \mathcal{L}(C^{\infty})$ where the consumptions are always $0$ from period $t+1$ on. So we will focus on the case with an infinite horizon.

 The following notions are adapted from recursive preferences on temporal lotteries (\citealp{CE1991recursive}, \citealp{ecta2017recursive}) to our framework. For each $V:\mathcal{P}:=\mathcal{L}(C^{\infty})\rightarrow \mathbb{R}$ and $p\in \mathcal{P}$, denote 
 $$m_V(P)(B)\equiv P_1\big\{c\in C:V(c,\phi(P|c)\in B) \big\}, \forall B\in \mathcal{B}(V(\mathcal{P}))$$
where $V(\mathcal{P})\subset \mathbb{R}$ is the image of $V$ on $\mathcal{P}$ and $\mathcal{B}(V(\mathcal{P}))$ is the set of all Borel subsets of $V(\mathcal{P})$. Then $m_V(P)$ is a probability measure over utilities conditional on the current consumption. Now we define the \textit{recursive preference over lotteries} as $V:\mathcal{P}\rightarrow \mathbb{R}$ with
\begin{align*}
    V(P) &= I(m_V(P)),\\
    V(c,q)&= W(c,V(q)),
\end{align*}
where $m_V(P)$ is defined as above, $I:\mathcal{L}(\mathbb{R})\rightarrow \mathbb{R}$ is a \textit{certainty equivalent}, that is, $I$ is continuous, increasing with respect to first order stochastic dominance and $I(x)=x$ for each $x\in \mathbb{R}$, $W:C\times \mathbb{R}\rightarrow \mathbb{R}$ is continuous and strictly increasing in the second argument. It is worthwhile to mention that, unlike \cite{CE1991recursive} and  \cite{ecta2017recursive}, $V$ is generically discontinuous since the mapping $m_V$ is  discontinuous. This is similar to the discontinuity of BIB in Section \ref{section_axioms}.

In order to get the CRRA-CES KM-BIB model, we can set $I=\phi^{-1}\circ \mathbb{E} \circ\phi$ with $\phi(x)=x^{\alpha\slash\rho}$ and $W(c,v)=(1-\beta)c^{\rho} + \beta v$, where $\rho<1, 0\neq \alpha<1$ and $0<\beta<1$. The recursive preference is equivalent to the following recursion of value functions (up to a monotone transformation): 
\begin{equation}\label{eq_recursion1}
    U_t^{\rho} = (1-\beta)c_t^{\rho} + \beta \big[\mathbb{E}_{\phi(P|\mathbf{c}^t)_1}\big(U_{t+1}^{\alpha} \big) \big]^{\frac{\rho}{\alpha}}
\end{equation}
where $U_t$ is the value in period $t$ and the expectation is computed with respect to $\phi(P|\mathbf{c}^t)_1$, which is the probability distribution of the consumptions in period $t+1$ conditional on consumptions in the first $t$ periods $\mathbf{c}^t$. 

\Equationref{eq_recursion1} can also be rewritten as a special case of the more general recursion as in \cite{aer2014time_premium}, which is defined over temporal lotteries:
\begin{equation}\label{eq_recursion2}
    U_t^{\rho} = (1-\beta)c_t^{\rho} + \beta \big[\mathbb{E}_{t}\big(U_{t+1}^{\alpha} \big) \big]^{\frac{\rho}{\alpha}}
\end{equation}

\Equationref{eq_recursion2} can lead to different models given different assumptions on $\mathbb{E}_{t}$. Take any temporal lottery $d$ and its induced lottery $P$. If the conditional expectation in period $t$ $\mathbb{E}_{t}$ is computed utilizing all available information in period $t$ about future consumptions, including both the history of consumptions $\mathbf{c}^t$ and information due to early resolution of uncertainty in $d$, then \Equationref{eq_recursion2} is exactly the CRRA-CES EZ model adopted in \cite{jf2004LLR} and \cite{aer2014time_premium}.

By comparison, if in period $t$, the consumer evaluates future utility solely based on information consisting of consumptions up to period $t$, then she will exhibit indifference to temporal resolution of uncertainty and \Equationref{eq_recursion2} reduces to \Equationref{eq_recursion1}, i.e., the CRRA-CES KM-BIB model. This also suggests that CRRA-CES KM-BIB agrees with CRRA-CES EZ on temporal lotteries where there is no early resolution of uncertainty. 

Moreover, if in period $t$, the consumer evaluates future utility as if she knows the realizations of all future consumptions, then she also satisfies indifference to temporal resolution of uncertainty. In this case, \Equationref{eq_recursion2} reduces to the static interpretation of CRRA-CES EZ as a special case of EU as studied in \cite{DGO2020SI}. 

It is important to distinguish between the information at {\it evaluation} mentioned above and the information at {\it decision}. Take a standard consumption saving problem for an example. The information at {\it decision} is the actual available information for the consumer in each period $t$ when she contemplates the consumption and portfolio choice in period $t$, including all previous consumptions, portfolio weights, state variables, current income shocks and so on. Any decision rule of consumptions and portfolio weights will induce a temporal lottery over consumptions. 

Then the consumer evaluates this temporal lottery using her utility function. Different utility functions will imply different ``as-if'' information at {\it evaluation} used to compute the expectation $\mathbb{E}_{t}$ in \Equationref{eq_recursion2}. Concretely, CRRA-CES EZ implies that the information at {\it evaluation} agrees with the information at {\it decision}; a consumer with CRRA-CES KM-BIB computes $\mathbb{E}_{t}$ as if she only knows consumptions up to period $t$; the static interpretation of CRRA-CES EZ in \cite{DGO2020SI} suggests that the  information at {\it evaluation} includes realizations of all future consumptions.

We end this section with a brief discussion on how to apply CRRA-CES KM-BIB in finance and macroeconomics. First, when there is no early resolution of uncertainty in all feasible consumption plans in the problem,   CRRA-CES KM-BIB shares the same predictions as CRRA-CES EZ. However, this condition fails in most applications including \cite{jf2004LLR} and we need to distinguish the information at evaluation from the information at decision carefully. In the case where $1-\alpha=$ RRA > 1\slash EIS $=1-\rho$, i.e. $\rho>\alpha$,\footnote{In the CRRA-CES EZ model, $\rho>\alpha$ if equivalent to preference for early resolution of uncertainty, although it has no such implication in our model. This condition has been either verified or assumed in empirical works on asset pricing. See \cite{jf2004LLR} for a detailed discussion. Hence we focus on the case with $\rho>\alpha$.} we conjecture that we can use consumption plans with essentially no early resolution of uncertainty to approximate the optimal value of the consumer.\footnote{Here we briefly discuss the intuition behind this conjecture. We say a temporal lottery is {\it feasible} if it can be induced by a feasible decision rule. Consider a feasible temporal lottery that has early resolution of uncertainty and different valuations under EZ and KM-BIB. Then there must exist two decision nodes in the same period $t$ where i) the histories of consumptions coincide, ii) the future utility prospects differ, and iii) the consumer chooses the same current consumption. This is exactly where KM-BIB is discontinuous in the weak convergence topology on $\mathcal{P}$. Since $\rho>\alpha$, the value of the temporal lottery is strictly higher under EZ than that under KM-BIB. Hence, if we slightly modify the consumption at one of the  two decision nodes, the above early resolution of uncertainty can be eliminated and the utility of the new temporal lottery under KM-BIB would be at least weakly higher than before. Repeat the argument and we can essentially eliminate all early resolution of uncertainty in the optimal consumption plan. I'm currently working on the formal analysis in a follow-up work.} Then we can derive the (approximate) Euler equations as in \cite{jpe1991EZ} and conduct similar analysis in \cite{jf2004LLR}  and \cite{aer2009rare}. In this way, we believe that the explanatory power of CRRA-CES KM-BIB is comparable with that of CRRA-CES EZ in those applications.

\section{Conclusion}
This paper generalizes the expected utility model for preference over lotteries on multi-source outcome profiles to incorporate two simplifying heuristics commonly used in the aggregation of risks:  choice bracketing and correlation neglect. We provide characterization results for the generalized models by relaxing the vNM independence axiom. We then apply our framework and  representations to different setups by varying the interpretations of different sources of outcomes. For example, with the interpretation of simultaneous monetary gambles, our model can explain experimental findings on narrow bracketing in \cite{aer2009narrow}. With the interpretation of background risk, our model provides one way to accommodate risk aversion over small favorable gambles in \cite{ecta2000calibration}. With the interpretation of intertemporal choices, we provide a unified framework to study several seemingly distinct models of time preferences in the literature and introduce a new class of models that can satisfy many desirable normative properties on time preferences.

One main point of the paper is that narrow bracketing and correlation neglect can be modelled as natural distortions of the independence axiom and should not be viewed as more ``irrational'' or more behavioral  than other commonly accepted non-EU theories in the literature. Hence, we think it might be worthwhile to incorporate these two heuristics in various economic applications. 

In a follow-up work, we formally extend the current  the two-source framework to multiple sources and infinite horizons and axiomatize a recursive version of our KM-BIB model. We also show how to apply our model in macroeconomics and finance. In another ongoing work, we consider general models of correlation misperception by modeling the correlation structure among risks in difference sources using copula theory. Moreover, as mentioned in footnote \ref{footnote_FORA}, one can extend our framework to consider non-EU models in each single source and incorporate factors like first order risk aversion and Allais Paradox.

\bibliographystyle{ecta}
\bibliography{bracketing}

\newpage

\section*{Appendix: Omitted Proofs}

For simplicity, we use abbreviations for each axiom. We have Axiom Weak Order (WO), Monotonicity (M), Weak Continuity (WC), Weak Independence (WI), Correlation Neglect (CN) and Correlation Sensitivity (CS). Also, We will denote the first part of Axiom WI as Axiom CI and the second part as Axiom WMI. For any $q\in \lxone$ and $q'\in \lxtwo$, we denote the restriction of $\succsim$ on  $\mathcal{L}^0(X_1)\times \{q'\}$ as $\succsim_{1|q'}$ and the restriction of $\succsim$ on  $ \{q\} \times\mathcal{L}^0(X_2)$ as $\succsim_{2|q}$. $\succsim_{1|q'}$ is called the {\it conditional preference} in source $1$ given lottery $q'$ in source 2 and $\succsim_{2|q}$ is called the {\it conditional preference} in source $2$ given lottery $q$ in source 1. 

If $\cone =+\infty$, then we denote that $(\cone,q)\succ (p,q)$ for all $(p,q)\in \hat{\cP}$. If $\lcone =-\infty$, then we denote that $(\lcone,q)\prec (p,q)$ for all $(p,q)\in \hat{\cP}$. Similar notions can be defined for $\ctwo = +\infty$ and $\lctwo =-\infty$.

\begin{proof}[Proof of \thmref{thm_cn}] ~\\
 {$ii)\Rightarrow i)$.} We first prove the necessity of these axioms.  Axioms WO and CN trivially hold.  With  Axiom CN, Axiom WC is equivalent to continuity of $\succsim$ on the subdomain of product lotteries $\hat{\mathcal{P}}$, which is implied by the continuity and boundedness of $w$, $v_1$ and $v_2$. 

For $p,q\in \lr$, we denote $p\succsim_{FOSD} q$ if for any $x\in \bR$, $\sum_{y\leq x}p(y)\leq \sum_{y\leq x}q(y)$ and $p\succ_{FOSD} q$ if $p\succsim_{FOSD} q$ and $p\neq q$. For Axiom M, if $P$ dominates $(\delta_{x_1},\delta_{x_2})$, then $P_i\succsim_{FOSD}\delta_{x_i}$ for $i=1,2$ and at least one ranking is strict. Then monotonicity of $w$, $v_1$ and $v_2$ guarantees that $P\succ (\delta_{x_1},\delta_{x_2})$. By a similar argument, $(\delta_{x_1},\delta_{x_2})\succ P$ if  $(\delta_{x_1},\delta_{x_2})$ dominates $P$.  Therefore Axiom M is satisfied.

Now we check  Axiom WI.  First, if $\succsim$ admits an EU-CN representation, then by  \lemmaref{lemma_eu-cn}, $\succsim$ satisfies Axiom Multilinear Independence, and hence Axiom WI. Second, suppose that $\succsim$ admits a GBIB-CN representation $(w,v_1,v_2,H_2)$, that is, 
 $$V^{GBIB-CN}(P)= \begin{cases}
w(CE_{v_1}(P_1),CE_{v_2}(P_2)), \hbox{~if~} CE_{v_2}(P_2)\in X_2\backslash H_2\\
\sum w(x,CE_{v_2}(P_2))P_1(x), \hbox{~if~} CE_{v_2}(P_2)\in H_2\end{cases}$$
For each $p\in \lxone$, $\succsim_{2|p}$ is represented by an EU with index $v_2$. For each $p'\in \lxtwo$, when $CE_{v_2}(p')\in X_2\backslash H_2$, then $\succsim_{1|p'}$ is represented by an EU with index $v_1$. Moreover, as $0\not\in H_2$, $\succsim_1$ is admits an EU representation with index $v_1$. When $CE_{v_2}(p')\in H_2$, then $\succsim_{1|p'}$ is represented by an EU with index $w(\cdot,CE_{v_2}(p'))$. Hence, Axiom CI is satisfied.

Then we check Axiom WMI.  Fix $P,Q,R,S\in\hat{\mathcal{P}}$, $\alpha\in (0,1)$ and $i,j\in\{1,2\}$ with $P_i=R_i,  Q_j=S_j, P_{-i}\sim_{-i} R_{-i}$, $Q_{-j}\sim_{-j} S_{-j}$, $P\succ Q$ and $R\sim S$. First we claim that Axiom WMI holds if $P\sim R$ or $Q\sim S$.  Suppose that $P\sim R$, then by Axiom CI, for all $\alpha\in (0,1)$, $\alpha P +(1-\alpha)R\sim P \sim R$ and either $Q\succsim \alpha Q + (1-\alpha)S$ or $S\succsim \alpha Q + (1-\alpha)S$. If $Q\succsim S$, then $\alpha P +(1-\alpha)R\sim P \succ Q\succsim  \alpha Q + (1-\alpha)S$. If instead $S\succ Q$, then  $\alpha P +(1-\alpha)R\sim R \sim S\succ  \alpha Q + (1-\alpha)S$. This proves Axiom WMI. Similar arguments hold for $Q\sim S$.

Now we consider the following three cases. 
\begin{itemize}
    \item Case 1: Suppose that $i=1$. Then $P_1=R_1$ and $P_2\sim_2 R_2$, which implies $CE_{v_2}(P_2)= CE_{v_2}(R_2)$. We know $P\sim R$ and hence Axiom WMI holds.
    
    \item Case 2: Suppose that $j=1$. Then $Q_1=S_1$ and $Q_2\sim_2 S_2$, which implies $CE_{v_2}(Q_2)= CE_{v_2}(S_2)$. We know $Q\sim S$ and hence Axiom WMI holds.
    
     \item Case 3: Suppose that $i=j=2$. \\
     If $CE_{v_2}(P_2)\in X_2\backslash H_2$ or $CE_{v_2}(Q_2)\in X_2\backslash H_2$, then either $P\sim R$ or $Q\sim S$ and we are done. If $CE_{v_2}(P_2), CE_{v_2}(Q_2)\in H_2$, then the GBIB-CN representation is linear in marginal lotteries in source 1. Then for any $\alpha \in (0,1)$,
     \begin{align*}
      V^{GBIB-CN}(\alpha P+ (1-\alpha)R) &= \alpha  V^{GBIB-CN}(P)+ (1-\alpha)V^{GBIB-CN}(R)\\
      &> V^{GBIB-CN}(Q)+ (1-\alpha)V^{GBIB-CN}(S)\\
      &= V^{GBIB-CN}(\alpha Q+ (1-\alpha)S).
     \end{align*}
     This verifies Axiom WMI.
     
\end{itemize}
Thus Axiom WI holds for GBIB-CN. A symmetric proof applies if $\succsim$ admits a FBIB-CN representation. This completes the proof for necessity of axioms.

\bigskip

\noindent {$i)\Rightarrow ii)$.} Suppose that all axioms hold.  For each $x_i\in X_i$ and $i=1,2$,  denote $\Pi^i(x_i)$ as the set of marginal lotteries in source $i$ with certainty equivalent $x_i$. Formally,  $\Pi^i(x)= \{p\in \mathcal{L}^0(X_i): p\sim_i \delta_{x_i}\}$.  For any two product lotteries $P,Q \in \hat{\cP}$ with $P\succsim Q$,  let $[Q,P]$ denote the set of all product lotteries whose utilities lie between $P$ and $Q$, that is, $[Q,P]= \{S\in \hat{\cP}: P\succsim S\succsim Q\}$.  

Also, for each $q_i\in \mathcal{L}^0(X_i)$ and $x_i\in X_i$, $i=1,2$, define 
\begin{align*}
\Gamma(x_1,x_2)= \bigcup_{\substack{P,Q\in \Pi^1(x_1)\times \Pi^2(x_2),\\ P\succsim Q}}[Q,P]
\end{align*}
\begin{align*}
\Gamma_{1,q_1}(x_2)= \bigcup_{\substack{P,Q\in \{q_1\}\times \Pi^2(x_2),\\ P\succsim Q}}[Q,P]~,~~ 
\Gamma_{2,q_2}(x_1)= \bigcup_{\substack{P,Q\in  \Pi^1(x_1)\times \{q_2\},\\ P\succsim Q}}[Q,P].
\end{align*}
Intuitively, $\Gamma(x,y)$ includes all product lotteries whose utilities are bounded by lotteries in $\Pi^1(x)\times \Pi^2(y)$. $\Gamma_{1,q_1}(y)$ and $\Gamma_{2,q_2}(x)$ admit similar interpretations. We further define 
$$\Gamma_{1,q_1}=\bigcup_{x_2\in X_2} \Gamma_{1,q_1}(x_2)~,~~\Gamma_{2,q_2}=\bigcup_{x_1\in X_1} \Gamma_{2,q_2}(x_1).$$

For any set of lotteries $\mathcal{A}\subseteq \cP$, denote $\max_{\succsim} \mathcal{A}=\{P\in \mathcal{A}: P\succsim Q,\forall Q\in \mathcal{A}\}$ whenever it is well-defined. That is, $\max_{\succsim} \mathcal{A}$ is the set of most preferred lotteries in $\mathcal{A}$ under $\succsim$. 

Finally, for any set $A$, we denote $A^o$ as its interior  and $\partial A$ as its boundary with respect to the appropriate topology.  

{\bf Step 1: Direct implications of axioms.} 

First, by Axiom CN, $P\sim (P_1, P_2)$ for all $P\in\mathcal{P}$ and it suffices to consider the restriction of $\succsim$ on product lotteries $\hat{\mathcal{P}}$. Then Axiom WC implies that $\succsim$ satisfies topological continuity. 

The following lemma shows the EU representation of the conditional preference $\succsim_{i|q}$ for each $i=1,2$ and $q\in \lxmi$.

\begin{lemma}\label{lemma_narrow pre}
For each $i=1,2$ and $q\in \lxmi$, the conditional preference $\succsim_{i|q}$ admits an EU representation with a utility index $v_{i|q}$, which is continuous, bounded and unique up to a positive affine transformation. Moreover, if $q\in X_{-i}$, then $v_{i|q}$ can be chosen to be strictly monotone (and hence regular). 
\end{lemma}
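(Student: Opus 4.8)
The plan is to obtain the representation, for a fixed source $i$ and a fixed $q\in\lxmi$, by applying the mixture-space (Herstein--Milnor) theorem to the restriction $\succsim_{i|q}$, and then to upgrade the resulting affine index to a bounded, continuous one using the two continuity requirements inside Axiom WC. First I would observe that the set $\{q\}\times\lxi$ on which $\succsim_{i|q}$ lives is closed under the mixture operation inherited from $\cP$: mixing two product lotteries that share the marginal $q$ in source $-i$ again yields a product lottery with marginal $q$ in source $-i$ (since $\alpha q+(1-\alpha)q=q$) and the correspondingly mixed marginal in source $i$. Hence this set is a genuine mixture space, affinely isomorphic to $\lxi$. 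On it, Axiom WO gives a weak order, the relevant line of Axiom CI is exactly the vNM independence axiom for $\succsim_{i|q}$, and the Mixture Continuity part of Axiom WC supplies the Archimedean hypothesis.

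The mixture-space theorem then yields an affine $U_q:\lxi\to\bR$ representing $\succsim_{i|q}$, unique up to a positive affine transformation. Setting $v_{i|q}(x):=U_q(\delta_x)$ and using finite affinity gives $U_q(p)=\sum_x v_{i|q}(x)p(x)$, i.e. the EU form, with the stated uniqueness inherited from that of $U_q$.

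Next I would establish boundedness and continuity of $v_{i|q}$ from Topological Continuity over Product Lotteries; this is where the possible unboundedness of $X_i$ makes the argument nontrivial and is the main obstacle. For boundedness above (the case $\ci=+\infty$), suppose $v_{i|q}$ were unbounded; choosing $x_n\uparrow$ with $\tfrac1n v_{i|q}(x_n)\to\infty$ and forming $p_n=(1-\tfrac1n)\delta_{x_0}+\tfrac1n\delta_{x_n}$, I get $U_q(p_n)\to\infty$ while $p_n\to\delta_{x_0}$ weakly, because the escaping mass $\tfrac1n$ vanishes. Picking any $b$ with $v_{i|q}(b)>v_{i|q}(x_0)$, the open set $\{P\in\hat\cP:(\delta_b,q)\succ P\}$ contains $(\delta_{x_0},q)$ and hence $(p_n,q)$ for large $n$, contradicting $(p_n,q)\succ(\delta_b,q)$; boundedness below is symmetric. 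For continuity at a point $x^*$ I would use that the range $U_q(\lxi)$ is an interval (the affine image of a convex set): given $x_n\to x^*$ with $v_{i|q}(x_n)\to L\neq v_{i|q}(x^*)$, say $L>v_{i|q}(x^*)$, I pick a mixed lottery $m$ with $U_q(m)$ strictly between $v_{i|q}(x^*)$ and $L$; then $(m,q)\succ(\delta_{x^*},q)$ while $(\delta_{x_n},q)\succ(m,q)$ for large $n$, and the open lower-contour set $\{P\in\hat\cP:(m,q)\succ P\}$ containing $(\delta_{x^*},q)$ must also contain the convergent sequence $(\delta_{x_n},q)$, a contradiction.

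Finally, strict monotonicity in the degenerate case $q\in X_{-i}$ follows directly from Axiom M: for $x>x'$ in $X_i$ the degenerate profile with outcome $x$ in source $i$ and $q$ in source $-i$ dominates the one with outcome $x'$, so $v_{i|q}(x)=U_q(\delta_x)>U_q(\delta_{x'})=v_{i|q}(x')$; together with continuity and boundedness this makes $v_{i|q}$ regular. I expect the delicate points to be purely the boundedness and continuity claims over an unbounded $X_i$, both handled by the ``vanishing-weight'' device above, which exploits that weak convergence is insensitive to mass carried off to infinity with vanishing probability while the linear index is not.
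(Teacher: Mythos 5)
Your proposal is correct and follows essentially the same route as the paper: obtain the EU representation of $\succsim_{i|q}$ from Axiom CI together with the continuity parts of Axiom WC (the paper invokes the continuous vNM theorem directly where you run the mixture-space theorem and then supply continuity of the index separately, which is just a more explicit version of the same step), establish boundedness by the identical vanishing-weight device --- a lottery $\frac{1}{n}\delta_{x_n}+\frac{n-1}{n}\delta_{x_0}$ whose utility diverges while it converges weakly to $\delta_{x_0}$, contradicting Topological Continuity over Product Lotteries --- and derive strict monotonicity for degenerate $q$ from Axiom M. No gaps.
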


\begin{proof}[Proof of \lemmaref{lemma_narrow pre} ]
Fix $i=1,2$ and $q\in \lxmi$.  By Axiom WC, the conditional preference $\succsim_{i|q}$ is continuous. By Axiom CI, $\succsim_{i|q}$ admits an EU representation with a  continuous utility index $v_{i|q}$ defined on $X_i$, which is unique up to a positive affine transformation. Normalize that $v_{i|q}(0)=0$.   Suppose by contradiction that $v_{i|q}$ is unbounded, then for any positive integer $n$, there exists $x_n\in \bR$ such that $|v_{i|q}(x_n)|>n$.  There exists a subsequence $\{x_{n_k}\}_{k\geq 1}$ such that $v_{i|q}(x_{n_k})>n_k$ for each $k$ or $v_{i|q}(x_{n_k})<-n_k$ for each $k$. Suppose, without loss of generality,  that the former case holds.\footnote{This proof technique will be used for multiple times below. For simplicity, we will call it ``the subsequence arguments'' and denote the subsequence as the original sequence, which is without loss of generality.} Consider the marginal lottery $p_{n_k} = \frac{1}{n_k}\delta_{x_{n_k}} + \frac{n_k-1}{n_k}\delta_0$ for each $k$. By  continuity of $v_{i|q}$, we can find $\epsilon>0$ with $v_{i|q}(\epsilon)\in (0,1)$.
 For each $k$, the utility of $p_{n_k}$ is $U_{i|q}(p_{n_k})= \frac{1}{n_k}v_{i|q}(x_{n_k})>1>v_{i|q}(\epsilon) = U_{i|q}(\delta_{\epsilon})$, which means $p_{n_k}\succsim_{i|q} \delta_{\epsilon}$.  Meanwhile, $p_{n_k}\xrightarrow{w}\delta_0 \prec_{i|q} \delta_{\epsilon}$ as $v_{i|q}(\epsilon)>v_{i|q}(0)=0$. This contradicts with the continuity of $\succsim_{i|q}$. As a result, $v_{i|q}$ is bounded.  Moreover, if $q\in X_{-i}$, that is, $q=\delta_y$ for some $y\in X_{-i}$, then by Axiom M, we know $v_{i|q}$ must be strictly monotone. 
 \end{proof}

When $q=\delta_0$, then the conditional preference in source $i$ agrees with the narrow preference in source $i$ and its EU index is denoted as $v_i$ for simplicity.  It is worthwhile to note that for each $i=1,2$, if $x\in X_i\backslash X_i^o$, then $\Pi_i(x)=\{\delta_x\}$. 

A direct corollary of \lemmaref{lemma_narrow pre} guarantees the existence of ``certainty equivalents''.

\begin{corollary}\label{coro_CE}
For each $P\in \hat\cP$, there exists $x_1,y_1\in X_1, x_2,y_2\in X_2$ such that $P\sim (P_1,x_2)\sim (x_1,P_2)\sim (y_1,y_2).$
\end{corollary}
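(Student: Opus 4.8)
The plan is to obtain each of the three equivalences by an intermediate-value argument applied to the conditional expected-utility representations furnished by \lemmaref{lemma_narrow pre}. The only ingredients needed are continuity of the conditional indices, the fact that each $X_i$ is an interval (hence connected), and transitivity of $\succsim$ from Axiom WO; strict monotonicity of the indices will not be required.

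First I would construct $x_2$. Fix the source-$1$ marginal $P_1$ and consider the conditional preference $\succsim_{2|P_1}$, which by \lemmaref{lemma_narrow pre} is represented on $\{P_1\}\times\lxtwo$ by a continuous index $v_{2|P_1}$. Here $(P_1,P_2)$ has value $\bar v=\sum_y v_{2|P_1}(y)P_2(y)$, a convex combination of the finitely many numbers $\{v_{2|P_1}(y):y\in\supp(P_2)\}$; hence $\bar v$ lies between the minimum and maximum of these values, each attained at a support point. Since $X_2$ is an interval, the closed segment joining the two extremal support points lies in $X_2$, and continuity of $v_{2|P_1}$ on this compact segment yields, by the intermediate value theorem, some $x_2\in X_2$ with $v_{2|P_1}(x_2)=\bar v$. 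Thus $(P_1,x_2)\sim_{2|P_1}(P_1,P_2)$, i.e. $P\sim(P_1,x_2)$. A symmetric argument with $\succsim_{1|P_2}$ and its index $v_{1|P_2}$ produces $x_1\in X_1$ with $P\sim(x_1,P_2)$.

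For the fully degenerate equivalent I would reuse the output of the first step. Having $P\sim(P_1,x_2)$, I condition on the \emph{degenerate} lottery $\delta_{x_2}$: by the ``moreover'' clause of \lemmaref{lemma_narrow pre} (applicable since $x_2\in X_2$) the conditional preference $\succsim_{1|x_2}$ is represented by a regular index $v_{1|x_2}$. Exactly the same intermediate-value argument, now applied to $v_{1|x_2}$ and the convex combination $\sum_x v_{1|x_2}(x)P_1(x)$, gives $y_1\in X_1$ with $(y_1,x_2)\sim(P_1,x_2)$. Setting $y_2:=x_2$ yields $P\sim(y_1,y_2)$. Finally, transitivity (Axiom WO) chains the three relations $P\sim(P_1,x_2)$, $P\sim(x_1,P_2)$, $P\sim(y_1,y_2)$ into the displayed statement.

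There is no serious obstacle; this is a routine existence-of-certainty-equivalent argument. The only point demanding care is checking that the target expected-utility value actually lies in the range of the relevant continuous index. This is precisely where I use that $\bar v$ is a convex combination of attained values together with the connectedness of $X_i$: the image of the extremal segment under the continuous index is an interval containing $\bar v$. It is fortunate that continuity alone suffices for existence, because $v_{2|P_1}$ and $v_{1|P_2}$, being conditioned on non-degenerate marginals, are guaranteed by \lemmaref{lemma_narrow pre} only to be continuous and bounded, not strictly monotone.
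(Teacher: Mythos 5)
Your proposal is correct and follows essentially the same route as the paper: apply the intermediate value theorem to the continuous conditional index $v_{2|P_1}$ from \lemmaref{lemma_narrow pre} to get $x_2$, argue symmetrically for $x_1$, then repeat the argument for the product lottery $(P_1,\delta_{x_2})$ to obtain $y_1$ with $y_2=x_2$. If anything, your justification that the target value lies in the range of the index (as a convex combination of attained values, using connectedness of $X_2$) is slightly more careful than the paper's, which asserts strict bracketing inequalities that could fail in the degenerate case where $v_{2|P_1}$ is constant on $\supp(P_2)$.
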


\begin{proof}[Proof of \cororef{coro_CE} ]
Suppose that $P_1\not\in X_1, P_2\not\in X_2$. The case where $P_1\in X_1$ or $P_2\in X_2$ is easier to prove. By \lemmaref{lemma_narrow pre}, we know there exists $a,a'\in X_2$ such that $v_{2|P_1}(a)>\sum_xv_{2|P_1}(x)P_2(x)> v_{2|P_1}(a')$. Since $v_{2|P_1}$ is continuous and $X_2$ is a closed interval, there exists $x_2\in X_2$ where $v_{2|P_1}(x_2)= \sum_xv_{2|P_1}(x)P_2(x)$, which implies $P\sim (P_1,x_2)$.  Similarly, we can find $x_1\in X_1$ with $P\sim (x_1,P_2)$. Now let $y_2=x_2$. Repeat the above arguments for product lottery $(P_1,x_2)$ and we know there exists $y_1\in X_1$ such that $(y_1,x_2)\sim (P_1,x_2)\sim P$.
\end{proof}

The next lemma summarizes two implications of Axiom WC and Axiom WI. 

\begin{lemma}\label{lemma_axiom implication}
(i). For each $P,Q,R\in \hat{\cP}$ with $P\succsim R\succsim Q$, $P\succ Q$ and $P_i=Q_i$ for some $i\in \{1,2\}$, then there exists a unique $\lambda\in [0,1]$ such that $R\sim \lambda P + (1-\lambda)Q$.\\
(ii). For each $P,Q,R,S\in\hat{\mathcal{P}}$, $\alpha\in (0,1)$ and $i,j\in\{1,2\}$, if $P_i=R_i,  Q_j=S_j, P_{-i}\sim_{-i} R_{-i}$ and $Q_{-j}\sim_{-j} S_{-j}$, then
\begin{align*}
    &P\sim Q, R\sim S\Longrightarrow \alpha P + (1-\alpha) R\sim \alpha Q + (1-\alpha) S\\
    &P\succ Q, R\succ S\Longrightarrow \alpha P + (1-\alpha) R\succ \alpha Q + (1-\alpha) S
\end{align*}
\end{lemma}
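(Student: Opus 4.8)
The plan is to handle the two parts separately, in both cases leaning on two facts: each conditional preference $\succsim_{i|q}$ is expected utility (\lemmaref{lemma_narrow pre}), and whenever two product lotteries agree in source $i$ their mixture is again a product lottery, with that common $i$-marginal and the linearly mixed $(-i)$-marginal. For part (i) I would take WLOG $P_1=Q_1=p$, so the path $\lambda\mapsto \lambda P+(1-\lambda)Q=(p,\lambda P_2+(1-\lambda)Q_2)$ stays inside the fibre $\{p\}\times\lxtwo$. Existence of $\lambda$ comes from Axiom Mixture Continuity plus connectedness of $[0,1]$: the sets $\{\lambda: \lambda P+(1-\lambda)Q\succ R\}$ and $\{\lambda: R\succ \lambda P+(1-\lambda)Q\}$ are open and disjoint, they contain $\lambda=1$ and $\lambda=0$ respectively in the nontrivial case $P\succ R\succ Q$, hence cannot cover $[0,1]$, and any leftover $\lambda$ gives indifference. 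Uniqueness is immediate from \lemmaref{lemma_narrow pre}: on $\{p\}\times\lxtwo$ the value is $\lambda\sum_x v_{2|p}(x)P_2(x)+(1-\lambda)\sum_x v_{2|p}(x)Q_2(x)$, strictly increasing in $\lambda$ since $P\succ Q$ forces $\sum_x v_{2|p}(x)P_2(x)>\sum_x v_{2|p}(x)Q_2(x)$.

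Part (ii) carries the real weight. First I would record a symmetric form of Axiom WMI, obtained by running the axiom with mixing weight $1-\alpha$ and relabelling, so that the conclusion follows both from $[P\succ Q,\,R\sim S]$ and from $[P\sim Q,\,R\succ S]$. Two structural observations then organize everything: $P,R$ lie in the common class $F:=\{Z\in\hat{\cP}: Z_i=P_i,\ Z_{-i}\sim_{-i}P_{-i}\}$ and $Q,S$ lie in $G:=\{Z: Z_j=Q_j,\ Z_{-j}\sim_{-j}Q_{-j}\}$; mixing two members of $F$ (resp.\ $G$) stays in $F$ (resp.\ $G$); and on each such class the preference is represented by a single conditional EU index, so every mixture is weakly $\succsim$-between its two constituents.

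For the indifference case ($P\sim Q,\,R\sim S$) I would approximate: perturb $P$ inside its class $F$ to $P^n\to P$ with $P^n\succ Q$, and separately with $P^n\prec Q$; applying WMI to $(P^n,Q,R,S)$ and passing to the limit with Axiom Weak Continuity yields $\succsim$ from one side and $\precsim$ from the other, hence $\sim$. (If no such perturbation exists, the conditional index on $F$ is affinely the narrow index; then $P\sim R$ and both mixtures collapse onto $P\sim Q$, settling the claim directly.) For the strict case ($P\succ Q,\,R\succ S$) I would insert a tie-preserving auxiliary: since $Q\in G$, the mixture $\beta Q+(1-\beta)S$ sweeps $\succsim$-monotonically from $S$ up to $Q$, so when $R\precsim Q$ I can pick $S'=\beta^{*}Q+(1-\beta^{*})S\in G$ with $S'\sim R$ and chain $\alpha P+(1-\alpha)R\succ \alpha Q+(1-\alpha)S'\succ \alpha Q+(1-\alpha)S$, the first step by WMI and the second by its symmetric form, every tie being valid because $S'\in G$. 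The mirror construction inside $F$ covers $R\succ Q$, and any remaining order configuration closes by betweenness, e.g.\ $\alpha P+(1-\alpha)R\succsim \min_{\succsim}\{P,R\}\succ Q\succsim \alpha Q+(1-\alpha)S$ when $R\succ Q$ and $Q\succsim S$.

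The main obstacle I anticipate is precisely the bookkeeping in this last step. Because WMI is only a \emph{local}, tie-restricted independence axiom, every manipulation must be routed through mixtures preserving membership in $F$ or $G$, and I must verify that in each configuration of the four preference values either an admissible auxiliary lottery lies within reach of the relevant class (an intermediate-value statement inside a narrow-indifference class, using $P\in F$ and $Q\in G$ to bracket the target) or a direct betweenness comparison already decides it. Producing a genuinely exhaustive case split — in particular covering the degenerate configurations where a conditional index coincides with the narrow index — is the delicate part; by contrast I expect the cases $i\neq j$ to be only notationally, not conceptually, harder.
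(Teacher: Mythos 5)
Your part (i) and the strict sub-case of part (ii) follow essentially the paper's route: the paper likewise manufactures a tie via part (i) (it sets $S'=\lambda Q+(1-\lambda)S\sim R$ when $P\succ Q\succ R\succ S$, applies WMI to $(P,Q,R,S')$, and closes with conditional-EU monotonicity, while the configuration $R\succsim Q$ is handled by pure betweenness). The one thing it does that you should borrow is the observation that the claim is symmetric under swapping $(P,Q)\leftrightarrow(R,S)$ together with $\alpha\leftrightarrow 1-\alpha$, so one may assume $P\succsim R$ from the start; that reduction collapses most of the case enumeration you are worried about in your last paragraph.

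The genuine gap is in your indifference case. You need $P^n\in F$ with $P^n\to P$ and $P^n\succ Q\sim P$, i.e.\ an upward perturbation of $P$ inside $F$. This exists only if $P$ is not $\succsim$-maximal in $F$, and it can be maximal: on $F$ the preference is the linear functional $Z\mapsto \mathbb{E}_{Z_{-i}}[v_{-i|P_i}]$ restricted to a single narrow-indifference class of $v_{-i}$, and if $v_{-i|P_i}$ is strictly concave as a function of $v_{-i}$ and $P_{-i}$ is degenerate, $P$ maximizes it over $F$. Your fallback --- ``if no such perturbation exists the conditional index on $F$ is affinely the narrow index, hence $P\sim R$'' --- is a non sequitur: failure of the upward perturbation only says $P$ is extremal in $F$, not that $\succsim$ is constant on $F$; indeed in the non-trivial case $P\succ R$ the preference cannot be constant on $F$ because $R\in F$. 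The repair stays within your strategy: with $P\succ R$ you always have $R\in F$ strictly below $P$ and $S\in G$ strictly below $Q$, so take $P^n=\tfrac1n R+(1-\tfrac1n)P\in F$ (which gives $Q\succ P^n$) and $Q^n=\tfrac1n S+(1-\tfrac1n)Q\in G$ (which gives $P\succ Q^n$); WMI applied to $(Q,P^n,S,R)$ and to $(P,Q^n,R,S)$ plus continuity delivers the two weak inequalities whose conjunction is the desired indifference. The paper sidesteps the issue entirely by arguing by contradiction: if the two $\alpha$-mixtures were not indifferent, part (i) would locate a mixture of $\alpha P+(1-\alpha)R$ with $R$ that is indifferent to $\alpha Q+(1-\alpha)S$, and applying WMI to the tuple $(Q,\alpha P+(1-\alpha)R,S,R)$ contradicts that indifference. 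Either fix is fine, but as written your step fails.
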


\begin{proof}[Proof of \lemmaref{lemma_axiom implication}]
(i). Denote $I=\{\eta\in [0,1]: R\succ \eta P + (1-\eta)Q\}$ and $\lambda=\sup I$. $\lambda$ is well-defined as $I$ is bounded. We claim that $\lambda P+(1-\lambda)Q\sim R$. If $\lambda P+(1-\lambda)Q\succ R$, then $\lambda>0$ and by mixture continuity of $\succsim$, there exists $\epsilon>0$ with $(\lambda-\epsilon) P+(1-\lambda + \epsilon)Q\succ R$. This implies $\lambda-\epsilon\not\in I$. Since  $P\succ Q$ and $P_i=Q_i$ for some $i$, by Axiom CI, for any $\alpha,\beta\in [0,1]$, $\alpha P+(1-\alpha) Q \succ \beta P+(1-\beta) Q$ if any only if $\alpha >\beta$, which implies $[\lambda-\epsilon, \lambda]\cap I =\emptyset$ and leads to a contradiction with $\lambda=\sup I$.  If instead $R\succ \lambda P+(1-\lambda)Q$, then  there exists $\epsilon>0$ with $R\succ (\lambda+\epsilon) P+(1-\lambda - \epsilon)Q$ and hence $\lambda+\epsilon\in I$, which again contradicts with the definition of $\lambda$.\\
(ii). Consider the case where $P\sim Q, R\sim S$. If $P\sim R$, then the result trivially holds as $\alpha P + (1-\alpha) R\sim R\sim Q \sim \alpha Q + (1-\alpha) S$. Without loss of generality, suppose that $P\succ R$. Then $Q\succ S$ and $P\succ \alpha P + (1-\alpha) R\succ R$, $Q\succ \alpha Q + (1-\alpha) S\succ S\sim R$. Suppose by contradiction that $\alpha P + (1-\alpha) R\succ \alpha Q + (1-\alpha) S$. By part (i), there exists a unique $\lambda\in (0,1)$ with $\alpha Q+ (1-\alpha)S\sim \lambda (\alpha P + (1-\alpha) R) + (1-\lambda) R = \alpha\lambda P + (1-\alpha\lambda)R$. Notice that $Q\sim P\succ \alpha P + (1-\alpha) R$, $S\sim R$, $Q_j=S_j$, $Q_{-j}\sim_{-j} S_{-j}$, $(\alpha P + (1-\alpha) R)_i = P_i=R_i$ and $(\alpha P + (1-\alpha) R)_{-i} \sim_{-i} R_{-i}$. The last one holds as $\succsim_{-i}$ admits an EU representation. Hence, Axiom WMI implies that
$$\alpha Q + (1-\alpha) S\succ \lambda (\alpha P + (1-\alpha) R) + (1-\lambda) R = \alpha\lambda P + (1-\alpha\lambda)Q$$
which leads to a contradiction. The case for $\alpha P + (1-\alpha) R\prec \alpha Q + (1-\alpha) S$ is symmetric.

Now assume $P\succ Q, R\succ S$.  If $P\sim R$, then the result  holds as $\alpha P + (1-\alpha) R\sim P\succ \max_{\succsim}\{Q,S\}\succsim \alpha Q + (1-\alpha) S$ for all $\alpha\in (0,1)$.  Without loss of generality, suppose $P\succ R$. 

If $R\succsim Q$, then  $\alpha P + (1-\alpha) R\succ  R\succsim \max_{\succsim}\{Q,S\}\succsim \alpha Q + (1-\alpha) S$ for all $\alpha\in (0,1)$. 

If $Q\succ R$, then $P\succ Q\succ R\succ S$. By part (i) of this lemma, we can find $\lambda\in (0,1)$ such that $R\sim \lambda Q+ (1-\lambda) S :=S'$. Then $S'_j=S_j=Q_j$ and $S'_{-j}= \lambda Q_{-j}+ (1-\lambda) S_{-j}\sim_{-j}  Q_{-j}$ as $Q_{-j}\sim_{-j} S_{-j}$. Then the primitives of Axiom WMI hold for the tuple $(P,Q,R,S')$ and for any $\alpha\in (0,1)$, 
$$\alpha P + (1-\alpha)   R\succ \alpha Q + (1-\alpha)S' \succ \alpha Q + (1-\alpha)S.$$
The second strict ranking comes from Axiom CI and $S'\sim R\succ S$. This completes the proof. \end{proof}

A tuple $(P,Q,R,S)\in \hat{\cP}^4$ is called {\it proper} if $P_i=R_i$, $Q_j=S_j$ for some $i,j\in\{1,2\}$ and $P\succsim R, Q\succsim S$. A proper tuple $(P,Q,R,S)$  satisfies the {\it independence property} if one of the following conditions holds:
\begin{itemize}
    \item $P\succ Q, R\sim S$ and for all $\alpha\in (0,1)$, $\alpha P + (1-\alpha) R\succ \alpha Q + (1-\alpha) S$;
    \item $P\sim Q, R\succ S$ and for all $\alpha\in (0,1)$, $\alpha P + (1-\alpha) R\succ \alpha Q + (1-\alpha) S$;
    \item $P\sim Q, R\sim S$ and for all $\alpha\in (0,1)$, $\alpha P + (1-\alpha) R\sim \alpha Q + (1-\alpha) S$;
    \item $P\succ Q, R\succ S$ and for all $\alpha\in (0,1)$, $\alpha P + (1-\alpha) R\succ \alpha Q + (1-\alpha) S$.
    \end{itemize}

We end this section by showing that for  each $x\in X_1, y\in X_2$, any product lottery in $\Gamma(x,y)$ is indifferent to some lottery in $\Pi^1(x)\times \Pi^2(y)$. Similar results also hold for $\Gamma_{1,q_1}(y)$ and $\Gamma_{2,q_2}(x)$ for each $q_1\in \lxone, q_2\in \lxtwo$.

\begin{lemma}\label{lemma_achievable}
Fix  $q_1\in \lxone, q_2\in \lxtwo$ and $x\in X_1, y\in X_2$.  \\
(i). For each $P\in \Gamma(x,y)$, there exists $P'\in \Pi^1(x)\times \Pi^2(y)$ with $P'\sim P$; \\
(ii). For each $P\in \Gamma_{1,q_1}(y)$, there exists $P'\in \{q_1\}\times \Pi^2(y)$ with $P'\sim P$; \\
(iii). For each $P\in \Gamma_{2,q_2}(x)$, there exists $P'\in  \Pi^1(x)\times \{q_2\}$ with $P'\sim P$.
\end{lemma}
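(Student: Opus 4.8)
The plan is to prove all three parts by a single template: realize each ``target'' set ($\Pi^1(x)\times\Pi^2(y)$ for (i), $\{q_1\}\times\Pi^2(y)$ for (ii), $\Pi^1(x)\times\{q_2\}$ for (iii)) as a convex subset of $\hat{\cP}$, and then reach the indifference class of $P$ by an intermediate-value argument along a path that stays inside that set. First I would record that each $\Pi^i(\cdot)$ is convex. By \lemmaref{lemma_narrow pre} applied with $q=\delta_0$, the narrow preference $\succsim_i$ admits a regular EU index $v_i$, so $p\in\Pi^i(z)$ if and only if $\sum_t v_i(t)p(t)=v_i(z)$. This is a single linear constraint on $p$, hence $\Pi^i(z)$ is convex in $\lxi$. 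Consequently, if $(p,q)$ and $(p',q')$ both lie in $\Pi^1(x)\times\Pi^2(y)$, then the componentwise combination $(\,tp+(1-t)p',\,tq+(1-t)q'\,)$ remains in $\Pi^1(x)\times\Pi^2(y)$ for every $t\in[0,1]$, and $t\mapsto(\,tp+(1-t)p',\,tq+(1-t)q'\,)$ is a continuous path in $\hat{\cP}$, since convex combinations of fixed measures vary continuously in the weak topology and the marginals-to-product map is continuous.

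For part (i), take $P\in\Gamma(x,y)$. By the definition of $\Gamma(x,y)$ there exist $A,B\in\Pi^1(x)\times\Pi^2(y)$ with $A\succsim B$ and $A\succsim P\succsim B$. Let $\gamma\colon[0,1]\to\Pi^1(x)\times\Pi^2(y)$ be the componentwise path above with $\gamma(1)=A$ and $\gamma(0)=B$. By Axiom Topological Continuity over Product Lotteries, the sets $\{t:\gamma(t)\succ P\}$ and $\{t:P\succ\gamma(t)\}$ are open and disjoint in $[0,1]$; if no $t$ satisfied $\gamma(t)\sim P$, these two sets would cover $[0,1]$, contradicting its connectedness (the first is nonempty when $A\succ P$, the second when $P\succ B$, while the boundary cases $A\sim P$ or $B\sim P$ are settled by taking $P'=A$ or $P'=B$). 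Hence some $t^\ast$ gives $P':=\gamma(t^\ast)\sim P$ with $P'\in\Pi^1(x)\times\Pi^2(y)$, as required.

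Parts (ii) and (iii) have exactly the same structure, the only difference being that the connecting path moves in a single coordinate. For (ii) I would fix the first marginal at $q_1$ and move the second marginal inside $\Pi^2(y)$, using convexity of $\Pi^2(y)$ to keep $\gamma$ inside $\{q_1\}\times\Pi^2(y)$; for (iii) I would fix the second marginal at $q_2$ and move the first inside $\Pi^1(x)$. In each case the sandwiching endpoints $A,B$ come directly from the definition of $\Gamma_{1,q_1}(y)$ or $\Gamma_{2,q_2}(x)$, and the same connectedness argument produces $P'\sim P$ in the target set.

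The step that deserves the most care is the continuity input. The connecting path is \emph{not} a mixture in $\cP$: the mixture of two product lotteries need not be a product lottery, so Axiom Mixture Continuity does not apply along $\gamma$. Instead $\gamma$ is a componentwise convex combination of marginals, which stays in $\hat{\cP}$ precisely because each $\Pi^i$ is convex, and along such a path the continuity of $\succsim$ is supplied by Axiom Topological Continuity over Product Lotteries. This is the one place where the argument genuinely relies on the product-lottery topology rather than on the mixture structure of $\cP$, and it is the point I would state most explicitly.
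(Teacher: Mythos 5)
Your proof is correct, but it reaches the conclusion by a different route than the paper. The paper also sandwiches $P$ between two lotteries $Q,Q'\in\Pi^1(x)\times\Pi^2(y)$, but instead of a diagonal componentwise path it forms the hybrid $Q''=(Q_1,Q_2')$, observes that $P$ is then sandwiched on one of the two ``legs'' $Q\to Q''$ or $Q''\to Q'$, each of which joins two product lotteries sharing a marginal --- so that the genuine $\cP$-mixture $\lambda Q''+(1-\lambda)Q'$ is itself a product lottery lying in $\Pi^1(x)\times\Pi^2(y)$ --- and then invokes Lemma~\ref{lemma_axiom implication}(i), whose intermediate-value argument runs on Axiom Mixture Continuity plus Conditional Independence and additionally delivers a \emph{unique} $\lambda$. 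You instead take a single componentwise convex-combination path, which (as you correctly stress) is not a mixture in $\cP$, and so you must switch the continuity input to Axiom Topological Continuity over Product Lotteries and close the argument with connectedness of $[0,1]$; the convexity of $\Pi^i(\cdot)$ that keeps your path inside the target set is the same linearity of the narrow EU index $v_i$ that the paper uses to keep its mixture legs inside $\Pi^1(x)\times\Pi^2(y)$. Your version is slightly more self-contained (it bypasses Lemma~\ref{lemma_axiom implication} and the monotonicity-along-mixtures step that Conditional Independence supplies there), at the cost of not producing the uniqueness of the mixing weight, which the paper does not need here anyway; your explicit warning that Mixture Continuity is unavailable along the diagonal path is exactly the right point to emphasize, and it is the reason the paper's two-legged detour exists in the first place.
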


\begin{proof}[Proof of \lemmaref{lemma_achievable}]
(i). By definition, there exists $Q,Q'\in \Pi^1(x)\times \Pi^2(y)$ with $Q\succsim P\succsim Q'$. Denote $Q''=(Q_1,Q'_2)\in \Pi^1(x)\times \Pi^2(y)$. We have either  $Q''\succsim P\succsim Q'$ or $Q\succsim P\succsim Q''$. As $Q''_1=Q_1, Q''_2=Q'_2$, by part (i) of \lemmaref{lemma_axiom implication}, we know there exists $\lambda\in (0,1)$ with $P\sim \lambda Q'' + (1-\lambda)Q'$ or $P \sim \lambda Q'' + (1-\lambda)Q$. \lemmaref{lemma_narrow pre} guarantees that $\lambda Q'' + (1-\lambda)Q', \lambda Q'' + (1-\lambda)Q \in \Pi^1(x)\times \Pi^2(y)$. The proofs for (ii) and (iii) are similar.\end{proof}

\bigskip

{\bf Step 2. Suppose that the DM narrowly brackets marginal lotteries in both sources.} That is,  $(p,q)\sim (\delta_x,\delta_y)$ for all $(x,y)\in X_1\times X_2$ and $(p,q)\in \Pi^1(x)\times \Pi^2(y)$. The following lemma shows that $\succsim$ must admit a NB representation. 

\begin{lemma}\label{lemma_nb}
Suppose that $(p,q)\sim (\delta_x,\delta_y)$ for all $(x,y)\in X_1\times X_2$ and $(p,q)\in \Pi^1(x)\times \Pi^2(y)$, then $\succsim$ admits a NB representation.
\end{lemma}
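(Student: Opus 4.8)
The plan is to build the representation directly: obtain the narrow indices $v_1,v_2$ from the already-established \lemmaref{lemma_narrow pre}, construct a regular index $w$ for the preference over outcome profiles, and then use the Step~2 hypothesis to collapse every product lottery to a degenerate one so that the two pieces fit together. Axiom CN then carries the representation from $\hat{\cP}$ to all of $\cP$.

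First I would apply \lemmaref{lemma_narrow pre} with $q=\delta_0\in X_{-i}$ to obtain, for $i=1,2$, a regular index $v_i$ representing the narrow preference $\succsim_i$. Since $v_i$ is continuous, strictly monotone and bounded on the interval $X_i$, the certainty equivalent $CE_{v_i}(p)=v_i^{-1}(\sum_x v_i(x)p(x))$ is well defined and lies in $X_i$ for every $p\in\lxi$; moreover $p\sim_i\delta_x$ holds if and only if $CE_{v_i}(p)=x$, so that $\Pi^i(x)=\{p\in\lxi: CE_{v_i}(p)=x\}$. This identification of the level sets of $CE_{v_i}$ with the sets $\Pi^i(x)$ is what links the narrow indices to the Step~2 hypothesis.

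Next I would construct $w$. Identify the set of degenerate lotteries $D=\{(\delta_x,\delta_y):(x,y)\in X_1\times X_2\}$ with $X_1\times X_2$ via the homeomorphism $(x,y)\mapsto(\delta_x,\delta_y)$. The restriction of $\succsim$ to $D$ is a weak order by Axiom WO, and its contour sets are open: intersecting the open sets $\{P\in\hat{\cP}:P\succ(\delta_a,\delta_b)\}$ and $\{P\in\hat{\cP}:(\delta_a,\delta_b)\succ P\}$ from Axiom Topological Continuity over Product Lotteries with $D$ yields open contour sets in $D$. As $X_1\times X_2$ is connected and separable, Debreu's representation theorem gives a continuous representation $w_0:X_1\times X_2\to\mathbb{R}$. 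Axiom M makes $w_0$ strictly increasing in each coordinate, since $(\delta_{x'},\delta_y)$ dominates $(\delta_x,\delta_y)$ whenever $x'>x$ and hence $(\delta_{x'},\delta_y)\succ(\delta_x,\delta_y)$. Composing with a bounded, strictly increasing, continuous transformation $g$ (say $g=\arctan$) produces a \emph{regular} function $w=g\circ w_0$ that still represents $\succsim$ on $D$; because $g$ is monotone, this does not alter any of the orderings used below.

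Finally I would verify the representation. The Step~2 hypothesis states $(p,q)\sim(\delta_x,\delta_y)$ whenever $(p,q)\in\Pi^1(x)\times\Pi^2(y)$, which by the identification above is exactly $(p,q)\sim(\delta_{CE_{v_1}(p)},\delta_{CE_{v_2}(q)})$ for every $(p,q)\in\hat{\cP}$. Hence on $\hat{\cP}$ the relation $\succsim$ is represented by $(p,q)\mapsto w(CE_{v_1}(p),CE_{v_2}(q))$. By Axiom CN, every $P\in\cP$ satisfies $P\sim(P_1,P_2)$ with $(P_1,P_2)\in\hat{\cP}$, so $V^{NB}(P)=w(CE_{v_1}(P_1),CE_{v_2}(P_2))$ represents $\succsim$ on all of $\cP$, giving the NB representation $(w,v_1,v_2)$. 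This argument is essentially an assembly rather than a deep computation; the only points needing care are the identification $\Pi^i(x)=\{p:CE_{v_i}(p)=x\}$ and the production of a \emph{regular} (in particular bounded) $w$ on a possibly non-compact domain, which I handle by invoking Debreu's theorem and then composing with a bounded increasing transformation.
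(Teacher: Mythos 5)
Your proposal is correct and follows essentially the same route as the paper's proof: obtain $v_1,v_2$ from \lemmaref{lemma_narrow pre}, use the hypothesis to collapse each product lottery to the degenerate profile of certainty equivalents, and apply Debreu's theorem plus Axiom M to get a continuous strictly monotone $w$ on $X_1\times X_2$, rescaled by a bounded increasing transformation to ensure regularity (the paper uses an exponential transformation where you use $\arctan$; both work). The only cosmetic difference is that you invoke Axiom CN explicitly at the end, whereas the paper has already restricted attention to product lotteries at the start of its Step 1.
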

\begin{proof}[Proof of \lemmaref{lemma_nb}]
By \lemmaref{lemma_narrow pre}, for $i=1,2$, denote $v_i$ as the EU index of $\succsim_i$. Since $X_i$ is a closed interval and $v_i$ is regular, the certainty equivalent function $CE_{v_i}$ is well-defined. Then for any $(p,q)\in \hat{\mathcal{P}}$, we know $(p,q)\sim (\delta_{CE_{v_1}(p)}, \delta_{CE_{v_2}(q)})$.

 Denote a binary relation $\hat{\succsim}$ over $X_1\times X_2$ such that for all $(x,y), (x',y') \in X_1\times X_2$, $(\delta_x,\delta_y)\succsim (\delta_{x'},\delta_{y'})$ if and only if $(x,y)\hat\succsim (x',y')$. Axiom WC implies that $\hat\succsim$ is continuous on $X_1\times X_2$, which is a separable metric space. By Debreu's Theorem, $\hat\succsim$ admits a continuous representation $w$. Axiom M guarantees that $w$ is strictly monotone. Without loss of generality, we can assume that $w(0,0)=0$ and $w$ is bounded, because the bounded monotone transformation $w'(x,y)= 1-exp(-w(x,y))$ for $w(x,y)\geq 0$ and $w'(x,y)= exp(w(x,y))-1$ for $ w(x,y)< 0$ still represents $\hat{\succsim}$. Therefore we can find regular functions $w, v_1$ and $v_2$  such that for all $P,Q\in\cP$,
\begin{align*}
    P\succsim Q&\Longleftrightarrow (\delta_{CE_{v_1}(P_1)}, \delta_{CE_{v_2}(P_2)})\succsim (\delta_{CE_{v_1}(Q_1)}, \delta_{CE_{v_2}(Q_2)})\\
    &\Longleftrightarrow ({CE_{v_1}(P_1)}, {CE_{v_2}(P_2)})\hat\succsim ({CE_{v_1}(Q_1)}, {CE_{v_2}(Q_2})\\
    &\Longleftrightarrow w(CE_{v_1}(P_1),CE_{v_2}(P_2))\geq w(CE_{v_1}(Q_1),CE_{v_2}(Q_2))
\end{align*}
That is, $\succsim$ admits a NB representation $(w,v_1,v_2)$.
\end{proof}

From now on, we maintain the assumption that there exist $(x,y)\in X_1\times X_2$ and $(p,q)\in \Pi^1(x) \times \Pi^2(y)$ such that $(p,q)\not\sim (\delta_x,\delta_y)$.

\bigskip

{\bf Step 3: Suppose that the DM narrowly brackets marginal lotteries in source 2.} This is equivalent to assuming $(p,q)\sim (p,q')$ for all $(x,y)\in X_1\times X_2$, $p\in \Pi^1(x)$ and $q,q'\in \Pi^2(y)$. Denote the condition as {\bf Assumption 1}.  Then for any $(p,q), (p',q')\in \hat{\mathcal{P}}$ with $q\sim_2 q'$, $(p,q)\succsim (p',q')$ if and only if $(p,{CE_{v_2}(q)}) \succsim (p',{CE_{v_2}(q)})$. Hence we can focus on the restriction of $\succsim$ on $\lxone \times X_2$.

By the assumption at the end of Step 2, we can find $(x_0,y_0)\in X_1\times X_2$, $p_0,p_0'\in \Pi^1(x_0)$ and $q_0\in \Pi^2(y_0)$ such that $(p_0,q_0)\succ (p_0',q_0)$.  This implies $y_0\neq 0$. By Axiom WC, it is without loss of generality to assume $y_0\in X_2^o$. By \lemmaref{lemma_narrow pre}, we know that $\succsim_{1|q_0}$ admits an EU representation with a continuous and bounded utility index $v_{1|\delta_{y_0}}$. Recall that $X_i^o$ is the interior of $X_i$ with respect to $\bR$, $i=1,2$. For each $x\in X_i^o$, there exists $y,y'\in X_i^o$ with $y>x>y'$. Suppose that there exists $x_1\in X_1^o$ such that $(p_1,q_0)\sim (p'_1,q_0)$ for all $p_1,p'_1\in \Pi^1(x_1)$. Clearly, $x_1\neq x_0$. Denote $\up{x},\dw{x}\in X_i^o$ with $\up{x}>x_1>\dw{x}$. As $v_1$ and $v_{1|\delta_{y_0}}$ are unique up to positive affine transformations, we can set $v_{1|\delta_{y_0}}(\dw{x})= v_1(\dw{x})$ and $v_{1|\delta_{y_0}}(x_1)= v_1(x_1)$.  For any $x\in X_1$ with $x>x_1$, we can find $\alpha\in (0,1)$ with $\alpha \delta_{x} +(1-\alpha)\delta_{\dw{x}}\in \Pi^1(x_1)$. Then $ \alpha v_1(x) +(1-\alpha)v_1(\dw{x})= v_1(x_1)$ and $(\alpha \delta_{x} +(1-\alpha)\delta_{\dw{x}}, q_0)\sim (\delta_{x_1},q_0)$, which implies $$ \alpha v_{1|\delta_{y_0}}(x) + (1-\alpha) v_{1|\delta_{y_0}}(\dw{x})= v_{1|\delta_{y_0}}(x_1)=v_1(x_1)=\alpha v_1(x) +(1-\alpha)v_1(\dw{x}).$$
Since $v_{1|\delta_{y_0}}(\dw{x})= v_1(\dw{x})$ and $\alpha\in (0,1)$, $v_{1|\delta_{y_0}}(x)=v_1(x)$. Specifically, we have $v_{1|\delta_{y_0}}(\up{x})=v_1(\up{x})$.
Now we consider $x\in X_1$ with $x<x_1$. There exists $\beta\in (0,1)$ with $\beta \delta_{\up{x}} +(1-\beta)\delta_x\in \Pi^1(x_1)$. Then $ \beta v_1(\up{x}) +(1-\beta)v_1({x})= v_1(x_1)$ and $(\alpha \delta_{\up{x}} +(1-\alpha)\delta_{{x}}, q_0)\sim (\delta_{x_1},q_0)$, which also implies $v_{1|\delta_{y_0}}(x)=v_1(x)$.  Thus $v_{1|\delta_{y_0}}\equiv v_1$, contradicting with $(p_0,q_0)\succ (p'_0,q_0)$ as $p_0,p'_0\in \Pi^1(x_0)$.   Thus, there exists $y_0\in X_2$ such that for any $x\in X^o_1$, we can find $p_x,p'_x\in \Pi^1(x)$ and $q_0\in \Pi^2(y_0)$ with $(p_x,q_0)\succ (p'_x,q_0)$.

Denote $\Sigma^2:= \{y\in X_2^o: \exists~ x\in X_1 \hbox{~and~} p,p'\in \Pi^1(x), q\in \Pi^2(y) \hbox{~s.t.~} (p,q) \succ (p',q) \}$. $\Sigma^2$ is nonempty as $y_0\in \Sigma^2$ and is open in $X_2$ by Axiom WC. Also $0\not\in \Sigma^2$ and hence $\Sigma^2 \subseteq \bR\backslash \{0\}$. 
Denote the closure of $\Sigma^2$ in $X_2$ as $cl({\Sigma}^2)$.

 The following lemma provides a sufficient condition for a proper tuple to satisfy the independence property.

 \begin{lemma}\label{lemma_source2_ind}
Suppose that Assumption 1 holds. Then a proper tuple $(P,Q,R,S)\in (\lxone \times X_2)^4$ satisfies the independence property if $P_2= R_2=\delta_{y_1}, Q_2=S_2=\delta_{y_2}$ with $y_1,y_2\in cl({\Sigma}^2)$.
\end{lemma}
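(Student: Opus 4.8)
The plan is to prove the independence property for such tuples by reducing them, via the broad-bracketing richness encoded in $\Sigma^2$, to tuples to which Axiom WMI (equivalently \lemmaref{lemma_axiom implication}(ii)) applies, and then transferring the conclusion back using within-slice linearity. Since Assumption 1 lets us identify each $(p,\delta_y)$ with its ``slice'', I write $\mathrm{val}_y(p)$ for the expected utility of $p$ under the conditional index $v_{1|\delta_y}$ furnished by \lemmaref{lemma_narrow pre}; by Axiom CI the conditional preference $\succsim_{1|\delta_y}$ is represented by $\mathrm{val}_y$ and is therefore \emph{linear} under mixtures that keep the source-$2$ marginal fixed at $\delta_y$. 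I would first treat $y_1,y_2\in\Sigma^2$ and recover the case $y_1,y_2\in cl(\Sigma^2)$ only at the end, by approximating the boundary points from inside $\Sigma^2$ and invoking Axiom WC.

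The first substantive step is to upgrade the ``non-narrowness'' of a broad slice to a statement on every fiber. Repeating the argument given just before the lemma --- which shows that if $\succsim_{1|\delta_y}$ agreed with $\succsim_1$ on even a single interior fiber $\Pi^1(x)$ then $v_{1|\delta_y}\equiv v_1$, contradicting $y\in\Sigma^2$ --- I would conclude that for each $y\in\Sigma^2$ and each $x\in X_1^o$ the functional $\mathrm{val}_y$ is non-constant on $\Pi^1(x)$. Since $\Pi^1(x)$ is convex and $\mathrm{val}_y$ is continuous, the slice values $\{\mathrm{val}_y(p):p\in\Pi^1(x)\}$ then fill a nondegenerate interval. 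This ``room within each fiber'' is exactly what the broadness of $\Sigma^2$ buys, and it is precisely the ingredient absent in the non-broad region.

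The core is a manufacturing step. Consider the configuration $P\succ Q$, $R\sim S$ (the other three configurations, and the trivial sub-cases $P\sim R$ or $Q\sim S$, are handled as in \lemmaref{lemma_axiom implication}). Write $x_P=CE_{v_1}(P_1)$ and $x_Q=CE_{v_1}(Q_1)$, so that $P_1\in\Pi^1(x_P)$ and $Q_1\in\Pi^1(x_Q)$. Using the available interval of slice values on these fibers, I would select $\hat R=(\hat R_1,\delta_{y_1})$ with $\hat R_1\in\Pi^1(x_P)$ and $\mathrm{val}_{y_1}(\hat R_1)=\mathrm{val}_{y_1}(R_1)$, and $\hat S=(\hat S_1,\delta_{y_2})$ with $\hat S_1\in\Pi^1(x_Q)$ and $\mathrm{val}_{y_2}(\hat S_1)=\mathrm{val}_{y_2}(S_1)$. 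By construction $\hat R\sim R$, $\hat S\sim S$, hence $\hat R\sim\hat S$; moreover $P_1\sim_1\hat R_1$ and $Q_1\sim_1\hat S_1$, so the tuple $(P,Q,\hat R,\hat S)$ meets the hypotheses of Axiom WMI with $i=j=2$, giving $\alpha P+(1-\alpha)\hat R\succ\alpha Q+(1-\alpha)\hat S$ for every $\alpha$. Finally I would transfer this back: since $P,\hat R,R$ all carry the source-$2$ marginal $\delta_{y_1}$ and $\mathrm{val}_{y_1}(\hat R_1)=\mathrm{val}_{y_1}(R_1)$, linearity of $\mathrm{val}_{y_1}$ yields $\alpha P+(1-\alpha)\hat R\sim\alpha P+(1-\alpha)R$, and likewise $\alpha Q+(1-\alpha)\hat S\sim\alpha Q+(1-\alpha)S$; chaining the three relations delivers $\alpha P+(1-\alpha)R\succ\alpha Q+(1-\alpha)S$, the desired independence property.

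The step I expect to be the main obstacle is guaranteeing that the manufactured lotteries $\hat R,\hat S$ actually exist --- that the target slice value $\mathrm{val}_{y_1}(R_1)$ is attained on the \emph{forced} fiber $\Pi^1(x_P)$ (and symmetrically for $\hat S$). Because $v_{1|\delta_{y_1}}$ is bounded (\lemmaref{lemma_narrow pre}), each fiber realizes only a bounded interval of slice values, so a priori the target could lie outside. To close this gap I would exploit that on a broad slice the two linear functionals ``$CE_{v_1}$-level'' and $\mathrm{val}_{y_1}$ are not proportional, so the attainable pairs (narrow certainty equivalent, slice value) form a genuinely two-dimensional region; combined with the use of extreme outcomes in $X_1$ to widen fibers and, where necessary, an interpolation via \lemmaref{lemma_axiom implication}(i) that splits $R$ into pieces whose individual targets do lie in range, the required value can always be realized. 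The remaining bookkeeping --- the other three configurations, the degenerate boundary fibers $\Pi^1(x)=\{\delta_x\}$ for $x\in X_1\setminus X_1^o$ (dispatched by Monotonicity and continuity), and the passage from $\Sigma^2$ to $cl(\Sigma^2)$ via Axiom WC --- is routine relative to this feasibility argument.
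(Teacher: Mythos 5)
Your local mechanism coincides with the paper's: replace members of the tuple by indifferent lotteries whose source-$1$ marginals lie on a prescribed fiber $\Pi^1(\cdot)$, apply Axiom WMI through \lemmaref{lemma_axiom implication}, and transfer the conclusion back using linearity of the conditional index $v_{1|\delta_{y}}$ within the slice $\lxone\times\{\delta_y\}$. That part is sound. The gap is exactly where you locate it, and none of your proposed repairs closes it. The target value $\mathrm{val}_{y_1}(R_1)$ need not be attainable on the forced fiber $\Pi^1(x_P)$: take $X_1=[0,1]$, $v_1(x)=x$ and $v_{1|\delta_{y_1}}(x)=\sqrt{x}$; the fiber $\Pi^1(1/2)$ realizes slice values only in $[1/2,\sqrt{1/2}\,]$ (the minimum at $\tfrac12\delta_0+\tfrac12\delta_1$, the maximum at $\delta_{1/2}$), while $R_1=\delta_{0.1}$ has slice value $\sqrt{0.1}<1/2$. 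Your fix (a) (the attainable pairs form a two-dimensional region) is true but irrelevant, since the first coordinate is pinned at $x_P$ and only the vertical section of that region matters; fix (b) (widening fibers with extreme outcomes) cannot change the fact that each vertical section is a bounded interval determined by $v_1$ and $v_{1|\delta_{y_1}}$ alone. Fix (c), splitting $R$ into pieces whose individual targets are in range, is the right instinct but is precisely the hard part of the proof, and you have not supplied it.

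The paper's resolution is a genuine patching argument that your sketch would need to reproduce. It first proves the statement only locally (\lemmaref{lemma_source2_local}): when all four lotteries lie in a common box $\Gamma(x_1,y_1)\cap\Gamma(x_2,y_2)$, so that \lemmaref{lemma_achievable} guarantees the indifferent replacements exist on the prescribed fibers. It then establishes a union lemma (\lemmaref{lemma_source2_union}) showing that if the independence property holds on two overlapping utility intervals $[T^1,T^2]$ and $[T^3,T^4]$ then it holds on $[T^1,T^4]$; this step is itself delicate, requiring intermediate lotteries $\hat{P},\hat{Q},\hat{R},\hat{S}$ in the overlap and a verification that the induced mixture weights on the $P$-side and the $Q$-side agree ($\eta_1=\eta_2$ in the paper's notation). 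A Heine--Borel finite-subcover argument then chains finitely many such boxes along the relevant utility interval, and Axiom WC handles the extreme lotteries $(\delta_{\lcone},\delta_{y_1})$, $(\delta_{\cone},\delta_{y_2})$ and the passage from $\Sigma^2$ to $cl(\Sigma^2)$. Until you supply an explicit substitute for this patching machinery, the proposal does not establish the lemma.
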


The proof of \lemmaref{lemma_source2_ind} requires several intermediate results. The first one assures that we can focus on the case where $P\sim Q$, $R\sim S$.  

 \begin{lemma}\label{lemma_source2_strict_ind}
Suppose that Assumption 1 holds. $(P,Q,R,S)\in (\lxone \times X_2)^4$ is a proper tuple with $P_2= R_2=\delta_{y_1}, Q_2=S_2=\delta_{y_2}$ with $y_1,y_2\in cl({\Sigma}^2)$. If the independence property holds for any such $(P,Q,R,S)$ with $P\sim Q, R\sim S$, then the independence property holds for any such $(P,Q,R,S)$ with $P\succsim Q, R\succsim S$.
\end{lemma}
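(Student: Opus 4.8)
The plan is to upgrade the assumed all-indifference form of the independence property to the full four-case version by substitution arguments carried out inside a single source-$2$ slice. Write the tuple as $P=(P_1,\delta_{y_1})$, $R=(R_1,\delta_{y_1})$, $Q=(Q_1,\delta_{y_2})$, $S=(S_1,\delta_{y_2})$. Since the mixtures $\alpha P+(1-\alpha)R$ and $\alpha Q+(1-\alpha)S$ again have degenerate second marginals $\delta_{y_1}$ and $\delta_{y_2}$, the whole problem lives on the two ``slices'' $\lxone\times\{\delta_{y_1}\}$ and $\lxone\times\{\delta_{y_2}\}$. On each slice the conditional preference is an expected-utility order with a regular index ($v_{1|\delta_{y_1}}$, $v_{1|\delta_{y_2}}$, by \lemmaref{lemma_narrow pre}), so within a slice the value is affine in the mixing weight, strict comparisons are preserved under mixing with a common lottery, and every utility level between the best and worst element of the slice is attained (solvability, \lemmaref{lemma_axiom implication}(i)). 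Because the all-indifference case is precisely the hypothesis, I only need the three cases in which at least one of $P\succ Q$, $R\succ S$ is strict.

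I would first treat the ``one strict'' cases, say $P\succ Q$ and $R\sim S$. The idea is to replace $P$ by an indifferent substitute that sits in the \emph{same} slice as $Q$: if the level of $P$ is attainable in slice $y_2$, pick $\tilde Q=(\tilde q,\delta_{y_2})$ with $\tilde Q\sim P$, so that $\tilde Q\succ Q$ by within-slice monotonicity. The tuple $(P,\tilde Q,R,S)$ now satisfies $P\sim\tilde Q$ and $R\sim S$, so the hypothesis gives $\alpha P+(1-\alpha)R\sim\alpha\tilde Q+(1-\alpha)S$; since $\tilde Q\succ Q$ in slice $y_2$, the within-slice strict comparison yields $\alpha\tilde Q+(1-\alpha)S\succ\alpha Q+(1-\alpha)S$, and transitivity delivers the strict conclusion. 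If the level of $P$ is too high to be matched in slice $y_2$, I would symmetrically try to match $Q$ in slice $y_1$ by some $\tilde P\sim Q$ with $P\succ\tilde P$ and run the argument on that side. Should \emph{both} matchings be impossible, the two slices are utility-separated in the relevant direction, and one checks directly from Axiom Monotonicity that one mixed lottery strictly dominates the other. The case $P\sim Q$, $R\succ S$ is handled identically with the two pairs exchanged.

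With the ``one strict'' cases in hand, I would obtain the ``both strict'' case $P\succ Q$, $R\succ S$ by a continuity argument on the family $L(\alpha):=\alpha P+(1-\alpha)R$, $M(\alpha):=\alpha Q+(1-\alpha)S$. Both endpoints give $L\succ M$, and by Axiom Weak Continuity the sets $\{\alpha:L(\alpha)\succ M(\alpha)\}$ and $\{\alpha:M(\alpha)\succ L(\alpha)\}$ are open. If strictness failed in $(0,1)$, continuity would produce a crossing $\alpha^{\dagger}$ with $L(\alpha^{\dagger})\sim M(\alpha^{\dagger})$; feeding this cross-slice indifference together with a strict endpoint into the already-proven ``one strict'' case propagates $L\succ M$ to a full one-sided neighborhood of $\alpha^{\dagger}$. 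Doing this from both endpoints forces $L\succ M$ everywhere on $(0,1)$ except possibly at an isolated tangency, which is in turn excluded because the hypothesis makes the indifference correspondence between the two slices \emph{affine}: indifference of mixtures of indifferent pairs is exactly affinity of the map sending a level in slice $y_1$ to the indifferent level in slice $y_2$, so the signed comparison is an affine function of $\alpha$ and cannot vanish at an interior point while remaining positive on both sides.

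The main obstacle is the existence of the within-slice substitutes, i.e.\ whether the utility ranges of slices $y_1$ and $y_2$ overlap in the direction required; this forces a case split between a ``matchable'' regime, where the reduction to the hypothesis goes through verbatim, and a ``separated'' regime, where the conclusion is immediate from monotonicity. Ruling out the isolated tangency in the both-strict case is the other delicate point, and it is exactly here that the full strength of the indifference hypothesis—its equivalence to affinity of the cross-slice indifference map, rather than merely its consequences for a single substitution—has to be invoked.
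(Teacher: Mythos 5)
Your core mechanism is the same as the paper's: use the EU structure of the conditional preference within each source-$2$ slice to replace a strict comparison by an indifferent substitute, invoke the all-indifference hypothesis, and sandwich. For the one-strict cases your argument works, but the ``separated regime'' you worry about never arises if you always match \emph{within} the slice that already contains two lotteries of the tuple: in the case $P\succ Q$, $R\sim S$, properness gives $Q\succsim S\sim R$, hence $P\succsim Q\succsim R$, and part (i) of \lemmaref{lemma_axiom implication} produces $P'=\alpha P+(1-\alpha)R\sim Q$ lying in slice $y_1$ --- no cross-slice attainability question ever comes up, so your case split between ``matchable'' and ``separated'' regimes is unnecessary. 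This within-pair mixture is exactly the paper's substitution.

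The genuine gap is in the both-strict case. Your continuity argument only delivers $L(\alpha)\succ M(\alpha)$ on $(0,1)\setminus\{\alpha^{\dagger}\}$, and the exclusion of the isolated tangency at $\alpha^{\dagger}$ is not actually carried out: the ``affinity of the cross-slice indifference map'' you invoke is not something the all-indifference hypothesis hands you directly, and making it precise requires another substitution argument --- at which point the continuity machinery is doing no work. The paper avoids the issue entirely with a second within-slice substitution: reducing to $P\succsim Q\succ R\succsim S$, set $P'=\alpha P+(1-\alpha)R\sim Q$ and $S'=\beta Q+(1-\beta)S\sim R$; then $(P',Q,R,S')$ is an all-indifference tuple, the hypothesis gives $\lambda P'+(1-\lambda)R\sim\lambda Q+(1-\lambda)S'$, and within-slice EU gives $\lambda P+(1-\lambda)R\succsim\lambda P'+(1-\lambda)R$ and $\lambda Q+(1-\lambda)S'\succsim\lambda Q+(1-\lambda)S$, with at least one ranking strict whenever $P\succ Q$ or $R\succ S$. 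Equivalently, once your one-strict case is in hand, the single extra substitution $S'\sim R$ turns a both-strict tuple into a one-strict tuple and finishes the proof with no appeal to continuity or tangency exclusion; you should replace the crossing argument with this.
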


\begin{proof}[Proof of \lemmaref{lemma_source2_strict_ind}]
Following similar arguments in the proof of \lemmaref{lemma_axiom implication}, it suffices to consider the case where $P\succsim Q \succ R \succsim S$. By \lemmaref{lemma_narrow pre}, there exist $\alpha\in (0,1]$ and $\beta\in [0,1)$ such that $P' = \alpha P + (1-\alpha) R \sim Q$ and  $S' = \beta Q + (1-\beta) S \sim R$. Then the independence property holds for $(P',Q,R,S')$, that is, for any $\lambda\in (0,1)$, $\lambda P' + (1-\lambda)R \sim \lambda Q + (1-\lambda)S'$. By \lemmaref{lemma_narrow pre} and $P\succsim P'$, $S'\succsim S$, we have 
$$\lambda P + (1-\lambda)R \succsim \lambda P' + (1-\lambda)R \sim \lambda Q + (1-\lambda)S'\succsim \lambda Q + (1-\lambda)S.$$
At least one of the above weak preference rankings would be strict if $P\succ Q$ or $R\succ S$. \end{proof}

\lemmaref{lemma_source2_local} shows the result in \lemmaref{lemma_source2_ind} holds locally, that is, when the utilities of $P$ and $R$ are ``close enough''.

 \begin{lemma}\label{lemma_source2_local}
Suppose that Assumption 1 holds. Then a proper tuple $(P,Q,R,S)\in (\lxone \times X_2)^4$ satisfies the independence property if $P\sim Q, R\sim S$, $P_2= R_2=\delta_{y_1}, Q_2=S_2=\delta_{y_2}$ with $y_1,y_2\in {\Sigma}^2$ and there exist $x_1,x_2\in X_1$ such that $P,Q,R,S\in \Gamma(x_1,y_1)\cap \Gamma(x_2,y_2)$.
\end{lemma}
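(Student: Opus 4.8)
The plan is to establish the indifference clause of the independence property, namely $\alpha P + (1-\alpha)R \sim \alpha Q + (1-\alpha)S$ for every $\alpha\in(0,1)$, by replacing each of the four lotteries with an \emph{equivalent} lottery that is anchored in a common $\Pi^1$-cell and has a \emph{fixed} degenerate source-$2$ marginal, so that \lemmaref{lemma_axiom implication}(ii) (the indifference half of Axiom WMI) becomes directly applicable. The obstruction to applying WMI to the given tuple is that, although $P_2=R_2=\delta_{y_1}$ and $Q_2=S_2=\delta_{y_2}$ already supply the equality of source-$2$ marginals required for $i=j=2$, WMI additionally demands $P_1\sim_1 R_1$ and $Q_1\sim_1 S_1$, and the original source-$1$ marginals need not be narrow-indifferent.

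First I would exploit the anchoring hypothesis. Since $P,R\in\Gamma(x_1,y_1)$, \lemmaref{lemma_achievable}(i) yields $(a^P,b^P),(a^R,b^R)\in \Pi^1(x_1)\times\Pi^2(y_1)$ with $(a^P,b^P)\sim P$ and $(a^R,b^R)\sim R$; Assumption 1 (narrow bracketing in source $2$) then lets me replace $b^P,b^R$ by $\delta_{y_1}$ using $CE_{v_2}(b^P)=CE_{v_2}(b^R)=y_1$, producing $P^*=(a^P,\delta_{y_1})\sim P$ and $R^*=(a^R,\delta_{y_1})\sim R$ with $a^P,a^R\in\Pi^1(x_1)$. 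Symmetrically, from $Q,S\in\Gamma(x_2,y_2)$ I obtain $Q^*=(c^Q,\delta_{y_2})\sim Q$ and $S^*=(c^S,\delta_{y_2})\sim S$ with $c^Q,c^S\in\Pi^1(x_2)$. Because the new source-$1$ marginals lie in the same cells $\Pi^1(x_1)$ and $\Pi^1(x_2)$, they satisfy $a^P\sim_1 a^R$ and $c^Q\sim_1 c^S$ by definition of $\Pi^1$, and transitivity (Axiom WO) gives $P^*\sim Q^*$ and $R^*\sim S^*$.

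Next I would apply \lemmaref{lemma_axiom implication}(ii) to $(P^*,Q^*,R^*,S^*)$ with $i=j=2$: the hypotheses $P^*_2=R^*_2$, $Q^*_2=S^*_2$, $P^*_1\sim_1 R^*_1$, $Q^*_1\sim_1 S^*_1$ together with $P^*\sim Q^*$ and $R^*\sim S^*$ are exactly met, so $\alpha P^* + (1-\alpha)R^* \sim \alpha Q^* + (1-\alpha)S^*$ for all $\alpha\in(0,1)$. It then remains to transport this back. Here I rely on the conditional expected-utility representation of \lemmaref{lemma_narrow pre}: the mixtures $\alpha P + (1-\alpha)R$ and $\alpha P^* + (1-\alpha)R^*$ both keep source-$2$ marginal $\delta_{y_1}$, and $\succsim_{1|\delta_{y_1}}$ is represented by the linear index $v_{1|\delta_{y_1}}$, under which $P_1$ and $a^P$ (resp. $R_1$ and $a^R$) have equal value; linearity of the expectation in the mixing weight then gives $\alpha P + (1-\alpha)R \sim \alpha P^* + (1-\alpha)R^*$, and the identical argument at $\delta_{y_2}$ gives $\alpha Q + (1-\alpha)S \sim \alpha Q^* + (1-\alpha)S^*$. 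Chaining the three indifferences yields $\alpha P + (1-\alpha)R \sim \alpha Q + (1-\alpha)S$, which is the third clause of the independence property since $P\sim Q$ and $R\sim S$.

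The step I would be most careful about is the transport step, and it is where the structure of the hypotheses is essential: each replacement must keep the source-$2$ marginal fixed ($\delta_{y_1}$ for $P,R$ and $\delta_{y_2}$ for $Q,S$), because only then does the globally linear conditional index commute ``pass to the anchored representative'' with ``take the $\alpha$-mixture.'' Moving a lottery across the $y_1/y_2$ frames would break this, since the relation between the conditional preferences at distinct source-$2$ values is precisely what has not yet been pinned down and is exactly what the independence property is meant to help determine. The conditions $y_1,y_2\in\Sigma^2$ and membership in the two $\Gamma$-cells are what make the anchored representatives available through \lemmaref{lemma_achievable} and keep the conditional indices $v_{1|\delta_{y_1}},v_{1|\delta_{y_2}}$ the relevant non-degenerate ones; carrying them here also prepares the local conclusion for the open-cover gluing used afterward to remove the $\Gamma$-cell restriction.
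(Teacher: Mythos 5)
Your proposal is correct and follows essentially the same route as the paper's proof: anchor $P,R$ (resp.\ $Q,S$) to indifferent representatives with source-$1$ marginals in a common $\Pi^1$-cell via \lemmaref{lemma_achievable} and Assumption 1, apply the indifference clause of \lemmaref{lemma_axiom implication}(ii) with $i=j=2$, and transport back through the linearity of the conditional EU indices $v_{1|\delta_{y_1}}$ and $v_{1|\delta_{y_2}}$ from \lemmaref{lemma_narrow pre}. The only cosmetic difference is that the paper keeps the second marginals $\delta_{y_1},\delta_{y_2}$ fixed directly rather than passing through $\Pi^2(y_1),\Pi^2(y_2)$ and collapsing with Assumption 1, which is equivalent.
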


\begin{proof}[Proof of \lemmaref{lemma_source2_local}]
Suppose  $(P,Q,R,S)\in (\lxone \times X_2)^4$ is a proper tuple that satisfies the conditions stated in the lemma. By \lemmaref{lemma_achievable}, there exist $P'_1, R'_1\in \Pi^1(x_1)$ and $Q'_1, S'_1\in \Pi^1(x_2)$ such that $P':=(P'_1,P_2)\sim P \sim Q\sim Q':=(Q'_1,Q_2)$ and $R':=(R'_1,R_2)\sim R \sim S\sim S':=(S'_1,S_2)$. By part 2 of \lemmaref{lemma_axiom implication},  for any $\alpha\in (0,1)$, $\alpha P' + (1-\alpha)R'\sim \alpha Q' + (1-\alpha)S'$. Finally, \lemmaref{lemma_narrow pre} implies that $\alpha P' + (1-\alpha)R'\sim \alpha P + (1-\alpha)R$ and $\alpha Q' + (1-\alpha)S'\sim \alpha Q + (1-\alpha)S$. By transitivity of $\succsim$, $\alpha P + (1-\alpha)R\sim \alpha Q + (1-\alpha)S$. 
\end{proof}

The next lemma shows that if the independence property holds on two sets of product lotteries respectively, then it also holds on their union. 

 \begin{lemma}\label{lemma_source2_union}
Suppose that Assumption 1 holds. $(P,Q,R,S)\in (\lxone \times X_2)^4$ is a proper tuple where $P\sim Q, R\sim S$, $P_2= R_2=\delta_{y_1}, Q_2=S_2=\delta_{y_2}$ with $y_1,y_2\in {\Sigma}^2$. Fix any $T^i\in \hat\cP$ for $i=1,...,4$ with $T^4\succ T^2\succ T^3 \succ T^1$. If the independence property holds for any such $(P,Q,R,S)$ with $\{P,Q,R,S\}\subseteq [T^1, T^2]$ or $\{P,Q,R,S\}\subseteq [T^3,T^4]$, then it also holds for any such $(P,Q,R,S)$  with $\{P,Q,R,S\}\subseteq [T^1, T^4]$.
\end{lemma}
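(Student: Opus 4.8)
The plan is to fix a proper tuple $(P,Q,R,S)$ as in the statement and, invoking \lemmaref{lemma_source2_strict_ind}, reduce to the case $P\sim Q$, $R\sim S$; by part (ii) of \lemmaref{lemma_axiom implication} the conclusion is trivial when $P\sim R$, so I take $P\sim Q\succ R\sim S$. If $\{P,Q,R,S\}\subseteq[T^1,T^2]$ or $\subseteq[T^3,T^4]$ we are done by hypothesis, so I may assume the tuple lies in neither: since $P\sim Q$ is the top level and $R\sim S$ the bottom, this forces $P,Q\succ T^2$ and $R,S\prec T^3$. I then introduce $A=\{\beta\in[0,1]:\beta P+(1-\beta)R\sim\beta Q+(1-\beta)S\}$, which is closed by Axiom WC (mixture continuity) and contains $0$ and $1$; the goal is $A=[0,1]$, which is exactly the independence property for this tuple.

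The next step is two ``spreading'' observations. For $\beta_0\in A$ write $M_0=\beta_0P+(1-\beta_0)R\sim N_0=\beta_0Q+(1-\beta_0)S$, noting $M_{0,2}=\delta_{y_1}$, $N_{0,2}=\delta_{y_2}$. If $M_0\succsim T^3$ then $(P,Q,M_0,N_0)$ is a proper tuple contained in $[T^3,T^4]$, so the hypothesis gives $\lambda P+(1-\lambda)M_0\sim\lambda Q+(1-\lambda)N_0$ for all $\lambda$, which by the elementary identity $\lambda P+(1-\lambda)M_0=(\lambda+(1-\lambda)\beta_0)P+(1-\lambda)(1-\beta_0)R$ yields $[\beta_0,1]\subseteq A$. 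Symmetrically, if $M_0\preceq T^2$ then $(M_0,N_0,R,S)\subseteq[T^1,T^2]$ gives $[0,\beta_0]\subseteq A$. Consequently any maximal open interval $(\beta_1,\beta_2)$ of $[0,1]\setminus A$ must have endpoints in $A$ with $M_{\beta_1}\prec T^3$ and $M_{\beta_2}\succ T^2$ (otherwise one spreading step swallows part of $(\beta_1,\beta_2)$). Hence, as $\beta$ runs over $(\beta_1,\beta_2)$ the lottery $M_\beta:=\beta P+(1-\beta)R$ sweeps continuously through the open overlap $(T^3,T^2)$, so there is a subinterval $(\beta_3,\beta_4)\subseteq(\beta_1,\beta_2)$ on which $M_\beta\in(T^3,T^2)$; on the connected set $(\beta_1,\beta_2)$ the strict comparison has constant sign, say $\succ$ (the $\prec$ case is symmetric).

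The crux is to derive a contradiction from the existence of such a $\succ$-interval. I define $\theta:[0,1]\to[0,1]$ by $aP+(1-a)R\sim\theta(a)Q+(1-\theta(a))S$; this is well defined, continuous and strictly increasing because the $Q$–$S$ mixture path is strictly monotone in preference (\lemmaref{lemma_narrow pre} plus Axiom WC), and $A=\{a:\theta(a)=a\}$. Fixing any hub $\beta\in(\beta_3,\beta_4)$, I pick $\tilde N_\beta\sim M_\beta$ on the $Q$–$S$ path at $Q$-ratio $\tilde\beta=\theta(\beta)$; since $M_\beta$ beats its equal-ratio counterpart $\beta Q+(1-\beta)S$, strict monotonicity gives $\tilde\beta>\beta$. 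Applying the hypothesis's independence on $[T^1,T^2]$ to $(M_\beta,\tilde N_\beta,R,S)$ and on $[T^3,T^4]$ to $(P,Q,M_\beta,\tilde N_\beta)$ (both tuples are proper, carry the marginals $\delta_{y_1},\delta_{y_2}$ with $y_1,y_2\in\Sigma^2$, and lie in the stated interval because $M_\beta,\tilde N_\beta\in(T^3,T^2)$), the coefficient bookkeeping forces
$$\theta(a)=(\tilde\beta/\beta)\,a \text{ on } [0,\beta],\qquad \theta(a)=\tilde\beta+\tfrac{a-\beta}{1-\beta}(1-\tilde\beta) \text{ on } [\beta,1].$$
Since $\theta$ does not depend on the hub, equating these formulas for two distinct hubs $\beta^{(1)}<\beta^{(2)}\in(\beta_3,\beta_4)$ forces the common slope $k=\tilde\beta/\beta$ near $0$ to be constant and then, matching the upper affine pieces on $[\beta^{(2)},1]$ and evaluating at $\beta^{(1)}$, to satisfy $k(1-\beta^{(2)})=1-k\beta^{(2)}$, i.e. $k=1$, contradicting $\tilde\beta>\beta$. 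Thus $[0,1]\setminus A=\emptyset$, so $A=[0,1]$ and the independence property holds.

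I expect the main obstacle to be precisely this final step: neither endpoint of $A$ places its mixture inside the open overlap $(T^3,T^2)$, so there is no ``free'' seed on which to run both spreading directions, and the two-interval hypothesis must be combined through the hub-consistency contradiction above. The routine parts are the mixture-coefficient identities and checking that each auxiliary tuple is proper and interval-contained. As a cleaner but more representation-heavy alternative, one can coordinatize each branch by its conditional expected-utility value from \lemmaref{lemma_narrow pre}, encode the per-interval independence as affineness of a single change-of-scale function on two overlapping value intervals, and conclude via the fact that two affine functions agreeing on an overlap of positive length coincide; I would keep the elementary argument above as the primary route since it uses only \lemmaref{lemma_narrow pre}, Axiom WC and direct mixture algebra.
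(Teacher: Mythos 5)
Your argument is correct, but it reaches the conclusion by a genuinely different route than the paper. The paper's proof works constructively: it fixes two levels $W_1,W_2$ strictly inside the overlap $(T^3,T^2)$, builds $\hat P\sim\hat Q\sim W_2$ and $\hat R\sim\hat S\sim W_1$ on the two slices, and then composes a coefficient $\lambda$ supplied by independence on $[T^3,T^4]$ (applied to $(P,Q,\hat R,\hat S)$) with a coefficient $\lambda'$ supplied by independence on $[T^1,T^2]$ (applied to $(\hat P,\hat Q,R,S)$) to show that the single number $\eta^{w_1}=\lambda\lambda'/(\lambda\lambda'+(1-\lambda'))$ satisfies both $\eta^{w_1}P+(1-\eta^{w_1})R\sim\hat R$ and $\eta^{w_1}Q+(1-\eta^{w_1})S\sim\hat S$ --- in your language, it manufactures fixed points of $\theta$ inside the overlap; with two such anchors $\eta^{w_1}<\eta^{w_2}$ the middle range is settled by direct decomposition through $(\hat P,\hat Q,\hat R,\hat S)$ and the tails by a reverse decomposition plus contradiction. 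You instead never produce interior points of $A$: you assume a maximal gap, show each interval-hypothesis forces $\theta$ to be linear through the origin below any hub in the overlap and affine to $(1,1)$ above it, and extract the contradiction $k=1$ from the positive-length overlap of the two affine pieces. Both proofs hinge on exactly the same combination of the two interval-independences across $(T^3,T^2)$, and both correctly isolate the obstacle that neither endpoint of a gap of $A$ lands in the overlap; your functional-equation packaging is arguably cleaner and makes the affine-rigidity mechanism explicit, while the paper's version has the advantage of producing the intermediate indifferences $\eta^{w_i}P+(1-\eta^{w_i})R\sim W_i$ explicitly, a pattern it reuses almost verbatim in \lemmaref{lemma_both_union} and \lemmaref{lemma_CC_strong2}. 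Two cosmetic points: your opening appeal to \lemmaref{lemma_source2_strict_ind} is redundant since the lemma already restricts to $P\sim Q$, $R\sim S$; and continuity of $\theta$ is best justified by noting it is a strictly increasing bijection of $[0,1]$ (surjectivity coming from applying part (i) of \lemmaref{lemma_axiom implication} on the $P$--$R$ path for each target $N_b$), rather than by a generic appeal to Axiom WC.
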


\begin{proof}[Proof of \lemmaref{lemma_source2_union}]
Without loss of generality, we can assume $P\succ R$ and $T^1\succsim (0,{y_i}), i=1,2$, otherwise either the lemma is  trivial or we can modify  $T^1$ without changing the lemma.  Moreover, it suffices to focus on the case where $P\sim Q \succ T^2$ and $R\sim S\prec T^3$. Fix any $W_1,W_2$ with $T^2\succ W_2\succ W_1 \succ T^3$. Then we have 
$$T^4\succsim P\sim Q \succ T^2 \succ W_2 \succ W_1 \succ T^3\succ R\sim S \succsim T^1.$$

By \lemmaref{lemma_narrow pre}, we can find $\hat{P}, \hat{Q}, \hat{R}, \hat{S}$ such that $\hat{P}\sim\hat{Q}\sim W_2$, $\hat{R}\sim\hat{S}\sim W_1$ and $\hat{P}_2=\hat{R}_2 =P_2=\delta_{y_1}$, $\hat{Q}_2=\hat{S}_2 =Q_2=\delta_{y_2}$. Notice that $P,Q,\hat{R},\hat{S}\in [T^3,T^4]$, where the independence property holds. Then there exists $\lambda\in (0,1)$ such that $\lambda P + (1-\lambda)\hat{R}\sim \hat{P}\sim \lambda Q + (1-\lambda)\hat{S}$. Similarly, we can find $\lambda'\in (0,1)$ with  $\lambda' \hat{P} + (1-\lambda'){R}\sim \hat{R}\sim \lambda' \hat{Q} + (1-\lambda'){S}$.

Actually in the construction of $\hat{R}$ and $\hat{S}$, there exist $\eta_1,\eta_2\in (0,1)$ with 
$$\eta_1 P+ (1-\eta_1)R\sim \hat{R}\sim \hat{S}\sim \eta_2 Q + (1-\eta_2)S.$$
We claim that $\eta_1=\eta_2$. To see this, as $P_2=R_2=\hat{P}_2=\hat{R}_2$, we know 
\begin{align*}
    \lambda'\hat{P} + (1-\lambda')R\sim  \lambda\lambda'{P} +  (1-\lambda)\lambda'\hat{R} + (1-\lambda')R\sim \hat{R}
\end{align*}
which implies 
$$\hat{R}\sim \frac{\lambda\lambda'}{\lambda\lambda'+ (1-\lambda')}P +\frac{1-\lambda'}{\lambda\lambda'+ (1-\lambda')}R $$
and hence $\eta_1 =\frac{\lambda\lambda'}{\lambda\lambda'+ (1-\lambda')}$. Similarly we can show that $\eta_2 = \frac{\lambda\lambda'}{\lambda\lambda'+ (1-\lambda')}=\eta_1:= \eta^{w_1}$. 

A symmetric argument shows that there exists $\eta^{w_2}$ with $\eta^{w_1}<\eta^{w_2}<1$ and 
$$\eta^{w_2} P+ (1-\eta^{w_2})R\sim \hat{P}\sim \hat{Q}\sim \eta^{w_2} Q + (1-\eta^{w_2})S.$$

Now we consider $\eta$ with $\eta^{w_1}< \eta< \eta^{w_2}$. Notice that
\begin{align*}
    \eta P + (1-\eta) R&= \frac{\eta-\eta^{w_1}}{\eta^{w_2}-\eta^{w_1}}[\eta^{w_2} P+ (1-\eta^{w_2})R] + \frac{\eta^{w_2}-\eta}{\eta^{w_2}-\eta^{w_1}}[\eta^{w_1} P+ (1-\eta^{w_1})R]\\
    &\sim \frac{\eta-\eta^{w_1}}{\eta^{w_2}-\eta^{w_1}}\hat{P} + \frac{\eta^{w_2}-\eta}{\eta^{w_2}-\eta^{w_1}}\hat{R}.
\end{align*}
Similarly, 
$$\eta Q + (1-\eta) S\sim \frac{\eta-\eta^{w_1}}{\eta^{w_2}-\eta^{w_1}}\hat{Q} + \frac{\eta^{w_2}-\eta}{\eta^{w_2}-\eta^{w_1}}\hat{S}.$$
As $\hat{P}\sim\hat{Q},\hat{R}\sim\hat{S}\in [T^3, T^4]$ and $\hat{P}_2=\hat{R}_2$, $\hat{Q}_2=\hat{S}_2$, the independence property holds for $(\hat{P},\hat{Q}, \hat{R}, \hat{S})$ and hence 
$$ \eta P + (1-\eta) R\sim \frac{\eta-\eta^{w_1}}{\eta^{w_2}-\eta^{w_1}}\hat{P} + \frac{\eta^{w_2}-\eta}{\eta^{w_2}-\eta^{w_1}}\hat{R}\sim \frac{\eta-\eta^{w_1}}{\eta^{w_2}-\eta^{w_1}}\hat{Q} + \frac{\eta^{w_2}-\eta}{\eta^{w_2}-\eta^{w_1}}\hat{S}\sim\eta Q + (1-\eta) S.$$

Then we check the independence property for $\eta^{w_2}<\eta<1$.
\begin{align*}
   \hat{P}\sim \eta^{w_2} P + (1-\eta^{w_2}) R&= \frac{\eta^{w_2}-\eta^{w_1}}{\eta-\eta^{w_1}}[\eta P+ (1-\eta)R] + \frac{\eta-\eta^{w_2}}{\eta-\eta^{w_1}}[\eta^{w_1} P+ (1-\eta^{w_1})R]\\
    &\sim \frac{\eta^{w_2}-\eta^{w_1}}{\eta-\eta^{w_1}}[\eta P+ (1-\eta)R] + \frac{\eta-\eta^{w_2}}{\eta-\eta^{w_1}}\hat{R}.
\end{align*}
Similarly, 
\begin{align*}
   \hat{Q}\sim \eta^{w_2} Q + (1-\eta^{w_2}) S\sim \frac{\eta^{w_2}-\eta^{w_1}}{\eta-\eta^{w_1}}[\eta Q+ (1-\eta)S] + \frac{\eta-\eta^{w_2}}{\eta-\eta^{w_1}}\hat{S}.
\end{align*}
Note that $\eta P + (1-\eta)R, \eta Q + (1-\eta)S, \hat{R},\hat{S} \in  [T^3, T^4]$, and $(\eta P + (1-\eta)R)_2=\hat{R}_2, (\eta Q + (1-\eta)S)_2=\hat{S}_2$. By the condition stated in the lemma and the proof of \lemmaref{lemma_source2_strict_ind}, the independence property holds for $(\eta P + (1-\eta)R, \eta Q + (1-\eta)S, \hat{R},\hat{S})$. Whenever  $\eta P + (1-\eta)R\not\sim \eta Q + (1-\eta)S$, we know $\hat{P}\not\sim \hat{Q}$, a contradiction. Thus, $\eta P + (1-\eta)R\sim \eta Q + (1-\eta)S$.

The proof for the case with $\eta^{w_1}>\eta>0$ is symmetric. Hence for all $\eta\in (0,1)$, $\eta P + (1-\eta) R \sim \eta Q+ (1-\eta) S$. \end{proof}

Now we extend the local result in \lemmaref{lemma_source2_local} to a bounded set. Recall that $(P,Q,R,S)\in (\lxone \times X_2)^4$ is a proper tuple where $P\sim Q, R\sim S$, $P_2= R_2=\delta_{y_1}, Q_2=S_2=\delta_{y_2}$ with $y_1,y_2\in {\Sigma}^2$. The independence property holds for $(P,Q,R,S)$ trivially if $y_1=y_2$. Without loss of generality, we assume $y_1>y_2$ and $P\succ R$.

Take any $\hat{T}\succ T^1 \succ T^2 \succ (\delta_a,\delta_{y_1})\succ (\delta_a,\delta_{y_2})$ with $a\in X_1$, $\hat{T}_2=T^1_2=T^2_2=\delta_{y_2}$, $T^1_1=\delta_{z_1}$, $T^2_1=\delta_{z_2}$ and $\hat{T}, T^1,T^2\in \hat{\cP}$. As $y_1,y_2\in \Sigma^2$, by \lemmaref{lemma_narrow pre}, we can find $\hat{x}_1$, $\hat{x}_2\in X_1^o$ and $p_1,q_1\in \Pi^1(\hat{x}_1)$,  $p_2,q_2\in \Pi^1(\hat{x}_2)$ such that 
$$(q_1,\delta_{y_1})\sim T^1\prec (p_1,\delta_{y_1}) \hbox{~and~} (q_2,\delta_{y_2})\sim T^1\prec (p_2,\delta_{y_2}).$$

By \lemmaref{lemma_achievable}, we know that 
$$[T^1, (p_1,\delta_{y_1} )]\cap [T^1, (p_2,\delta_{y_2} )]\subseteq \Gamma(\hat{x}_1, y_1)\cap \Gamma(\hat{x}_2, y_2).$$

For any $z\in [z_2,z_1]$, we can choose $\lambda_z$ and $\eta_z\in [0,1]$ such that 
$$\lambda_z(q_1,\delta_{y_1}) + (1-\lambda_z)(\delta_a,\delta_{y_1})\sim (\delta_z,\delta_{y_2})\sim \eta_z(q_2,\delta_{y_2}) + (1-\eta_z)(\delta_a,\delta_{y_2}).$$

By \lemmaref{lemma_narrow pre}, $\lambda_z(p_1,\delta_{y_1}) + (1-\lambda_z)(\delta_a,\delta_{y_1})\succ \lambda_z(q_1,\delta_{y_1}) + (1-\lambda_z)(\delta_a,\delta_{y_1})\sim (\delta_z,\delta_{y_2})$ and $\eta_z(p_2,\delta_{y_2}) + (1-\eta_z)(\delta_a,\delta_{y_2})\succ\eta_z(q_2,\delta_{y_2}) + (1-\eta_z)(\delta_a,\delta_{y_2})\sim (\delta_z,\delta_{y_2})$.

Denote $\hat{x}_1^z$, $\hat{x}_2^z$ for each $z\in [z_1,z_2]$ with 
$$\lambda_z q_1 + (1-\lambda_z)\delta_a\in \Pi^1(\hat{x}_1^z), \hbox{~and~} \eta_z q_2 + (1-\eta_z)\delta_a\in \Pi^1(\hat{x}_2^z).$$
This leads to 
$$[(\delta_z,\delta_{y_2}), (\lambda_z p_1 + (1-\lambda_z)\delta_a,\delta_{y_1})]\cap [(\delta_z,\delta_{y_2}), (\eta_z p_2 + (1-\eta_z)\delta_a,\delta_{y_2})]\subseteq \Gamma(\hat{x}_1^z, y_1)\cap \Gamma(\hat{x}_2^z, y_2).$$
Take the union across all $z$  between $z_1$ and $z_2$, and by Axiom WC, we have 
\begin{equation}
[T^2, T^1]\subseteq \bigcup_{z_2\leq z\leq z_1}\big(\Gamma(\hat{x}_1^z, y_1)\cap \Gamma(\hat{x}_2^z, y_2)\big).    
\end{equation}
    In order to get an open cover of $[T^2,T^1]$, notice that for $\epsilon>0$ small enough with $\hat{T}\succ (\delta_{z_1+\epsilon},\delta_{y_2}) \succ (\delta_{z_2-\epsilon},\delta_{y_2})\succ (\delta_a,\delta_{y_1})$, we have 
$$[T^2, T^1]\subseteq \bigcup_{z_2-\epsilon\leq z\leq z_1+\epsilon}\big(\Gamma(\hat{x}_1^z, y_1)\cap \Gamma(\hat{x}_2^z, y_2)\big).    
$$

For each $z_2-\epsilon\leq z\leq z_1+\epsilon$, $\Gamma(\hat{x}_1^z, y_1)\cap \Gamma(\hat{x}_2^z, y_2)$ has a non-empty interior. Hence we can find an open cover of $[T^2,T^1]=[(\delta_{z_2},\delta_{y_2}),(\delta_{z_1},\delta_{y_2})]$ as $\{C^z\}_{z_2-\epsilon\leq z\leq z_1+\epsilon}$ with $C^z\subset \Gamma(\hat{x}_1^z, y_1)\cap \Gamma(\hat{x}_2^z, y_2)$. Notice that $X_1\times \{\delta_{y_2}\}$ is isomorphic to $X_1\subseteq \bR$ and in the corresponding topology $[T^2,T^1]=[(\delta_{z_2},\delta_{y_2}),(\delta_{z_1},\delta_{y_2})]$ is isomorphic to $[z_2,z_1]$, which is closed and bounded. By Heine–Borel theorem, we can find a finite subcover of $\{C^z\}_{z_2-\epsilon\leq z\leq z_1+\epsilon}$ for $[T^2,T^1]$. Denote the subcover as $\{C^{z_k}\}_{k=1}^K$.

Take any proper tuple $(P,Q,R,S)$ with $P\sim Q, R\sim S$, $P_2= R_2=\delta_{y_1}, Q_2=S_2=\delta_{y_2}$ with $y_1>y_2\in {\Sigma}^2$. By \lemmaref{lemma_source2_local}, the independence property holds for $(P,Q,R,S)$ if $P,Q,R,S\in C^{z_k}$ for any $k=1,...,K$. Then \lemmaref{lemma_source2_union}
implies that the independence property holds for $(P,Q,R,S)$ if $P,Q,R,S\in [T^2,T^1]\subseteq \bigcup_{k=1}^KC^{z_k}$. By arbitrariness of $T^1$, $T^2$ and $a\in X_2$, fix any $\hat{z},\hat{z}'\in X_1$ with $(\delta_{\hat{z}},\delta_{y_2})\succ (\delta_{\hat{z}'}, \delta_{y_1})$, then the given tuple $(P,Q,R,S)$ always satisfies the independence property so long as $(\delta_{\hat{z}},\delta_{y_2})\succ P,Q,R,S\succ(\delta_{\hat{z}'}, \delta_{y_1})$.

There are two gaps between the current argument and a complete proof of  \lemmaref{lemma_source2_ind}. First, we have ruled out the possibility that some lottery in the tuple might be indifferent to $(\delta_{\lcone},\delta_{y_1})$ or  $(\delta_{\cone},\delta_{y_2})$. Second, we have assumed that $y_1,y_2\in \Sigma^2$, instead of its closure. We will bridge the gap by utilizing Axiom WC.

\begin{proof}[Proof of \lemmaref{lemma_source2_ind}]
Following the above arguments, it suffices to consider a tuple $(P,Q,R,S)$ with $P\sim Q\sim (\delta_{z},\delta_{y_2})\succ R\sim S\sim (\delta_{z'},\delta_{y_2})$, $P_2= R_2=\delta_{y_1}, Q_2=S_2=\delta_{y_2}$ where $y_1> y_2\in {\Sigma}^2$ and $z>z'$. We have already shown the case with  $(\delta_{\cone},\delta_{y_2})\succ P\succ R\succ (\delta_{\lcone},\delta_{y_1})$.

 Now suppose $R\sim (\delta_{\lcone},\delta_{y_1})$, where $\lcone>-\infty$. By Axiom M, it must be the case that $R=(\delta_{\lcone},\delta_{y_1})$.  Take a sequence of $\{\lambda_n\}_{n\geq 1} \subset (0,1)$ with $\lambda_n\rightarrow 0$. For each $n$, denote $S^n= (\lambda_n Q_1 + (1-\lambda_n)S_1, S_2)$ and by \lemmaref{lemma_narrow pre}, we can find $\beta_n$ with $R^n = (\beta_n P_1 + (1-\beta_n)R_1, R_2)\sim S^n$. Clearly, $R^n\succ R= (\delta_{\lcone},\delta_{y_1})$ for each $n$ and hence the independence property holds for $(P, Q, R^n, S^n)$, that is, for each $\alpha\in (0,1)$, 
$$\alpha P + (1-\alpha)R^n \sim \alpha Q + (1-\alpha)S^n.$$

 Easy to see that as $n$ goes to infinity, $\beta_n$ converges to 0 and hence $S^n\xrightarrow{w}S, R^n\xrightarrow{w}R$.  By continuity of $\succsim$ on $\hat{\cP}$,  we have 
 for each $\alpha\in (0,1)$, 
$$\alpha P + (1-\alpha)R \sim \alpha Q + (1-\alpha)S.$$

A similar proof works for the case with $P\sim (\delta_{\cone},\delta_{y_2})$ and/or $R\sim (\delta_{\lcone},\delta_{y_1})$. Hence, the independence property holds for all $(P,Q,R,S)$ with $P\sim Q, R\sim S$, $P_2= R_2=\delta_{y_1}, Q_2=S_2=\delta_{y_2}$ with $y_1, y_2\in {\Sigma}^2$.

Now we consider  $y_1\in cl({\Sigma}^2)$ and $y_2\in {\Sigma}^2$.  By definition, we can find a sequence $\{y^n_1\}_{n\geq 1} \subseteq \Sigma^2$ such that $y^n_1\rightarrow y_1$ as $n\rightarrow \infty$. Using the standard subsequence arguments, we further assume that $y^n_1\geq y_1$ for all $n$. The case where $y^n_1\leq y_1$ for all $n$ is symmetric. Denote $P^n = (P_1, \delta_{y^n_1}) \succsim P \sim Q$ and $R^n = (R_1, \delta_{y^n_1}) \succsim R \sim S$. For each $n$, we increase $y_2$ gradually to $y_2'$ until either  $(Q_1,\delta_{y_2'}) \sim P^n$ or $(S_1,\delta_{y_2'}) \sim R^n$. Denote such $y_2'$ as $y^n_2$. Without loss of generality, suppose $Q^n:=(Q_1,\delta_{y_2^n}) \sim P^n$.  Then we can find $S_1^n$ such that  $S^n = (S^n_1,\delta_{y_2^n})\sim R^n$. This is guaranteed by Axiom M and  $P\sim Q\succ R\sim S$. Hence the independence property applies for $(P^n, Q^n,R^n, S^n)$ and for each $\lambda\in (0,1)$, 
$\lambda P^n  + (1-\lambda)R^n \sim  \lambda Q^n  + (1-\lambda)S^n$.  Easy to see that $P^n\xrightarrow{w} P$, $Q^n\xrightarrow{w} Q$, $R^n\xrightarrow{w} R$, $S^n\xrightarrow{w} S$. By continuity of $\succsim$ on $\hat{\cP}$, the independence property holds for $(P,Q,R,S)$. Similarly, the result holds for $y_2\in cl({\Sigma}^2)$ and $y_1\in {\Sigma}^2$.

Finally, assume $y_1>y_2$ with $y_1,y_2\in cl({\Sigma}^2)\backslash {\Sigma}^2$.  Suppose that there exists $y_3\in {\Sigma}^2$ with $y_2<y_3<y_1$.   Since $P,R\in \Gamma_{2,\delta_{y_1}}\cap \Gamma_{2,\delta_{y_2}}$, by Axiom M,  $P,R\in \Gamma_{2,\delta_{y_3}}$. Then there exist $p_1', r_1'$ with $P'=(p_1', \delta_{y_3})\sim Q\sim P$ and $R'=(r_1', \delta_{y_3})\sim R\sim S$. By applying the previous result for $(P,P',R,R')$ and $(Q,P',S,R')$ respectively, we know that the independence property holds for any $(P,Q,R,S)$. Otherwise, we can find a sequence $\{y^n_1\}_{n\geq 1} \subseteq \Sigma^2$ such that $y^n_1\rightarrow y_1$ as $n\rightarrow \infty$ and $y^n_1>y_1$ for all $n$. Then the argument in the previous paragraph follows.

By \lemmaref{lemma_source2_strict_ind}, the independence property holds for any proper tuple $(P,Q,R,S)$ with $P_2= R_2=\delta_{y_1}, Q_2=S_2=\delta_{y_2}$ with $y_1, y_2\in cl({\Sigma}^2)$. This completes the proof. \end{proof}

We are now ready to show that $\succsim$ must admit a GBIB-CN representation. 

 \begin{lemma}\label{lemma_GBIB-CN}
Suppose that Assumption 1 holds. Then $\succsim$ admits a GBIB-CN representation. 
\end{lemma}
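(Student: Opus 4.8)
The plan is to set $H_2=\Sigma^2$ and to exhibit a single regular $w$ whose two branches reproduce $\succsim$. By Axiom CN it suffices to work on $\hat\cP$, and under Assumption 1 every $(p,q)$ is indifferent to $(p,\delta_{CE_{v_2}(q)})$, so I reduce the problem to representing $\succsim$ on the slab $\lxone\times X_2$, the value being allowed to depend on the source-$2$ component only through $y:=CE_{v_2}(q)$. For $y\in X_2\backslash\Sigma^2$ the DM narrowly brackets source $1$ (this is exactly the negation defining $\Sigma^2$), so $\succsim_{1|\delta_y}$ is represented by $v_1$ and I will use the branch $w(CE_{v_1}(p),y)$; for $y\in\Sigma^2$ the conditional preference $\succsim_{1|\delta_y}$ is genuinely non-degenerate and, by \lemmaref{lemma_narrow pre}, is an EU preference with a regular index $v_{1|\delta_y}$, so I will use the branch $\sum_x w(x,y)p(x)$. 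The whole content is therefore to manufacture one continuous $w$ with $w(\cdot,y)$ a positive affine transform of $v_{1|\delta_y}$ for every $y\in\Sigma^2$, of $v_1$ for every $y\notin\Sigma^2$, and with $w$ representing $\succsim$ on the degenerate profiles $X_1\times X_2$.

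The heart of the argument is to cardinally glue the family $\{v_{1|\delta_y}\}_{y\in\Sigma^2}$ into such a $w$, and this is precisely where \lemmaref{lemma_source2_ind} is used. Fix $y_1,y_2\in\Sigma^2$ and sure source-$1$ levels with $(\delta_{a_1},\delta_{y_1})\sim(\delta_{a_2},\delta_{y_2})$ and $(\delta_{b_1},\delta_{y_1})\sim(\delta_{b_2},\delta_{y_2})$; applying the independence property of \lemmaref{lemma_source2_ind} to $P=(\delta_{b_1},\delta_{y_1})$, $Q=(\delta_{b_2},\delta_{y_2})$, $R=(\delta_{a_1},\delta_{y_1})$, $S=(\delta_{a_2},\delta_{y_2})$ gives, for every $\lambda\in(0,1)$,
\[
(\lambda\delta_{b_1}+(1-\lambda)\delta_{a_1},\,\delta_{y_1})\;\sim\;(\lambda\delta_{b_2}+(1-\lambda)\delta_{a_2},\,\delta_{y_2}).
\]
Taking certainty equivalents on each side (which exist by \cororef{coro_CE}) shows that two sure profiles at different levels of $\Sigma^2$ are indifferent exactly when they carry the same $v_{1|\delta_y}$-normalized utility relative to the anchors. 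Choosing the anchors to be a continuous selection whose sure-profile indifference levels are held fixed across $y$, I define
\[
w(x,y)\;=\;\frac{v_{1|\delta_y}(x)-v_{1|\delta_y}(a(y))}{v_{1|\delta_y}(b(y))-v_{1|\delta_y}(a(y))}\qquad(y\in\Sigma^2),
\]
which by the displayed indifference represents $\succsim$ on sure profiles over $\Sigma^2$ and is, by construction, a positive affine transform of $v_{1|\delta_y}$ for each $y$. The slab representation $\sum_x w(x,y)p(x)$ then equals $w(CE_{v_{1|\delta_y}}(p),y)$, i.e. the value of the certainty-equivalent profile $(\delta_{CE_{v_{1|\delta_y}}(p)},\delta_y)$, so together with transitivity it represents $\succsim$ jointly on $\{(p,\delta_y):y\in\Sigma^2\}$. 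Since $v_{1|\delta_y}$ is regular for $y\in X_2$, monotonicity and boundedness of $w$ follow from Axiom M and \lemmaref{lemma_narrow pre}.

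On the complementary region $y\in X_2\backslash\Sigma^2$ the conditional index is a positive affine transform of $v_1$, so I set $w(\cdot,y)$ to be that affine image and use the narrow-bracketing branch $w(CE_{v_1}(p),y)$; extending the common sure-profile representation across the interface makes the two branches agree on degenerate profiles. For $y\in\partial\Sigma^2$ we have $y\notin\Sigma^2$ by openness, hence source $1$ is narrowly bracketed there and $w(\cdot,y)$ is automatically a positive affine transform of $v_1$, delivering the boundary condition in the definition of GBIB-CN. The main obstacle is global: the cross-level anchoring and the displayed indifference are first available only on bounded pieces of the slab, so I must patch them into one continuous $w$ on all of $X_1\times X_2$. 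This is carried out exactly as in the preceding development — using the sets $\Gamma(\cdot,\cdot)$ and \lemmaref{lemma_achievable} to realize the comparisons, \lemmaref{lemma_source2_union} to merge the independence property across overlapping utility intervals, and a Heine–Borel open-cover argument (as in the proof of \lemmaref{lemma_source2_ind}) to cover the unbounded region and to secure joint continuity of $w$ across the $\Sigma^2$/complement interface via Axiom WC. Verifying that the resulting $w$, read through the two branches, reproduces $\succsim$ on all of $\cP$ (reinstating $P\sim(P_1,P_2)$ and $P_2\mapsto\delta_{CE_{v_2}(P_2)}$) completes the proof that $\succsim$ admits a GBIB-CN representation.
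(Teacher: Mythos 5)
Your proposal is correct and takes essentially the same route as the paper: your anchor-normalized family $w(\cdot,y)=\bigl(v_{1|\delta_y}(\cdot)-v_{1|\delta_y}(a(y))\bigr)/\bigl(v_{1|\delta_y}(b(y))-v_{1|\delta_y}(a(y))\bigr)$ is exactly the paper's $MAP_1$ function $f(Q)=\lambda_Q$ in different notation, and both constructions rest on the same pillars — \lemmaref{lemma_source2_ind} for the cross-$y$ calibration, \lemmaref{lemma_source2_union} plus a Heine--Borel cover to globalize, and the degenerate-profile representation plus Axiom WC to extend across $\partial\Sigma^2$ and the complement. The only point to make explicit in a full write-up is that common anchors $a(y),b(y)$ need not exist globally (the sure-profile utility ranges at different $y$ may fail to overlap), so the gluing must proceed over finite chains of overlapping intervals exactly as the paper does.
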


\begin{proof}[Proof of \lemmaref{lemma_GBIB-CN}]
 The proof idea is analogue to the proof of \lemmaref{lemma_eu-cn} in \cite{fishburn1982book}. Recall that we can focus on $\succsim$ restricted to $\lxone \times X_2$. For any $(p_1,\delta_x),(p_2,\delta_x)\in \lxone \times cl(\Sigma^2)$ with $(p_1,\delta_x)\succ (p_2,\delta_x)$, we claim that there exists some function $f$ representing $\succsim$ on $(\lxone \times cl(\Sigma^2))\cap [(p_2,\delta_x),(p_1,\delta_x)]$ such that $f$ is continuous and linear in the first source, that is, for any $(q_1,\delta_y), (q_2,\delta_y) \in (\lxone \times cl(\Sigma^2))\cap [(p_2,\delta_x),(p_1,\delta_x)]$ and $\alpha\in [0,1]$, $f(\alpha q_1+ (1-\alpha)q_1,\delta_y) = \alpha f(q_1,\delta_y) + (1-\alpha) f(q_2,\delta_y)$. Also, such $f$ is unique up to a  positive affine transformation. For simplicity, we call $f$ a $MAP_1$ function. 

To prove the claim, notice that by \lemmaref{lemma_axiom implication}, for any $Q\in [(p_2,\delta_x),(p_1,\delta_x)]$, there exists a unique $\lambda_Q\in [0,1]$ such that $Q\sim \lambda_Q (p_1,\delta_x) + (1-\lambda_Q)(p_2,\delta_x)$. Define $f:[(p_2,\delta_x),(p_1,\delta_x)]\rightarrow [0,1]$ such that $f(Q)=\lambda_Q$. As $(p_1,\delta_x)\succ (p_2,\delta_x)$, we have 
$$f(Q)\geq f(Q')\Longleftrightarrow Q\sim  \lambda_Q (p_1,\delta_x) + (1-\lambda_Q)(p_2,\delta_x)\succsim \lambda_{Q'} (p_1,\delta_x) + (1-\lambda_{Q'})(p_2,\delta_x)\sim Q'$$
Hence $f$ represents $\succsim$ on $[(p_2,\delta_x),(p_1,\delta_x)]$. Continuity of $\succsim$ on $\hat{\cP}$ assures that $f$ is continuous. Then we show the linearity of $f$ in source 1 on $(\lxone \times cl(\Sigma^2))\cap [(p_2,\delta_x),(p_1,\delta_x)]$. Take any $(q_1,\delta_y), (q_2,\delta_y) \in (\lxone \times cl(\Sigma^2))\cap [(p_2,\delta_x),(p_1,\delta_x)]$. By definition of $f$, we have 
\begin{align*}
    &(q_1,\delta_y)\sim f(q_1,\delta_y) (p_1,\delta_x) + (1-f(q_1,\delta_y))(p_2,\delta_x),\\
    &(q_2,\delta_y)\sim f(q_2,\delta_y) (p_1,\delta_x) + (1-f(q_2,\delta_y))(p_2,\delta_x).
\end{align*}
Clearly, for any $\alpha\in (0,1)$, $\alpha (q_1,\delta_y) + (1-\alpha)(q_2,\delta_y) \in (\lxone \times cl(\Sigma^2))\cap [(p_2,\delta_x),(p_1,\delta_x)]$. By definition of $f$ and \lemmaref{lemma_source2_ind}, 
\begin{align*}
    \alpha (q_1,\delta_y) + (1-\alpha)(q_2,\delta_y)\sim &[\alpha f(q_1,\delta_y)  + (1-\alpha)f(q_2,\delta_y)](p_1,\delta_x) \\
      & + [1-\alpha f(q_1,\delta_y)  -(1-\alpha)f(q_2,\delta_y)](p_2,\delta_x)\\
     \sim & f(\alpha q_1 +(1-\alpha)q_2,\delta_y) (p_1,\delta_x)\\
     &+(1-f(\alpha q_1 +(1-\alpha)q_2,\delta_y))(p_2,\delta_x)
\end{align*}
By \lemmaref{lemma_narrow pre} given $\delta_x$ in source 2, we know $f(\alpha q_1 +(1-\alpha)q_2,\delta_y)= \alpha f(q_1,\delta_y)  + (1-\alpha)f(q_2,\delta_y)$. Easy to see that a positive affine transformation of $f$ also represents $\succsim$ on $(\lxone \times cl(\Sigma^2))\cap [(p_2,\delta_x),(p_1,\delta_x)]$.

Now suppose that $f,g$ represent  $\succsim$ on $(\lxone \times cl(\Sigma^2))\cap [(p_2,\delta_x),(p_1,\delta_x)]$ and they are continuous and linear in source $1$. Without loss of generality, let $f(p_2,\delta_x)= g(p_2,\delta_x), f(p_1,\delta_x)= g(p_1,\delta_x)$. Recall that for any $Q\in (\lxone \times cl(\Sigma^2))\cap [(p_2,\delta_x),(p_1,\delta_x)]$, there is a unique $\lambda_Q$ with $Q\sim \lambda_Q (p_1,\delta_x) + (1-\lambda_Q)(p_2,\delta_x)$. By linearity of $f$ and $g$ in source 1 on $(\lxone \times cl(\Sigma^2))\cap [(p_2,\delta_x),(p_1,\delta_x)]$, we have 
\begin{align*}
    f(Q)= &\lambda_Qf( p_1,\delta_x) + (1-\lambda_Q)f(p_2,\delta_x)\\
     = & \lambda_Qf( p_1,\delta_x) + (1-\lambda_Q)f(p_2,\delta_x)\\
     =& g(Q)
\end{align*}
Hence $f\equiv g$ on $(\lxone \times cl(\Sigma^2))\cap [(p_2,\delta_x),(p_1,\delta_x)]$ and $f$ is unique up to a positive affine transformation.

As $p^1,p^2$ are arbitrary and the $MAP_1$ function is unique up to a positive affine transformation, for each $x\in cl(\Sigma^2)$, we can find a $MAP_1$ function $f$ that represents $\succsim$ on $(\lxone \times cl(\Sigma^2)) \cap \Gamma_{2,\delta_{x}}$.  Also, as $f$ is unique up to a positive affine transformation on any  $(\lxone \times cl(\Sigma^2))\cap [(p_2,\delta_x),(p_1,\delta_x)]$, $f$ is unique up to a positive affine transformation on $(\lxone \times cl(\Sigma^2)) \cap \Gamma_{2,\delta_{x}}$. 

Now choose $y\in cl(\Sigma^2)$ and $y\neq x$. Denote the $MAP_1$ function on $(\lxone \times cl(\Sigma^2)) \cap \Gamma_{2,\delta_{x}}$ as $f_x$ and the $MAP_1$ function on $(\lxone \times cl(\Sigma^2)) \cap \Gamma_{2,\delta_{y}}$ as $f_y$. If there exist $T^1\succ T^2$ with $T^1,T^2\in \Gamma_{2,\delta_{y}}\cap \Gamma_{2,\delta_{x}}$, then by \lemmaref{lemma_achievable}, we can find $p_1^x,p_2^x,p_1^y,p_2^y$ such that $[T^2,T^1] = [(p^x_2,\delta_x),(p^x_1,\delta_x)] = [(p^y_2,\delta_y),(p^y_1,\delta_y)]$. Since both $f_x, f_y$ are $MAP_1$ functions on $(\lxone \times cl(\Sigma^2))\cap [T^2,T^1]$, they must be positive affine transformations of each other on  $(\lxone \times cl(\Sigma^2))\cap [T^2,T^1]$. Fix $f_x$ and let $f_y(T^i)=f_x(T^i)$, $i=1,2$, then we have $f_x=f_y$ on $(\lxone \times cl(\Sigma^2))\cap [T^2,T^1]$. Define $\hat{f}= f_x$ on $(\lxone \times cl(\Sigma^2)) \cap \Gamma_{2,\delta_{x}}$ and $\hat{f}= f_y$ on $(\lxone \times cl(\Sigma^2)) \cap \Gamma_{2,\delta_{y}}$. Then easy to show that $\hat{f}$ is a $MAP_1$ function on $(\lxone \times cl(\Sigma^2)) \cap (\Gamma_{2,\delta_{x}}\cup \Gamma_{2,\delta_{y}})$ and is unique up to a positive affine transformation. If instead no such $T^1$ and $T^2$ exist, then the construction of a $MAP_1$ function on $(\lxone \times cl(\Sigma^2)) \cap (\Gamma_{2,\delta_{x}}\cup \Gamma_{2,\delta_{y}})$ is trivial. 

By induction, the above arguments can be applied to show the existence of a $MAP_1$ function on $(\lxone \times cl(\Sigma^2)) \cap (\bigcup_{x\in A}\Gamma_{2,\delta_{x}})$ where $A$ is a finite subset of $cl(\Sigma^2)$, and the $MAP_1$ function is unique up to a positive affine transformation. As $A$ is an arbitrary finite subset of  $cl(\Sigma^2)$ and $\bigcup_{x\in X_2}\Gamma_{2,\delta_{x}}=\hat{\cP}$,  we can find a $MAP_1$ function $V$ that represents $\succsim$ on $\lxone \times cl(\Sigma^2)$, which is unique up to a positive affine transformation.

 Define $w(x,y)=V(\delta_x,\delta_y)$ for all $(x,y)\in X_1 \times cl(\Sigma^2)$. Easy to show that $w$ is regular on $X_1 \times cl(\Sigma^2)$. This implies that for any $(p,\delta_{z}),(q,\delta_{z'})\in \lxone \times cl(\Sigma^2)$, 
$$(p,\delta_{z})\succsim (q,\delta_{z'})\Longleftrightarrow \sum_{x} w(x,z)p(x)\geq \sum_{x} w(x,z')q(x). $$

Recall that in the proof of  \lemmaref{lemma_nb}, $\succsim$ restricted to $X$ admits a regular utility representation. That implies we can extend $w$ to $X$ such that $w$ is regular and represents $\succsim$ on $X$. 

Take any $(p,z)\in \lxone \times X_2$. If $z\in cl(\Sigma^2)$, then by continuity of $w$, we can find $a_{(p,z)}\in X_1$ such that $(p,z)\sim ({a_{(p,z)}},z)$, that is, $w(a_{(p,z)},z) = \sum_{x} w(x,z)p(x)$. If $z\not\in cl(\Sigma^2)$, then by definition of $\Sigma^2$, $(p,z)\sim ({CE_{v_{1}}(p)}, z)$. 
By Axiom CN and $(p,q)\sim (p, CE_{v_2}(q))$ for all $p,q\in \lxone$, we know that for any $P\in \cP$, $P\sim (CE_{v_1}(P_1),CE_{v_2}(P_2))$ if $CE_{v_2}(P_2)\not\in cl(\Sigma^2)$
 and $P\sim (a_{(P_1, \delta_{CE_{v_2}(P_2)})},CE_{v_2}(P_2))$ if $CE_{v_2}(P_2)\in cl(\Sigma^2)$. Hence $\succsim$ is represented by 
$$V(P)= \begin{cases}
w(CE_{v_1}(P_1),CE_{v_2}(P_2)), \hbox{~if~} CE_{v_2}(P_2)\not\in cl(\Sigma^2)\\
\sum w(x,CE_{v_2}(P_2))P_1(x), \hbox{~if~} CE_{v_2}(P_2)\in cl(\Sigma^2)\end{cases}$$

Finally, if $CE_{v_2}(P_2)\in \partial \Sigma^2= cl(\Sigma^2)\backslash  \Sigma^2$, then by definition of $\Sigma^2$, $P\sim ({CE_{v_1}(P_1)},{CE_{v_2}(P_2)})$. This implies $w(\cdot, CE_{v_2}(P_2))$ must be a positive affine transformation of $v_1$. Thus, we can rewrite the representation as 
$$V(P)= \begin{cases}
w(CE_{v_1}(P_1),CE_{v_2}(P_2)), \hbox{~if~} CE_{v_2}(P_2)\not\in \Sigma^2\\
\sum w(x,CE_{v_2}(P_2))P_1(x), \hbox{~if~} CE_{v_2}(P_2)\in \Sigma^2\end{cases}$$
and hence $\succsim$ is represented by a GBIB-CN representation $(w,v_1,v_2, \Sigma^2)$, where $w,v_1,v_2$ are regular and $\Sigma^2$ is open in $X_2$ with $0\not\in \Sigma^2$.\end{proof}

\bigskip

{\bf Step 4: Suppose that the DM narrowly brackets marginal lotteries in source 1.} It is equivalent to the assumption that $(p,q)\sim (p',q)$ for all $(x,y)\in X$, $p,p'\in \Pi^1(x)$ and $q\in \Pi^2(y)$. By a symmetric argument to Step 3, we can show that the DM must admit a FBIB-CN representation.

\bigskip

{\bf Step 5: Suppose that the DM does not narrowly bracket marginal lotteries in both sources.} That is, we can find $x_1, x_2\in X_1,y_1,y_2\in X_2$ and $(p_1,q_1), (p_1,q_1')\in \Pi^1(x_1)\times \Pi^2(y_1)$,  $(p_2,q_2), (p_2',q_2)\in \Pi^1(x_2)\times \Pi^2(y_2)$ such that $(p_1,q_1)\succ (p_1,q_1')$ and $(p_2,q_2)\succ (p_2',q_2)$. We denote this condition as {\bf Assumption 2}.

The next lemma shows that we can make $x_1=y_1$ and $x_2=y_2$ in Assumption 2. 

\begin{lemma}\label{lemma_same xy}
Suppose that Assumption 2 holds. Then there exist $(x^o,y^o)\in X_1^o\times X_2^o$ and $P^o, Q^o, R^o, S^o\in \Pi^1(x)\times \Pi^2(y)$ such that  $P^o_1= Q^o_1$, $R^o_2= S^o_2$, $P^o\succ Q^o$ and $R^o\succ S^o$.
\end{lemma}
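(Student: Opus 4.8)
The plan is to recast the two strict rankings of Assumption 2 as non-affinity of conditional expected-utility indices, and then to relocate both witnesses of non-bracketing to a single interior profile $(x^o,y^o)$. By \lemmaref{lemma_narrow pre}, every conditional preference $\succsim_{2|p}$ with $p\in\lxone$ and $\succsim_{1|q}$ with $q\in\lxtwo$ has a continuous bounded EU index $v_{2|p}$, $v_{1|q}$, while $v_2=v_{2|\delta_0}$ and $v_1=v_{1|\delta_0}$ are regular. The ranking $(p_1,q_1)\succ(p_1,q_1')$ with $q_1,q_1'\in\Pi^2(y_1)$ means that $q_1,q_1'$ carry equal $v_2$-expectation but unequal $v_{2|p_1}$-expectation, so $v_{2|p_1}$ is not affine to $v_2$; symmetrically $v_{1|q_2}$ is not affine to $v_1$.

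Next I push the conditioning lotteries to the interior. The set $\{p\in\lxone:(p,q_1)\succ(p,q_1')\}$ is open by Axiom WC (topological continuity over product lotteries) and contains $p_1$. Since $CE_{v_1}$ is continuous and $CE_{v_1}^{-1}(\{\lcone,\cone\})$ consists of at most two degenerate lotteries, the lotteries with interior certainty equivalent form a dense open subset of $\lxone$; intersecting it with the open set above yields $p^o$ with $v_{2|p^o}$ still not affine to $v_2$ and $x^o:=CE_{v_1}(p^o)\in X_1^o$. The symmetric choice produces $q^o$ with $v_{1|q^o}$ not affine to $v_1$ and $y^o:=CE_{v_2}(q^o)\in X_2^o$.

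The crux is to realize the source-$2$ witness at the \emph{prescribed} level $y^o$ rather than at whatever level is handed to us. View $v_2$- and $v_{2|p^o}$-expectation as linear functionals $L_0,L$ on the space of finite-support, mass-zero signed measures on $X_2$; non-affinity of $v_{2|p^o}$ to $v_2$ is exactly the statement that $L$ is not a scalar multiple of $L_0$, so $\ker L_0\not\subseteq\ker L$ and there are lotteries $a_0,b_0$ with $L_0(a_0)=L_0(b_0)$ and $L(a_0)\neq L(b_0)$. Their common $v_2$-certainty equivalent $\bar y$ is interior because $\Pi^2(\bar y)$ already contains two distinct lotteries. I then translate: choosing $z$ on the appropriate side of $y^o$ and $\lambda\in(0,1]$ with $\lambda v_2(\bar y)+(1-\lambda)v_2(z)=v_2(y^o)$, the lotteries $a=\lambda a_0+(1-\lambda)\delta_z$ and $b=\lambda b_0+(1-\lambda)\delta_z$ both lie in $\Pi^2(y^o)$ while $L(a)-L(b)=\lambda\,(L(a_0)-L(b_0))\neq 0$, so after relabeling $(p^o,a)\succ(p^o,b)$. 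Running the identical argument in source $1$ delivers $c,d\in\Pi^1(x^o)$ with $(c,q^o)\succ(d,q^o)$.

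Finally I set $P^o=(p^o,a)$, $Q^o=(p^o,b)$, $R^o=(c,q^o)$, $S^o=(d,q^o)$: all four lie in $\Pi^1(x^o)\times\Pi^2(y^o)$, with $P^o_1=Q^o_1$, $R^o_2=S^o_2$, $P^o\succ Q^o$, $R^o\succ S^o$, and $(x^o,y^o)\in X_1^o\times X_2^o$, as required. I expect the propagation step of the previous paragraph to be the main obstacle: it is what decouples the freely movable second coordinate of the source-$2$ witness from its first coordinate (pinned at $x^o$ by the conditioning lottery $p^o$), and symmetrically for source $1$, and this decoupling is precisely what lets both witnesses be placed at one common interior profile.
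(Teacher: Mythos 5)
Your proposal is correct and follows essentially the same route as the paper's proof: both reduce Assumption 2 to non-affinity of the conditional EU indices $v_{2|p}$ (resp.\ $v_{1|q}$) relative to $v_2$ (resp.\ $v_1$), and then transport each strictly-ranked pair to a single interior profile $(x^o,y^o)$ by exploiting the fact that non-affinity produces a strictly ranked pair inside \emph{every} interior indifference class $\Pi^2(y)$. The paper obtains interiority by a small perturbation of the original $x_1,y_2$ via Axiom WC and leaves the relocation step implicit in its citation of \lemmaref{lemma_narrow pre}, whereas you perturb the conditioning lotteries and spell out the relocation via the kernel/mixing-with-$\delta_z$ argument; these are cosmetic differences.
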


\begin{proof}[Proof of \lemmaref{lemma_same xy}]
By Assumption 2, there exist $x_1,x_2 \in X_1, y_1,y_2\in X_2$ and $(p_1,q_1), (p_1,q_1')\in \Pi^1(x_1)\times \Pi^2(y_1)$,  $(p_2,q_2), (p_2',q_2)\in \Pi^1(x_2)\times \Pi^2(y_2)$ such that $(p_1,q_1)\succ (p_1,q_1')$ and $(p_2,q_2)\succ (p_2',q_2)$. Clearly, $y_1\in X_2^o,x_2\in X_1^o$. By Axiom WC, we can also assume that $y_2\in X_2^o$ and $x_1\in X^o_1$. By \lemmaref{lemma_narrow pre}, given $p_1$ in source 1, as $y_2\in X_2^o$, we can find $\hat{q}_1,\hat{q}_1'\in \Pi^2(y_2)$ with $(p_1, \hat{q}_1)\succ (p_1, \hat{q}_1')$. By \lemmaref{lemma_narrow pre} given $q_2$ in source 2, as $x_1\in X^o_1$, we can find $\hat{p}_2,\hat{p}_2'\in \Pi^1(x_1)$ with $(\hat{p}_2, q_2)\succ (\hat{p}_2', q_2)$. Let $(x^o,y^o)=(x_1,y_2)$ and $P^o=(p_1, \hat{q}_1)\succ Q^o=(p_1, \hat{q}_1'), R^o=(\hat{p}_2, q_2)\succ S^o=(\hat{p}_2', q_2)$. This completes the proof.\end{proof}

From now on, take $x^o,y^o$ and $(P^o,Q^o, R^o,S^o)$ as given in \lemmaref{lemma_same xy}. Denote the set of all such pairs of $(x^o,y^o)$ as $O$. Similar to $\Sigma^2$, we define $\Sigma^1$ as 
$$\Sigma^1:= \{x\in X_1^o: \exists~ y\in X_2 \hbox{~and~} p\in \Pi^1(x), q,q'\in \Pi^2(y) \hbox{~s.t.~} (p,q) \succ (p,q') \}.$$

One should notice that $\Sigma^i\subseteq X_i^o, i=1,2$. It is possible, for example, that for $x=\cone$, there exists  $y\in X_2$ with $q,q'\in \Pi^2(y)$ and $(\cone,q) \succ (\cone,q')$. However, by Axiom WC, this implies that $(\cone-\epsilon,\cone) \subseteq \Sigma^1$ for some $\epsilon>0$, which implies $\cone \in cl(\Sigma^1)$. This suggests that for each $i=1,2$,  $x\in X_i\backslash cl(\Sigma^i)$, $y\in X_{-i}$,  $p\in \Pi^i(x)$ and $q,q'\in \Pi^{-i}(y)$, we must have $P\sim Q$ where $P_i=Q_i=p$ and $P_{-j}=q,Q_{-j}=q'$.

By the proof of \lemmaref{lemma_same xy}, we can show $O=\Sigma^1\times \Sigma^2$. Also, by continuity of $\succsim$ on $\hat{\cP}$, $\Sigma^1$ is also an open subset of $X_1$ and $0\not\in \Sigma^1$.  For any $x_1\in \Sigma^1$ and $x_2\in \Sigma^2$, we denote 
\begin{align*}
    \hat{\Pi}^1(x_1) &= \big\{ p\in \Pi^1(x_1): \exists~x'_2\in X_2, q\in \Pi^2(x'_2) \hbox{~s.t.~} (p,q)\not\sim (p,{x'_2}) \big\}, \\
    \hat{\Pi}^2(x_2) &= \big\{ q\in \Pi^2(x_2): \exists~x'_1\in X_1, p\in \Pi^1(x'_1) \hbox{~s.t.~} (p,q)\not\sim ({x'_1},q) \big\}.
\end{align*}
Clearly, $\hat{\Pi}^i(x_i)\subseteq {\Pi}^i(x_i)$ for each $i$ by definition. Moreover, with the same argument in Step 3, for any $i=1,2$, $x_i\in \Sigma^i$, $p_i\in  \hat{\Pi}^i(x_i)$ and $x_{-i}\in X_{-i}^o$, we can find $p_{-i}\in \Pi^{-i}(x_{-i})$ such that $T\not\sim T'$ where $T_i=T'_i=p_i$ and $T_{-i}=p_{-i}$, $T'_{-i}={x_{-i}}$.

\begin{lemma}\label{lemma_closure}
For each $i=1,2$ and $x_i\in \Sigma^i$, $cl(\hat{\Pi}^i(x_i))={\Pi}^i(x_i)$.
\end{lemma}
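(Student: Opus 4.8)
The plan is to prove the two inclusions separately, with $cl(\hat\Pi^i(x_i))\subseteq\Pi^i(x_i)$ being routine and the reverse density inclusion $\Pi^i(x_i)\subseteq cl(\hat\Pi^i(x_i))$ carrying all of the content. By \lemmaref{lemma_narrow pre} the narrow preference $\succsim_i$ has a regular EU index $v_i$, so $\Pi^i(x_i)=\{p\in\lxi: \sum_x v_i(x)p(x)=v_i(x_i)\}$ is the preimage of a point under the weakly continuous functional $p\mapsto\sum_x v_i(x)p(x)$, hence closed; since $\hat\Pi^i(x_i)\subseteq\Pi^i(x_i)$ by definition, this immediately yields $cl(\hat\Pi^i(x_i))\subseteq\Pi^i(x_i)$. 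I will treat $i=1$, the case $i=2$ being symmetric.

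For density, fix $p\in\Pi^1(x_1)$. If $p\in\hat\Pi^1(x_1)$ there is nothing to show, so assume $p\notin\hat\Pi^1(x_1)$, which by definition means $(p,q)\sim(p,\delta_{CE_{v_2}(q)})$ for every $q\in\lxtwo$. Because $x_1\in\Sigma^1$, the set $\hat\Pi^1(x_1)$ is nonempty: the defining data of $\Sigma^1$ give $p^*\in\Pi^1(x_1)$ and $q,q'\in\Pi^2(y)$ with $(p^*,q)\succ(p^*,q')$, and since $q,q'$ share the narrow certainty equivalent $y$, at least one of them is distinguished from $(p^*,\delta_y)$, so $p^*\in\hat\Pi^1(x_1)$. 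Pick such a witness $q^*$ with $x_2':=CE_{v_2}(q^*)$ and $(p^*,q^*)\not\sim(p^*,\delta_{x_2'})$; I assume WLOG $(p^*,q^*)\succ(p^*,\delta_{x_2'})$, the opposite strict ranking being handled by exchanging the roles of $q^*$ and $\delta_{x_2'}$. Now set $p^\alpha:=\alpha p+(1-\alpha)p^*$ for $\alpha\in(0,1)$. Since membership in $\Pi^1(x_1)$ is an affine condition on probability weights and both $p,p^*\in\Pi^1(x_1)$, every $p^\alpha$ lies in $\Pi^1(x_1)$, and $p^\alpha\to p$ weakly as $\alpha\to1$.

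The crux is to show the single witness $q^*$ continues to certify $p^\alpha\in\hat\Pi^1(x_1)$. Here I exploit that product lotteries sharing a source-$2$ marginal mix within $\hat\cP$: since $(p^\alpha,q^*)=(1-\alpha)(p^*,q^*)+\alpha(p,q^*)$ and $(p^\alpha,\delta_{x_2'})=(1-\alpha)(p^*,\delta_{x_2'})+\alpha(p,\delta_{x_2'})$, I apply Axiom Weak Multilinear Independence with $P=(p^*,q^*)$, $Q=(p^*,\delta_{x_2'})$, $R=(p,q^*)$, $S=(p,\delta_{x_2'})$ and $i=j=2$. The hypotheses hold: $P_2=R_2=q^*$, $Q_2=S_2=\delta_{x_2'}$, and $P_1=p^*\sim_1 p=R_1$, $Q_1=p^*\sim_1 p=S_1$ because $p,p^*\in\Pi^1(x_1)$ forces $p\sim_1 p^*$; moreover $P\succ Q$ is the witness and $R\sim S$ is exactly $(p,q^*)\sim(p,\delta_{x_2'})$, which holds since $p\notin\hat\Pi^1(x_1)$. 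The axiom therefore delivers $(p^\alpha,q^*)\succ(p^\alpha,\delta_{x_2'})$ for every $\alpha\in(0,1)$, so $p^\alpha\in\hat\Pi^1(x_1)$ with witness $q^*$. Letting $\alpha\to1$ gives $p\in cl(\hat\Pi^1(x_1))$, which completes the density inclusion and the lemma.

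The main obstacle is precisely this last step: transferring the sensitivity of the fixed witness $q^*$ from $p^*$ to the perturbations $p^\alpha$. The two delicate points are (i) recognizing that the shared source-$2$ marginals keep the relevant mixtures inside $\hat\cP$, so that Axiom Weak Multilinear Independence is applicable at all, and (ii) verifying its ``$\sim$'' hypothesis $R\sim S$, which is \emph{exactly} the standing assumption $p\notin\hat\Pi^1(x_1)$ rather than an extra requirement — this is what lets a single witness suffice along the whole segment. Everything else (closedness of $\Pi^1(x_1)$, convexity of the indifference set under mixtures, and weak convergence $p^\alpha\to p$) is standard once the EU representation of $\succsim_1$ from \lemmaref{lemma_narrow pre} is in hand.
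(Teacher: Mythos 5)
Your proof is correct and follows essentially the same route as the paper's: extract a sensitive lottery $p^*\in\hat{\Pi}^1(x_1)$ from the definition of $\Sigma^1$, mix it with the insensitive target $p$ inside $\Pi^1(x_1)$, and invoke Axiom Weak Multilinear Independence (whose indifference hypothesis $R\sim S$ is supplied exactly by $p\notin\hat{\Pi}^1(x_1)$) to keep every mixture sensitive, then pass to the limit. The only differences are cosmetic — you contrast $q^*$ with its degenerate certainty equivalent $\delta_{x_2'}$ where the paper contrasts two lotteries $q,q'$ in the same $\Pi^2(y)$, and you make the easy inclusion $cl(\hat{\Pi}^1(x_1))\subseteq\Pi^1(x_1)$ explicit where the paper leaves it tacit.
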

\begin{proof}[Proof of \lemmaref{lemma_closure}]
We will prove the result for $i=1$. The proof for $i=2$ is symmetric and omitted. By definition of $x_1\in \Sigma^1$, we can find $p\in \hat\Pi^1(x_1)$, $x'_2\in X_2$ and $q,q'\in  \Pi^2(x'_2)$ such that $(p,q)\succ (p,q')$. For any $p^o\in {\Pi}^1(x_1)\backslash \hat{\Pi}^1(x_1)$, we have $(p^o,q)\sim (p^o,q')$. By \lemmaref{lemma_axiom implication}, the independence property holds for $((p,q),(p,q'),(p^o,q),(p^o,q'))$ as $p,p^o\in \Pi^1(x_1)$. Then for any $\lambda\in (0,1)$, $(\lambda p + (1-\lambda)p^o,q)\succ (\lambda p + (1-\lambda)p^o,q')$, which implies $\lambda p + (1-\lambda)p^o\in \hat{\Pi}^1(x_1)$. Let $\lambda\rightarrow 0$ and we have $p^o\in cl(\hat{\Pi}^i(x_i))$.
\end{proof}

For any $A_1\subseteq X_1$ and $A_2\subseteq X_2$, we denote $Y_1(A_1)=\cup_{x_1\in A_1}\Pi^1(x_1)$, $Y_2(A_2)=\cup_{x_2\in A_2}\Pi^2(x_2)$ and $Y(A_1, A_2)= Y_1(A_1)\times Y_2(A_2)\subseteq \hat{\cP}$. Our goal is to show that the independence property holds for certain proper tuple $(P,Q,R,S)$ like \lemmaref{lemma_source2_ind}.

\begin{lemma}\label{lemma_both_ind}
Suppose that Assumption 2 holds. Then a proper tuple $(P,Q,R,S)$ satisfies the independence property if $P_i=R_i\in Y_i(cl(\Sigma^i))$, $Q_j=S_j\in Y_j(cl(\Sigma^j))$ for $i,j\in \{1,2\}$.
\end{lemma}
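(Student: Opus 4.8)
The plan is to replicate the architecture of Step 3 --- the local result \lemmaref{lemma_source2_local}, the union result \lemmaref{lemma_source2_union}, and the open-cover argument --- but now in the genuinely two-dimensional setting in which risk in \emph{both} sources is active on $\Sigma^1\times\Sigma^2$. As a preliminary reduction, I would first argue, exactly as in \lemmaref{lemma_source2_strict_ind}, that it suffices to establish the independence property for proper tuples with $P\sim Q$ and $R\sim S$; the cases involving strict preference then follow from the calibration in \lemmaref{lemma_axiom implication}(i) together with \lemmaref{lemma_narrow pre}.

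For the local step, fix such a tuple with $P_i=R_i$ and $Q_j=S_j$. Using \cororef{coro_CE} I would record the source-wise certainty equivalents of each lottery, and then invoke \lemmaref{lemma_achievable} to replace $P,Q,R,S$ by indifferent product lotteries $P',Q',R',S'$ lying in suitable sets $\Pi^1(\cdot)\times\Pi^2(\cdot)$, chosen so that the two marginals being mixed share a common certainty equivalent in the non-fixed source, i.e.\ $P'_{-i}\sim_{-i}R'_{-i}$ and $Q'_{-j}\sim_{-j}S'_{-j}$. This is precisely the hypothesis of \lemmaref{lemma_axiom implication}(ii), which yields $\alpha P'+(1-\alpha)R'\sim\alpha Q'+(1-\alpha)S'$ for every $\alpha$; then \lemmaref{lemma_narrow pre} (the EU form of the conditional preferences) transfers the indifference back to $\alpha P+(1-\alpha)R\sim\alpha Q+(1-\alpha)S$, provided all four lotteries sit in a common region $\Gamma(x_1,y_1)\cap\Gamma(x_2,y_2)$, or in the appropriate $\Gamma_{i,\cdot}$ analogue for the cross case $i\neq j$.

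To globalize, I would prove a union lemma analogous to \lemmaref{lemma_source2_union}: if the independence property holds on each of two overlapping preference-level ``intervals'' $[T^1,T^2]$ and $[T^3,T^4]$, then it holds on $[T^1,T^4]$, the proof being the same calibration-and-interpolation bookkeeping. Covering the relevant range of preference levels by interiors of sets $\Gamma(\hat{x}_1^z,y_1)\cap\Gamma(\hat{x}_2^z,y_2)$ and extracting a finite subcover via Heine--Borel (after passing to the homeomorphic closed bounded subset of $\mathbb{R}$, as in Step 3), I would then extend the local result to an arbitrary bounded set of levels. Finally, the boundary cases --- certainty equivalents landing on $\partial\Sigma^i=cl(\Sigma^i)\setminus\Sigma^i$, and lotteries indifferent to degenerate lotteries at the (possibly infinite) endpoints of the outcome space --- are handled by Axiom WC through the weak-convergence/subsequence argument already used to close \lemmaref{lemma_source2_ind}.

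The main obstacle is the genuinely two-dimensional and ``cross'' nature of the problem. In Step 3 one source was narrowly bracketed, so mixing occurred only in source $1$ and the compactness argument lived on a one-dimensional interval $[z_2,z_1]$; here, when $i\neq j$, one side of the mixture varies in source $2$ while the other varies in source $1$, and the region over which uniform control is needed is a two-dimensional set of preference levels. Performing the calibrations that feed \lemmaref{lemma_achievable} and \lemmaref{lemma_axiom implication}(ii) requires the dependence on the \emph{other} source to be nondegenerate at the relevant levels; this is exactly what \lemmaref{lemma_closure} supplies, since $cl(\hat{\Pi}^i(x_i))=\Pi^i(x_i)$ lets me approximate any representative marginal by one whose cross-source dependence is live, and then pass to the limit via Axiom WC.
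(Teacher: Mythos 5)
Your proposal mirrors the paper's own proof: reduce to the case $P\sim Q$, $R\sim S$; prove a local version via \lemmaref{lemma_achievable} and \lemmaref{lemma_axiom implication}(ii) on sets of the form $\Gamma_{i,p}(y)\cap\Gamma_{j,q}(y')$; patch regions together with a union lemma and a Heine--Borel covering of preference levels; and close the gap from $\hat{\Pi}^i(x_i)$ and $\Sigma^i$ to their closures via \lemmaref{lemma_closure} and the weak-continuity limit arguments. This is essentially the same architecture as the paper's \lemmaref{lemma_both_local}, \lemmaref{lemma_both_union} and \lemmaref{lemma_both_ind_weak}, so no substantive divergence to report.
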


Again, the proof will rely on several intermediate lemmas. Similar to \lemmaref{lemma_source2_strict_ind}, we can  focus on the case where $P\sim Q, R\sim S$ without loss of generality. 

For any $(x_1,x_2)\in \Sigma^1\times \Sigma^2$, by definition and  \lemmaref{lemma_same xy}, there exist $(p,q), (p,q'), (\hat{p},\hat{q}), (\hat{p}', \hat{q})\in \Pi^1(x_1)\times \Pi^2(x_2)$ with $(p,q)\succ (p,q')$ and $(\hat{p},\hat{q})\succ (\hat{p}', \hat{q})$. By \lemmaref{lemma_axiom implication}, we know that a proper tuple $(P,Q,R,S)$ satisfies the independence property if $P,Q,R,S\in \Pi^1(x_1)\times \Pi^2(x_2)$ and $P\sim Q, R\sim S$. The next lemma is analogous to \lemmaref{lemma_source2_local}.

 \begin{lemma}\label{lemma_both_local}
Suppose that Assumption 2 holds. Then a proper tuple $(P,Q,R,S)$ satisfies the independence property if there exist some $i,j\in \{1,2\}$, $x_i\in \Sigma^i$, $x_j\in \Sigma^j$ and $y\in X_{-i}$, $y'\in X_{-j}$ such that $P_i=R_i=p \in  {\Pi}^i(x_i)$, $Q_j=S_j=q \in  {\Pi}^j(x_j)$, $P\sim Q$, $R\sim S$ and $P,R\in \Gamma_{i,p}(y)\cap \Gamma_{j,q}(y')$.
\end{lemma}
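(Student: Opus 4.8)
The plan is to mimic the proof of \lemmaref{lemma_source2_local}: I would replace $P,R$ and $Q,S$ by indifferent product lotteries that lie in the exact fibers $\{p\}\times\Pi^{-i}(y)$ and $\{q\}\times\Pi^{-j}(y')$, apply part (ii) of \lemmaref{lemma_axiom implication} to the replaced tuple, and then transfer the resulting mixture indifference back to the original lotteries using the expected-utility structure of the conditional preferences supplied by \lemmaref{lemma_narrow pre}. Throughout I reduce, as in the analogous lemmas, to the case $P\sim Q$, $R\sim S$, which is already part of the hypotheses.

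First I would note that $\Gamma_{j,q}(y')$ is a union of utility intervals $[B,A]$ with $A,B\in\{q\}\times\Pi^{-j}(y')$, so it contains every product lottery whose utility lies in that range. Hence $P\in\Gamma_{j,q}(y')$ together with $Q\sim P$ forces $Q\in\Gamma_{j,q}(y')$, and $S\sim R\in\Gamma_{j,q}(y')$ forces $S\in\Gamma_{j,q}(y')$. Now I apply \lemmaref{lemma_achievable}(ii)--(iii): from $P,R\in\Gamma_{i,p}(y)$ extract $\tilde P,\tilde R\in\{p\}\times\Pi^{-i}(y)$ with $\tilde P\sim P$, $\tilde R\sim R$; from $Q,S\in\Gamma_{j,q}(y')$ extract $\tilde Q,\tilde S\in\{q\}\times\Pi^{-j}(y')$ with $\tilde Q\sim Q$, $\tilde S\sim S$. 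By construction $\tilde P_i=\tilde R_i=p$ with $\tilde P_{-i}\sim_{-i}\tilde R_{-i}$ (both have certainty equivalent $y$), and $\tilde Q_j=\tilde S_j=q$ with $\tilde Q_{-j}\sim_{-j}\tilde S_{-j}$; moreover $\tilde P\sim P\sim Q\sim\tilde Q$ and $\tilde R\sim R\sim S\sim\tilde S$. Thus $(\tilde P,\tilde Q,\tilde R,\tilde S)$ satisfies exactly the hypotheses of part (ii) of \lemmaref{lemma_axiom implication}, which delivers $\alpha\tilde P+(1-\alpha)\tilde R\sim\alpha\tilde Q+(1-\alpha)\tilde S$ for every $\alpha\in(0,1)$.

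It then remains to transfer this indifference back. The key observation is that mixing two product lotteries that share a marginal again yields a product lottery, so since $P,R,\tilde P,\tilde R$ all have $i$-marginal $p$, the lotteries $\alpha P+(1-\alpha)R$ and $\alpha\tilde P+(1-\alpha)\tilde R$ both equal $p$ in source $i$, with source-$(-i)$ marginals $\alpha P_{-i}+(1-\alpha)R_{-i}$ and $\alpha\tilde P_{-i}+(1-\alpha)\tilde R_{-i}$. By \lemmaref{lemma_narrow pre} the conditional preference $\succsim_{-i|p}$ is EU with index $v_{-i|p}$, and $P\sim\tilde P$, $R\sim\tilde R$ give $\sum_z v_{-i|p}(z)P_{-i}(z)=\sum_z v_{-i|p}(z)\tilde P_{-i}(z)$ and the analogous equality for $R,\tilde R$; linearity of the index then forces $\alpha P+(1-\alpha)R\sim\alpha\tilde P+(1-\alpha)\tilde R$. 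The symmetric argument on the $Q,S$ side, using $\succsim_{-j|q}$, gives $\alpha Q+(1-\alpha)S\sim\alpha\tilde Q+(1-\alpha)\tilde S$. Chaining the three indifferences yields $\alpha P+(1-\alpha)R\sim\alpha Q+(1-\alpha)S$, which is the independence property in the case $P\sim Q$, $R\sim S$.

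The main thing to be careful about is the bookkeeping across the choice of $i,j$, which may or may not coincide: I must check that the mixtures stay inside $\hat{\cP}$ (guaranteed only because the mixed lotteries share a marginal) and that the correct conditional index is used on each side, namely $v_{-i|p}$ for the $P,R$ pair and $v_{-j|q}$ for the $Q,S$ pair. The substantive content is carried entirely by part (ii) of \lemmaref{lemma_axiom implication} and the linearity from \lemmaref{lemma_narrow pre}, and the argument goes through uniformly whether or not $i=j$.
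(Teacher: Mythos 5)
Your argument is correct and follows essentially the same route as the paper's proof: replace $P,R$ and $Q,S$ by indifferent lotteries in the fibers $\{p\}\times\Pi^{-i}(y)$ and the corresponding fiber over $q$ via \lemmaref{lemma_achievable}, apply part (ii) of \lemmaref{lemma_axiom implication} to the replaced tuple, and transfer the mixture indifference back using the linearity of the conditional EU indices from \lemmaref{lemma_narrow pre}. Your explicit observation that $Q\sim P$ and $S\sim R$ place $Q,S$ in $\Gamma_{j,q}(y')$ is a detail the paper leaves implicit, but the substance is identical.
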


\begin{proof}[Proof of \lemmaref{lemma_both_local}]
By \lemmaref{lemma_achievable}, we can find $P',R'$ with $P'_i=R'_i=p$ and $P'_{-i},R'_{-i}\in \Pi^{-i}(y)$ such that $P\sim P'$ and $R\sim R'$. By \lemmaref{lemma_narrow pre}, for any $\lambda\in (0,1)$, $\lambda P+ (1-\lambda)R\sim \lambda P'+ (1-\lambda)R'$. Similarly, we can find $Q',S'$ with $Q'_j=S'_j=q$,  $Q'_{-j},R'_{-j}\in \Pi^{-j}(y')$ and $\lambda Q+ (1-\lambda)S\sim \lambda Q'+ (1-\lambda)S'$ for any $\lambda\in (0,1)$. Easy to check that the conditions in  (ii) of \lemmaref{lemma_axiom implication} hold for the tuple $(P',Q',R',S')$ and hence for any $\lambda\in (0,1)$, 
$$\lambda P +(1-\lambda)R\sim \lambda P'+ (1-\lambda)R'\sim \lambda Q'+ (1-\lambda)S'\sim \lambda Q+ (1-\lambda)S.$$
\end{proof}

The next step aims at extending this local independence property. We start with a lemma similar to \lemmaref{lemma_achievable}. It specifies the sufficient conditions for $\Gamma_{i,p}(y)\cap \Gamma_{j,q}(y')$ to be nonempty and hence \lemmaref{lemma_both_local} can be applied.

\begin{lemma}\label{lemma_achievable2}
Fix $i,j\in \{1,2\}$, $p\in \lxi$, $q\in \lxj$ and $y\in X_{-i}$. For any $P\in \Gamma_{i,p}(y)\cap \Gamma_{j,q}$, then there exists $y'\in X_{-j}$ such that $P\in \Gamma_{j,q}(y')$. 
\end{lemma}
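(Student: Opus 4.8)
The plan is to observe that the conclusion is essentially a matter of unfolding the definition of $\Gamma_{j,q}$ as a union of slices, and then to record why the hypothesis $P\in\Gamma_{i,p}(y)$ is what makes the statement useful downstream. Recall that by construction $\Gamma_{j,q}=\bigcup_{y'\in X_{-j}}\Gamma_{j,q}(y')$, where each slice $\Gamma_{j,q}(y')$ collects the product lotteries whose utility lies between two lotteries that share the marginal $q$ in source $j$ and have source-$(-j)$ certainty equivalent exactly $y'$. So the first and only essential step is: from $P\in\Gamma_{j,q}$, extract a witnessing index $y'\in X_{-j}$ with $P\in\Gamma_{j,q}(y')$. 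Since membership in a union is membership in one of its members, such a $y'$ exists immediately, and it lies in $X_{-j}$ precisely because the union defining $\Gamma_{j,q}$ is indexed by $X_{-j}$.

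To make the witness concrete I would argue as follows. By $P\in\Gamma_{j,q}$ there are lotteries $A\succsim B$, both with marginal $q$ in source $j$ and with source-$(-j)$ marginals in a common indifference class $\Pi^{-j}(y')$, such that $A\succsim P\succsim B$; this is exactly what $P\in\Gamma_{j,q}(y')$ asserts. \cororef{coro_CE} guarantees that the relevant certainty-equivalent level $y'\in X_{-j}$ is well defined, so the slice $\Gamma_{j,q}(y')$ is genuinely nonempty and is the one containing $P$. The only care needed here is bookkeeping of the index conventions ($i$ versus $-i$, $j$ versus $-j$, and which source carries the fixed marginal $q$ versus the certainty-equivalent level $y'$), together with checking that the sandwiching lotteries supplied by the hypothesis have precisely the form defining a single slice.

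Finally I would record the payoff. Because $P$ was assumed to lie in $\Gamma_{i,p}(y)$ as well, the $y'$ just produced yields $P\in\Gamma_{i,p}(y)\cap\Gamma_{j,q}(y')$, so this intersection is nonempty. This is exactly the hypothesis needed to invoke \lemmaref{lemma_both_local} on tuples drawn from that intersection, which is how the lemma feeds into the covering argument that extends the local independence property in Step 5. I do not expect a genuine obstacle in this lemma itself: the content is definitional, and the hypothesis $P\in\Gamma_{i,p}(y)$ plays no role in the stated existence claim but only in the surrounding application, where it certifies nonemptiness of the intersection. The real difficulty of Step 5 lives in \lemmaref{lemma_both_local} and in the subsequent Heine--Borel covering argument that glues the local results together, not here.
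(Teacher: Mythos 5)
Your proposal is correct and matches the paper's proof, which is literally the one-line observation that $\Gamma_{j,q}=\bigcup_{y'\in X_{-j}}\Gamma_{j,q}(y')$ by definition, so membership in the union yields a witnessing slice. Your additional remarks about the role of $P\in\Gamma_{i,p}(y)$ in the downstream covering argument are accurate but not part of the proof itself.
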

\begin{proof}[Proof of \lemmaref{lemma_achievable2}]
This is by definition of $\Gamma_{j,q}$.
\end{proof}

The following lemma is the counterpart of \lemmaref{lemma_source2_union}, which says that if the independence property holds on two sets of product lotteries respectively, then it also holds on their union.

 \begin{lemma}\label{lemma_both_union}
Suppose that Assumption 2 holds and $(P,Q,R,S)\in \hat{\cP}^4$ is a proper tuple with $P\sim Q$ and $R\sim S$. Fix $i\in \{1,2\}$, $p\in \lxi$, $y\in X_{-i}$ and $T^j\in \hat\cP$ for $j=1,...,4$ with $T^4\succ T^2\succ T^3 \succ T^1$. If the independence property holds for any such $(P,Q,R,S)$ with $\{P,Q,R,S\}\subseteq \Gamma_{i,p}(y)\cap [T^1,T^2]$ or $\{P,Q,R,S\}\subseteq \Gamma_{i,p}(y)\cap  [T^3,T^4]$, then it also holds for any such $(P,Q,R,S)$ with $\{P,Q,R,S\}\subseteq \Gamma_{i,p}(y)\cap [T^1,T^4]$. \end{lemma}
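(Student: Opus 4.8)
The plan is to transcribe the proof of \lemmaref{lemma_source2_union}, with the fixed degenerate source-$2$ marginal there replaced by the fixed source-$i$ marginal $p$ here. The enabling observation is that, after using \lemmaref{lemma_achievable} to replace each lottery of the tuple by an indifferent one whose source-$i$ marginal equals $p$, every mixture $\alpha P+(1-\alpha)R$ and $\alpha Q+(1-\alpha)S$ again has source-$i$ marginal $p$ and is therefore still a \emph{product} lottery (the common marginal $p$ factors out of the mixture). On the slice of product lotteries with source-$i$ marginal $p$, the conditional preference $\succsim_{-i|p}$ is an expected-utility preference by \lemmaref{lemma_narrow pre}, so utility is linear along such mixtures. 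This linearity is precisely what drives the interval-gluing argument.

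First I would dispose of the easy cases: if $\{P,Q,R,S\}\subseteq \Gamma_{i,p}(y)\cap[T^1,T^2]$ or $\{P,Q,R,S\}\subseteq \Gamma_{i,p}(y)\cap[T^3,T^4]$ the claim is immediate from the hypothesis, so (after swapping $(P,R)$ with $(Q,S)$ and assuming $P\succ R$) the only new case is $P\sim Q\succ T^2\succ T^3\succ R\sim S$. I would then fix levels $W_1,W_2$ with $T^2\succ W_2\succ W_1\succ T^3$ lying in the overlap $\Gamma_{i,p}(y)\cap[T^1,T^2]\cap[T^3,T^4]$, and use \lemmaref{lemma_narrow pre} on the slice (together with \lemmaref{lemma_achievable} to remain in $\Gamma_{i,p}(y)$) to build anchors $\hat P\sim\hat Q\sim W_2$ and $\hat R\sim\hat S\sim W_1$, matched in marginals so that $(\hat P,\hat Q,\hat R,\hat S)$ is again a proper tuple lying in $[T^3,T^4]$. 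Since $P,Q,\hat R,\hat S\in[T^3,T^4]$, the hypothesis applies to them; expressing $\hat R$ as a two-stage mixture and invoking slice-linearity gives a single coefficient $\eta^{w_1}$ with $\eta^{w_1}P+(1-\eta^{w_1})R\sim\hat R\sim\hat S\sim\eta^{w_1}Q+(1-\eta^{w_1})S$, exactly as in \lemmaref{lemma_source2_union}, and symmetrically a coefficient $\eta^{w_2}$ with $\eta^{w_1}<\eta^{w_2}<1$ anchored at $W_2$. For $\eta\in(\eta^{w_1},\eta^{w_2})$ I would write $\eta P+(1-\eta)R$ as a convex combination of the two anchored mixtures, reduce it by slice-linearity to the corresponding combination of $\hat P$ and $\hat R$, and apply the independence property to $(\hat P,\hat Q,\hat R,\hat S)\subseteq[T^3,T^4]$; for $\eta>\eta^{w_2}$ I would instead realize $\hat P$ as a mixture of $\eta P+(1-\eta)R$ with $\hat R$ and apply the property to $(\eta P+(1-\eta)R,\ \eta Q+(1-\eta)S,\ \hat R,\hat S)\subseteq[T^3,T^4]$, deriving a contradiction with $\hat P\sim\hat Q$ unless the two mixtures are indifferent. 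The range $\eta<\eta^{w_1}$ is symmetric and uses the hypothesis on $\Gamma_{i,p}(y)\cap[T^1,T^2]$.

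The main obstacle is bookkeeping rather than a new idea: at every step one must check that the anchors and all intermediate mixtures stay inside $\Gamma_{i,p}(y)$ and inside whichever of $[T^1,T^2]$ or $[T^3,T^4]$ the hypothesis requires, handle the boundary subcases where some lottery is indifferent to an extreme point of the slice (as in \lemmaref{lemma_source2_union}, by a limiting argument using Axiom WC), and keep the source-$i$ marginal pinned at $p$ so that \lemmaref{lemma_narrow pre} can be applied. Because all four lotteries share the marginal $p$ and the preference on that slice is expected utility, these verifications are routine, and the argument is a faithful adaptation of \lemmaref{lemma_source2_union}.
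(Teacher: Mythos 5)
Your proposal is correct and takes essentially the same route as the paper, whose entire proof of this lemma is the one-line remark that the argument of \lemmaref{lemma_source2_union} adapts directly once one notes that $W,W'\in\Gamma_{i,p}(y)$ with $W\succ W'$ implies $[W',W]\subseteq\Gamma_{i,p}(y)$ --- which is precisely your ``bookkeeping'' point that the anchors and all intermediate mixtures remain in $\Gamma_{i,p}(y)$. Your detailed transcription (anchors at the levels $W_1,W_2$, the coefficients $\eta^{w_1}<\eta^{w_2}$, and the three ranges of $\eta$) is exactly the intended adaptation, with the slice-linearity from \lemmaref{lemma_narrow pre} applied along the tuple's own matched marginals just as in the original proof.
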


\begin{proof}[Proof of \lemmaref{lemma_both_union}]
The proof can be directly adapted from the proof of \lemmaref{lemma_source2_union} by noting that for any $W\succ W'$, $W,W'\in \Gamma_{i,p}(y)$ implies that  $[W',W]\subseteq \Gamma_{i,p}(y)$. 
\end{proof}

Now we  extend the local result in \lemmaref{lemma_both_local} to a bounded set. Fix $i,j\in \{1,2\}$, $x_i\in\Sigma^i$, $x_j\in\Sigma^j$ and $p\in \hat{\Pi}^i(x_i)$, $q\in \hat{\Pi}^j(x_j)$. Without loss of generality, we assume $i=1$. Denote $\hat{Q}\in \hat{\cP}$ with $\hat{Q}_j=q,\hat{Q}_{-j}=a'$ for some $a'\in X_{-j}$. Similarly, fix $a\in X_2$. 

Take any  $\hat{T}\succ T^1 \succ T^2$ with $\hat{T}, T^1,T^2\in \hat{\cP}$, $\hat{T}_1=T^1_1=T^2_1=p$, $T^1_{2}=\delta_{z_1}$, $T^2_{2}=\delta_{z_2}$ and $T^2\succ \hat{Q}$, $T^2\succ (p,a)$. Then we know $z_1>z_2>0$.  By definition of $p\in \hat{\Pi}^i(x_1)$, Axiom M and Axiom WC , we can find $\up{z}>z_1$, $r_1,r_2\in \Pi^2(\up{z})$ such that $(p,r_1)\succ (p,r_2)$ and $(p,\delta_{\up{z}})\in [(p,r_2),(p,r_1)]$. For any $0<z<\up{z}$ with $(p,\delta_z)\succ \hat{Q}$, we can find $\eta_z\in (0,1)$ with $(p, \eta_z\delta_{\bar{z}} + (1-\eta_z)\delta_a)\sim (p,\delta_z)$.

 Denote $x^{z}$ such that $\eta_z\delta_{\bar{z}} + (1-\eta_z)\delta_a\in \Pi^2(x^z)$. By \lemmaref{lemma_narrow pre}, $(p, \eta_z r_1 + (1-\eta_z)\delta_a)\succ (p, \eta_z r_2 + (1-\eta_z)\delta_a)$ and 
$$(p,\delta_z)\in [(p, \eta_z r_2 + (1-\eta_z)\delta_a),(p, \eta_z r_1 + (1-\eta_z)\delta_a)]\subseteq \Gamma_{1,p}(x^z).$$

By continuity of $\succsim$ on $\hat{\cP}$, as $\bar{z}>z_1$ and $T^2=(p,\delta_{z_2})\succ \hat{Q}$, $T^2=(p,\delta_{z_2})\succ (p,\delta_a)$, we can find $\epsilon>0$ such that 
$$T^2=(p,\delta_{z_2})\succ (p, \eta_{z_2-\epsilon} r_2 + (1-\eta_{z_2-\epsilon})\delta_a)~,~T^1= (p,\delta_{z_1})\prec (p, \eta_{z_1+\epsilon} r_1 + (1-\eta_{z_1+\epsilon})\delta_a).$$

Hence we know that $\{((p, \eta_z r_2 + (1-\eta_z)\delta_a),(p, \eta_z r_1 + (1-\eta_z)\delta_a))\}_{z_1-\epsilon\leq z\leq z_2+\epsilon}$ is an open cover of $[T^2, T^1]=[(p,\delta_{z_2}), (p,\delta_{z_1})]$. Again, by compactness, it admits a finite subcover indexed by $\{z^1,...,z^n\}\subset (z_1-\epsilon,z_2+\epsilon)$. 

Consider a proper tuple $(P,Q,R,S)$ with $P_1=R_1=p\in \hat{\Pi}^i(x_1)$, $Q_j=S_j=q\in \hat{\Pi}^j(x_j)$, $j\in\{1,2\}$ and $P\sim Q, R\sim S$. Fix any $y\in X_{-i}$. For each $k=1,...,n$, by \lemmaref{lemma_both_local}, the independence property holds for $(P,Q,R,S)$ if $$P,R\in \Gamma_{j,q}(y)\cap [(p, \eta_{z^k} r_2 + (1-\eta_{z^k})\delta_a),(p, \eta_{z^k} r_1 + (1-\eta_{z^k})\delta_a)]\subseteq  \Gamma_{j,q}(y)\cap \Gamma_{1,p}(x^{z^k}).$$
 \lemmaref{lemma_both_union} implies that the independence property holds for such $(P,Q,R,S)$ if $P,R\in \Gamma_{j,q}(y)\cap [T^2,T^1]$. 

Then we show that we can get rid of the constraint that $P,R\in [T^2,T^1]$, where there exist $T^1, T^2, \hat{T}$ with $\hat{T}_1=T^1_1=T^2_1=p$ and $T^2\succ \hat{Q}$, $T^2\succ (p,\delta_a)$. The proof is similar to Step 3 by utilizing the arbitrariness of $T^1, T^2, a, a'$ and the continuity of $\succsim$ on $\hat{\cP}$.  Hence we know that the independence property holds for $(P,Q,R,S)$ if $P,R\in \Gamma_{j,q}(y)$ for any $y\in X_2$. 

 Repeat the previous proof technique by varying $y$, and we can extend the above independence property to the following global property. (Recall that our focus on $P\sim Q, R\sim S$ and $i=1$ in the previous analysis is without loss of generality.)
 \begin{lemma}\label{lemma_both_ind_weak}
Suppose that Assumption 2 holds. Then a proper tuple $(P,Q,R,S)$ satisfies the independence property if $P_i=R_i=p\in \hat{\Pi}^i(x_i)$, $Q_j=S_j=q\in  \hat{\Pi}^j(x_j)$ for $x_i\in \Sigma^i$, $x_j\in \Sigma^j$ and $i,j\in \{1,2\}$.
\end{lemma}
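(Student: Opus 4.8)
The plan is to obtain the stated global independence property by extending the strictly local statement of \lemmaref{lemma_both_local} — which applies only when $P,R$ sit inside a single intersection $\Gamma_{i,p}(y)\cap\Gamma_{j,q}(y')$ — first to a bounded interval of lotteries via a compactness argument, and then sweeping over the conditioning lottery to cover all of $Y_i$. The two facts I would lean on throughout are that $\succsim$ is continuous on $\hat{\cP}$ (Axiom WC), and that each conditional preference $\succsim_{i|q}$ is expected utility by \lemmaref{lemma_narrow pre}, so that mixtures performed within a single fixed source behave linearly and certainty equivalents are available.

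First I would reduce to the case $P\sim Q$, $R\sim S$, exactly as in \lemmaref{lemma_source2_strict_ind}: a general proper tuple with $P\succsim R$, $Q\succsim S$ can be rewritten by inserting intermediate indifferent lotteries using \lemmaref{lemma_axiom implication}, so it suffices to prove the indifference form $\alpha P+(1-\alpha)R\sim\alpha Q+(1-\alpha)S$ for tuples satisfying $P\sim Q$ and $R\sim S$; by symmetry I take $i=1$. Fixing $p\in\hat{\Pi}^1(x_1)$ and $q\in\hat{\Pi}^j(x_j)$, the key is that $p\in\hat{\Pi}^1(x_1)$ forces source $2$ to be non-degenerate conditional on $p$, so by \lemmaref{lemma_narrow pre} and Axiom M I can locate a level $\up{z}$ above the target interval together with lotteries $r_1\succ r_2$ in $\Pi^2(\up{z})$. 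Mixing $r_1,r_2$ against a fixed anchor $\delta_a$ produces, for each $z$, an interval $[(p,\eta_z r_2+(1-\eta_z)\delta_a),(p,\eta_z r_1+(1-\eta_z)\delta_a)]$ that contains $(p,\delta_z)$ and lies inside some $\Gamma_{1,p}(x^z)$; as $z$ ranges over a neighborhood of the target $[T^2,T^1]$ these intervals form an open cover, and Heine--Borel extracts a finite subcover.

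On each piece of the subcover the independence property holds by \lemmaref{lemma_both_local}, using \lemmaref{lemma_achievable2} to guarantee that the relevant $\Gamma_{j,q}(y)\cap\Gamma_{1,p}(x^{z})$ is nonempty, and \lemmaref{lemma_both_union} then patches the adjacent pieces into a single interval, yielding independence for all $P,R\in\Gamma_{j,q}(y)\cap[T^2,T^1]$. I would then discharge the interval constraint as in Step 3: since the endpoints $T^1,T^2$ and the anchors $a,a'$ are arbitrary and $\succsim$ is continuous on $\hat{\cP}$, the interval can be made to exhaust $\Gamma_{j,q}(y)$; finally, sweeping $y$ over $X_2$ (and symmetrically interchanging the roles of the two sources) lets $P_i=R_i=p$ range freely, which is precisely the global conclusion of \lemmaref{lemma_both_ind_weak}.

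The step I expect to be the main obstacle is the open-cover construction together with the boundary bookkeeping needed to remove the interval constraint. One must check that the covering intervals genuinely have nonempty interior and that the intersections $\Gamma_{1,p}(x^z)\cap\Gamma_{j,q}(y)$ are nonempty — this is exactly where the hypothesis $p\in\hat{\Pi}^1(x_1)$, $q\in\hat{\Pi}^j(x_j)$ (as opposed to merely $p\in\Pi^1(x_1)$) earns its keep, since it supplies the conditional non-degeneracy that makes the cover possible. The accompanying limiting arguments, in which mixing weights or outcome levels are pushed toward the (possibly infinite) endpoints of $X_i$ and weak limits are taken, must each invoke Axiom WC to transport indifference across the limit, so care is required to ensure no degenerate edge case — a lottery indifferent to a corner such as $(\delta_{\lcone},\cdot)$ or $(\delta_{\cone},\cdot)$ — is left unaddressed.
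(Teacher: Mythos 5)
Your proposal follows essentially the same route as the paper: reduce to the indifference case $P\sim Q$, $R\sim S$ as in \lemmaref{lemma_source2_strict_ind}, use the conditional non-degeneracy supplied by $p\in\hat{\Pi}^i(x_i)$ to build intervals $[(p,\eta_z r_2+(1-\eta_z)\delta_a),(p,\eta_z r_1+(1-\eta_z)\delta_a)]$ covering $[T^2,T^1]$, extract a finite subcover by Heine--Borel, apply \lemmaref{lemma_both_local} on each piece and patch with \lemmaref{lemma_both_union}, then remove the interval constraint by arbitrariness of $T^1,T^2,a,a'$ and continuity before sweeping over $y$. This is exactly the paper's argument, including the correct identification of where the hypothesis $p\in\hat{\Pi}^i(x_i)$ rather than $p\in\Pi^i(x_i)$ is needed.
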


We are now ready to prove \lemmaref{lemma_both_ind}. 

\begin{proof}[Proof of \lemmaref{lemma_both_ind}]
\lemmaref{lemma_both_ind_weak} is weaker than \lemmaref{lemma_both_ind} as we have assumed that $p\in \hat{\Pi}^i(x_i)$, $q\in  \hat{\Pi}^j(x_j)$, $x_i\in \Sigma^i$ and $x_j\in \Sigma^j$, instead of their closures ${\Pi}^i(x_i), {\Pi}^j(x_j), cl(\Sigma^i)$ and  $cl(\Sigma^j)$ respectively. However, the proof can be completed using the same continuity arguments in the proof of \lemmaref{lemma_source2_ind}. 
\end{proof}

Recall that $\Sigma^1$ and $\Sigma^2$ are open subsets of $\bR$. The following lemma provides a characterization for a nonempty open set on the  real line. The proof is standard and we include it for completeness.

\begin{lemma}\label{lemma_open set}
Every non-empty open set $I\subseteq \bR$ can be expressed as a countable union of pairwise disjoint open intervals.
\end{lemma}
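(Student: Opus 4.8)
The plan is to exhibit the desired decomposition by taking the connected components of $I$ and then bounding their number using the density of the rationals. First I would fix any $x\in I$ and define the maximal open interval containing $x$ that is contained in $I$. Concretely, set $a_x=\inf\{a\in\bR\cup\{-\infty\}: (a,x]\subseteq I\}$ and $b_x=\sup\{b\in\bR\cup\{+\infty\}: [x,b)\subseteq I\}$, and let $I_x=(a_x,b_x)$. Because $I$ is open, some open interval around $x$ lies in $I$, so $a_x<x<b_x$ and $I_x$ is a genuine nonempty open interval. I would then verify $I_x\subseteq I$: for any $y\in(x,b_x)$ the definition of the supremum gives some $b>y$ with $[x,b)\subseteq I$, whence $y\in I$, and symmetrically for $y\in(a_x,x)$.

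The second step is to show that the family $\{I_x:x\in I\}$ consists of pairwise disjoint intervals, so that distinct members are actually disjoint rather than merely overlapping. Suppose $I_x\cap I_y\neq\emptyset$ for some $x,y\in I$. Then $I_x\cup I_y$ is again an interval contained in $I$ and containing both $x$ and $y$; by the maximality built into the definitions of $a_x,b_x$ (and $a_y,b_y$), this forces $I_x=I_y$. Hence any two members of the family either coincide or are disjoint, and since $x\in I_x$ for every $x$, their union is exactly $I$. Discarding repetitions yields a collection $\mathcal{C}$ of pairwise disjoint open intervals with $\bigcup\mathcal{C}=I$.

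Finally I would establish countability. Each interval in $\mathcal{C}$ is nonempty and open, so by density of $\mathbb{Q}$ in $\bR$ it contains at least one rational number; choosing one such rational in each interval defines a map from $\mathcal{C}$ into $\mathbb{Q}$. Disjointness makes this map injective, and since $\mathbb{Q}$ is countable, $\mathcal{C}$ is countable. This completes the decomposition of $I$ into a countable union of pairwise disjoint open intervals.

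There is no serious obstacle here; the result is classical. The only points requiring a little care are verifying that the maximal interval $I_x$ really lies inside $I$ (handled by the inf/sup characterization, with the $\pm\infty$ conventions covering unbounded components) and the clean dichotomy ``equal or disjoint'' for the family, which is exactly where maximality does the work. The countability argument via a rational selector is the standard and essentially the only nontrivial idea.
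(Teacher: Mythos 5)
Your proof is correct and follows essentially the same route as the paper's: both construct the maximal open interval around each point via an inf/sup characterization, verify it lies in $I$, observe that these maximal intervals are pairwise disjoint (the paper phrases this as an equivalence relation $J(x)=J(y)$, you phrase it as an equal-or-disjoint dichotomy, which is the same fact), and conclude countability by selecting a rational from each interval. No gaps; the argument is complete as written.
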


\begin{proof}[Proof of \lemmaref{lemma_open set}]
As $\bR$ is a complete metric space and $I$ is an open set in $\bR$, for any $x\in I$, there exists an open interval $I_x\subseteq I$ that contains $x$. Denote $a(x)= \inf\{y:(y,x)\subseteq I\}$ and $b(x)= \sup\{z:(x,z)\subseteq I\}$. Clearly, $a(x)<x<b(x)$. 

Denote $J(x)= (a(x), b(x))$  for each $x\in I$. We claim that $J(x)\subseteq I$. To see this,  for arbitrary $\epsilon>0$, as $a(x)$ is an infimum, there exists $z<a(x)+\epsilon$ such that $(z,x)\subseteq I$. This implies $(a(x)+\epsilon, x)\subseteq I$ and hence $(a(x),x)= \bigcup_{n=1}^{+\infty}\big(a(x) + 1/n, x\big)\subseteq I$. By a similar argument, $(x,b(x))\subseteq I$. Hence we have 
$J(x)= (a(x),x)\cup \{x\} \cup (x,b(x))\subseteq I.$

Suppose now that $a(x)=-\infty$ or $b(x)=+\infty$. If both are the case, it must be $I=\bR$ and we are done. In other cases, we observe intervals of the type $K_-(a):=(-\infty, a)$ or $K_+(a)=(a,+\infty)$, both of which are open. Assume that $I$ contains such an interval $K_-(a)$ with $a\not\in I$. By definition, we can always find such a number $a$, and there can be at most two such numbers. For instance, suppose $K_-(a)\subseteq I$. Then $I= (I\cap K_-(a))\cup  (I\cap K_+(a))$. This implies $I$ is open if and only if $I\backslash K_-(a)$ is open. Then it suffices to show that $I\backslash K_-(a)$ can be decomposed as a countable union of pairwise disjoint open intervals. Therefore, without loss of generality, we assume that there is no $x\in I$ with $a(x)=-\infty$ or $b(x)=+\infty$. 

Define a binary relation $\hat{\sim}$ on $I$ by $x \hat{\sim} y$ if and only if $J(x)=J(y)$. Easy to prove that $\hat{\sim}$ is an equivalent relation and  $\hat{\sim}$ partitions $I$. We claim that the equivalent classes are open. To see this, let $x<y$ with $x,y\in I$. When $x\hat\sim y$, we have $x\in J(x)=J(y)\ni y$. Inversely, when $x\in J(y)$, then  $(x,y)\subseteq I$ and hence $a(x)=a(y), b(x)=b(y)$. This implies $J(x)=J(y)$. Thus, the equivalent class of $x$ is exactly $J(x)$. 

Finally, as $J(x)$ is open and nonempty and the set of rational numbers is dense in the real line, each set in the partition of $I$ can be labelled by a rational number and hence the partition is countable. This implies $I=\bigcup_{n\in \bN^*} J_n$ and completes the proof. \end{proof}

For $i=1,2$, since $\Sigma^i\subseteq X_i^o\backslash\{0\}$ is open and nonempty, by \lemmaref{lemma_open set}, we can write $\Sigma^i = \bigcup_{n=1}^{N^i} J^i_n$
where $N^i\in \bN\cup\{+\infty\}$ and $J^i_n=(\dw{b}_n, \up{b}_n)$ with $\dw{b}_n, \up{b}_n\in \bR\cup \{-\infty,+\infty\}$ and $\dw{b}_n \geq \up{b}_{n-1}$ for each $n\leq N^i$. 
\begin{lemma}\label{lemma_mixture_set}
For each $i=1,2$ and $n\leq N^i$, $Y_i(cl(J^i_{n}))$ is a mixture set. 
\end{lemma}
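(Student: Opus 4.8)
The plan is to unpack the definition of $Y_i(cl(J^i_n))$ and then verify the mixture-set axioms, the only nontrivial part being closure under convex combinations. By \lemmaref{lemma_narrow pre} the narrow preference $\succsim_i$ is represented by the regular EU index $v_i$, so $\Pi^i(x)=\{p\in\lxi: CE_{v_i}(p)=x\}$, and hence
$$Y_i(cl(J^i_n))=\{p\in \lxi: CE_{v_i}(p)\in cl(J^i_n)\}=CE_{v_i}^{-1}(cl(J^i_n)),$$
where $cl(J^i_n)=[\dw{b}_n,\up{b}_n]\cap X_i$ is a (possibly unbounded) closed subinterval of $X_i$.

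First I would recall that $\lxi$, equipped with the usual convex combination of probability measures, is itself a mixture set, so the Herstein--Milnor identities (the identity, commutativity and associativity axioms for the mixing operation) are automatically inherited by any subset closed under mixing. Thus it suffices to show that $Y_i(cl(J^i_n))$ is closed under convex combinations: for $p,q\in Y_i(cl(J^i_n))$ and $\alpha\in[0,1]$, that $\alpha p+(1-\alpha)q\in Y_i(cl(J^i_n))$.

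The key computation is the behavior of the certainty equivalent under mixing. Writing $c_p=CE_{v_i}(p)$ and $c_q=CE_{v_i}(q)$, linearity of expected utility gives
$$CE_{v_i}(\alpha p+(1-\alpha)q)=v_i^{-1}\big(\alpha v_i(c_p)+(1-\alpha)v_i(c_q)\big).$$
Since $\alpha v_i(c_p)+(1-\alpha)v_i(c_q)$ lies between $v_i(c_p)$ and $v_i(c_q)$, and $v_i^{-1}$ is continuous and strictly monotone (as $v_i$ is regular), the value $CE_{v_i}(\alpha p+(1-\alpha)q)$ lies between $c_p$ and $c_q$. As $c_p,c_q\in[\dw{b}_n,\up{b}_n]$ and this interval is convex, the mixture's certainty equivalent again lies in $cl(J^i_n)$, establishing closure and hence that $Y_i(cl(J^i_n))$ is a mixture set.

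There is no serious obstacle here: once $Y_i(cl(J^i_n))$ is identified as $CE_{v_i}^{-1}$ of a closed interval, everything reduces to the intermediate-value property of the certainty equivalent, which is immediate from monotonicity and continuity of $v_i$. The only points requiring minor care are (i) the identification $\Pi^i(x)=CE_{v_i}^{-1}(x)$, which invokes \lemmaref{lemma_narrow pre}, and (ii) the case where $\dw{b}_n=-\infty$ or $\up{b}_n=+\infty$ when $X_i$ is unbounded, where the same betweenness argument applies verbatim since the relevant one-sided interval is still convex.
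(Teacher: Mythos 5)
Your proof is correct and follows essentially the same route as the paper's: both reduce the claim to the fact that the certainty equivalent of a mixture lies between the certainty equivalents of the components (the paper cites \lemmaref{lemma_narrow pre} for this betweenness; you compute it explicitly via linearity of $v_i$-expected utility), and then invoke convexity of the closed interval $cl(J^i_n)$. The explicit identification $Y_i(cl(J^i_n))=CE_{v_i}^{-1}(cl(J^i_n))$ and the remark on the unbounded case are harmless elaborations of the same argument.
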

\begin{proof}[Proof of \lemmaref{lemma_mixture_set}]
For any $p,q\in Y_i(cl(J^i_{n}))$, there exist $x_p,x_q\in cl(J^i_{n})$ such that $p\in \Pi^i(x_p)$ and $q\in \Pi^i(x_q)$. Without loss of generality, let $x_p\geq x_q$. By \lemmaref{lemma_narrow pre}, for any $\lambda\in (0,1)$, $\lambda p + (1-\lambda)q\in \Pi^i(x')$ with $x'\in [x_q,x_p]$. As $cl(J^i_{n_i})$ is a closed interval and $x_p,x_q\in cl(J^i_{n_i})$, $x'\in cl(J^i_{n_i})$. This implies $\lambda p + (1-\lambda)q\in Y_i(cl(J^i_{n}))$ and hence $Y_i(cl(J^i_{n}))$ is a mixture set.
\end{proof}

Then for each $n_1\leq N^1,n_2\leq N^2$, $Y(cl(J^1_{n_1}), cl(J^2_{n_2}))=Y_1(cl(J^1_{n_1}))\times Y_2(cl(J^2_{n_2}))$ is the product of two mixture sets. Also, $\succsim$ restricted to $Y(cl(J^1_{n_1}),cl(J^2_{n_2}))$ is continuous and satisfies Axiom MI by \lemmaref{lemma_both_ind}. By Theorem 1 in Chapter 7.2 (Page 88) of \cite{fishburn1982book}, we know that there exists a continuous and multilinear\footnote{Suppose $\mathcal{M}_1, \mathcal{M}_2 \subseteq \lr$ are mixture sets. A function $V$ is multilinear on $\mathcal{M}_1\times \mathcal{M}_2$ if $V(\lambda p + (1-\lambda)r, q) = \lambda V(p,q) + (1-\lambda) V(r,q)$ and $V(p,\lambda q + (1-\lambda)s) = \lambda V(p,q) + (1-\lambda) V(p,s)$ for all $\lambda \in (0,1)$, $p,r\in \mathcal{M}_1$ and $q,s\in \mathcal{M}_2$.} representation $V_{n_1,n_2}^{EU}$  of $\succsim$ on $Y(cl(J^1_{n_1}),cl(J^2_{n_2}))$, which is unique up to a positive affine transformation.

 We claim that $cl(\Sigma^i)=X_i$ for $i=1,2$.  Suppose by contradiction that $cl(\Sigma^i)\neq X_i$ for some $i$.  Then we can find $a_1<a_2<a_3$ such that either $(a_1,a_2)\subseteq X_i\backslash cl(\Sigma^i)$, $(a_2,a_3)\subseteq cl(\Sigma^{i})$ or $(a_2,a_3)\subseteq X_i\backslash cl(\Sigma^i)$, $(a_1,a_2)\subseteq cl(\Sigma^{i})$.   By symmetry, we will focus on the case where $i=2$ and  $(a_1,a_2)\subseteq X_2\backslash cl(\Sigma^2)$, $(a_2,a_3)\subseteq cl(\Sigma^{2})$.    We can further assume $(a_2,a_3)\subseteq cl(J^2_{n_2})$ for some $n_2$. Choose some $n_1\leq N^1$ and denote $J^1_{n_1}=(b_1,b_2)$ with $b_1<b_2$. We know that there exists a multilinear representation $V^{EU}$ for $\succsim$ on $Y([b_1,b_2], [a_2,a_3])$. 

 We first focus on the preference $\succsim$ restricted to $Y([b_1,b_2], [a_1,a_2])$. By definition, for any $P\in Y(cl(\Sigma^1), X_2\backslash cl(\Sigma^2))$, $P=(P_1,P_2)\sim ({CE_{v_1}(P_1)},P_2)$. \lemmaref{lemma_both_ind} guarantees that independence property holds for a proper tuple $(P,Q,R,S)$ where $P_1=R_1\in \Pi^1(x)$, $Q_1=S_1\in \Pi^1(x')$ with $x,x'\in [b_1,b_2]\subseteq cl(\Sigma^1)$. Then by a similar argument in {\bf Step 4}, there exists a continuous representation $V^{FIB}$ of $\succsim$ on $Y([b_1,b_2], [a_1,a_2])$ where $V^{FIB}(P_1,P_2)= V^{FIB}(\delta_{CE_{v_1}(P_1)},P_2)$ for each $(P_1,P_2)\in Y([b_1,b_2], [a_1,a_2])$,  $V^{FIB}$ is linear in the second source (i.e., a $MAP_2$ function) and unique up to a positive affine transformation.

  For each $b\in (b_1,b_2)$ and $p\in \Pi^1(b)$, by \lemmaref{lemma_narrow pre}, $\succsim_{2|p}$ on $\{p\}\times \lxtwo$ admits an EU representation with a regular utility index $v_{2|p}$. When there is no confusion, we also denote $v_{2|p}$ as the EU function.  The next lemma relates $v_{2|p}$ with $V^{FIB}$.

 \begin{lemma}\label{lemma_p.a.t.}
$\forall~b\in (b_1,b_2)$ and $p\in \Pi^1(b)$, $v_{2|p}$ is a positive affine transformation of $V^{FIB}(\delta_b,\cdot)$ on $Y^2([a_1,a_2])$.
\end{lemma}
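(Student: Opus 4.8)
The plan is to exhibit two mixture-linear representations of the same conditional preference on a single mixture set, and then to invoke the uniqueness of von Neumann--Morgenstern representations. Fix $b\in(b_1,b_2)$ and $p\in\Pi^1(b)$, and restrict attention throughout to source-2 lotteries lying in $Y_2([a_1,a_2])$. First I would record that this set is a mixture set, exactly as in \lemmaref{lemma_mixture_set}: by \lemmaref{lemma_narrow pre} the index $v_2$ of $\succsim_2$ is regular, so for $q,q'\in Y_2([a_1,a_2])$ and $\lambda\in(0,1)$ the certainty equivalent $CE_{v_2}(\lambda q+(1-\lambda)q')$ lies between $CE_{v_2}(q)$ and $CE_{v_2}(q')$, hence in $[a_1,a_2]$, so $\lambda q+(1-\lambda)q'\in Y_2([a_1,a_2])$.

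Next I would argue that $\succsim_{2|p}$ coincides with $\succsim_{2|\delta_b}$ on $Y_2([a_1,a_2])$. Since $b\in(b_1,b_2)\subseteq cl(\Sigma^1)$ and $(a_1,a_2)\subseteq X_2\backslash cl(\Sigma^2)$, every lottery in $Y([b_1,b_2],(a_1,a_2))$ is narrowly bracketed in source $1$, so by the displayed relation $P\sim(CE_{v_1}(P_1),P_2)$ established just above we have $(p,q)\sim(\delta_b,q)$ for each $q$ with $CE_{v_2}(q)\in(a_1,a_2)$, and the identity extends to the endpoints of $Y_2([a_1,a_2])$ by Axiom WC. Consequently $(p,q)\succsim(p,q')\iff(\delta_b,q)\succsim(\delta_b,q')$ on $Y_2([a_1,a_2])$. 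By \lemmaref{lemma_narrow pre}, the map $q\mapsto\sum_y v_{2|p}(y)q(y)$ is a mixture-linear representation of $\succsim_{2|p}$, hence of this common preference. On the other hand, the just-constructed $V^{FIB}$ is a $MAP_2$ function with $V^{FIB}(p,q)=V^{FIB}(\delta_b,q)$, so $q\mapsto V^{FIB}(\delta_b,q)$ is a second mixture-linear functional on $Y_2([a_1,a_2])$ representing the same preference; by mixture-linearity it is determined by its values on degenerate lotteries, i.e. by the outcome function $y\mapsto V^{FIB}(\delta_b,\delta_y)$.

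Finally I would invoke uniqueness. The common preference is non-trivial on $Y_2([a_1,a_2])$: since $a_2>a_1$, Axiom M gives $(\delta_b,\delta_{a_2})\succ(\delta_b,\delta_{a_1})$, and narrow bracketing of source $1$ yields $(p,\delta_{a_2})\sim(\delta_b,\delta_{a_2})\succ(\delta_b,\delta_{a_1})\sim(p,\delta_{a_1})$. Two mixture-linear representations of the same non-trivial weak order on a mixture set agree up to a positive affine transformation, so there are $\alpha>0$ and $\beta$ with $V^{FIB}(\delta_b,q)=\alpha\sum_y v_{2|p}(y)q(y)+\beta$ on $Y_2([a_1,a_2])$; evaluating at degenerate $q=\delta_y$ gives $V^{FIB}(\delta_b,\delta_y)=\alpha v_{2|p}(y)+\beta$, which is precisely the assertion that $v_{2|p}$ is a positive affine transformation of $V^{FIB}(\delta_b,\cdot)$ on $Y_2([a_1,a_2])$.

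The main obstacle is not any single deep step but keeping the bookkeeping honest: one must verify that the narrow-bracketing identity $(p,q)\sim(\delta_b,q)$ is genuinely available on all of $Y_2([a_1,a_2])$ (using continuity to cover the boundary certainty-equivalent values, where $cl(\Sigma^2)$ may intrude), so that the two functionals truly represent the same object, and one must confirm non-triviality of the restricted preference so that the scaling factor $\alpha$ is strictly positive rather than zero. Once these two points are pinned down, the conclusion is a routine application of the mixture-space uniqueness theorem.
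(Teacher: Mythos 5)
Your argument is correct and is essentially the paper's own proof: the paper also reduces the claim to the observation that $v_{2|p}$ and $V^{FIB}(\delta_b,\cdot)=V^{FIB}(p,\cdot)$ are two mixture-linear representations of the same conditional preference on $Y_2([a_1,a_2])$, and merely unwinds the uniqueness step by hand (normalizing both functions at $\delta_{a_1},\delta_{a_2}$ and calibrating each $q$ via the unique $\lambda$ with $q\sim_{2|p}\lambda\delta_{a_2}+(1-\lambda)\delta_{a_1}$) where you invoke the mixture-space uniqueness theorem directly. Your extra care in re-deriving $(p,q)\sim(\delta_b,q)$ from the definition of $\Sigma^2$ and in checking non-triviality so that $\alpha>0$ is exactly the bookkeeping the paper relies on implicitly.
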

\begin{proof}[Proof of \lemmaref{lemma_p.a.t.}]
By definition, $V^{FIB}(p,q)= V^{FIB}(\delta_b,q)$ for all $p\in \Pi^1(b)$, $q\in Y^2([a_1,a_2])$. Then it suffices to show that $v_{2|p}$ is a positive affine transformation of $V^{FIB}(p,\cdot)$ on $Y^2([a_1,a_2])$. Suppose, without loss of generality, that $v_{2|p}(a_i) = V^{FIB}(p,\delta_{a_i})$ for $i=1,2$. For any $q\in Y^2([a_1,a_2])$, if $\delta_{a_2}\succsim_{2|p} q\succsim_{2|p} \delta_{a_1}$, then there exists a unique $\lambda\in [0,1]$ such that $\lambda\delta_{a_2}+(1-\lambda)\delta_{a_1}\sim_{2|p} q$ and hence
\begin{align*}
    v_{2|p}(q) &= v_{2|p}(\lambda\delta_{a_2}+(1-\lambda)\delta_{a_1}) \\
    & = \lambda v_{2|p}(\delta_{a_2}) + (1-\lambda) v_{2|p}(\delta_{a_1})\\
    & =  \lambda V^{FIB}(p,\delta_{a_2}) + (1-\lambda)  V^{FIB}(p,\delta_{a_1})\\
    & = V^{FIB}(p,\lambda\delta_{a_2}+(1-\lambda)\delta_{a_1})\\
    & =  V^{FIB}(p,q).
\end{align*}
Similar arguments hold for $q\succ_{2|p} \delta_{a_2}$ and $\delta_{a_1}\succ_{2|p} q$. This completes the proof.
\end{proof}
 
 A direct corollary is that for all $~b\in (b_1,b_2)$ and $p\in \Pi^1(b)$, $v_{2|p}$ is a positive affine transformation of $v_{2|\delta_b}$ on $Y^2([a_1,a_2])$. We further claim that it holds on  $\lxtwo$. 

\begin{lemma}\label{lemma_p.a.t.2}
$\forall~b\in (b_1,b_2)$ and $p\in \Pi^1(b)$, $v_{2|p}$ is a positive affine transformation of $v_{2|\delta_b}$.
\end{lemma}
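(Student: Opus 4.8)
The plan is to take the corollary of \lemmaref{lemma_p.a.t.} --- which already delivers the conclusion on the restricted domain $Y_2([a_1,a_2])$ --- and to propagate the affine relationship to every outcome of $X_2$ using only the expected-utility (linearity) structure of the two conditional indices. Fix $b\in(b_1,b_2)$ and $p\in\Pi^1(b)$. By that corollary I may normalize so that there exist constants $\alpha>0$ and $\beta$ with $v_{2|p}(q)=\alpha\,v_{2|\delta_b}(q)+\beta$ for every $q\in Y_2([a_1,a_2])$, where $v_{2|p}(q)$ abbreviates the EU value $\sum_y v_{2|p}(y)q(y)$ and likewise for $v_{2|\delta_b}$. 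Recalling that $Y_2([a_1,a_2])$ is a mixture set (\lemmaref{lemma_mixture_set}, via \lemmaref{lemma_narrow pre}) that contains $\delta_x$ for every $x\in[a_1,a_2]$, this in particular yields $v_{2|p}(x)=\alpha\,v_{2|\delta_b}(x)+\beta$ on the whole interval $[a_1,a_2]$, which has nonempty interior. The goal is to upgrade this identity to all $x\in X_2$.

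Next I would extend the identity to outcomes $c>a_2$. The device is to dilute $\delta_c$ toward the interval $[a_1,a_2]$: set $q_\lambda=\lambda\delta_c+(1-\lambda)\delta_{a_1}$. Since $CE_{v_2}(q_\lambda)=v_2^{-1}\big(\lambda v_2(c)+(1-\lambda)v_2(a_1)\big)$ is continuous and strictly increasing in $\lambda$ with value $a_1$ at $\lambda=0$, for all sufficiently small $\lambda\in(0,1)$ we have $CE_{v_2}(q_\lambda)\in[a_1,a_2]$, i.e.\ $q_\lambda\in Y_2([a_1,a_2])$. Applying the affine relation to $q_\lambda$ together with EU-linearity of both indices, and using $v_{2|p}(a_1)=\alpha v_{2|\delta_b}(a_1)+\beta$, gives $\lambda v_{2|p}(c)+(1-\lambda)\big(\alpha v_{2|\delta_b}(a_1)+\beta\big)=\alpha\big(\lambda v_{2|\delta_b}(c)+(1-\lambda)v_{2|\delta_b}(a_1)\big)+\beta$; the $a_1$-terms cancel and division by $\lambda>0$ leaves $v_{2|p}(c)=\alpha\,v_{2|\delta_b}(c)+\beta$. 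A symmetric argument for $c<a_1$, diluting $\delta_c$ against the upper anchor $\delta_{a_2}$ (so that $CE_{v_2}$ now approaches $a_2$ from below), produces the \emph{same} constants $\alpha,\beta$ and the identity $v_{2|p}(c)=\alpha\,v_{2|\delta_b}(c)+\beta$.

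Combining the three regimes, $v_{2|p}=\alpha\,v_{2|\delta_b}+\beta$ on all of $X_2$ with $\alpha>0$, which is exactly the statement that $v_{2|p}$ is a positive affine transformation of $v_{2|\delta_b}$ on $\lxtwo$. I expect the only delicate point to be the membership claim $q_\lambda\in Y_2([a_1,a_2])$: this set is defined through the \emph{narrow} index $v_2$, not through $v_{2|p}$ or $v_{2|\delta_b}$, so one must track that it is the $v_2$-certainty equivalent that is driven into $[a_1,a_2]$, and only then invoke the affine relation, which holds precisely on that $v_2$-defined set. Everything else reduces to the linearity of the two EU functionals and the fact that a positive affine map is pinned down by its values on an interval together with one further independent point, so the consistency of $(\alpha,\beta)$ across the two anchors is automatic.
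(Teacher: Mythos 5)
Your proposal is correct and follows essentially the same route as the paper: both start from the affine identity on $Y_2([a_1,a_2])$ delivered by the corollary of \lemmaref{lemma_p.a.t.}, then dilute an element outside that set toward an anchor in $[a_1,a_2]$ so that its $v_2$-certainty equivalent lands in the interval, and finally use EU-linearity of $v_{2|p}$ and $v_{2|\delta_b}$ to cancel the anchor terms and divide by the mixture weight. The only cosmetic difference is that you extend the identity pointwise over degenerate outcomes $\delta_c$ while the paper works directly with arbitrary lotteries $q$; since both indices are linear EU functionals, the two are equivalent.
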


\begin{proof}[Proof of \lemmaref{lemma_p.a.t.2}]
By the corollary of \lemmaref{lemma_p.a.t.}, given $b\in (b_1,b_2)$ and $p\in \Pi^1(b)$,  there exist $\alpha_p>0$ and $\beta_p$ such that $v_{2|p}(q)=\alpha_p v_{2|\delta_b}(q) + \beta_p$ for all $q\in Y^2([a_1,a_2])$. 

Now consider $q\in \lxtwo \backslash Y^2([a_1,a_2])$.  If $q\succ_2 \delta_{a_2}$, then we can find $\lambda>0$ such that $\lambda q + (1-\lambda) \delta_{a_1}\in Y^2([a_1,a_2])$. This implies $v_{2|p}(\lambda q + (1-\lambda) \delta_{a_1}) = \alpha_p v_{2|\delta_b}(\lambda q + (1-\lambda) \delta_{a_1}) + \beta_p$. By linearity of $v_{2|p}$ and $v_{2|\delta_b}$,  we have
$$\lambda v_{2|p}(q) +(1-\lambda)\lambda v_{2|p} (\delta_{a_1}) = \lambda [\alpha_p v_{2|\delta_b}(q) + \beta_p] + (1-\lambda)[\alpha_p v_{2|\delta_b}(\delta_{a_1}) + \beta_p].$$
As $v_{2|p}(\delta_{a_1}) = \alpha_p v_{2|\delta_b}( \delta_{a_1}) + \beta_p$ and $\lambda>0$, we know $v_{2|p}(q) = \alpha_p v_{2|\delta_b}(q) + \beta_p$. Similar results can be shown for  $q\prec_2 \delta_{a_1}$. Thus $v_{2|p}(q) = \alpha_p v_{2|\delta_b}(q) + \beta_p$ for all $q\in \lxtwo$. 
\end{proof}

 Now we turn to $Y([b_1,b_2], [a_2,a_3])$, on which $V^{EU}$ represents $\succsim$. By a similar argument as \lemmaref{lemma_p.a.t.}, for any $b\in (b_1,b_2)$ and $p\in \Pi^1(b)$, $V^{EU}(p,\cdot)$ is a positive affine transformation of $v_{2|p}$ on $Y^2([a_2,a_3])$ and $v_{2|p}$ is a positive affine transformation of $v_{2|\delta_b}$ on $Y^2([a_2,a_3])$.  Denote $V^{EU}(p,q)=\hat{\alpha}_p v_{2|\delta_b}(q)+\hat{\beta}_p$   with $\hat{\alpha}_p>0$ and $\hat{\beta}_p\in \bR$ for each $q\in Y^2([a_2,a_3])$.  Notice that $a_2\in (a_1,a_3)$. As $X_2$ is a closed interval, $a_2\in X_2^o$. Also, $(a_1,a_2)\subseteq X_2\backslash cl(\Sigma^2)$ and $(a_2,a_3)\subseteq  cl(\Sigma^2)$ imply that $a_2\not\in \Sigma^2$. Then for each $q\in \Pi^2(a_2)$, we have $(p,q)\sim (\delta_b,q)$ and thus
$$ \hat{\alpha}_p v_{2|\delta_b}(q)+\hat{\beta}_p= V^{EU}(p,q)= V^{EU}(\delta_b,q) =  \hat{\alpha}_{\delta_b} v_{2|\delta_b}(q)+\hat{\beta}_{\delta_b}.$$
As $b\in \Sigma^1$ and $a_2\in X_2^o$, there exist $q_1,q_2\in \Pi^2(a_2)$ such that $(\delta_b,q_1)\succ (\delta_b,q_2)$. Hence, 
$$\hat{\alpha}_p v_{2|\delta_b}(q_1)+\hat{\beta}_p= \hat{\alpha}_{\delta_b} v_{2|\delta_b}(q_1)+\hat{\beta}_{\delta_b}~,~\hat{\alpha}_p v_{2|\delta_b}(q_2)+\hat{\beta}_p= \hat{\alpha}_{\delta_b} v_{2|\delta_b}(q_2)+\hat{\beta}_{\delta_b}.$$
This implies $\hat{\alpha}_p = \hat{\alpha}_{\delta_b}, \hat{\beta}_p=\hat{\beta}_{\delta_b}$ for all $p\in \Pi^1(b)$. Then for all $p\in \Pi^1(b)$ and $q\in Y^2([a_2,a_3]), $
$$V^{EU}(p,q)= \hat{\alpha}_{\delta_b} v_{2|\delta_b}(q)+\hat{\beta}_{\delta_b}= V^{EU}(\delta_b,q).$$

That is, $(p,q)\sim (\delta_b,q)$ for all $p\in \Pi^1(b)$ and $q\in Y^2([a_2,a_3])$, which suggests that $(a_2+a_3)/2\not\in\Sigma^2$, a contradiction with $(a_2,a_3)\subseteq \Sigma^2$. To conclude, $cl(\Sigma^i) =X_i$ for $i=1,2$ and hence \lemmaref{lemma_both_ind} implies Axiom MI. By \lemmaref{lemma_eu-cn}, $\succsim$ admits an EU-CN representation.

To summarize, as NB is a special case of GBIB-CN (FBIB-CN), we conclude that under the axioms stated in the theorem, $\succsim$ admits one of the following representations: EU-CN, GBIB-CN and GFIB-CN. This completes the proof for sufficiency.\end{proof}

\bigskip

\begin{proof}[Proof of \thmref{thm_BIB}] ~\\
 {$ii)\Rightarrow i)$.} We first prove the necessity of these axioms. 
 First, it is easy to verify that EU satisfies all the axioms.  By  \thmref{thm_cn}, we know representations EU-CN, GBIB-CN and GFIB-CN satisfy Axioms WO, M, WC,  WI, and they trivially satisfy Axiom CC as its primitive will never be satisfied. 
 
  For BIB, as it reduces to a special case of GBIB-CN on $\hat{\cP}$, it suffices to show that BIB satisfies the first two parts of Axiom WC, Axiom CC and Axiom M.

 Suppose that $\succsim$ admits a BIB representation $(w,v_2)$, that is, for $P\in\cP$, 
 $$V^{BIB}(P)  = \sum_{x} w(x,CE_{v_{2}}(P_{2|x}))P_1(x).$$
To verify part (i) of Axiom WC, for any $P,Q\in \cP$ and $\lambda\in [0,1]$, 
\begin{align*}
V^{BIB}(\lambda P +(1-\lambda)Q) =&\lambda \sum_{\substack{x: P_1(x)>0,\\Q_1(x)=0}} w(x, CE_{v_2}(P_{2|x}))P_1(x)  \\
&+ (1-\lambda) \sum_{\substack{x: Q_1(x)>0,\\P_1(x)=0}} w(x, CE_{v_2}(Q_{2|x}))Q_1(x)\\
	&+ \sum_{\substack{x: Q_1(x)>0,\\P_1(x)>0}} w(x, CE_{v_2}(\alpha P_{2|x} + (1-\alpha) Q_{2|x}))[\lambda P_1(x)+ (1-\lambda)Q_1(x)],
\end{align*}
where 
$$\alpha =  \frac{\lambda P_1(x)}{\lambda P_1(x)+ (1-\lambda) Q_1(x)}.$$
Then $V^{BIB}(\lambda P +(1-\lambda)Q) $ is continuous in $\lambda$ and mixture continuity holds for $\succsim$ on $\cP$.

To verify the third part of Axiom WC, i.e., Axiom Continuity over Sure Gains, for each $P\in \cP$ and any two sequences $\epsilon_n,\epsilon_n'\rightarrow 0$ as $n\rightarrow \infty$ such that for each $n$, $\epsilon_n, \epsilon_n'>0$.  Since $P$ is a simple lottery, for $n$ large enough, we can guarantee that $P*(\delta_{\epsilon_n}, \delta_{\epsilon_n'})(x+\epsilon_n, y + {\epsilon_n'})=P(x,y)$ for all $x\in X_1^o, y\in X_2^o$. For such $n$, we have 
\begin{align*}
V^{BIB}(P*(\delta_{\epsilon_n}, \delta_{\epsilon_n'})) = &\sum_{x\in X_i^o} w(x+\epsilon_n,CE_{v_{2}}(P_{2|x}* \delta_{\epsilon_n'}))P_1(x)\\ 
&+ w(\cone,CE_{v_{2}}(P_{2|x}* \delta_{\epsilon_n'}))P_1(\cone) .
\end{align*}
If $\cone=+\infty$, then the second term is always $0$.  Notice that $P_{2|x}* \delta_{\epsilon_n'}\xrightarrow[]{w} P_{2|x}$ as $n$ goes to infinity. By continuity of $w$ and $v_2$, easy to see that $V^{BIB}(P*(\delta_{\epsilon_n}, \delta_{\epsilon_n'}))$ is continuous in $(\epsilon_n, \epsilon_n')$ and hence Axiom Continuity over Sure Gains holds for $\succsim$.

Then we will check Axiom CC. For each $P,Q,R,S\in\mathcal{P}$ and $\alpha\in(0,1)$, if $P_i=Q_i$ for $i=1,2$ and $\supp(P_1)\cap \supp(R_1)= \supp(P_1)\cap \supp(S_1)=\emptyset$, then 
$$V^{BIB}(\alpha P + (1-\alpha) R) = \alpha V^{BIB}(P) + (1-\alpha)V^{BIB}(R),$$
$$V^{BIB}(\alpha Q + (1-\alpha) S) = \alpha V^{BIB}(Q) + (1-\alpha)V^{BIB}(S).$$
Hence $P\succ Q, R\sim S$ implies that $\alpha P + (1-\alpha) R\succ \alpha Q + (1-\alpha) S$. 

Finally for Axiom M, suppose that $P$ dominates $(\delta_{y_1},\delta_{y_2})$, then for each $x\in \supp(P_1)$, $x\geq y_1$ and $P_{2|x} \succsim_{FOSD} \delta_{y_2}$ and there exists $x'\in \supp(P_1)$  with $x'>y_1$ or $P_{2|x'} \succ_{FOSD} \delta_{y_2}$. Then by regularity of $w$ and $v_2$, we know 
\begin{align*}
V^{BIB}(P) &= w(x',CE_{v_2}(P_{2|x'}))P_1(x') +  \sum_{x\neq x'} w(x,CE_{v_2}(P_{2|x}))P_1(x)\\
&>w(y_1,y_2)P_1(x') + \sum_{x\neq x'} w(y_1,y_2)P_1(x)\\
&=w(y_1,y_2)= V^{BIB}(\delta_{y_1}, \delta_{y_2}).
\end{align*}
This implies $P\succ (\delta_{y_1}, \delta_{y_2})$. Similarly, we can show that $P\prec (\delta_{x_1}, \delta_{x_2})$ if $P$ is dominated by $(\delta_{y_1},\delta_{y_2})$. This completes the proof for necessity of axioms.

\bigskip

\noindent {$i)\Rightarrow ii)$.} The proof of sufficiency is decomposed in the following steps. In {\bf Step 1}, we restrict our attention to the set of product lotteries $\hat{\cP}$ and apply \thmref{thm_cn}. {\bf Step 2} studies the implications of Axiom CC and Axiom M. In {\bf Step 3}, we derive a KP-style representation on the space of lotteries $\cP$. In {\bf Step 4}, we utilize the consistency of the two representations in {\bf Step 1} and {\bf Step 3} on $\hat{\cP}$ to finish the proof.

\bigskip

{\bf Step 1: We restrict $\succsim$ to $\hat{\cP}$.} 
For each preference $\succsim$ that satisfies the axioms stated in \thmref{thm_BIB}, we can define $\hat{\succsim}$ which satisfies Axiom CN and agrees with $\succsim$ on the set of product lotteries  $\hat{\cP}$. Easy to verify that $\hat{\succsim}$ also satisfies Axiom WO, Axiom M, Axiom WC and Axiom WI. By \thmref{thm_cn}, we know that $\hat{\succsim}$ admits one of the following representations: EU-CN, GBIB-CN and GFIB-CN. This implies that the restriction of $\succsim$ on $\hat{\cP}$ admits one of these representations. Furthermore, if Axiom CN holds, then ${\succsim}$ admits one of the following representations: EU-CN, GBIB-CN and GFIB-CN.

\bigskip

{\bf Step 2: We derive implications of Axiom CC and Axiom M.} 

Suppose, from now on, that Axiom CN does not hold, that is, there exists $P\succ \tilde{P}$ with $P_i =  \tilde{P}_i$ for $i=1,2$. For any $(p,q)\in \lxone\times \lxtwo$, denote $M(p,q)$ as the set of lotteries whose marginal lotteries are $p$ and $q$ respectively.  For any $P,R\in \cP$, we say $P$ and $R$ are {\it compatible}, or $P$ is {\it compatible} with $R$  if $\supp (P_1)\cap \supp (R_1)=\emptyset$. Easy to see that if $R$ is compatible with both $P$ and $Q$, then $R$ is also compatible with $\lambda P + (1-\lambda)Q$ for any $\lambda\in (0,1)$. Also, if $P$ is compatible with $Q$, then $P$ is compatible with all $Q'\in M(Q_1,Q_2)$.

One main difficulty is that betweenness does not hold, that is, for $\lambda\in (0,1)$, it is not guaranteed that $P\succ \lambda P +(1-\lambda)\tilde{P}\succ\tilde P$. However, we have the following weaker and local version of the betweenness property.

\begin{lemma}\label{lemma_betweenness}
For any $Q\succ Q'$, there exists $Q^*=\lambda^* Q+ (1-\lambda^*)Q'$ for some $\lambda^*\in [0,1]$ such that for any $\epsilon>0$, we can find $\lambda_{\epsilon}\in (\lambda^*-\epsilon, \lambda^*+\epsilon)\cap [0,1]$ with $Q^*\not\sim \lambda_{\epsilon} Q + (1-\lambda_{\epsilon})Q'$.
\end{lemma}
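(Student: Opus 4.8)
The plan is to reduce the statement to a purely topological fact about the restriction of $\succsim$ to the one-dimensional mixture segment joining $Q'$ to $Q$, exploiting that $[0,1]$ is connected. First I would fix $Q\succ Q'$ and consider the parametrized family $R_\lambda := \lambda Q + (1-\lambda)Q'$ for $\lambda\in[0,1]$, so that $R_1=Q$ and $R_0=Q'$. This induces a complete and transitive binary relation on $[0,1]$ by declaring $\lambda\succsim^*\mu$ whenever $R_\lambda\succsim R_\mu$, inherited from Axiom WO. The key input is Axiom Mixture Continuity (part of Axiom WC): applied with the pair $(Q,Q')$ playing the role of the mixing pair in that axiom, it guarantees that for every fixed $R\in\cP$ the sets $\{\lambda: R_\lambda\succ R\}$ and $\{\lambda: R\succ R_\lambda\}$ are open in $[0,1]$.

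Next I would examine the indifference classes of $\succsim^*$ along the segment. For each fixed $\lambda$, the set $I_\lambda := \{\mu\in[0,1]: R_\mu\sim R_\lambda\}$ is closed, since it is the complement in $[0,1]$ of the union of the two open sets $\{\mu: R_\mu\succ R_\lambda\}$ and $\{\mu: R_\lambda\succ R_\mu\}$ furnished by Mixture Continuity. The desired $\lambda^*$ is precisely a point at which the indifference class fails to be locally constant, i.e., a $\lambda^*$ such that $I_{\lambda^*}$ contains no neighborhood of $\lambda^*$; for such a point, every interval $(\lambda^*-\epsilon,\lambda^*+\epsilon)\cap[0,1]$ meets the complement of $I_{\lambda^*}$, yielding some $\lambda_\epsilon$ with $R_{\lambda_\epsilon}\not\sim R_{\lambda^*}=Q^*$, which is exactly the conclusion.

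It remains to argue that such a $\lambda^*$ exists, and this is where connectedness enters. I would argue by contradiction: if no such point existed, then for every $\lambda$ the class $I_\lambda$ would contain a neighborhood of $\lambda$, so every indifference class would be open as well as closed, hence clopen. Since the distinct indifference classes partition $[0,1]$ and $[0,1]$ is connected, only one nonempty clopen class can exist, forcing $R_\mu\sim R_\nu$ for all $\mu,\nu\in[0,1]$. In particular $R_1\sim R_0$, i.e., $Q\sim Q'$, contradicting $Q\succ Q'$. Hence a point $\lambda^*$ of local non-constancy exists and the lemma follows.

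The argument is short and the individual steps are routine; the only point requiring a little care is the correct invocation of Axiom Mixture Continuity to obtain closedness of the indifference classes, together with the observation that local constancy at every point is incompatible with connectedness of the segment. I do not expect a serious obstacle here, precisely because full betweenness is never needed — the lemma asks only for a single point of strict variation, which any non-constant, mixture-continuous order on a connected domain must possess.
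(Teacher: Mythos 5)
Your proof is correct, and it follows the same overall strategy as the paper's: negate the conclusion, observe that this means the indifference class along the segment $\lambda\mapsto \lambda Q+(1-\lambda)Q'$ is locally constant at every point, upgrade this to global constancy, and contradict $Q\succ Q'$. The only real difference is the local-to-global mechanism: the paper extracts a finite subcover of the compact interval $[0,1]$ from the neighborhoods of local constancy and chains indifference across overlapping intervals by transitivity, whereas you show each indifference class is open and hence clopen in the connected set $[0,1]$, forcing a single class. Both are routine; yours is arguably slightly cleaner. One small remark: your appeal to Axiom Mixture Continuity to establish closedness of the classes $I_\lambda$ is superfluous --- once every class in a partition of $[0,1]$ is open, each is automatically clopen because its complement is a union of open classes --- and indeed the paper's proof of this lemma uses no continuity axiom at all, only transitivity and compactness.
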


\begin{proof}[Proof of \lemmaref{lemma_betweenness}]
Suppose the result fails. Then for any $\lambda\in[0,1]$, there exists $\epsilon_{\lambda}>0$ such that for any $\lambda'\in (\lambda - \epsilon_{\lambda}, \lambda +\epsilon_{\lambda})\cap [0,1]$, $\lambda Q +(1-\lambda)Q'\sim \lambda' Q +(1-\lambda')Q'$. Notice that $\{(\lambda - \epsilon_{\lambda}, \lambda +\epsilon_{\lambda})\}_{\lambda\in [0,1]}$ forms an open cover of the compact set $[0,1]$. We can find a finite subcover of $[0,1]$. By transitivity of $\succsim$, we know that $\lambda Q +(1-\lambda)Q'\sim \lambda' Q +(1-\lambda')Q'$ for all $\lambda,\lambda'\in [0,1]$, which leads to $Q\sim Q'$ and a contradiction.\end{proof}

For $P\succ \tilde{P}$ with $\tilde{P}\in M(P_1,P_2)$, denote $P^*=\lambda^* P + (1-\lambda^*)\tilde{P}$ as the lottery found in \lemmaref{lemma_betweenness}. Clearly, either $P^*\not\sim P$ or $P^*\not\sim \tilde{P}$. Also, $P_1,P_2$ are not degenerate.  By \lemmaref{lemma_betweenness}, for any $n>0$, there exists $\lambda_n\in (\lambda^*-1/n, \lambda^*+1/n)\cap [0,1]$ with $P^*\not\sim \lambda_n P+(1-\lambda_n)\tilde{P}:=P^n$. By completeness, for each $n$, either $P^n\succ P^*$ or $P^*\succ P^n$. Then we can find a subsequence of $\{P^n\}$ (still denoted as $\{P^n\}$ when there is no confusion) such that either $P^n\succ P^*$ for all $n$ or $P^*\succ P^n$ for all $n$. Suppose that the former case holds. Take any $R\sim S$ and $R,S$ compatible with $P$. Axiom CC implies that for all $\alpha\in (0,1)$ and $n\geq 1$, $\alpha P^n + (1-\alpha) R\succ \alpha P^* + (1-\alpha) S$. By mixture continuity of $\succsim$ (the second part of Axiom WC), as $n$ goes to infinity, that is, $\lambda_n$ goes to $\lambda^*$, we have $\alpha P^* + (1-\alpha) R\succsim \alpha P^* + (1-\alpha) S$. This holds for all $R\sim S$ with $R,S$ compatible with $P$. By symmetry, we can just change the place of $R$ and $S$, and get $\alpha P^* + (1-\alpha) S\succsim \alpha P^* + (1-\alpha) R$. Thus, for all $\alpha\in (0,1)$ and $R\sim S$ with $R,S$ compatible with $P$, 
$$\alpha P^* + (1-\alpha) S\sim \alpha P^* + (1-\alpha) R.$$
If instead $P^*\succ P^n$ for all $n$, then the same result holds as the conclusion is an indifference relation. Without loss of generality, we assume that $P^n\succ P^*$ for all $n$ and $P^*\succ \tilde{P}$ from now on.

Fix any $Q$ compatible with $P$ and we know $Q$ is also compatible  with $\tilde{P}$,  $P^*$ and $P^n$ for each $n$. By Axiom CC, for any $\beta\in (0,1)$, $\beta P^* +(1-\beta)Q \succ \beta \tilde{P} +(1-\beta)Q$ and $\beta P^* +(1-\beta)Q, \beta \tilde{P} +(1-\beta)Q \in M(\beta P_1+(1-\beta)Q_1, \beta P_2+(1-\beta)Q_2)$. Similarly, as $P^n\succ P^*$ for all $n$, for any $\beta\in (0,1)$, $\beta P^n +(1-\beta)Q \succ \beta P^* +(1-\beta)Q$ and $\beta P^n +(1-\beta)Q\in M(\beta P_1+(1-\beta)Q_1, \beta P_2+(1-\beta)Q_2)$. For any $R\sim S$ with $R,S$ compatible with both $P,Q$, we know $R,S$ are also compatible with $\beta P^n +(1-\beta)Q$ and $\beta P^* +(1-\beta)Q$. With the same arguments as above, we can show that for any $\alpha\in (0,1)$ and $\beta\in (0,1)$,
$$\alpha[\beta P^* +(1-\beta)Q] + (1-\alpha)R\sim \alpha[\beta P^* +(1-\beta)Q] + (1-\alpha)S.$$
This can be rearranged as 
$$\beta[\alpha P^*+(1-\alpha)R]  + (1-\beta)[\alpha Q+(1-\alpha)R] \sim \beta[\alpha P^*+(1-\alpha)S]  + (1-\beta)[\alpha Q+(1-\alpha)S]  $$
Again by mixture continuity of $\succsim$, let $\beta\rightarrow 0^+$ and we have 
\begin{equation}\label{eq_thm2_e1}
    \alpha Q+(1-\alpha)R\sim    \alpha Q+(1-\alpha)S,
\end{equation}
for any $\alpha\in (0,1)$, $R\sim S$, $Q$  compatible with $P,R,S$ and $P$ compatible with $Q,R,S$.

Fix $P,\tilde{P}$ and $Q$ such that $P$ is compatible with $Q$, we want to strengthen property (\ref{eq_thm2_e1}) by discarding the constraint that $R,S$ are compatible with $P$. By the third part of Axiom WC, as $P\succ \tilde{P}$, we can find $\bar{\epsilon}>0$ such that for all $\epsilon\in (0,\bar{\epsilon})$ and $P_{\epsilon}= P *(\delta_{\epsilon},\delta_0)$,  $\tilde{P}_{\epsilon}= \tilde{P} *(\delta_{\epsilon},\delta_0)$, we have $P_{\epsilon}\succ \tilde{P}_{\epsilon}$. Note hat $\tilde{P}_{\epsilon},{P}_{\epsilon}\in M(P_1 *\delta_{\epsilon}, P_2)$. Since $\supp(P_1)\cup \supp(Q_1)$ is finite, we can make $\bar{\epsilon}$ small enough such that for all $\epsilon\in (0,\bar{\epsilon})$, $\supp(P_1 *\delta_{\epsilon})\cap \supp(Q_1)=\emptyset$ and $\tilde{P}_{\epsilon},{P}_{\epsilon}$ are compatible with $Q$. Then  any $Q$ compatible with $P$,
\begin{equation*}
    \alpha Q+(1-\alpha)R\sim    \alpha Q+(1-\alpha)S,
\end{equation*}
for any $\epsilon\in (0,\bar{\epsilon})$, $\alpha\in (0,1)$, $R\sim S$, $Q$  compatible with $R,S$ and $P_{\epsilon}$ compatible with $R,S$.

Now we show that by varying $\epsilon$, we can further get rid of the constraint that $R,S$ are compatible with $P_{\epsilon}$ for some $\epsilon\in (0, \up{\epsilon})$. This is again guaranteed by the fact that each lottery in $\cP$ has a finite support. Concretely, for any $R\sim S$ with $R,S$ compatible with $Q$, we can always find $\epsilon^*$ such that $R,S$ are compatible with $P_{\epsilon^*}$. Thus,
\begin{equation*}
    \alpha Q+(1-\alpha)R\sim    \alpha Q+(1-\alpha)S,
\end{equation*}
for any  $\alpha\in (0,1)$, $R\sim S$ and $Q$  compatible with $R,S,P$. 

The same argument can be applied to relax the requirement that $Q$ is compatible with $P$ and hence we end up with the result that for any $Q\in \cP$,
\begin{equation}\label{eq_thm2_e2}
    \alpha Q+(1-\alpha)R\sim    \alpha Q+(1-\alpha)S,
\end{equation}
for any $\alpha\in (0,1)$, $R\sim S$, $Q$  compatible with $R,S$.

\bigskip

For each $y\in X_2$, recall that 
$$\Gamma_{2,\delta_y}=\bigcup_{x_1\in X_1} \Gamma_{2,\delta_y}(x_1) =\bigcup_{x_1\in X_1} \bigcup_{\substack{P,Q\in  \Pi^1(x_1)\times \{\delta_y\},\\ P\succsim Q}}[Q,P]\subseteq \hat{\cP}.$$  

We define $\Phi_{2,\delta_y}\subseteq \cP$ such that for each $y\in X_2$,
$$\Phi_{2,\delta_y}= \big\{ P\in \cP:  \exists~ T,T'\in \Gamma_{2,\delta_y} ~s.t.~  T\succ P\succ T' \big\};$$

\begin{lemma}\label{lemma_achievable3}
(i). For each $P,Q,R\in \cP$ with $P\succ Q \succ R$, there exists $\lambda\in (0,1)$ such that $\lambda P +(1-\lambda)R\sim Q$.
\\
 (ii). For each $P\in \cP$, there exists $(x_1,x_2)\in X_1\times X_2$ such that $P\sim (\delta_{x_1}, \delta_{x_2})$. Moreover, if $P\in \Phi_{2,\delta_y}$ for some $y\in X_2$, then we can choose $x_2=y$. 
\end{lemma}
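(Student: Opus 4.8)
The two parts are a solvability (betweenness) statement and an existence-of-certainty-equivalent statement, and the plan is to extract both from mixture continuity, topological continuity over product lotteries, monotonicity, and the earlier structural lemmas, using connectedness of $[0,1]$ as the workhorse. For part (i), fix $P\succ Q\succ R$ and consider the map $\alpha\mapsto \alpha P+(1-\alpha)R$ on $[0,1]$. By Axiom Mixture Continuity the sets $A=\{\alpha:\alpha P+(1-\alpha)R\succ Q\}$ and $B=\{\alpha:Q\succ\alpha P+(1-\alpha)R\}$ are open in $[0,1]$; since $P\succ Q$ we have $1\in A$ and since $Q\succ R$ we have $0\in B$, so both are nonempty and they are clearly disjoint. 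If no $\lambda$ satisfied $\lambda P+(1-\lambda)R\sim Q$, then $A\cup B=[0,1]$ would exhibit the connected interval as a disjoint union of two nonempty open sets, a contradiction. Hence such a $\lambda$ exists, and the strict rankings at the endpoints force $\lambda\in(0,1)$.

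For the existence claim in part (ii), the plan is first to trap $P$ between two degenerate lotteries and then run the same connectedness argument along a path. Since $P$ has finite support, set $\up{x}_i=\max\supp(P_i)$ and $\dw{x}_i=\min\supp(P_i)$ for $i=1,2$, all of which lie in $X_i$. Then $(\delta_{\up{x}_1},\delta_{\up{x}_2})$ dominates $P$ and $P$ dominates $(\delta_{\dw{x}_1},\delta_{\dw{x}_2})$ in the sense of Axiom Monotonicity, giving $(\delta_{\up{x}_1},\delta_{\up{x}_2})\succsim P\succsim(\delta_{\dw{x}_1},\delta_{\dw{x}_2})$ (with an indifference, hence the conclusion, in the trivial case that $P$ is already degenerate). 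Assuming both rankings strict, I connect the bounds by the continuous path $t\mapsto(\delta_{x_1(t)},\delta_{x_2(t)})$ with $x_i(t)=(1-t)\dw{x}_i+t\up{x}_i$, which stays in $\hat{\cP}$ because degenerate lotteries are product lotteries and each $X_i$ is an interval. Applying Axiom Topological Continuity over Product Lotteries with the fixed comparison lottery $P$, the preimages of the open sets $\{S\in\hat{\cP}:S\succ P\}$ and $\{S\in\hat{\cP}:P\succ S\}$ under this path are open in $[0,1]$; the former contains $t=1$, the latter contains $t=0$, so connectedness yields $t^*$ with $(\delta_{x_1(t^*)},\delta_{x_2(t^*)})\sim P$.

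For the ``moreover'' clause, suppose $P\in\Phi_{2,\delta_y}$, so there exist $T,T'\in\Gamma_{2,\delta_y}$ with $T\succ P\succ T'$. Each element of $\Gamma_{2,\delta_y}$ lies in some $\Gamma_{2,\delta_y}(x_1)$ and, by \lemmaref{lemma_achievable}(iii), is indifferent to a lottery $(p,\delta_y)$ with $p\in\Pi^1(x_1)$; invoking the conditional index $v_{1|\delta_y}$ of \lemmaref{lemma_narrow pre}, which is regular since $\delta_y$ is degenerate, one gets $(p,\delta_y)\sim(\delta_{CE_{v_{1|\delta_y}}(p)},\delta_y)$. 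Thus $T\sim(\delta_a,\delta_y)$ and $T'\sim(\delta_b,\delta_y)$ for some $a,b\in X_1$, so $(\delta_a,\delta_y)\succ P\succ(\delta_b,\delta_y)$. Now part (i) furnishes $\lambda\in(0,1)$ with $\lambda(\delta_a,\delta_y)+(1-\lambda)(\delta_b,\delta_y)\sim P$, and the key observation is that this mixture equals $(\lambda\delta_a+(1-\lambda)\delta_b,\delta_y)$, a product lottery lying in the fiber $\lxone\times\{\delta_y\}$ precisely because the shared degenerate second marginal $\delta_y$ is preserved. A final reduction of its first coordinate through $v_{1|\delta_y}$ gives $P\sim(\delta_{x_1},\delta_y)$ with $x_1=CE_{v_{1|\delta_y}}(\lambda\delta_a+(1-\lambda)\delta_b)\in X_1$, which is the required representative with $x_2=y$. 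The routine parts are part (i) and the first half of part (ii); the one step demanding care is this last clause, where I must confirm that $\Gamma_{2,\delta_y}$ really reduces into the fiber $\lxone\times\{\delta_y\}$ and that the part-(i) mixture does not leave that fiber — both hinge on the second marginal being the fixed degenerate $\delta_y$, which makes every relevant mixture a product (rather than correlated) lottery and lets \lemmaref{lemma_achievable} and the regularity of $v_{1|\delta_y}$ do the work.
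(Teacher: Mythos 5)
Your proof is correct, and while part (i) is essentially the paper's argument recast (the paper takes $\lambda=\inf\{\alpha:\alpha P+(1-\alpha)R\succ Q\}$ and derives a contradiction from Mixture Continuity; you phrase the same idea as a disconnection of $[0,1]$ into two nonempty open sets), your part (ii) takes a genuinely different route for the main existence claim. The paper first traps $P$ between $(\delta_{x_1},\delta_{x_2})$ and $(\delta_{y_1},\delta_{y_2})$, then observes that $P$ lies between two degenerate lotteries sharing a first marginal, invokes part (i) of \lemmaref{lemma_axiom implication} (which rests on Conditional Independence plus Mixture Continuity) to produce a mixture in the second coordinate indifferent to $P$, and finally collapses that mixture to a certainty equivalent via \lemmaref{lemma_narrow pre}. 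You instead run a one-shot connectedness argument along the diagonal path $t\mapsto(\delta_{x_1(t)},\delta_{x_2(t)})$ of degenerate (hence product) lotteries, leaning on Axiom Topological Continuity over Product Lotteries with the fixed comparison lottery $P\in\cP$. Your route is more direct and avoids any appeal to the independence axioms for the existence claim, at the cost of using the topological (rather than mixture) continuity axiom; the paper's route stays entirely within the mixture-space machinery it has already built. For the ``moreover'' clause both arguments coincide in substance: you correctly reduce the bounding elements of $\Gamma_{2,\delta_y}$ into the fiber $\lxone\times\{\delta_y\}$ via \lemmaref{lemma_achievable}(iii) and the regular conditional index $v_{1|\delta_y}$, note that the part-(i) mixture of two lotteries with common degenerate second marginal $\delta_y$ is itself the product lottery $(\lambda\delta_a+(1-\lambda)\delta_b,\delta_y)$, and collapse its first coordinate; this is exactly the step the paper compresses into ``by the same argument.''
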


\begin{proof}[Proof of \lemmaref{lemma_achievable3}]
(i). Denote $A=\{\alpha\in (0,1): \alpha P + (1-\alpha)R\succ Q\}$ and $\lambda= inf A$. We claim that $\lambda P +(1-\lambda)R\sim Q$. Suppose by contradiction that $\lambda P +(1-\lambda)R\not\sim Q$. If $\lambda P +(1-\lambda)R\succ Q$, then $\lambda \in A$, which is open by mixture continuity of $\succsim$. Hence there exists $\lambda'<\lambda$ with $\lambda'\in A$, which contradicts with the definition of $\lambda$. If $\lambda P +(1-\lambda)R\prec Q$, then $\lambda \in \{\alpha\in (0,1): \alpha P + (1-\alpha)R\prec Q\}$, which is also open. We can find $\epsilon>0$ such that  $ [\lambda, \lambda+\epsilon)\subseteq (0,1)\backslash A$. Again a contradiction with the definition of $\lambda$. Hence $\lambda P +(1-\lambda)R\sim Q$.
\\
(ii). For each $P\in \cP$, denote $x_i=\max\supp(P_i),y_i=\min\supp(P_i)$ for $i=1,2$. By Axiom M, $(\delta_{x_1},\delta_{x_2})\succsim P \succsim (\delta_{y_1},\delta_{y_2})$. Then either $(\delta_{x_1},\delta_{x_2})\succsim P \succsim (\delta_{x_1},\delta_{y_2})$ or $(\delta_{x_1},\delta_{y_2})\succsim P \succsim (\delta_{y_1},\delta_{y_2})$. By symmetry, suppose the former case holds. Using the same argument as the proof of part (i) in \lemmaref{lemma_axiom implication}, we can find $\lambda\in [0,1]$ such that $P\sim (\delta_{x_1}, \lambda\delta_{y_1}+ (1-\lambda)\delta_{y_2})$. By \lemmaref{lemma_narrow pre}, there exists $x_2'\in X_2$ where $P\sim (\delta_{x_1}, \lambda\delta_{y_1}+ (1-\lambda)\delta_{y_2})\sim (\delta_{x_1},\delta_{x_2'})$. 

If further $P\in \Phi_{2,\delta_y}$ for some $y\in X_2$, then we can find $p_1,p_1'\in \lxone$ with $(p_1,\delta_y) \succ P\succ (p_1', \delta_y)$. By the same argument, we can find $x'\in X_1$ such that $P\sim (\delta_{x'}, \delta_{y})$. 
\end{proof}

The next lemma generalize Axiom CC on each $\Phi_{2,\delta_y}$ by relaxing the requirement that $P$ and $Q$ must agree on the marginal lotteries. 

\begin{lemma}\label{lemma_CC_strong}
Suppose that Axiom CN fails. For each $y\in X_2$ and $P,Q,R,S\in \Phi_{2,\delta_y}$,  the following properties hold:

i). $P\sim Q$ and $P$ is compatible with $Q$ $\Longrightarrow$ $\alpha P +(1-\alpha)Q\sim P\sim Q$ for all $\alpha\in (0,1)$;

ii). $P\succ Q$ and $P$ is compatible with $Q$ $\Longrightarrow$ $P\succ \alpha P +(1-\alpha)Q\succ Q$ for all $\alpha\in (0,1)$;

iii). $P\succ Q$, $R\sim S$, $P$ is compatible with $R$ and $Q$ is compatible with $S$ $\Longrightarrow$ $\alpha P +(1-\alpha)R\succ \alpha Q +(1-\alpha)S$ for all $\alpha\in (0,1)$;

iv). $P\sim Q$, $R\sim S$, $P$ is compatible with $R$ and $Q$ is compatible with $S$ $\Longrightarrow$ $\alpha P +(1-\alpha)R\sim \alpha Q +(1-\alpha)S$ for all $\alpha\in (0,1)$.
\end{lemma}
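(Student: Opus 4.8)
The plan is to reduce every mixture appearing in the four claims to a comparison of \emph{product} lotteries lying in the single slice $\lxone\times\{\delta_y\}$, where the conditional preference $\succsim_{1|\delta_y}$ is expected utility with regular index $v_{1|\delta_y}$ by \lemmaref{lemma_narrow pre}, hence linear. The enabling observation is that each $P\in\Phi_{2,\delta_y}$ satisfies $P\sim(\delta_{c(P)},\delta_y)$ for some $c(P)\in X_1$, by the ``moreover'' clause of \lemmaref{lemma_achievable3}(ii). Moreover $c(P)$ is interior: membership in $\Phi_{2,\delta_y}$ supplies $T,T'\in\Gamma_{2,\delta_y}\subseteq\hat{\cP}$ with $T\succ P\succ T'$, and every element of $\Gamma_{2,\delta_y}$ is indifferent to some $(\delta_x,\delta_y)$, so monotonicity forces $c(T')<c(P)<c(T)$, whence $c(P)\in X_1^o$. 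Therefore $\Pi^1(c(P))$ is not a singleton, and by regularity of $v_{1|\delta_y}$ I may represent $P$ by a product lottery $(p,\delta_y)\sim P$ with $p=\lambda\delta_a+(1-\lambda)\delta_b$ a two-point marginal whose support $\{a,b\}$ avoids any prescribed finite subset of $X_1$. Since all lotteries involved have finite support, this permits a ``general position'' choice of representatives, i.e.\ with mutually disjoint source-$1$ supports wherever needed.

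The core is a reduction lemma: for $P,R\in\Phi_{2,\delta_y}$ with $\supp(P_1)\cap\supp(R_1)=\emptyset$, I claim $\alpha P+(1-\alpha)R\sim(\alpha p+(1-\alpha)r,\delta_y)$ whenever $(p,\delta_y)\sim P$ and $(r,\delta_y)\sim R$ are representatives with $\supp(p),\supp(r)$ disjoint from each other and from $\supp(R_1)$. This follows from two applications of the already-established property~(\ref{eq_thm2_e2}). First, fixing the block $R$ and swapping the indifferent, $R$-compatible lotteries $P$ and $(p,\delta_y)$ yields $\alpha P+(1-\alpha)R\sim\alpha(p,\delta_y)+(1-\alpha)R$. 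Then, fixing $(p,\delta_y)$ and swapping $R$ for the $(p,\delta_y)$-compatible lottery $(r,\delta_y)$ yields $\alpha(p,\delta_y)+(1-\alpha)R\sim\alpha(p,\delta_y)+(1-\alpha)(r,\delta_y)=(\alpha p+(1-\alpha)r,\delta_y)$, a product lottery with second marginal $\delta_y$.

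With this in hand, all four parts reduce to linearity of $v_{1|\delta_y}$. For (iv), the reduction applied to both sides gives $\alpha P+(1-\alpha)R\sim(\alpha p+(1-\alpha)r,\delta_y)$ and $\alpha Q+(1-\alpha)S\sim(\alpha q+(1-\alpha)s,\delta_y)$; from $P\sim Q$ and $R\sim S$ we get $p\sim_{1|\delta_y}q$ and $r\sim_{1|\delta_y}s$, so linearity of the index gives $\alpha p+(1-\alpha)r\sim_{1|\delta_y}\alpha q+(1-\alpha)s$, i.e.\ $\alpha P+(1-\alpha)R\sim\alpha Q+(1-\alpha)S$. Part (i) is the case $P\sim Q$: reducing $\alpha P+(1-\alpha)Q$ and using $p\sim_{1|\delta_y}q$ gives $\alpha p+(1-\alpha)q\sim_{1|\delta_y}p$, hence $\sim P$. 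Parts (ii) and (iii) are identical but replace the indifference by the strict comparison implied by $P\succ Q$ (which means the $v_{1|\delta_y}$-value of $p$ strictly exceeds that of $q$); since $\alpha\in(0,1)$ and $v_{1|\delta_y}$ is strictly monotone, this delivers both the betweenness $P\succ\alpha P+(1-\alpha)Q\succ Q$ of (ii) and the strict ranking of (iii).

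The main obstacle is bookkeeping the compatibility (disjoint-support) conditions so that each invocation of~(\ref{eq_thm2_e2}) is legitimate, together with the boundary concern that flexible-support representatives exist only when the source-$1$ certainty equivalent is interior. Both are dispatched by the two structural facts established in the first paragraph: the strict betweenness built into membership in $\Phi_{2,\delta_y}$ forces every relevant certainty equivalent into $X_1^o$, and the finiteness of all supports lets a general-position choice of two-point marginals avoid the finitely many forbidden points inside the infinite interval $X_1^o$.
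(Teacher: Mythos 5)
Your proposal is correct and follows essentially the same route as the paper's own proof: both reduce each mixture to a product lottery in the slice $\lxone\times\{\delta_y\}$ by choosing two-point representatives in general position (possible because the certainty equivalents lie in $X_1^o$), apply the already-established indifference relation (\ref{eq_thm2_e2}) twice to swap in these representatives, and then conclude by linearity of the conditional EU index $v_{1|\delta_y}$ from \lemmaref{lemma_narrow pre}. The only differences are presentational (you isolate the swap as an explicit reduction lemma and parametrize the two-point representatives by varying the weight rather than the support points), not substantive.
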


\begin{proof}[Proof of \lemmaref{lemma_CC_strong}]
We first prove (i) and (ii). Suppose $P,Q\in \Phi_{2,\delta_y}$ for some  $y\in X_2$ and $P,Q$ are compatible. By \lemmaref{lemma_achievable3}, there exists $x_P,x_Q\in X_1^o$ such that $P\sim (\delta_{x_P},\delta_y)$ and $Q\sim (\delta_{x_Q},\delta_y)$.  By \lemmaref{lemma_narrow pre}, we can find $\epsilon>0$ such that for all $z_P\in [x_P-\epsilon, x_P], z_Q\in [x_Q-\epsilon, x_Q]$, there exist $z'_P\geq x_P,z'_Q\geq x_Q$ such that $P\sim (1/2\delta_{z_P} +1/2\delta_{z_P'},\delta_y)$ and $Q\sim (1/2\delta_{z_Q} +1/2\delta_{z_Q'},\delta_y)$. Moreover, as $z_P,z_Q$ increases, $z_P',z_Q'$ will be decreasing continuously. Since $P,Q$ are simple, that is, $\supp(P_1)\cup\sup(Q_1)$ is finite, we can construct $z^*_P\neq z^*_Q$, $z^{*'}_P\neq z^{*'}_Q$ and  $z^*_P, z^*_Q, z^{*'}_P,z^{*'}_Q\not\in \supp(P_1)\cup\sup(Q_1)$. Denote $P'= (1/2\delta_{z^*_P} +1/2\delta_{z^{*'}_P},\delta_y)$, $Q'=(1/2\delta_{z^*_Q} +1/2\delta_{z^{*'}_Q},\delta_y)$. Then $P\sim P', Q\sim Q'$ and $P,Q,P',Q'$ are compatible with each other. Apply indifference relation (\ref{eq_thm2_e2}) twice and we get for any $\alpha\in (0,1)$,
$$\alpha P +(1-\alpha)Q\sim \alpha P + (1-\alpha)Q'\sim \alpha P' + (1-\alpha)Q'.$$
Again by \lemmaref{lemma_narrow pre} given marginal lottery in source 2 as $\delta_y$, we know 
$$P\sim Q\Longrightarrow P'\sim Q' \Longrightarrow \alpha P +(1-\alpha)Q\sim \alpha P' + (1-\alpha)Q'\sim Q'\sim Q,$$
$$P\succ Q\Longrightarrow P'\succ Q' \Longrightarrow \alpha P +(1-\alpha)Q\succ \alpha P' + (1-\alpha)Q'\succ Q'\sim Q.$$

Then we show  (iii) and (iii) in a similar way. For $P,Q,R,S\in \Phi_{2,\delta_y}$, we can construct $P'\sim P, Q'\sim Q, R'\sim R$ and $S'\sim S$ such that $P',Q',R',S'\in \hat{\cP}$, $P'_2=Q'_2=R'_2=S'_2=\delta_y$, $P,R,P',R'$ are compatible with each other and $Q,S,Q',S'$ are compatible with each other. Then for any $\alpha\in (0,1)$, 
$$\alpha P +(1-\alpha)R\sim \alpha P + (1-\alpha)R'\sim \alpha P' + (1-\alpha)R',$$
$$\alpha Q +(1-\alpha)S\sim \alpha Q + (1-\alpha)S'\sim \alpha Q' + (1-\alpha)S'.$$
By \lemmaref{lemma_narrow pre} given marginal lottery in source 2 as $\delta_y$, we know
$$P\sim Q,R\sim S \Longrightarrow \alpha P +(1-\alpha)R\sim \alpha P' +(1-\alpha)R'\sim \alpha Q' + (1-\alpha)S'\sim \alpha Q + (1-\alpha)S,$$
$$P\succ Q,R\sim S \Longrightarrow \alpha P +(1-\alpha)R\sim \alpha P' +(1-\alpha)R'\succ \alpha Q' + (1-\alpha)S'\sim \alpha Q + (1-\alpha)S.$$
\end{proof}

\begin{lemma}\label{lemma_CC_strong2}
Suppose that Axiom CN fails. For each $y\in X_2$ and $P,Q,R,S\in \cup_{y\in X_2}\Phi_{2,\delta_y}$,  the following properties hold:

i). $P\sim Q$ and $P$ is compatible with $Q$ $\Longrightarrow$ $\alpha P +(1-\alpha)Q\sim P\sim Q$ for all $\alpha\in (0,1)$;

ii). $P\succ Q$ and $P$ is compatible with $Q$ $\Longrightarrow$ $P\succ \alpha P +(1-\alpha)Q\succ Q$ for all $\alpha\in (0,1)$;

iii). $P\succ Q$, $R\sim S$, $P$ is compatible with $R$ and $Q$ is compatible with $S$ $\Longrightarrow$ $\alpha P +(1-\alpha)R\succ \alpha Q +(1-\alpha)S$ for all $\alpha\in (0,1)$;

iv). $P\sim Q$,  $R\sim S$, $P$ is compatible with $R$ and $Q$ is compatible with $S$ $\Longrightarrow$ $\alpha P +(1-\alpha)R\sim \alpha Q +(1-\alpha)S$ for all $\alpha\in (0,1)$.
\end{lemma}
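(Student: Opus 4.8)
The plan is to upgrade the single-level statement \lemmaref{lemma_CC_strong} to the union $\bigcup_{y\in X_2}\Phi_{2,\delta_y}$ by moving lotteries between levels with the cancellation relation (\ref{eq_thm2_e2}) and compatible indifferent copies, and by chaining the strict conclusions through finitely many intermediate lotteries. First I would record a copy-construction device: for any $R\in\bigcup_{y}\Phi_{2,\delta_y}$ and any finite $F\subseteq X_1$ there is $R'\sim R$ lying in the same $\Phi_{2,\delta_{y_R}}$ with $\supp(R'_1)\cap F=\emptyset$. Indeed, by \lemmaref{lemma_achievable3} we have $R\sim(\delta_{x_R},\delta_{y_R})$ with $x_R\in X_1^o$; since $\succsim_{1|\delta_{y_R}}$ has a regular index $v_{1|\delta_{y_R}}$ by \lemmaref{lemma_narrow pre} and $x_R$ is interior, I can pick a nondegenerate $r'\in\Pi^1(x_R)$ whose support avoids $F$ and set $R'=(r',\delta_{y_R})$. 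This supplies, for any fixed arm, the compatibility needed to invoke (\ref{eq_thm2_e2}).

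For the indifference clauses (i) and (iv) I would argue as follows. If $P\sim Q$ then, writing $P\sim(\delta_{x_P},\delta_{y_P})$ with $x_P\in X_1^o$, the same degenerate lottery witnesses $Q\in\Phi_{2,\delta_{y_P}}$ (bracket $Q$ strictly by $(\delta_{x^{+}},\delta_{y_P}),(\delta_{x^{-}},\delta_{y_P})\in\Gamma_{2,\delta_{y_P}}$), so $P,Q$ share the level $y_P$ and (i) is immediate from \lemmaref{lemma_CC_strong}. For (iv) the two indifferent pairs may sit at different levels; here I would not seek a common level but chain (\ref{eq_thm2_e2}) four times: replace $R$ by a compatible copy $R'$ fixing $P$, then $P$ by $Q$ fixing $R'$, then $R'$ by a compatible copy $S'\sim S$ fixing $Q$, then $S'$ by $S$ fixing $Q$. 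Each swap is between indifferent lotteries with the other arm compatible (construct $R'$ avoiding $\supp(P_1)\cup\supp(Q_1)$ and $S'$ avoiding $\supp(Q_1)$), and since (\ref{eq_thm2_e2}) holds for an arbitrary fixed arm no level-matching is needed, yielding $\alpha P+(1-\alpha)R\sim\alpha Q+(1-\alpha)S$.

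The strict clauses (ii) and (iii) are the main obstacle, because the only source of strict betweenness available is the source-$1$ expected-utility linearity inside \lemmaref{lemma_CC_strong}, which applies only when all lotteries involved share one degenerate second marginal $\delta_y$, i.e.\ lie in a single $\Phi_{2,\delta_y}$; and a strictly ranked pair $P\succ Q$ need not admit any common level, since the achievable-utility window at level $y$ is the open interval $(w(\lcone,y),w(\cone,y))$, and when $X_1$ is bounded the windows containing $V(P)$ and $V(Q)$ can be disjoint. My plan is to bypass this by inserting a finite preference chain $P=Z_0\succ Z_1\succ\cdots\succ Z_n=Q$ with utility gaps small enough that each consecutive pair $Z_k\succ Z_{k+1}$ does share a level $y_k$; on each consecutive pair I obtain the strict conclusion from \lemmaref{lemma_CC_strong} after replacing the arms by compatible copies in that level via (\ref{eq_thm2_e2}), and I combine the $n$ strict steps by transitivity of $\succ$. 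For (ii) I run this chain on the endpoints themselves to get $P\succ\alpha P+(1-\alpha)Q\succ Q$; for (iii) I first use (\ref{eq_thm2_e2}) to pass to compatible copies of $R$ and $S$ and then reduce to the same common-arm strict monotonicity handled by the chain.

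The delicate points—precisely the kind already dispatched in the proof of \lemmaref{lemma_source2_ind}—are (a) arranging each running level $y_k$ to accommodate the arm as well as the pair $Z_k,Z_{k+1}$, which forces the chain to be taken fine and, when the arm's utility is extreme relative to $[V(Q),V(P)]$, to realize the consecutive strict steps instead as same-marginal correlation variants so that Axiom CC supplies the strict ranking directly; and (b) the boundary cases in which some lottery is indifferent to an extreme $(\delta_{\lcone},\delta_y)$ or $(\delta_{\cone},\delta_y)$, which I resolve through approximating sequences using the mixture-continuity part of Axiom WC. Strictness where a weak ranking threatens to collapse to indifference is finally pinned down by combining the surjectivity of the mixture map in \lemmaref{lemma_achievable3} with \lemmaref{lemma_betweenness}, which forbids a whole mixture segment from being a single indifference class once its endpoints are strictly ranked.
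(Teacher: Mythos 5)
Your treatment of the indifference clauses is sound and in places cleaner than the paper's: the observation that $P\sim Q$ forces $P$ and $Q$ into a common $\Phi_{2,\delta_{y_P}}$ disposes of (i) immediately, and your four-swap chain through compatible indifferent copies derives (iv) directly from (\ref{eq_thm2_e2}) without the paper's inductive computation of matching mixture weights. (One small repair: the copies must be built from the conditional preference $\succsim_{1|\delta_{y_R}}$, i.e., a nondegenerate $r'$ with $CE_{v_{1|\delta_{y_R}}}(r')=x_R$, not an element of $\Pi^1(x_R)$, which is defined relative to the narrow preference at $\delta_0$; you invoke the right index but name the wrong set.)

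The strict clauses (ii) and (iii) are where your argument has a genuine gap. A chain $P=Z_0\succ Z_1\succ\cdots\succ Z_n=Q$ with consecutive elements sharing a level, combined with ``transitivity of $\succ$,'' does not locate the lottery $\alpha P+(1-\alpha)Q$ anywhere in that chain: the mixture of the two endpoints is not a mixture of consecutive chain elements, and deciding where it ranks is precisely what is to be proved. The paper closes this with two ingredients your sketch does not supply: first, the chain elements are realized as mixtures $Z_k\sim\lambda_k P+(1-\lambda_k)Q$ via \lemmaref{lemma_achievable3}; second, the re-association identity (\ref{eq_thm2_e3}), $\lambda P+(1-\lambda)Q\sim\lambda(\beta P+(1-\beta)P')+(1-\lambda)(\beta'Q+(1-\beta')Q')$ for compatible indifferent copies, which lets an arbitrary weight $\lambda$ be rewritten as a mixture of a chain element with an endpoint, after which the single-level result or the inductive hypothesis applies in each of the three ranges of $\lambda$. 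Note also that the paper must separately establish that $\lambda\mapsto\lambda P+(1-\lambda)Q$ is strictly increasing in preference (it explicitly flags that $\lambda_1>\lambda_2$ cannot be assumed a priori), and property (iii) is then derived from this monotonicity together with (iv); without (ii) your reduction of (iii) to ``common-arm strict monotonicity'' has nothing to stand on. Finally, \lemmaref{lemma_betweenness} cannot rescue strictness: it only produces a single point $\lambda^*$ near which the mixture is not locally constant in preference, so it does not exclude, say, $\alpha P+(1-\alpha)Q\sim P$ on an entire subinterval of $(0,1)$.
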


\begin{proof}[Proof of \lemmaref{lemma_CC_strong2}]
First, for any $P,Q,R,S\in \cup_{y\in X_2}\Phi_{2,\delta_y}$, we claim that there exist finitely many $z_k\in X_2, k=1,...,K$ such that $z_1<z_2<...<z_K$ and $P,Q,R,S\in \cup_{k=1}^K\Phi_{2,\delta_{z_k}}$. Choose $P^1,P^2 \in \{P,Q,R,S\}$ such that $P^1\succsim P,Q,R,S \succsim P^2$. Suppose that $P^1\in \Phi_{2,\delta_{y_1}}$ and $P^2\in \Phi_{2,\delta_{y_2}}$ with $y_1\geq y_2$. If $y_1=y_2$, then $P,Q,R,S\in \Phi_{2,\delta_{y_1}}$ and we are done.

 Now suppose that $y_1>y_2$ and by \lemmaref{lemma_narrow pre}, we can find $t,t'\in \lxone$ with $(t,\delta_{y_1})\succ P^1\succ (t',\delta_{y_1}), (t,\delta_{y_2})\succ P^2\succ (t',\delta_{y_2})$.  Notice that for each $y\in [y_2,y_1]$, $H(y):=\{P\in \cP:(t,\delta_{y})\succ P\succ (t',\delta_{y}) \}\subseteq \Phi_{2,\delta_y}$. By Axiom WC, $\{P\in \cP:P^1\succsim P\succsim P^2 \}\subseteq \cup_{y_2\leq y\leq y_1}H(y)$ and for all $y\in [y_2,y_1]$, there exists $\epsilon_y>0$ such that $H(y)\cap H(y')\neq \emptyset$ for all $y'\in [y-\epsilon_y, y+\epsilon_y]\cap [y_2,y_1]$. By Finite Cover Theorem, we can find finitely many $z_1<z_2<...<z_K\in  [y_2,y_1]$ with $[y_2,y_1]\subseteq \cup_{k=1}^K [z_k-\epsilon_{z_k}, z_k+\epsilon_{z_k}]$. This implies 
$$P,Q,R,S\in\{P\in \cP:P^1\succsim P\succsim P^2 \}\subseteq \cup_{y_2\leq y\leq y_1}H(y)=\cup_{k=1}^K H(z_k)\subseteq \cup_{k=1}^K\Phi_{2,\delta_{z_k}}.$$

Then we use induction to show that the four properties stated in the lemma hold for $P,Q,R,S\in \cup_{k=1}^K\Phi_{2,\delta_{z_k}}$. The proof idea is similar to the proof of \lemmaref{lemma_source2_union}. By \lemmaref{lemma_CC_strong}, the four properties hold if $P,Q,R,S\in \Phi_{2,\delta_{z_1}}$. Suppose by induction that they also hold if $P,Q,R,S\in \cup_{k=1}^t\Phi_{2,\delta_{z_k}}$ for some $1\leq t<K$.  By our construction of $\{z_k\}$, $\Phi_{2,\delta_{z_t}}\cap \Phi_{2,\delta_{z_{t+1}}}$ has nonempty interior. Choose $T^1,T^2\in \Phi_{2,\delta_{z_t}}\cap \Phi_{2,\delta_{z_{t+1}}}$ with $T^1\succ T^2$.  By \lemmaref{lemma_achievable3} and  \lemmaref{lemma_narrow pre}, as $P_1,Q_1,R_1,S_1$ have finite supports, we can find $p_1,p_2,q_1,q_2\in \lxone$ such that $(p_1,\delta_{z_{t+1}})\sim (q_1,\delta_{z_{t}})\sim T^1$, $(p_2,\delta_{z_{t+1}})\sim (q_2,\delta_{z_{t}})\sim T^2$ and $(p_1,\delta_{z_{t+1}}), (q_1,\delta_{z_{t}}), (p_2,\delta_{z_{t+1}}), (q_2,\delta_{z_{t}})$ are compatible with $P,Q,R,S$.

For properties (i) and (ii), suppose $P\succsim Q$, $P$ is  compatible with $Q$ and $P,Q\in \cup_{k=1}^{t+1}\Phi_{2,\delta_{z_k}}$. If $P\sim Q$, then $P,Q\in \Phi_{2,\delta_{z_k}}$ for some $k=1,...,t+1$ and hence property (i) holds by the inductive hypothesis. 

If $P\succ Q$, then it suffices to consider the case where $P\in \Phi_{2,\delta_{z_{t+1}}}\backslash (\cup_{k=1}^{t}\Phi_{2,\delta_{z_k}})$ and $Q\in (\cup_{k=1}^{t}\Phi_{2,\delta_{z_k}})\backslash \Phi_{2,\delta_{z_{t+1}}}$. This implies $P\succ T^1\succ T^2\succ Q$. By \lemmaref{lemma_achievable3}, there exist $\lambda_1\neq \lambda_2\in (0,1)$ such that $T^1\sim \lambda_1 P+(1-\lambda_1)Q$ and $T^2\sim \lambda_2 P+(1-\lambda_2)Q$. Then property (ii) holds for $\lambda=\lambda_1,\lambda_2$. 

Notice that at the moment we cannot conclude that $\lambda_1>\lambda_2$. Suppose that $\lambda_i>\lambda_{-i}$ for some $i=1,2$. By \lemmaref{lemma_achievable3}, we can find $P', Q'\in \hat{\cP}$ with $Q'\sim Q, P'\sim P$ and $P,P',Q,Q', (p_1,\delta_{z_{t+1}}), (q_1,\delta_{z_{t}}), (p_2,\delta_{z_{t+1}}), (q_2,\delta_{z_{t}})$ compatible with each other. This guarantees 
$$T^1\sim \lambda_1 P+(1-\lambda_1)Q\sim \lambda_1 P'+(1-\lambda_1)Q\sim \lambda_1 P+(1-\lambda_1)Q'\sim \lambda_1 P'+(1-\lambda_1)Q',$$
$$T^2\sim \lambda_2 P+(1-\lambda_2)Q\sim \lambda_2 P'+(1-\lambda_2)Q\sim \lambda_2 P+(1-\lambda_2)Q'\sim \lambda_2 P'+(1-\lambda_2)Q'.$$

By property (i), for all $\beta,\beta'\in (0,1)$, $\beta P+ (1-\beta)P'\sim P, \beta' Q+ (1-\beta')Q'\sim Q$. Apply indifference relation (\ref{eq_thm2_e2}) twice and we have for each $\lambda, \beta,\beta'\in (0,1)$
\begin{equation}\label{eq_thm2_e3}
    \lambda P + (1-\lambda)Q \sim \lambda (\beta P+ (1-\beta)P') + (1-\lambda)(\beta' Q+ (1-\beta')Q').
\end{equation}

For any $\lambda \in (\lambda_{-i}, \lambda_{i})$, let $\beta= 1$,  $\beta' = \frac{\lambda_i-\lambda}{\lambda_i(1-\lambda)}$, and (\ref{eq_thm2_e3}) becomes
\begin{align*}
   \lambda P + (1-\lambda)Q \sim & \frac{\lambda}{\lambda_i}( \lambda_i P + (1-\lambda_i)Q') + (1-\frac{\lambda}{\lambda_i})Q\\
\sim    &\frac{\lambda}{\lambda_i}(q_i,\delta_{z_t}) +(1-\frac{\lambda}{\lambda_i})Q 
   \end{align*}
The second indifference comes from the fact that $\lambda_i P + (1-\lambda_i)Q'\sim T^i\sim (q_i,\delta_{z_t}) $ and (\ref{eq_thm2_e2}). Then by the inductive hypothesis on $\cup_{k=1}^{t}\Phi_{2,\delta_{z_k}}$, we have $$P\succ (q_i,\delta_{z_t})\succ \lambda P + (1-\lambda)Q \sim \frac{\lambda}{\lambda_i}(q_i,\delta_{z_t}) +(1-\frac{\lambda}{\lambda_i})Q \succ Q.$$

If $\lambda>\lambda_i$, then let $\beta= \frac{\lambda-\lambda_i}{\lambda(1-\lambda_i)}$,  $\beta'=0$ and (\ref{eq_thm2_e3}) becomes
\begin{align*}
   \lambda P + (1-\lambda)Q \sim & \frac{\lambda-\lambda_i}{1-\lambda_i}P + (1-\frac{\lambda-\lambda_i}{1-\lambda_i})(\lambda_i P' + (1-\lambda_i)Q)\\
\sim    &\frac{\lambda-\lambda_i}{1-\lambda_i}P + (1-\frac{\lambda-\lambda_i}{1-\lambda_i})(p_i,\delta_{z_{t+1}})
   \end{align*}
The second indifference comes from the fact that $\lambda_i P' + (1-\lambda_i)Q\sim T^i\sim (p_i,\delta_{z_{t+1}}) $ and (\ref{eq_thm2_e2}). Then by \lemmaref{lemma_CC_strong} on $\Phi_{2,\delta_{z_{t+1}}}$, we have $$P\succ  \lambda P + (1-\lambda)Q \sim \frac{\lambda-\lambda_i}{1-\lambda_i}P + (1-\frac{\lambda-\lambda_i}{1-\lambda_i})(p_i,\delta_{z_{t+1}}) \succ (p_i,\delta_{z_{t+1}}) \succ Q.$$
A symmetric proof applies for the case where $\lambda<\lambda_{-i}$. This completes the proof for property (ii)
on $\cup_{k=1}^{t+1}\Phi_{2,\delta_{z_k}}$. 

Now consider $P,Q,R,S\in \cup_{k=1}^{t+1}\Phi_{2,\delta_{z_k}}$ where $P$ is compatible with $R$ and $Q$ is compatible with $S$.  We first suppose $P\sim Q, R\sim S$ and prove property (iv). If $P\sim R$, then the result is trivial by property (i). Without loss of generality, we assume $P\succ R$. By the inductive hypothesis, if suffices to prove the case for $P,Q\in \Phi_{2,\delta_{z_{t+1}}}\backslash (\cup_{k=1}^{t}\Phi_{2,\delta_{z_k}})$ and $R,S\in (\cup_{k=1}^{t}\Phi_{2,\delta_{z_k}})\backslash \Phi_{2,\delta_{z_{t+1}}}$. Following the proof for property (ii), we construct $T^1,T^2, (p_1,\delta_{z_{t+1}}), (q_1,\delta_{z_{t}}), (p_2,\delta_{z_{t+1}}), (q_2,\delta_{z_{t}}), P',Q',R',S'$. Concretely, these lotteries are mutually compatible and each of them is compatible with $P,Q,R,S$ such that  
$$P\sim Q\sim P'\sim Q'~,~ R\sim S\sim R'\sim S';$$
$$ T^1\sim (p_1,\delta_{t_{t+1}})\sim (q_1,\delta_{z_{t}})~,~T^2\sim (p_2,\delta_{t_{t+1}})\sim (q_2,\delta_{z_{t}}).$$
By the inductive hypothesis, we can find $\lambda, \lambda'\in (0,1)$ such that
$$\lambda P + (1-\lambda)(p_2,\delta_{z_{t+1}}) \sim T^1\sim \lambda Q + (1-\lambda)(q_2,\delta_{z_{t}}),$$
$$\lambda'(p_1,\delta_{z_{t+1}})  + (1-\lambda')R \sim T^2\sim \lambda' (q_1,\delta_{z_{t}}) + (1-\lambda')S.$$

By  \lemmaref{lemma_achievable3}, there exist $\eta,\eta'\in (0,1)$ with 
$$\eta P+ (1-\eta)R\sim T^2\sim \eta' Q + (1-\eta')S.$$
We claim that we can choose $\eta=\eta'$. To see this, notice that $(p_1,\delta_{z_{t+1}})\sim T^1\sim \lambda P + (1-\lambda)(p_2,\delta_{z_{t+1}})$, all of which are compatible with $R$, we have 
\begin{align*}
    \lambda'(p_1,\delta_{z_{t+1}})  + (1-\lambda')R\sim  \lambda\lambda'{P} +  (1-\lambda)\lambda'(p_2,\delta_{z_{t+1}}) + (1-\lambda')R\sim (p_2,\delta_{z_{t+1}}).
\end{align*}
Again, as $(p_2,\delta_{z_{t+1}})$ is compatible with both $P$ and $R$, by property (i) and (ii), it must be the case that 
$$T^2\sim (p_2,\delta_{z_{t+1}})\sim \frac{\lambda\lambda'}{\lambda\lambda'+ (1-\lambda')}P +\frac{1-\lambda'}{\lambda\lambda'+ (1-\lambda')}R.  $$
Hence we can choose $\eta =\frac{\lambda\lambda'}{\lambda\lambda'+ (1-\lambda')}:=\eta^1$. Similarly we can show that $\eta'=\eta^1$ guarantees $T^2\sim \eta' Q + (1-\eta_2)S$. 

A symmetric argument shows that there exist $\eta^2= \frac{\lambda}{\lambda+(1-\lambda)(1-\lambda')}\in (\eta^1,1)$ with 
$$\eta^{2} P+ (1-\eta^{2})R\sim T^1\sim \eta^{2} Q + (1-\eta^{2})S.$$

Now we consider $\eta$ with $\eta^{1}< \eta< \eta^{2}$. By (\ref{eq_thm2_e3}), we can set $\beta=\frac{(\eta-\eta^{1})\eta^2}{(\eta^{2}-\eta^{1})\eta}$, $\beta'= \frac{(\eta-\eta^{1})(1-\eta^2)}{(\eta^{2}-\eta^{1})(1-\eta)}$ and then
\begin{align*}
    \eta P + (1-\eta) R&\sim  \frac{\eta-\eta^{1}}{\eta^{2}-\eta^{1}}[\eta^{2} P+ (1-\eta^{2})R] + \frac{\eta^{2}-\eta}{\eta^{2}-\eta^{1}}[\eta^{1} P'+ (1-\eta^{1})R']\\
    &\sim \frac{\eta-\eta^{1}}{\eta^{2}-\eta^{1}}(p_1,\delta_{z_{t+1}}) + \frac{\eta^{2}-\eta}{\eta^{2}-\eta^{1}}(p_2, \delta_{z_{t+1}}).
\end{align*}

Similarly, 
$$\eta Q + (1-\eta) S\sim\frac{\eta-\eta^{1}}{\eta^{2}-\eta^{1}}(p_1,\delta_{z_{t+1}}) + \frac{\eta^{2}-\eta}{\eta^{2}-\eta^{1}}(p_2, \delta_{z_{t+1}}).$$
Hence $\eta P + (1-\eta) R\sim \eta Q + (1-\eta) S$ for $\eta\in (\eta^1,\eta^2)$.

Suppose that $\eta^{2}<\eta<1$. By (\ref{eq_thm2_e3}), we can set $\beta=\frac{(\eta^2-\eta^{1})\eta}{(\eta-\eta^{1})\eta^2}$, $\beta'= \frac{(\eta^2-\eta^{1})(1-\eta)}{(\eta-\eta^{1})(1-\eta^2)}$ and then
\begin{align*}
   (p_1,\delta_{z_{t+1}})\sim \eta^{2} P + (1-\eta^{2}) R&\sim \frac{\eta^{2}-\eta^{1}}{\eta-\eta^{1}}[\eta P+ (1-\eta)R] + \frac{\eta-\eta^{2}}{\eta-\eta^{1}}[\eta^{1} P'+ (1-\eta^{1})R']\\
    &\sim \frac{\eta^{2}-\eta^{1}}{\eta-\eta^{1}}[\eta P+ (1-\eta)R] + \frac{\eta-\eta^{2}}{\eta-\eta^{1}}(p_2, \delta_{z_{t+1}}).
\end{align*}

Similarly, 
\begin{align*}
   (p_1,\delta_{z_{t+1}})\sim \eta^{2} Q + (1-\eta^{2}) S\sim \frac{\eta^{2}-\eta^{1}}{\eta-\eta^{1}}[\eta Q+ (1-\eta)S] + \frac{\eta-\eta^{2}}{\eta-\eta^{1}}(q_2, \delta_{z_{t}}).
\end{align*}

We claim that $\eta P+ (1-\eta)R\in \cup_{k=1}^{t+1}\Phi_{2,\delta_{z_k}}$. To see this, note that we can construct $P',R'$ such that $P'_2=\delta_{z_{t+1}}$, $R'_2=\delta_{z_{k}}$ for some $k\leq t$ and $\eta P + (1-\eta)R\sim \eta P' + (1-\eta)R'$. By Axiom M and definition of $\Phi_{2,\delta_{z_k}}$, there exist $x,x'\in X_1$ such that $ (\delta_x,\delta_{z_{t+1}})\succ \eta P + (1-\eta)R\sim\eta P' + (1-\eta)R'\succ (\delta_{x'},\delta_{z_{k}})$. This implies $\eta P+ (1-\eta)R\in \cup_{k=1}^{t+1}\Phi_{2,\delta_{z_k}}$. Similarly we know $\eta Q+ (1-\eta)S\in \cup_{k=1}^{t+1}\Phi_{2,\delta_{z_k}}$.

If $\eta P + (1-\eta)R\succ T^2 \succ \eta Q + (1-\eta)S$ or $\eta Q + (1-\eta)S\succ T^2 \succ \eta P + (1-\eta)R$, then $ (p_1,\delta_{z_{t+1}})\succ T^2\succ  (p_1,\delta_{z_{t+1}})$, a contradiction. Hence either $\eta P + (1-\eta)R,\eta Q + (1-\eta)S\in   \Phi_{2,\delta_{z_{t+1}}}$ or $\eta P + (1-\eta)R,\eta Q + (1-\eta)S\in   \cup_{k=1}^{t}\Phi_{2,\delta_{z_k}}$. By the inductive hypothesis, as $(p_2,\delta_{z_{t+1}})\sim (q_2,\delta_{z_{t}})\in \Phi_{2,\delta_{z_t}}\cap \Phi_{2,\delta_{z_{t+1}}}$, independence properties (iii) and (iv) hold for $(\eta P + (1-\eta)R,\eta Q + (1-\eta)S,(p_2,\delta_{z_{t+1}}),  (q_2,\delta_{z_{t}}))$. Thus we must have $\eta P + (1-\eta)R\sim \eta Q + (1-\eta)S$. 

The proof for the case with $\eta\in (0,\eta^1)$ is symmetric. Therefore for all $\eta\in (0,1)$, $\eta P + (1-\eta) R \sim \eta Q+ (1-\eta) S$ and property (iv) holds on $\cup_{k=1}^{t+1}\Phi_{2,\delta_{z_k}}$.

Before proving property (iii), we claim that for each $P,Q\in \cup_{k=1}^{t+1}\Phi_{2,\delta_{z_k}}$ with $P\succ Q$, $P$ compatible with $Q$ and $1>\lambda_1>\lambda_2>0$, we have $\lambda_1 P + (1-\lambda_1)Q\succ \lambda_2 P + (1-\lambda_2)Q$. To see this, by (\ref{eq_thm2_e3}), we can find $P'\sim P$ where $P'$ is compatible with both $P$ and $Q$ such that 
\begin{align*}
\lambda_1 P+ (1-\lambda_1)Q &\sim \frac{\lambda_1-\lambda_2}{1-\lambda_2}P' + \frac{1-\lambda_1}{1-\lambda_2}[\lambda_2 P+(1-\lambda_2)Q]  \succ \lambda_2 P+(1-\lambda_2)Q
\end{align*}
The second strict preference holds since $P\sim P'\succ \lambda_2 P+ (1-\lambda_2)Q$ by property (ii).

For property (iii), suppose $P\succ Q$, $R\sim S$. If $P\succsim R\succsim Q$, then by properties (i) and (ii), $\lambda P +(1-\lambda) R\succsim R\sim S\succsim \lambda Q +(1-\lambda) S$. As $P\succ Q$, at least one of the above weak preference rankings should be strict and we are done.  Then either $P\succ Q \succ R\sim S$ or $R\sim S\succ P \succ Q$. We start with the former case.

By \lemmaref{lemma_achievable3}, we can find $\alpha\in (0,1)$ and $P',R'$ where $P',R'$ are compatible and both of them are compatible with $P,Q,R,S$ such that  $$R\sim R' \hbox{~and~} \alpha P + (1-\alpha)R'\sim P' \sim Q.$$

Then by property (iv), for any $\lambda\in (0,1)$, 
$$\lambda Q+(1-\lambda)S\sim \lambda P'+(1-\lambda)R\sim \lambda\alpha P + (1-\lambda)R + (1-\alpha)\lambda R'.$$
Since $R\sim R'$, property (i) implies that $\frac{1-\lambda}{1-\lambda\alpha}R +\frac{(1-\alpha)\lambda}{1-\lambda\alpha} R'\sim R$. By indifference relation (\ref{eq_thm2_e2}), 
$$\lambda Q+(1-\lambda)S\sim \lambda\alpha P + (1-\lambda\alpha)R \prec \lambda P+(1-\lambda)R$$
by the previous claim and $\alpha\in (0,1)$. A symmetric proof applies if $R\sim S\succ P\succ Q$. This completes the proof for property (iii) on $\cup_{k=1}^{t+1}\Phi_{2,\delta_{z_k}}$. 

By induction, the four properties hold for  $P,Q,R,S\in \cup_{k=1}^K\Phi_{2,\delta_{z_k}}$ and hence arbitrary 
 $P,Q,R,S\in \cup_{y\in X_2}\Phi_{2,\delta_{y}}$.\end{proof}

It is worthwhile to notice that $\cup_{y\in X_2}\Phi_{2,\delta_{y}}$ might not be the same as $ \cP$. The next lemmas shows that they only possibly differ in the worst and the best possible lottery. Concretely, if $\cone,\ctwo<+\infty$, then $({\cone}, {\ctwo})\in \cP \backslash  (\cup_{y\in X_2}\Phi_{2,\delta_{y}})$; if $\lcone, \lctwo>-\infty$, then $({\lcone}, {\lctwo})\in \cP \backslash  (\cup_{y\in X_2}\Phi_{2,\delta_{y}})$.

\begin{lemma}\label{lemma_extreme}
Suppose that Axiom CN fails. $\cP  \backslash (\cup_{y\in X_2}\Phi_{2,\delta_{y}})=\{(\lcone,\lctwo),({\cone}, {\ctwo})\}\cap \bR^2.$
\end{lemma}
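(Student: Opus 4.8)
The plan is to prove the two inclusions separately, the ``easy'' one being the exclusion of the two corners. For $\{(\lcone,\lctwo),(\cone,\ctwo)\}\cap\bR^2 \subseteq \cP\backslash(\cup_{y}\Phi_{2,\delta_y})$ I would use strict monotonicity: whenever $(\cone,\ctwo)\in\bR^2$, Axiom M implies that $(\delta_{\cone},\delta_{\ctwo})$ dominates, hence is strictly preferred to, every other lottery, so no $T\in\Gamma_{2,\delta_y}\subseteq\hat{\cP}$ can satisfy $T\succ(\delta_{\cone},\delta_{\ctwo})$; thus $(\cone,\ctwo)\notin\Phi_{2,\delta_y}$ for every $y$. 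Symmetrically $(\lcone,\lctwo)$ is a global minimum and lies in no $\Phi_{2,\delta_y}$. When an endpoint is infinite the corresponding corner is not a lottery in $\cP$ and is removed by intersecting with $\bR^2$, which is exactly why the statement carries the $\cap\,\bR^2$.

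For the reverse inclusion I would take an arbitrary $P\in\cP$ and, by \lemmaref{lemma_achievable3}(ii), fix $(x_1,x_2)\in X_1\times X_2$ with $P\sim(\delta_{x_1},\delta_{x_2})$. Since every degenerate lottery $(\delta_c,\delta_y)$ lies in $\Gamma_{2,\delta_y}$ (take the pair $P'=Q'=(\delta_c,\delta_y)\in\Pi^1(c)\times\{\delta_y\}$), it suffices to exhibit a single $y\in X_2$ together with $c,c'\in X_1$ such that $(\delta_c,\delta_y)\succ P\succ(\delta_{c'},\delta_y)$, which immediately gives $P\in\Phi_{2,\delta_y}$. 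If $x_1\in X_1^o$ I would simply take $y=x_2$ and pick $c>x_1>c'$ in $X_1$; strict monotonicity then yields $(\delta_c,\delta_{x_2})\succ(\delta_{x_1},\delta_{x_2})\sim P\succ(\delta_{c'},\delta_{x_2})$.

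The remaining, and main, difficulty is the boundary case $x_1\in\{\lcone,\cone\}\cap\bR$, where there is no strict room inside row $x_2$ in the first source, so one must instead move in the second source. I would treat $x_1=\cone$ (finite) as follows: the degenerate lottery $(\delta_{\cone},\delta_{x_2})$ is a global maximum only when $x_2=\ctwo$, which is precisely the excluded corner $(\cone,\ctwo)$; otherwise $x_2<\ctwo$, and for any $y'>x_2$ in $X_2$ Axiom M gives $(\delta_{\cone},\delta_{y'})\succ(\delta_{\cone},\delta_{x_2})\sim P$. To obtain a strictly worse element of the \emph{same} row, note that $(\delta_{\lcone},\delta_{x_2})\prec(\delta_{\cone},\delta_{x_2})\sim P$ by strict monotonicity in source $1$ (or trivially if $\lcone=-\infty$), so by topological continuity of $\succsim$ over product lotteries (Axiom WC) there is $c'\in X_1$ with $(\delta_{c'},\delta_{y'})\prec P$ once $y'$ is chosen close enough to $x_2$. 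Then $c=\cone$, $c'$, and $y'$ furnish the required pair in $\Gamma_{2,\delta_{y'}}$. The case $x_1=\lcone$ is symmetric, moving \emph{downward} in source $2$ and again invoking continuity, its sole obstruction being the excluded corner $(\lcone,\lctwo)$.

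Finally I would record that a global extremum strictly dominates (is strictly dominated by) every other lottery, so the only lottery indifferent to $(\cone,\ctwo)$ is $(\cone,\ctwo)$ itself, and likewise for $(\lcone,\lctwo)$; combining the two inclusions then yields the claimed equality. The essential obstacle is the boundary analysis: one must verify that the ``single common $y$'' requirement built into the definition of $\Phi_{2,\delta_y}$ can always be met except exactly at the two corners, and this is where the interplay between strict monotonicity in the two separate sources and the continuity of $\succsim$ on $\hat{\cP}$ does the work.
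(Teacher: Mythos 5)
Your proof is correct and follows essentially the same route as the paper: exclude the two corners via Axiom M (they are global extrema with singleton indifference classes), and sandwich every other lottery between two elements of a common $\Gamma_{2,\delta_{y}}$. The only difference is one of detail — the paper asserts the sandwiching in a single line, whereas you supply the certainty-equivalent reduction via \lemmaref{lemma_achievable3} and the continuity argument for the boundary cases $x_1\in\{\lcone,\cone\}\cap\bR$, which is exactly the content the paper's proof glosses over.
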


\begin{proof}[Proof of \lemmaref{lemma_extreme}]
We will focus on the case with $\lcone, \lctwo>-\infty$ and $\cone,\ctwo<+\infty$. The proof for the other case is simpler as it only involves the worst or the best possible lottery. 

First, for each $P\in\cP$ with $({\cone}, {\ctwo})\succ P\succ (\lcone,\lctwo)$, there exists $Q,Q'\in \cup_{y\in X_2}\Phi_{2,\delta_{y}}$ with $Q\succ P\succ Q'$, which implies $Q\in \cup_{y\in X_2}\Phi_{2,\delta_{y}}$. Hence 
$$\cP =(\cup_{y\in X_2}\Phi_{2,\delta_{y}})\cup \{P\in\cP: P\sim (\lcone,\lctwo) \hbox{~or~} P\sim ({\cone}, {\ctwo})\}.$$
It suffices to show that $P\sim (\lcone,\lctwo) $ if and only if $P=(\lcone,\lctwo)$,  $P\sim ({\cone}, {\ctwo}) $ if and only if $P=({\cone}, {\ctwo})$. This is trivial by Axiom M as for any $P\neq (\lcone,\lctwo), ({\cone}, {\ctwo})$, $P$ dominates $(\lcone,\lctwo)$ and is dominated by $({\cone}, {\ctwo})$. 
\end{proof}

Using the same proof as in \lemmaref{lemma_CC_strong}, we can easily show that the independence property holds for $P,Q,R,S\in \Phi_{2,\delta_{0}}\cup\{(\lcone,\lctwo)\}$ if $\lcone,\lctwo>-\infty$ or $P,Q,R,S\in \Phi_{2,\delta_{\ctwo}}\cup\{({\cone}, {\ctwo})\}$ if ${\cone}, {\ctwo}<+\infty$. Then a direct corollary of \lemmaref{lemma_extreme} follows. 

\begin{corollary}\label{coro_CCstrong}
Suppose that Axiom CN fails.  Then the following properties hold:

i). $P\sim Q$ and $P$ is compatible with $Q$ $\Longrightarrow$ $\alpha P +(1-\alpha)Q\sim P\sim Q$ for all $\alpha\in (0,1)$;

ii). $P\succ Q$ and $P$ is compatible with $Q$ $\Longrightarrow$ $P\succ \alpha P +(1-\alpha)Q\succ Q$ for all $\alpha\in (0,1)$;

iii). $P\succ Q$, $R\sim S$, $P$ is compatible with $R$ and $Q$ is compatible with $S$ $\Longrightarrow$ $\alpha P +(1-\alpha)R\succ \alpha Q +(1-\alpha)S$ for all $\alpha\in (0,1)$;

iv). $P\sim Q$, $R\sim S$, $P$ is compatible with $R$ and $Q$ is compatible with $S$ $\Longrightarrow$ $\alpha P +(1-\alpha)R\sim \alpha Q +(1-\alpha)S$ for all $\alpha\in (0,1)$.
\end{corollary}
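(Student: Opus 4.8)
The plan is to show that \cororef{coro_CCstrong} is nothing more than \lemmaref{lemma_CC_strong2} with the membership restriction ``$P,Q,R,S\in\cup_{y\in X_2}\Phi_{2,\delta_y}$'' deleted, so the only genuine work is to absorb the handful of lotteries that \lemmaref{lemma_extreme} shows are absent from $\cup_{y\in X_2}\Phi_{2,\delta_y}$. First I would dispose of the trivial geometry: if $\cone=+\infty$ or $\ctwo=+\infty$ (and symmetrically for the lower bounds) then the corresponding extreme profile lies outside $\bR^2$, so \lemmaref{lemma_extreme} gives $\cup_{y\in X_2}\Phi_{2,\delta_y}=\cP$ and the corollary is immediate from \lemmaref{lemma_CC_strong2}. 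Otherwise $\cP=\big(\cup_{y\in X_2}\Phi_{2,\delta_y}\big)\cup\{(\lcone,\lctwo),(\cone,\ctwo)\}$, so in any tuple $(P,Q,R,S)$ every lottery that is not one of the two global extremes already sits in $\cup_{y\in X_2}\Phi_{2,\delta_y}$, where all four properties hold.

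Next I would clear the cases in which an extreme lottery sits on one side of an \emph{indifference}. By Axiom M the profile $(\cone,\ctwo)$ strictly dominates, and $(\lcone,\lctwo)$ is strictly dominated by, every other lottery; hence $P\sim Q$ (or $R\sim S$) with one side extreme forces the other side to be the \emph{same} degenerate lottery. Two copies of that degenerate lottery share the source-$1$ support $\{\cone\}$ or $\{\lcone\}$ and are therefore never compatible, so the hypotheses of property (i), of property (iv), and of the indifferent pair in property (iii) become vacuous whenever they would truly involve an extreme lottery. The one substantive reduction is $P=Q$ extreme in property (iv): there the claim is exactly the indifference relation (\ref{eq_thm2_e2}), already established for an arbitrary $Q$ compatible with $R\sim S$.

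It then remains to treat the strict-preference occurrences, i.e. property (ii) and property (iii) in configurations such as $R=S=(\lcone,\lctwo)$ with $P\succ Q$ drawn from $\cup_{y}\Phi_{2,\delta_y}$. Here I would invoke the remark preceding the corollary, which establishes all four properties on the boundary-augmented sets $\Phi_{2,\delta_0}\cup\{(\lcone,\lctwo)\}$ and $\Phi_{2,\delta_{\ctwo}}\cup\{(\cone,\ctwo)\}$, and then glue these onto $\cup_{y}\Phi_{2,\delta_y}$. Since $\Phi_{2,\delta_0}$ and $\Phi_{2,\delta_{\ctwo}}$ lie inside $\cup_{y}\Phi_{2,\delta_y}$ and overlap the already-established region in a set with nonempty interior, the same union/gluing construction used in the induction of \lemmaref{lemma_CC_strong2} (equivalently \lemmaref{lemma_source2_union}) extends the independence property to $\cup_{y}\Phi_{2,\delta_y}\cup\{(\lcone,\lctwo),(\cone,\ctwo)\}=\cP$.

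The step I expect to be the main obstacle is precisely this boundary gluing. Because topological continuity fails on $\cP$ — only mixture continuity, continuity over sure gains, and topological continuity on product lotteries are in force — the extreme degenerate profiles cannot be reached by a naive weak-convergence limit from $\cup_{y}\Phi_{2,\delta_y}$. One must route through the mixture-based $\eta$-construction of \lemmaref{lemma_CC_strong2}, and the delicate bookkeeping is to verify that each mixture $\alpha P+(1-\alpha)R$ containing an extreme lottery still lands in the region where the four properties are known, all while maintaining the compatibility (disjoint source-$1$ support) requirements throughout.
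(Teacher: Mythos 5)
Your proposal is correct and follows essentially the same route as the paper: extend the four properties of \lemmaref{lemma_CC_strong} to the boundary-augmented sets $\Phi_{2,\delta_{0}}\cup\{(\lcone,\lctwo)\}$ and $\Phi_{2,\delta_{\ctwo}}\cup\{({\cone},{\ctwo})\}$ by rerunning that lemma's argument, glue them into $\cup_{y\in X_2}\Phi_{2,\delta_{y}}$ via the induction of \lemmaref{lemma_CC_strong2}, and conclude with \lemmaref{lemma_extreme}. One small imprecision: in properties (iii) and (iv) compatibility is required only across the pairs $P$--$R$ and $Q$--$S$, not within an indifferent pair, so those hypotheses do not become vacuous when the indifferent pair collapses to an extreme lottery; you effectively self-correct by going on to treat exactly those configurations (e.g.\ $R=S=(\lcone,\lctwo)$) in the gluing step.
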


We end this section by slightly relaxing the requirement of compatibility. For each $P,Q\in \cP$, we say $P$ and $Q$ are {\it weakly compatible} if the following properties hold: 
\begin{itemize}
    \item $\supp(P_1)\cap \supp(Q_1)\subseteq \{\lcone,\cone\}$;
    \item when $\lcone\in \supp(P_1)\cap \supp(Q_1)$, we have $P_{2|\lcone}=Q_{2|\lcone}=\delta_{\lctwo}$;
    \item  when $\cone\in \supp(P_1)\cap \supp(Q_1)$, we have $P_{2|{\cone}}=Q_{2|{\cone}}=\delta_{\ctwo}$.
\end{itemize}

In other words, for $P$ weakly compatible with $Q$, we allow outcome $\lcone$ or $\cone$ to be contained in the  overlapping supports of $P_1$ and $Q_1$ only if the conditional lotteries of $P$ and $Q$ given outcome $x$ are both $\delta_{\lctwo}$ or $\delta_{\ctwo}$.

\begin{lemma}\label{lemma_CC_strong3}
Suppose that Axiom CN fails.  Then the following properties hold:

i). $P\sim Q$ and $P$ is weakly compatible with $Q$ $\Longrightarrow$ $\alpha P +(1-\alpha)Q\sim P\sim Q$ for all $\alpha\in (0,1)$;

ii). $P\succ Q$ and $P$ is weakly compatible with $Q$ $\Longrightarrow$ $P\succ \alpha P +(1-\alpha)Q\succ Q$ for all $\alpha\in (0,1)$;

iii). $P\succ Q$, $R\sim S$, $P$ is weakly compatible with $R$ and $Q$ is weakly compatible with $S$ $\Longrightarrow$ $\alpha P +(1-\alpha)R\succ \alpha Q +(1-\alpha)S$ for all $\alpha\in (0,1)$;

iv). $P\sim Q$, $R\sim S$, $P$ is weakly compatible with $R$ and $Q$ is weakly compatible with $S$ $\Longrightarrow$ $\alpha P +(1-\alpha)R\sim \alpha Q +(1-\alpha)S$ for all $\alpha\in (0,1)$.
\end{lemma}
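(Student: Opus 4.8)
The plan is to derive \lemmaref{lemma_CC_strong3} from Corollary \ref{coro_CCstrong} by a perturbation-and-limit argument, exploiting that two weakly compatible lotteries can overlap only at the corner profiles $W:=(\lcone,\lctwo)$ and $B:=(\cone,\ctwo)$ (when these are finite), at which, by the definition of weak compatibility, the conditional second-source lotteries are forced to be degenerate. Thus the only obstruction to applying Corollary \ref{coro_CCstrong} is the shared mass each pair of lotteries may place on $W$ and on $B$. I would treat all four properties in parallel, since each reduces a weakly compatible configuration to a genuinely compatible one to which the corresponding part of Corollary \ref{coro_CCstrong} applies; property (iv) is representative, the others differing only by replacing the relevant indifferences with strict preferences.

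First I would dispose of the upper corner $B$. If $P$ and $R$ both place mass on $B$, I perturb the shared mass inward: let $R^\eta$ move $R$'s mass from $(\cone,\ctwo)$ to $(\cone-\eta,\ctwo)$ (and likewise for $Q,S$). For small $\eta$ the point $\cone-\eta$ avoids the finite sets $\supp(P_1),\supp(Q_1),\supp(R_1),\supp(S_1)$, so the upper-corner overlap is removed. The crucial bookkeeping identity is that shifting $R^\eta$ up by the source-$1$ sure gain $\eta$ returns the moved mass to the capped corner, giving $R^\eta * (\delta_\eta,\delta_0) = R * (\delta_\eta,\delta_0)$; since the operation $*$ commutes with mixing, $\bigl(\alpha P + (1-\alpha)R^\eta\bigr)*(\delta_\eta,\delta_0)$ and $\bigl(\alpha P + (1-\alpha)R\bigr)*(\delta_\eta,\delta_0)$ coincide. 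Hence the statement obtained from Corollary \ref{coro_CCstrong} at the (now compatible at the top) family $\{P,Q,R^\eta,S^\eta\}$ can be transported to $\{P,Q,R,S\}$ by letting $\eta\to 0$ through Axiom Continuity over Sure Gains, after re-establishing the exact indifference $R^\eta\sim S^\eta$ by a vanishing adjustment of a single source-$1$ marginal via \lemmaref{lemma_narrow pre} and \lemmaref{lemma_achievable3}.

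After this reduction only the lower corner $W$ can remain, and this is the genuine difficulty, because Axiom Continuity over Sure Gains is one-directional: it bridges a lottery to its upward shifts, but an inward perturbation at $W$ moves mass upward off $\lcone$ in a way that no further upward gain can undo. Here I would instead shift the offending lottery itself: replacing $P$ by $P_\eta := P*(\delta_\eta,\delta_0)$ moves $P$'s $W$-mass to $(\lcone+\eta,\lctwo)$ and, for generic small $\eta$, leaves $P_\eta$ compatible with $R$, since the shifted interior atoms avoid the finitely many atoms of $R$. Corollary \ref{coro_CCstrong} then applies to the compatible family, and the conclusion for $P_\eta$ is pushed to $P$ using precisely the implication in Axiom Continuity over Sure Gains, namely $P_\eta*(\text{gains})\succsim Z\ \forall\eta\Rightarrow P\succsim Z$ together with its dual. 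Throughout, I would maintain the hypothesis indifferences $P\sim Q$, $R\sim S$ up to corrections vanishing with $\eta$, using \lemmaref{lemma_narrow pre} to readjust one source-$1$ marginal and \lemmaref{lemma_achievable3}(i) to solve for the mixing weight realizing the exact intermediate indifference.

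The main obstacle, and where the argument must be carried out carefully, is exactly this asymmetry of the continuity available on $\cP$: one has only mixture continuity and the upward Continuity over Sure Gains, not full topological continuity (which indeed fails for BIB). Consequently the two corners must be cleared by two different devices — an inward perturbation bridged by the identity $R^\eta*(\delta_\eta,\delta_0)=R*(\delta_\eta,\delta_0)$ at the top, and a genuine upward shift $P\mapsto P_\eta$ at the bottom — and at each step one must check that compatibility is restored for generic small $\eta$ and that the limits preserve both the hypothesis indifferences and the strictness demanded in properties (ii) and (iii). Once both corners are cleared every configuration is compatible, and \lemmaref{lemma_CC_strong3} follows from Corollary \ref{coro_CCstrong}.
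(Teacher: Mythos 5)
There is a genuine gap in your reduction from weak compatibility to compatibility: the perturbation\--and\--limit devices you propose are not licensed by the continuity actually available on $\cP$. Axiom Continuity over Sure Gains only lets you pass from statements of the form $Z *(\delta_{\epsilon_n},\delta_{\epsilon_n'})\succsim W$ (with a \emph{fixed} comparison lottery $W$ and the \emph{whole} lottery $Z$ shifted) to $Z\succsim W$; it does not allow you to apply the operation $*$ to both sides of a preference ranking. At the top corner, Corollary \ref{coro_CCstrong} gives you $\alpha P + (1-\alpha)R^{\eta} \sim \alpha Q + (1-\alpha)S^{\eta}$, but there is no axiom that converts this into a statement about $\bigl(\alpha P+(1-\alpha)R^{\eta}\bigr)*(\delta_\eta,\delta_0)$ versus the unshifted $\alpha Q+(1-\alpha)S$ (the system has no monotonicity with respect to sure gains or first-order dominance beyond comparisons with degenerate lotteries), so your bookkeeping identity $R^{\eta}*(\delta_\eta,\delta_0)=R*(\delta_\eta,\delta_0)$ has nothing to attach to. At the bottom corner the problem is structural: to invoke Continuity over Sure Gains for $Z=\alpha P+(1-\alpha)R$ you must shift $P$ and $R$ by the \emph{same} amount, which moves both atoms from $\lcone$ to $\lcone+\eta$ and leaves the overlap intact; shifting only $P$ produces $\alpha P_\eta+(1-\alpha)R$, which is not of the form $Z*(\delta_\eta,\delta_{\eta'})$, so the axiom does not apply. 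Moreover, for properties (i) and (iv) you need exact indifferences in the limit, which would require $Z*(\text{gains})\sim W$ for every $n$ against the fixed unshifted target --- again not something Corollary \ref{coro_CCstrong} delivers for the shifted family.

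The missing idea is that the corner overlap can be removed \emph{exactly}, with no limit, by exploiting the degeneracy of the source-2 conditionals forced by weak compatibility. Since the overlapping atom at $\lcone$ (resp.\ $\cone$) carries conditional $\delta_{\lctwo}$ (resp.\ $\delta_{\ctwo}$), and the conditional preference on $\lxone\times\{\delta_{\lctwo}\}$ is EU by Axiom CI and \lemmaref{lemma_narrow pre}, one can first replace the interior part of $P$ by an indifferent two-point lottery supported on $(\delta_{\lcone+\epsilon_P},\delta_{\lctwo})$ and $(\delta_{\cone-\epsilon_P},\delta_{\ctwo})$ (using \cororef{coro_CCstrong}), and then \emph{merge} the corner atom with the nearby atom sharing the same degenerate source-2 marginal into a single product lottery $(\dw{p},\delta_{\lctwo})$ with $\dw{p}$ non-degenerate and supported away from $\lcone$ --- an exact indifference by linearity of the conditional EU representation. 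This yields $\tilde P\sim P$ compatible with $Q$, and the same regrouping argument shows $\alpha P+(1-\alpha)Q\sim\alpha\tilde P+(1-\alpha)Q$ term by term. The only limit needed is a mixture-continuity limit (in the weight $\beta$, not in the support) to handle the degenerate case where all of $P$'s mass sits on the two corners. You would need to replace both of your corner devices with this exact absorption step for the proof to close.
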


\begin{proof}[Proof of \lemmaref{lemma_CC_strong3}]
Suppose $P,Q$ are weakly compatible but not compatible, that is, $\emptyset\neq \supp(P_1)\cap\supp(Q_1)\subseteq \{\lcone,\cone\}$. We claim that we can find $\tilde{P}\sim P$ and $\tilde{Q}\sim Q$ such that $\tilde{P}$ is compatible with $\tilde{Q}$ and for any $\alpha\in (0,1)$, $\alpha P + (1-\alpha)Q\sim \alpha \tilde{P} + (1-\alpha)\tilde{Q}$, unless $P=Q=(\lcone,\lctwo)$ or $P=Q=({\cone},{\ctwo})$.

{\bf Case 1}: If $\supp(P_1)~ \cap~ \supp(Q_1)= \{\lcone,\cone\}$, then $P_{2|\lcone}=Q_{2|\lcone}=\delta_{\lctwo}$ and $P_{2|{\cone}}=Q_{2|{\cone}}=\delta_{\ctwo}$. First, we suppose that $P_1(\lcone)+P_1(\cone)< 1$ or $Q_1(\lcone)+Q_1(\cone)< 1$. By symmetry, it suffices to focus on the former case.  Denote $P^o=\sum_{x\neq \lcone,\cone}\frac{P_1(x)}{1-P_1(\lcone)-P_1(\cone)}(\delta_x,P_{2|x})$. Then 
$$P=P_1(\lcone)(\delta_{\lcone},\delta_{\lctwo}) + P_1(\cone) (\delta_{\cone},\delta_{\ctwo}) + (1-P_1(\lcone)-P_1(\cone))P^o.$$
We know $P^o$ and $Q$ are compatible and $\lcone,\cone\not\in \supp(P^o_1)$. We can similarly define $Q^o$ if $Q_1(\lcone)+ Q_1(\cone)< 1$. Otherwise, just choose an arbitrary $Q^o$  so long as $\lcone,\cone\not\in \supp(Q^o_1)$.

By Axiom M, $({\cone}, {\ctwo})\succ P^o\succ ({\lcone}, {\lctwo})$. Then we can find $\epsilon_P>0$ such that $\cone-\epsilon_P,\epsilon_P\not\in \supp(P_1)\cup\supp(Q_1)$, $\epsilon_P\neq \cone-\epsilon_P$ and 
$$({\cone-\epsilon_P},{\ctwo})\succ P^o\succ (\lcone+{\epsilon_P},\lctwo).$$

By \lemmaref{lemma_achievable3}, we can find $\lambda_P\in (0,1)$ such that $$P^o\sim \lambda_P (\delta_{\cone-\epsilon_P},\delta_{\ctwo})+ (1-\lambda_P)(\delta_{\lcone+ \epsilon_P},\delta_{\lctwo}):={P^o}'.$$
and ${P^o}'$ is compatible with $P,Q$. 

By \cororef{coro_CCstrong}, we know 
\begin{align*}
    P':= &P_1(\lcone)(\delta_{\lcone},\delta_{\lctwo}) + P_1(\cone) (\delta_{\cone},\delta_{\ctwo}) + (1-P_1(\lcone)-P_1(\cone)){P^o}' \sim P.
\end{align*}

Notice that 
\begin{align*}
    P'=& [P_1(\lcone)(\delta_{\lcone},\delta_{\lctwo})+(1-P_1(\lcone)-P_1(\cone))(1-\lambda_P)(\delta_{\lcone+ \epsilon_P},\delta_{\lctwo})] \\
    &+[P_1(\cone)(\delta_{\cone},\delta_{\ctwo})+(1-P_1(\lcone)-P_1(\cone))\lambda_P(\delta_{\cone-\epsilon_P},\delta_{\ctwo})].
\end{align*}
By \lemmaref{lemma_narrow pre} given $\delta_{\lctwo}$ or $\delta_{\ctwo}$ in source two, we can then find $\up{p},\dw{p}\in\lxone$ with $(\up{p},\delta_{\ctwo}), (\dw{p},\delta_{\lctwo}), P,P',Q,Q'$ are pairwise compatible and 
\begin{align*}
(\up{p},\delta_{\ctwo})&\sim \frac{P_1(\cone)(\delta_{\cone},\delta_{\ctwo})+ \lambda_P(1-P_1(\lcone)-P_1(\cone))(\delta_{\cone-\epsilon_P},\delta_{\ctwo})}{P_1(\cone)+\lambda_P(1-P_1(\lcone)-P_1(\cone))};\\
(\dw{p},\delta_{\lctwo})&\sim \frac{P_1(\lcone)(\delta_{\lcone},\delta_{\lctwo})+ (1-\lambda_P)(1-P_1(\lcone)-P_1(\cone))(\delta_{\lcone+ \epsilon_P},\delta_{\lctwo})}{P_1(\lcone)+(1-\lambda_P)(1-P_1(\lcone)-P_1(\cone))}.
\end{align*}
It is important to notice that $\up{p}\neq \delta_{\cone}$ and $\dw{p}\neq \delta_{\lcone}$. 

Again by \cororef{coro_CCstrong}, we have 
\begin{align*}
    \tilde{P}=& \lambda_P^*(\dw{p},\delta_{\lctwo}) + (1-\lambda_P^*)(\up{p},\delta_{\ctwo}) \sim P'\sim P,
\end{align*}
where $\lambda_P^*=P_1(\lcone)+(1-\lambda_P)(1-P_1(\lcone)-P_1(\cone))$. Easy to see that $\tilde{P}$ is compatible with $P$ and $Q$. 

We then want to show that for each $\alpha\in (0,1)$, $\alpha P + (1-\alpha)Q\sim \alpha \tilde{P} + (1-\alpha)Q$. To see this, notice that for each $\alpha\in (0,1)$,
\begin{align*}
    \alpha P + (1-\alpha)Q&= (\alpha P_1(\lcone) + (1-\alpha)Q_1(\lcone))(\delta_{\lcone},\delta_{\lctwo}) + (\alpha P_1(\cone) + (1-\alpha)Q_1(\cone)) (\delta_{\cone},\delta_{\ctwo})  \\
    &+ \alpha(1-P_1(\lcone)-P_1(\cone))P^o + (1-\alpha)(1-Q_1(\lcone)-Q_1(\cone))Q^o.
\end{align*}
Recall that $P^o\sim {P^o}'$, $P^o$ is compatible with $Q$ and $P$. Then \cororef{coro_CCstrong} implies that 
\begin{align*}
    \alpha P + (1-\alpha)Q\sim &  (\alpha P_1(\lcone) + (1-\alpha)Q_1(\lcone))(\delta_{\lcone},\delta_{\lctwo}) + (\alpha P_1(\cone) + (1-\alpha)Q_1(\cone)) (\delta_{\cone},\delta_{\ctwo})  \\
    &+ \alpha(1-P_1(\lcone)-P_1(\cone)){P^o}' + (1-\alpha)(1-Q_1(\lcone)-Q_1(\cone)){Q^o}\\
    =& \alpha P' + (1-\alpha)Q\\
    =& \alpha [P_1(\lcone)(\delta_{\lcone},\delta_{\lctwo})+(1-P_1(\lcone)-P_1(\cone))(1-\lambda_P)(\delta_{\lcone+ \epsilon_P},\delta_{\lctwo})] \\
    &+ (1-\alpha)Q_1(\lcone)(\delta_{\lcone},\delta_{\lctwo})\\
    &+\alpha[P_1(\cone)(\delta_{\cone},\delta_{\ctwo})+(1-P_1(\lcone)-P_1(\cone))\lambda_P(\delta_{\cone-\epsilon_P}, \delta_{\ctwo})]\\
    &+(1-\alpha)Q_1(\cone)(\delta_{\cone},\delta_{\ctwo})\\
    &+ (1-\alpha)(1-Q_1(\lcone)-Q_1(\cone))Q^o.
\end{align*}

Notice that the first two terms in the last equation have $\delta_{\lctwo}$ in source two, while the third and four term have $\delta_{\ctwo}$ in the source two. Apply \lemmaref{lemma_narrow pre} given $\delta_{\lctwo}$ or $\delta_{\ctwo}$ in source two and  \cororef{coro_CCstrong} sequentially, we know 
\begin{align*}
    \alpha P + (1-\alpha)Q\sim & \alpha\lambda_P^*(\dw{p},\delta_{\lctwo}) +  (1-\alpha)Q_1(\lcone)(\delta_{\lcone},\delta_{\lctwo})\\
    & +\alpha(1-\lambda_P^*)(\up{p},\delta_{\ctwo})  + (1-\alpha)Q_1(\cone)(\delta_{\cone},\delta_{\ctwo})\\
    & + (1-\alpha)(1-Q_1(\lcone)-Q_1(\cone))Q^o\\
    =& \alpha \tilde{P} + (1-\alpha){Q}.
\end{align*}

Now we suppose $P_1(\lcone)+P_1(\cone) = Q_1(\lcone)+Q_1(\cone)= 1$. As $\supp(P_1)~ \cap~ \supp(Q_1)= \{0,\cone\}$, we have $P_1(\lcone),Q_1(\lcone)\in (0,1)$. By \lemmaref{lemma_extreme}, $(\delta_{\cone},\delta_{\ctwo})\succ P,Q\succ (\delta_{\cone},\delta_{\ctwo})$. Then we can find $\tilde{P}\sim P$ such that $\lcone,\cone\not\in \supp(\tilde{P}_1)$. This implies $\tilde{P}$ is compatible with $P$ and $Q$. For any $\beta\in (0,1)$, by \cororef{coro_CCstrong}, $P\sim \beta P + (1-\beta)P':=P^{\beta}$. Clearly, $P^{\beta}(\lcone)+P^{\beta}(\cone)<1$. We can apply the previous result for $P^{\beta}$ and $Q$, that is, for each $\beta$, we can find $\tilde{P}^{\beta}\sim P^{\beta}$ with $\tilde{P}^{\beta}$ compatible with $Q$ such that for each $\alpha\in (0,1)$,  $ \alpha P^{\beta} + (1-\alpha)Q\sim  \alpha \tilde{P}^{\beta} + (1-\alpha)Q$. Again by \cororef{coro_CCstrong}, we can actually choose $\tilde{P}^{\beta}$ to be the same across all $\beta\in (0,1)$. Denote it as $\tilde{P}$. Hence, for each $\beta,\alpha\in (0,1)$,
$$\alpha \beta P +\alpha(1-\beta)P' + (1-\alpha)Q\sim \alpha \tilde{P} +  (1-\alpha) Q$$
By mixture continuity of $\sim$, let $\beta\rightarrow 1$ and we have for each $\alpha\in (0,1)$, 
$$\alpha  P  + (1-\alpha)Q\sim \alpha \tilde{P} +  (1-\alpha) Q.$$

{\bf Case 2}: If $\supp(P_1)~ \cap~ \supp(Q_1)= \{\lcone\}$, then $P_{2|\lcone}=Q_{2|\lcone}=\delta_{\lctwo}$. By our assumption, either $P_1(\lcone)<1$ or $P_2(\lcone)<1$. Without loss of generality, assume $P_1(\lcone)<1$. 

If $P_1(\cone)<1-P_1(\lcone)$ or $P_{2|\cone}\neq \delta_{\ctwo}$, denote $P^o=\sum_{x\neq \lcone}\frac{P_1(x)}{1-P_1(\lcone)}(\delta_x,P_{2|x})$. Then 
$$P=P_1(\lcone)(\delta_{\lcone},\delta_{\lctwo}) +  (1-P_1(\lcone))P^o.$$
We know $P^o$ and $Q$ are compatible and $0\not\in \supp(P^o_1)$. We can similarly define $Q^o$ if $Q_1(\lcone)< 1$. Otherwise, just choose an arbitrary $Q^o$ so long as $\lcone\not\in \supp(Q^o_1)$.

By \lemmaref{lemma_extreme}, $({\cone}, {\ctwo})\succ P^o\succ (\lcone,\lctwo)$. Then we can find $\epsilon_P>0$ such that $\cone-\epsilon_P,\epsilon_P\not\in \supp(P_1)\cup\supp(Q_1)$, $\epsilon_P\neq \cone-\epsilon_P$ and 
$$(\delta_{\cone-\epsilon_P},\delta_{\ctwo})\succ P^o\succ (\delta_{\lcone+ \epsilon_P},\delta_{\lctwo}).$$

By \lemmaref{lemma_achievable3}, we can find $\lambda_P\in (0,1)$ such that $$P^o\sim \lambda_P (\delta_{\cone-\epsilon_P},\delta_{\ctwo})+ (1-\lambda_P)(\delta_{\lcone+ \epsilon_P},\delta_{\lctwo}):={P^o}'.$$
and ${P^o}'$ is compatible with $P,Q$. 

By \cororef{coro_CCstrong}, we know 
\begin{align*}
    P':= &P_1(\lcone)(\delta_{\lcone},\delta_{\lctwo})  + (1-P_1(\lcone)){P^o}' \sim P.
\end{align*}

Notice that 
\begin{align*}
    P'=& [P_1(\lcone)(\delta_{\lcone},\delta_{\lctwo})+(1-P_1(\lcone))(1-\lambda_P)(\delta_{\lcone+ \epsilon_P},\delta_{\lctwo})] \\
    &+(1-P_1(\lcone))\lambda_P(\delta_{\cone-\epsilon_P},\delta_{\ctwo}).
\end{align*}
By \lemmaref{lemma_narrow pre} given $\delta_{\lctwo}$ or $\delta_{\ctwo}$ in source two, we can then find $\up{p},\dw{p}\in\lxone$ with $(\up{p},\delta_{\ctwo}), (\dw{p},\delta_{\lctwo}), P,P',Q,Q'$ are pairwise compatible and 
\begin{align*}
(\up{p},\delta_{\ctwo})&\sim (\delta_{\cone-\epsilon_P},\delta_{\ctwo});\\
(\dw{p},\delta_{\lctwo})&\sim \frac{P_1(\lcone)(\delta_{\lcone},\delta_{\lctwo})+ (1-\lambda_P)(1-P_1(\lcone))(\delta_{\lcone+ \epsilon_P},\delta_{\lctwo})}{P_1(\lcone)+(1-\lambda_P)(1-P_1(\lcone))}.
\end{align*}
It is important to notice that $\up{p}\neq \delta_{\cone}$ and $\dw{p}\neq \delta_{\lcone}$. 

Again by \cororef{coro_CCstrong}, we have 
\begin{align*}
    \tilde{P}=& \lambda_P^*(\dw{p},\delta_{\lctwo}) + (1-\lambda_P^*)(\up{p},\delta_{\ctwo}) \sim P'\sim P,
\end{align*}
where $\lambda_P^*=P_1(\lcone)+(1-\lambda_P)(1-P_1(\lcone))$. Easy to see that $\tilde{P}$ is compatible with $P$ and $Q$. 

We then want to show that for each $\alpha\in (0,1)$, $\alpha P + (1-\alpha)Q\sim \alpha \tilde{P} + (1-\alpha)Q$. To see this, notice that for each $\alpha\in (0,1)$,
\begin{align*}
    \alpha P + (1-\alpha)Q&= (\alpha P_1(\lcone) + (1-\alpha)Q_1(\lcone))(\delta_{\lcone},\delta_{\lctwo})   \\
    &+ \alpha(1-P_1(\lcone))P^o + (1-\alpha)(1-Q_1(\lcone))Q^o.
\end{align*}
Recall that $P^o\sim {P^o}'$, $P^o$ is compatible with $Q$ and $P$. Then \cororef{coro_CCstrong} implies that 
\begin{align*}
    \alpha P + (1-\alpha)Q\sim &  (\alpha P_1(\lcone) + (1-\alpha)Q_1(\lcone))(\delta_{\lcone},\delta_{\lctwo})  \\
    &+ \alpha(1-P_1(\lcone)){P^o}' + (1-\alpha)(1-Q_1(\lcone)){Q^o}\\
    =& \alpha P' + (1-\alpha)Q\\
    =& \alpha [P_1(\lcone)(\delta_{\lcone},\delta_{\lctwo})+(1-P_1(\lcone))(1-\lambda_P)(\delta_{\lcone+ \epsilon_P},\delta_{\lctwo})] \\
    &+ (1-\alpha)Q_1(\lcone)(\delta_{\lcone},\delta_{\lctwo})\\
    &+\alpha(1-P_1(\lcone))\lambda_P(\delta_{\cone-\epsilon_P}, \delta_{\ctwo})\\
    &+ (1-\alpha)(1-Q_1(\lcone))Q^o.
\end{align*}

 Apply \lemmaref{lemma_narrow pre} given $\delta_{\lctwo}$  in source two and  \cororef{coro_CCstrong} sequentially, we know 
\begin{align*}
    \alpha P + (1-\alpha)Q\sim & \alpha\lambda_P^*(\dw{p},\delta_{\lctwo}) +  (1-\alpha)Q_1(\lcone)(\delta_{\lcone},\delta_{\lctwo})\\
    & +\alpha(1-\lambda_P^*)(\up{p},\delta_{\ctwo}) \\
    & + (1-\alpha)(1-Q_1(\lcone))Q^o\\
    =& \alpha \tilde{P} + (1-\alpha){Q}.
\end{align*}

If $P_1(\cone)=1-P_1(\lcone)$ or $P_{2|\cone}= \delta_{\ctwo}$, then the result can be proved by the same continuity  argument in Case 1. 

{\bf Case 3}: If $\supp(P_1)~ \cap~ \supp(Q_1)= \{\cone\}$, then the proof is symmetric to the proof of Case 2 and hence omitted.  

As an intermediate summary, for each $P,Q$ weakly compatible, we can find $\tilde{P}\sim P$ and $\tilde{Q}\sim Q$ such that $\tilde{P}$ is compatible with $\tilde{Q}$ and for any $\alpha\in (0,1)$, $\alpha P + (1-\alpha)Q\sim \alpha \tilde{P} + (1-\alpha)\tilde{Q}$, unless $P=Q=(\delta_{\lcone},\delta_{\lctwo})$ or $P=Q=(\delta_{\cone},\delta_{\ctwo})$.

Now we are ready to prove the four properties. 

For (i) and (ii), if  $P=Q=(\delta_{\lcone},\delta_{\lctwo})$ or $P=Q=(\delta_{\cone},\delta_{\ctwo})$, then the result is trivial. Otherwise, there exist $\tilde{P}\sim P$ and $\tilde{Q}\sim Q$ such that for any $\alpha\in (0,1)$, 
$$P\sim Q\Longrightarrow P'\sim Q' \Longrightarrow \alpha P +(1-\alpha)Q\sim \alpha P' + (1-\alpha)Q'\sim P,$$
$$P\succ Q\Longrightarrow P'\succ Q' \Longrightarrow \alpha P +(1-\alpha)Q\succ \alpha P' + (1-\alpha)Q'\sim P.$$

For (iii) and (iv), if $P=R=(\delta_{\lcone},\delta_{\lctwo})$ or $Q=S=(\delta_{\lcone},\delta_{\lctwo})$ or $P=R=(\delta_{\cone},\delta_{\ctwo})$ or $Q=S=(\delta_{\cone},\delta_{\ctwo})$, then by \lemmaref{lemma_extreme}, the primitives of (iii) or (iv) hold only if $P=Q=R=S$, in which case the result holds trivially. By excluding those cases, we can construct $\tilde{P}\sim P$, $\tilde{Q}\sim Q$,  $\tilde{R}\sim P$ and $\tilde{S}\sim S$ such that $\tilde{P}$ is compatible with $\tilde{R}$, $\tilde{Q}$ is compatible with $\tilde{S}$  and for any $\alpha\in (0,1)$, $\alpha P + (1-\alpha)R\sim \alpha \tilde{P} + (1-\alpha)\tilde{R}$, $\alpha Q + (1-\alpha)S\sim \alpha \tilde{Q} + (1-\alpha)\tilde{S}$.

By \cororef{coro_CCstrong}, we know
$$P\sim Q,R\sim S \Longrightarrow \alpha P +(1-\alpha)R\sim\alpha \tilde{P} + (1-\alpha)\tilde{R}\sim  \alpha \tilde{Q} + (1-\alpha)\tilde{S}\sim \alpha Q + (1-\alpha)S,$$
$$P\succ Q,R\sim S \Longrightarrow \alpha P +(1-\alpha)R\sim \alpha \tilde{P} + (1-\alpha)\tilde{R}\succ  \alpha \tilde{Q} + (1-\alpha)\tilde{S}\sim \alpha Q + (1-\alpha)S.$$
This completes the proof.

\end{proof}

\bigskip

{\bf Step 3: Then we show that $\succsim$ admits a  KP-style representation.}

Recall that the general (history-dependent) KP representation in two periods is given by $V^{KP}$ as 	$$V^{KP}(d)  = \sum_{(x,p)} w(x,CE_{v_{x}}(p))d(x,p)$$
 This next lemma introduces a KP-style representation in the space of lotteries $\cP$. Notice that the difference from the BIB model is that the conditional preference in source 2 is allowed to depend on the outcome in source 1.

\begin{lemma}\label{lemma_kp}
Suppose that Axiom CN fails. Then $\succsim$ on $\cP$ admits the a representation $U$ where for each $P\in\cP$, 
$$U(P)= \sum_x \hat{u}_D(x, CE_{v_x}(P_{2|x}))P_1(x)$$
with regular $\hat{u}_D$ and $v_x$ for all $x\in X_1$. 
\end{lemma}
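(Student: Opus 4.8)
We want to prove \lemmaref{lemma_kp}: assuming Axiom CN fails, the preference $\succsim$ on the full lottery space $\cP$ admits a Kreps–Porteus–style representation
$$U(P)= \sum_x \hat{u}_D(x, CE_{v_x}(P_{2|x}))P_1(x),$$
where $\hat u_D$ and each $v_x$ are regular. This is a *history-dependent* backward-induction form: unlike BIB, the index $v_x$ used to aggregate conditional risk in source 2 is allowed to vary with the period-1 outcome $x$.

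Let me think about how I would actually build this representation.

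The natural strategy is to realize $\succsim$ as (the restriction of) a Kreps–Porteus preference on the richer space of temporal lotteries $\cD^*=\cL^0(X_1\times \cL^0(X_2))$, and then invoke the Kreps–Porteus representation theorem. The excerpt's proof sketch in Section~\ref{section_sketch} says exactly this: "by embedding the set of lotteries as a subspace of temporal lotteries in \cite{KP1978}, we can extend $\succsim$ to the set of temporal lotteries while satisfying the axioms in \cite{KP1978}." So the plan is an *embedding-and-extension* argument, and the heavy lifting done in Step 2 (Lemmas \lemmaref{lemma_CC_strong}–\lemmaref{lemma_CC_strong3} and \cororef{coro_CCstrong}) is precisely what makes the KP axioms verifiable.

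**The plan, step by step.**

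First I would set up the embedding. Each lottery $P\in\cP$ corresponds canonically to the temporal lottery $d_P$ in which \emph{all} uncertainty is dated to resolve in its own period: concretely $d_P=\sum_{x\in\supp(P_1)} P_1(x)\,\delta_{(x,P_{2|x})}$, using the decomposition of $P$ into its period-1 marginal and its profile of period-2 conditionals. The map $P\mapsto d_P$ is injective, and its image $\cP^*\subset\cD^*$ consists of the temporal lotteries with no early resolution of second-period risk. I would define a relation $\succsim^*$ on $\cP^*$ by transporting $\succsim$, and the task is to extend $\succsim^*$ to all of $\cD^*$ so that the \cite{KP1978} axioms hold.

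Second, I would verify the Kreps–Porteus axioms for the extended relation. The KP axioms split into two mixture-linearity requirements: the vNM independence axiom for mixtures \emph{in period 1} (mixing over the root node), and the vNM independence axiom \emph{within a period-2 lottery} holding the period-1 outcome fixed. The second of these is exactly Axiom CI / \lemmaref{lemma_narrow pre}, which already gives each conditional preference $\succsim_{2|x}$ an EU index $v_x$ on $\cL^0(X_2)$; this is where the history dependence $v_x$ enters. The first requirement — linearity of mixtures at the period-1 root — is precisely what \cororef{coro_CCstrong} and \lemmaref{lemma_CC_strong3} deliver: mixing two \emph{compatible} (disjoint period-1 support) lotteries behaves linearly, and the "weakly compatible" upgrade handles the boundary outcomes $\lcone,\cone$. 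For a period-1 mixture $\sum_i \pi_i\,\delta_{(x_i,q_i)}$ of degenerate-at-root temporal lotteries with distinct $x_i$, the induced lottery has exactly disjoint period-1 supports, so compatibility applies and the betweenness/independence properties (iii)–(iv) of \cororef{coro_CCstrong} translate directly into the required root-mixture independence. Continuity on $\cP^*$ comes from Axiom WC (topological continuity over product lotteries plus mixture continuity), and monotonicity from Axiom M; \lemmaref{lemma_extreme} and \lemmaref{lemma_achievable3} supply the needed best/worst elements and solvability (betweenness) to run the standard representation machinery.

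Third, once the KP axioms hold on $\cP^*$ (with a suitable extension to $\cD^*$), the Kreps–Porteus theorem yields regular functions $w=\hat u_D:X_1\times X_2\to\bR$ and a \emph{family} of second-period EU indices, which must coincide with the $v_x$ from \lemmaref{lemma_narrow pre} up to normalization. Restricting the KP representation back to $\cP^*$ and pulling it through the embedding gives exactly
$$U(P)=\sum_x \hat u_D\!\left(x,\,CE_{v_x}(P_{2|x})\right)P_1(x),$$
since on $\cP^*$ the period-2 risk conditional on $x$ is collapsed to its $v_x$-certainty equivalent before the period-1 expectation is taken.

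**Where the difficulty lies.** The routine part is quoting \cite{KP1978}; the genuine obstacle is the \emph{extension} step — showing that $\succsim^*$ on the non-mixture-closed subset $\cP^*$ extends to a well-defined KP preference on all of $\cD^*$. The set $\cP^*$ is not closed under the period-1 mixture operation of $\cD^*$ (mixing two no-early-resolution temporal lotteries with overlapping period-1 outcomes can create early resolution), which is the temporal-lottery analogue of the fact that $\hat\cP$ is not a mixture space. This is exactly why Step 2 had to grind through compatibility and weak compatibility so carefully: the linearity properties in \cororef{coro_CCstrong} are established \emph{only} for compatible (disjoint-support) mixtures, and \lemmaref{lemma_CC_strong3} is needed to reduce arbitrary pairs to compatible ones modulo indifference. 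I would therefore expect the crux of the proof to be assembling these compatibility-based linearity facts into a bona fide mixture-linear (affine) functional on the root simplex, verifying it is well-defined independent of the chosen compatible representatives, and then checking that the resulting extension inherits the KP continuity and monotonicity axioms — after which the \cite{KP1978} representation theorem closes the argument.
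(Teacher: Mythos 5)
Your outline follows the paper's route: embed $\cP$ into the space of temporal lotteries $\cD^*$ via the canonical map, extend the transported preference to all of $\cD^*$, obtain a vNM/mixture-space representation there, and then use Axiom CI (\lemmaref{lemma_narrow pre}) to rewrite the index as $\hat u_D(x,CE_{v_x}(\cdot))$. You also correctly locate the crux: the image $\hat{\cD}^*$ of the embedding is not closed under period-1 mixtures, so the compatibility-based linearity facts from \cororef{coro_CCstrong} and \lemmaref{lemma_CC_strong3} do not directly give vNM independence on $\cD^*$.

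However, you name this obstacle without resolving it, and that is precisely where the proof's one genuinely non-routine idea lives. The paper's device is an explicit \emph{spreading} map $h:\cD^*\to\hat{\cD}^*$: given a temporal lottery whose support contains several nodes $(x,p_{1}),\dots,(x,p_{t})$ sharing the same period-1 outcome, it replaces each $(x,p_i)$ by an indifferent node $(z_i,\delta_{\hat z_i})$ with a \emph{degenerate} second-period lottery and a \emph{fresh, pairwise distinct} first-period outcome $z_i$ chosen in a small interval near $x$ (solvability via \lemmaref{lemma_narrow pre} and \lemmaref{lemma_achievable3} supplies the $\hat z_i$). After spreading, the induced ordinary lotteries have pairwise disjoint period-1 supports, so \cororef{coro_CCstrong} applies; well-definedness of the extension (independence of the arbitrary choice of the $z_i$) is exactly \cororef{coro_CC_strong2}, and the extension of the independence property to mixtures of arbitrary temporal lotteries is \lemmaref{lemma_CC_strong4}. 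Without this construction, saying ``assemble the compatibility-based linearity facts into a mixture-linear functional on the root simplex'' is a restatement of the goal, not an argument. Two smaller omissions: the paper does not quote the Kreps--Porteus theorem as a black box but verifies vNM independence and mixture continuity for $\succsim^*$ on the mixture space $\cD^*$ and applies the Mixture Space Theorem directly (a cosmetic difference), and it then has to verify that $\hat u_D$ is regular --- boundedness and continuity each require a separate limiting argument using Axiom WC and Axiom M, which your sketch does not address.
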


Before proving \lemmaref{lemma_kp}, we introduce a mapping from the space of lotteries to the space of temporal lotteries. Denote $\mathcal{D}^*:=\mathcal{L}^0(X_1\times \mathcal{L}^0(X_2))$ as the set of temporal lotteries and  $\hat{\mathcal{D}}^*$ as a subset of $\mathcal{D}^*$ such that 
$$\hat{\mathcal{D}}^*:=\big\{d\in \mathcal{D}^*: d(x,p)d(x,p')=0,\forall x\in X_1, p\neq p'\in \lxtwo \big\}.$$
Notice that for $i=1,2$, $X_i$ with the standard topology is separable. By \cite{KP1978}, we know that the  $\cP$ and $\cD^*$ with weak topology can be metrizable by the Prokhorov metric. Endow $\hat{\cP}$ with the relative topology with respect to the weak topology on $\cP$ and $\hat{\cD}^*$ with the relative topology with respect to the weak topology on ${\cD}^*$.

Define a mapping $f:\cP\rightarrow \hat{\cD}^*$ as follows: for $P\in \cP$, denote $f[P]=d\in {\cD}^*$ such that for any $(x,q)\in X_1\times \mathcal{L}^0(X_2)$, $f[P](x,q) = P_1(x)$ if $q=P_{2|x}$ and $f[P](x,q) = 0$ if $q\neq P_{2|x}$. Clearly, for all $q'\neq P_{2|x}$, $f[p](x,q)=0$. Hence $f[P]\in \hat{\cD}^*$ and $f$ is well-defined. Inversely, $f^{-1}:\hat{\cD}^* \rightarrow\cP$ such that $f^{-1}[d](x,y) = \sum_{q\in \lxtwo}d(x,q)q(y)$. This is also well-defined as for each $x\in X_1$ there exists at most one $q\in \lxtwo$ with $d(x,q)>0$ for any $d\in \hat{\cD}^*$. Thus, $f$ is a bijective mapping between $\cP$ and  $\hat{\cD}^*$. It is worth noting that $f$ is not a homeomorphism as $f$ is not continuous, although $f^{-1}$ is continuous.

Now we define a binary relation $\succsim'$ on $\hat{\cD}^*$ by $d\succsim' d'$ if and only if $f^{-1}[d]\succsim f^{-1}(d')$. $\succ'$ and $\sim'$ are defined correspondingly. We have the following corollary of \lemmaref{lemma_CC_strong3}.

\begin{corollary}\label{coro_CC_strong2}
Suppose $\lambda_i>0$ for all $i$ and $\sum_{i=1}^n \lambda_i=1$.  For $d^1= \sum_{i=1}^n \lambda_i \delta_{(x_i,p_i)}$, $d^2= \sum_{i=1}^n \lambda_i \delta_{(y_i,q_i)}$ with $x_i\neq x_j, y_i\neq y_j$ for all $i\neq j$ and $ \delta_{(x_i,p_i)}\sim' \delta_{(y_i,q_i)}$ for all $i$, then $d^1\sim' d^2$. 
\end{corollary}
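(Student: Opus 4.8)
The plan is to translate the statement about temporal lotteries $d^1, d^2 \in \hat{\cD}^*$ back into a statement about the induced lotteries $f^{-1}[d^1], f^{-1}[d^2] \in \cP$, and then derive it by induction on $n$ from the four independence properties in \lemmaref{lemma_CC_strong3}. By definition of $\succsim'$ we have $\delta_{(x_i,p_i)} \sim' \delta_{(y_i,q_i)}$ precisely when $(x_i, p_i) \sim (y_i, q_i)$ as degenerate lotteries in $\cP$, i.e. $(\delta_{x_i}, p_i) \sim (\delta_{y_i}, q_i)$; and the mixtures unwind as $f^{-1}[d^1] = \sum_i \lambda_i (\delta_{x_i}, p_i)$ and $f^{-1}[d^2] = \sum_i \lambda_i (\delta_{y_i}, q_i)$, where the condition $x_i \neq x_j$ for $i \neq j$ guarantees these are genuine elements of $\cP$ with the right conditional structure. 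So the goal reduces to showing: if $P^i := (\delta_{x_i}, p_i) \sim (\delta_{y_i}, q_i) =: Q^i$ for each $i$, with the $x_i$ distinct and the $y_i$ distinct, then $\sum_i \lambda_i P^i \sim \sum_i \lambda_i Q^i$.

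First I would handle the base case $n = 2$. Here $\sum_i \lambda_i P^i = \lambda_1 P^1 + \lambda_2 P^2$; since $x_1 \neq x_2$, the lotteries $P^1$ and $P^2$ are compatible (their source-1 supports are the singletons $\{x_1\}$ and $\{x_2\}$, which are disjoint), and likewise $Q^1, Q^2$ are compatible because $y_1 \neq y_2$. Now $P^1 \sim Q^1$ and $P^2 \sim Q^2$, with $P^1$ compatible with $P^2$ and $Q^1$ compatible with $Q^2$; applying property (iv) of \lemmaref{lemma_CC_strong3} with $\alpha = \lambda_1$ gives $\lambda_1 P^1 + \lambda_2 P^2 \sim \lambda_1 Q^1 + \lambda_2 Q^2$, which is exactly $d^1 \sim' d^2$.

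For the inductive step, suppose the result holds for $n-1$ summands and write $\mu = \sum_{i=1}^{n-1} \lambda_i$, so that $\sum_{i=1}^n \lambda_i P^i = \mu \big( \sum_{i=1}^{n-1} \tfrac{\lambda_i}{\mu} P^i \big) + \lambda_n P^n$. Let $\bar P = \sum_{i=1}^{n-1} \tfrac{\lambda_i}{\mu} P^i$ and $\bar Q = \sum_{i=1}^{n-1} \tfrac{\lambda_i}{\mu} Q^i$; by the inductive hypothesis $\bar P \sim \bar Q$. The source-1 support of $\bar P$ is $\{x_1, \dots, x_{n-1}\}$, which is disjoint from $\{x_n\}$, so $\bar P$ is compatible with $P^n$, and symmetrically $\bar Q$ is compatible with $Q^n$. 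Since $\bar P \sim \bar Q$ and $P^n \sim Q^n$, property (iv) of \lemmaref{lemma_CC_strong3} applied with $\alpha = \mu$ yields $\mu \bar P + \lambda_n P^n \sim \mu \bar Q + \lambda_n Q^n$, completing the induction and hence the corollary.

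The main obstacle I anticipate is not any single algebraic manipulation but verifying that compatibility is genuinely preserved at each stage — specifically, that mixing preserves the disjoint-support structure needed to invoke property (iv), and that the distinctness hypotheses $x_i \neq x_j$, $y_i \neq y_j$ are exactly what make the $f^{-1}$-images legitimate lotteries whose conditionals $P_{2|x_i} = p_i$ recover the original temporal-lottery data. One should double-check that no $x_i$ or $y_i$ equals a bound $\lcone$ or $\cone$ in a way that would force recourse to the weaker \emph{weak compatibility} notion; if such coincidences can occur, the cleanest fix is to observe that property (iv) of \lemmaref{lemma_CC_strong3} is already stated for weakly compatible lotteries, so the same induction goes through verbatim with ``compatible'' replaced by ``weakly compatible'' at each step.
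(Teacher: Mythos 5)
Your proposal is correct and follows essentially the same route as the paper: translate back to $\cP$ via $f^{-1}$, note that distinctness of the $x_i$ (resp.\ $y_i$) makes the relevant lotteries compatible, apply property (iv) of \lemmaref{lemma_CC_strong3} for the two-term case, and induct on $n$; your write-up simply fills in the grouping details that the paper's proof leaves implicit. Your closing worry about boundary outcomes is moot here (distinct $x_i$ already give genuinely disjoint source-1 supports), but as you note the lemma's weak-compatibility formulation would cover it anyway.
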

\begin{proof}[Proof of \cororef{coro_CC_strong2}]
By \lemmaref{lemma_CC_strong3}, as $x_i\neq x_j, y_i\neq y_j$ for all $i\neq j$ and $ \delta_{(x_i,p_i)}\sim' \delta_{(y_i,q_i)}$ for all $i$, we have 
$$\frac{\lambda_1}{\lambda_1+\lambda_2} (\delta_{x_1},p_1) + \frac{\lambda_2}{\lambda_1+\lambda_2} (\delta_{x_2},p_2)\sim \frac{\lambda_1}{\lambda_1+\lambda_2} (\delta_{y_1},q_1) + \frac{\lambda_2}{\lambda_1+\lambda_2} (\delta_{y_2},q_2).$$
Then by induction, we can get 
$$\sum_{i=1}^n\lambda_i(\delta_{x_i},p_i)\sim \sum_{i=1}^n\lambda_i(\delta_{y_i},q_i).$$
By the definition of $\succsim'$ and note that $d^1=f^{-1}(\sum_{i=1}^n\lambda_i(\delta_{x_i},p_i))$, $d^2=f^{-1}(\sum_{i=1}^n\lambda_i(\delta_{y_i},q_i))$, we conclude that $d^1\sim' d^2$.
\end{proof}

Then we extend $\succsim'$ to $\succsim^*$ on the entire space of temporal lotteries $\cD^*$. For any $d\in \cD^*\backslash \hat{\cD}^*$, denote $d_1$ as the marginal lottery in source 1 and $d_{2|x}$ as the lottery over marginal lotteries conditional on outcome $x$ in source 1. Denote $\supp(d_1)= \{x_1,...,x_N\}$ with $x_1<\cdots<x_N$ and for each $k=1,...,N$, $\supp(d_{2|x_k})=\{p_{k,1},...,p_{k,t_k}\}\subseteq \lxtwo$ with $t_k\geq 1$. Since $d\not\in  \hat{\cD}^*$, there exists some $k'$ with $t_{k'}>1$. We will construct $\succsim^*$ by relating $d$ to some temporal lottery in $\hat{\cD}^*$ as follows.

{\bf Stage 1. $i=1$}. If $t_1=1$, then define $d^1$ such that for all $x\neq x_1$ and $q\in \lxtwo$, $d^1(x,q)=d(x,q)$. Note that $(\delta_x,q)\succsim (\delta_x,\delta_{\lctwo})$ if $\lctwo>-\infty$. Denote $z_{1,1}=x_1$. By \lemmaref{lemma_narrow pre}, we can find $\hat{z}_{1,1}$ with $(\delta_{x_1},\delta_{\hat{z}_{1,1}})\sim (\delta_{x_1},p_{1,1})$. Denote $d^1(x_1,\delta_{\hat{z}_{1,1}})=d(x_1, p_{1,1})$ and $d^1(x_1,q)=0$ for all $q\neq \delta_{\hat{z}_{1,1}}$. 

If $t_1>1$ and $d(\lcone,\delta_{\lctwo})>0$, that is, $x_1=\lcone$ and $\delta_{\lctwo}\in \supp(d_{2|\delta_{\lcone}})$, then $\lcone,\lctwo>-\infty$ and we can reorder lotteries in $\supp(d_{2|\delta_{\lcone}})$ such that $(\delta_{x_1},p_{1,1})\precsim (\delta_{x_1},p_{1,2})\precsim\cdots\precsim (\delta_{x_1},p_{1,t_1})$. By Axiom M, we know $(\delta_{x_1},p_{1,1})=(\delta_{\lcone},\delta_{\lctwo})\prec (\delta_{x_1},p_{1,i})$ for each $i>1$. 

By continuity of $\succsim$ on $\hat{\cP}$, we can find $\frac{x_1+x_2}{2}>\up{z}_1>\lcone=x_1$ such that $(\delta_{x_1},p_{1,2})\succ (\delta_{\up{z}_1},\delta_{\lcone})$. Also, by Axiom M, for each $i>1$ and $x_1\leq z\leq \up{z}_1$, we have $(\delta_{x_1},p_{1,i})\in \Gamma_{1,\delta_{z}}$. Define  $z_{1,1}=x_1=\lcone$ and 
$z_{1,i}= \frac{i-1}{t_1}\up{z}_1$ for all $i=2,...,t_1$. Clearly, $z_{1,1}<z_{1,2}<\cdots<z_{1,t_1}< \up{z}_1$. By \lemmaref{lemma_achievable} and \lemmaref{lemma_narrow pre}, we can find $\hat{z}_{1,i}$ for $i\geq 1$ with $(\delta_{z_{1,i}}, \delta_{\hat{z}_{1,i}})\sim (\delta_{x_1},p_{1,i})$. Then we define $d^1\in \cD^*$ such that for any $x\not\in \{z_{1,i}\}_{i=1}^{t_1}$, $q\in \lxtwo$, $d^1(x,q)=d(x,q)$, and for each $i=1,...,t_1$, $d^1(z_{1,i},\delta_{\hat{z}_{1,i}})=d(x_1,p_{1,i})$ and $d^1(z_{1,i},q)=0$ for $q\neq \delta_{\hat{z}_{1,i}}$.

If $t_1>1$ and $d(\lcone,\delta_{\lctwo})=0$, then we can apply a similar construction method by choosing $z_{1,1}>\lcone$.

{\bf Stage 2. $i\geq 2$}. Consider $x_i> x_{i-1}\geq x_1\geq \lcone$. If $t_i=1$, then define $d^i\in \cD^*$ such that for all $x\neq x_i$ and $q\in \lxtwo$, $d^i(x,q)=d^{i-1}(x,q)$. Denote $z_{i,1}=x_i$ and by  \lemmaref{lemma_narrow pre}, there exists $\hat{z}_{i,1}$ with $(\delta_{x_i}, \delta_{\hat{z}_{i,1}})\sim (\delta_{x_i}, p_{i,1})$. Define $d^i(x_i,\delta_{\hat{z}_{i,1}})= d(x_i, p_{i,1})$ and $d^i(x_i,q)=0$ for $q\neq \delta_{\hat{z}_{i,1}}$.

If instead $t_i>1$, again we assume that without loss of generality, $(\delta_{x_i},p_{i,1})\precsim \cdots\precsim (\delta_{x_i},p_{i,t_i})$. As $x_i>x_1\geq \lcone$, $(\delta_{x_i},p_{i,1})\succ (\delta_{\frac{x_i+x_{i-1}}{2}},\delta_{\lctwo}) \succ (\delta_{x_1},\delta_{\lctwo}).$  

If $p_{i,1}=\delta_{\lctwo}$, then $(\delta_{x_i},p_{i,1})\neq (\delta_{x_i},p_{i,2})$ implies that $(\delta_{x_i},p_{i,1})\prec (\delta_{x_i},p_{i,2})$.  Again, we can find  $\frac{x_i+x_{i+1}}{2}>\up{z}_i>x_i$ such that $(\delta_{x_i},p_{1,2})\succ (\delta_{\up{z}_i},\delta_{\lctwo})$. By Axiom M, for each $j>1$ and $x_i\leq z\leq \up{z}_i$, we have $(\delta_{x_i},p_{i,j})\in \Gamma_{1,\delta_{z}}$. 

Denote $z_{i,j}=x_i + \up{z}_i\frac{j-1}{t_i}$ for $j=1,2,...,t_i$. Clearly, $x_i=z_{i,1}<z_{i,2}<\cdots<z_{i,t_i}< \up{z}_i$. By \lemmaref{lemma_achievable} and \lemmaref{lemma_narrow pre}, we can find  $\hat{z}_{i,j}$ for $j\geq 1$ with $(\delta_{z_{i,j}}, \delta_{\hat{z}_{i,j}})\sim (\delta_{x_i},p_{i,j})$.  Then we define $d^i\in \cD^*$ such that for any $x\not\in \{z_{i,j}\}_{j=1}^{t_i}$, $q\in \lxtwo$, $d^i(x,q)=d^{i-1}(x,q)$, and for each $j=1,...,t_i$, $d^i(z_{i,j},\delta_{\hat{z}_{i,j}})=d(x_i,p_{i,j})$ and $d^i(z_{i,j},q)=0$ for $q\neq \delta_{\hat{z}_{i,j}}$.

The algorithm ends at $i=N<+\infty$. We know that $\supp(d^N_1)=\bigcup_{k=1}^N \{z_{k,1},\cdots, z_{k,t_k}\}$. The discussion with $i=N$ is similar to the discussion with $i=1$ as we need to consider the cases where $t_N>1$,  and $d(\cone,\delta_{\ctwo})>0$ or $d(\cone,\delta_{\ctwo})=0$. For each $z_{k,i}\in \supp(d_1^N)$, we have $\hat{z}_{k,i}$ with $(\delta_{z_{k,i}}, \delta_{\hat{z}_{k,i}})\sim (\delta_{x_k},p_{k,i})$ and $d^k(z_{k,i}, \delta_{\hat{z}_{k,i}})=d(x_k,p_{k,i})$ for all  $1\leq i\leq t_k$ and $1\leq k\leq N$.  Also, $\{z_{k,i}\}$ admits a lexicographic order, that is, $z_{k,i}<z_{k',i'}$ if $i<i'$ or $i=i'$ and $j<j'$. This implies $d^N\in \hat{\cD}^*$. In this way, we have defined a mapping $h:\cD^*\backslash \hat{\cD}^*\rightarrow \hat{\cD}^*$ where $h(d)=d^N$. When there is no confusion, we can also use the same technique to derive $h(d)$ for $d\in \hat{\cD}^*$. Although it might be the case that $h(d)\neq d$, we must have $h(d)\sim' d$ as is shown in the next paragraph. Easy to show that $h(h(d))=h(d)$ for all $d\in \cD^*$.

Now we can define $\succsim^*$ on ${\cD}^*$ such that $\succsim^*$ agrees with $\succsim'$ on $\hat{\cD}^*$ and $d\sim^* h(d)$ for $d\in \cD^*\backslash \hat{\cD}^*$. Then we know $d\succsim^*d'$ if and only if $h(d)\succsim'h(d')$ for all $d,d'\in \cD^*$. To verify that $\succsim^*$ is well-defined, we need to argue that the arbitrary choice of $\{z_{k,i}\}$ does not affect the definition of $\succsim^*$. Consider two constructions $h$ and $\hat{h}$. For each $d=\sum_{k,i}d(x_k,p_{k,i})\delta_{(x_k,p_{k,i})}$, $h(d) = \sum_{k,i}d(x_k,p_{k,i})\delta_{(z_{k,i},\delta_{\hat{z}_{k,i}})}$ and $\hat{h}(d) = \sum_{k,i}d(x_k,p_{k,i})\delta_{(z'_{k,i},\delta_{\hat{z}'_{k,i}})}$ such that $z_{k,i}\neq z_{k',i'}$, $z'_{k,i}\neq z'_{k',i'}$ for all $(k,i)\neq (k',i')$ and for all $k,i$, $(\delta_{z'_{k,i}}, \delta_{\hat{z}'_{k,i}})\sim (\delta_{z_{k,i}}, \delta_{\hat{z}_{k,i}})\sim (x_k,p_{k,i})$. By \cororef{coro_CC_strong2}, $h(d)\sim' \hat{h}(d)$. Hence the definition of $\succsim^*$ is not affected by the specific construction of  $h$.

The next lemma extends \cororef{coro_CC_strong2} to temporal lotteries in $\cD^*\backslash \hat{\cD}^*$.

\begin{lemma}\label{lemma_CC_strong4}
Suppose $\lambda_i>0$ for all $i$ and $\sum_{i=1}^n \lambda_i=1$.  For $d^1= \sum_{i=1}^n \lambda_i \delta_{(x_i,p_i)}$, $d^2= \sum_{i=1}^n \lambda_i \delta_{(y_i,q_i)}$ with  $ \delta_{(x_i,p_i)}\sim' \delta_{(y_i,q_i)}$ for all $i$, then $d^1\sim^* d^2$. 
\end{lemma}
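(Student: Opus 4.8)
The plan is to reduce the statement to Corollary \ref{coro_CC_strong2}, which already establishes the claim for temporal lotteries lying in $\hat{\cD}^*$ (pairwise distinct first-period outcomes on each side), by replacing each $d^k$ with a carefully chosen ``parallel spreading'' inside $\hat{\cD}^*$ and then checking that this replacement preserves $\sim^*$.

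First I would construct the spread representatives. Recall that $\delta_{(x_i,p_i)}\sim' \delta_{(y_i,q_i)}$ means exactly $(\delta_{x_i},p_i)\sim (\delta_{y_i},q_i)$ in $\cP$. Using the same device as in the construction of $h$ --- namely Lemma \ref{lemma_achievable}(ii) together with Lemma \ref{lemma_narrow pre}, which for each atom furnish an entire interval of first-period outcomes $z$ admitting a degenerate representative $(\delta_z,\delta_{\hat z})\sim (\delta_{x_i},p_i)$ --- and the finiteness of the index set, I would pick outcomes $z_1,\dots,z_n$ that are pairwise distinct (and distinct from all other first-period outcomes in play), with the sole exception that any atom equal to an extreme lottery $(\lcone,\delta_{\lctwo})$ or $(\cone,\delta_{\ctwo})$ must stay pinned there. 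Set $\tilde d^1=\sum_i \lambda_i\delta_{(z_i,\delta_{\hat z_i})}$ and, symmetrically, $\tilde d^2=\sum_i\lambda_i\delta_{(w_i,\delta_{\hat w_i})}$ with the $w_i$ pairwise distinct and $(\delta_{w_i},\delta_{\hat w_i})\sim (\delta_{y_i},q_i)$; both lie in $\hat{\cD}^*$. By transitivity of $\succsim$ we get $(\delta_{z_i},\delta_{\hat z_i})\sim(\delta_{w_i},\delta_{\hat w_i})$, that is $\delta_{(z_i,\delta_{\hat z_i})}\sim'\delta_{(w_i,\delta_{\hat w_i})}$ for every $i$, so Corollary \ref{coro_CC_strong2} yields $\tilde d^1\sim'\tilde d^2$, hence $\tilde d^1\sim^*\tilde d^2$.

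It then remains to show $d^1\sim^*\tilde d^1$ and $d^2\sim^*\tilde d^2$; by the definition of $\succsim^*$ the former is equivalent to $h(d^1)\sim'\tilde d^1$. Passing through $f^{-1}$, both $f^{-1}[h(d^1)]$ and $f^{-1}[\tilde d^1]$ are mixtures of mutually (weakly) compatible degenerate lotteries. The only discrepancy is that $h$ merges atoms of $d^1$ sharing the same pair $(x_i,p_i)$ into a single spread point, whereas $\tilde d^1$ keeps them separate; for each distinct non-extreme pair $(a,\pi)$ the corresponding split block --- the convex combination, with weights proportional to the $\lambda_i$ over the indices $i$ with $(x_i,p_i)=(a,\pi)$, of the mutually compatible lotteries $(\delta_{z_i},\delta_{\hat z_i})$, all indifferent to the single merged point --- is indifferent to that merged point by Corollary \ref{coro_CCstrong}(i), while extreme atoms are pinned identically on both sides. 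Substituting these indifferent, compatible blocks one at a time via Corollary \ref{coro_CCstrong}(iv), and invoking weak compatibility (Lemma \ref{lemma_CC_strong3}) for blocks touching the extreme lotteries, gives $f^{-1}[h(d^1)]\sim f^{-1}[\tilde d^1]$, hence $d^1\sim^*\tilde d^1$; the argument for $d^2$ is identical. Transitivity of $\succsim^*$ then delivers $d^1\sim^*\tilde d^1\sim^*\tilde d^2\sim^* d^2$.

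The main obstacle is precisely this merge/split reconciliation in the last paragraph. Because $h$ groups a temporal lottery by its distinct first-period outcomes, $h(d^1)$ and $h(d^2)$ need not have matching atom counts or coefficients once some atoms of $d^1$ coincide exactly while the corresponding atoms of $d^2$ coincide only up to indifference, so Corollary \ref{coro_CC_strong2} cannot be applied to the pair $(h(d^1),h(d^2))$ directly. Building the aligned, non-merged representatives $\tilde d^1,\tilde d^2$ and then showing, through the compatibility and weak-compatibility independence properties, that splitting or merging compatible indifferent atoms leaves the $\succsim^*$-ranking unchanged is what makes the reduction go through; the delicate point is the boundary, where atoms at $(\lcone,\delta_{\lctwo})$ or $(\cone,\delta_{\ctwo})$ cannot be separated and must be matched on both sides using Lemma \ref{lemma_extreme}.
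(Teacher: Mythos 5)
Your proposal is correct and follows essentially the same route as the paper's proof: spread each atom to a distinct first-period outcome via Lemma \ref{lemma_achievable}/\ref{lemma_narrow pre}, apply Corollary \ref{coro_CC_strong2} to the fully separated representatives, and reconcile the merge performed by $h$ with the split representative by substituting mutually compatible indifferent blocks using Corollary \ref{coro_CCstrong} and Lemma \ref{lemma_CC_strong3}, with the extreme atoms $(\delta_{\lcone},\delta_{\lctwo})$ and $(\delta_{\cone},\delta_{\ctwo})$ pinned and handled separately. The only cosmetic difference is that the paper first isolates the all-distinct case and then strips off the boundary atoms explicitly, whereas you fold both into a single construction.
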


\begin{proof}[Proof of \lemmaref{lemma_CC_strong4}]
We first prove the result for the case that $(x_i,p_i)\neq (x_j,p_j), (y_i,q_i)\neq (y_j,q_j)$ for all $i\neq j$. Notice that $h(d^1)=\sum_{i=1}^n \lambda_i\delta_{(z^1_i,\delta_{\hat{z}^1_i})}$ with $z^1_i\neq z^1_j$ for $i\neq j$ and $(\delta_{z^1_i}, \delta_{\hat{z}^1_i})\sim (\delta_{x_i},p_i)$ for each $i=1,...,n$. Similarly, $h(d^2) = \sum_{i=1}^n \lambda_i\delta_{(z^2_i,\delta_{\hat{z}^2_i})}$ with $z^2_i\neq z^2_j$ for $i\neq j$ and $(\delta_{z^2_i}, \delta_{\hat{z}^2_i})\sim (\delta_{x_i},p_i)$ for each $i=1,...,n$. We know that $h(d^1)\sim^* d^1$ and $h(d^2)\sim^* d^2$. By \cororef{coro_CC_strong2}, we have $h(d^1)\sim^* h(d^2)$ and hence $d^1\sim^*d^2$. 

Now we consider the general case. If $i=1$, then the result is trivial. Suppose $i\geq 2$. 

First, we can reorder the subscripts so that $(x_i,p_i)=(\delta_{\lcone},\delta_{\lctwo})$ for $i\leq k$ for some $k\geq 0$ and $(x_i,p_i)\succ (\delta_{\lcone},\delta_{\lctwo})$ for $i>k$. Since $(\delta_{x_i},p_i)\sim (\delta_{y_i},q_i)$, we know that $(y_i,q_i)=(\delta_{\lcone},\delta_{\lctwo})$ for $i\leq k$ and $(y_i,q_i)\succ (\delta_{\lcone},\delta_{\lctwo})$ for $i>k$. Then we can write 
$$d^1=[\sum_{i=1}^k\lambda_i](\delta_{\lcone},\delta_{\lctwo}) + \sum_{i=k+1}^n \lambda_i \delta_{(x_i,p_i)}~,~d^2=[\sum_{i=1}^k\lambda_i](\delta_{\lcone},\delta_{\lctwo}) + \sum_{i=k+1}^n \lambda_i \delta_{(y_i,q_i)}.$$
This implies that we can assume that $(\delta_{x_i},p_i)\neq (\delta_{\lcone},\delta_{\lctwo})$ for all $i\geq 2$. By a similar argument, we can assume $(\delta_{x_i},p_i)\neq (\delta_{\cone},\delta_{\ctwo})$ for all $i< n$.

Without loss of generality, we can further assume $\cone>x_i>\lcone$ for all $2\leq i\leq n-1$ as we can always replace $(\delta_{\lcone},p_i)\neq (\delta_{\lcone},\delta_{\lctwo})$ with $(\delta_{a},q)$ for some $a>\lcone$, and  $(\delta_{\cone},p_i)\neq (\delta_{\cone},\delta_{\ctwo})$ with $(\delta_{b},q)$ for some $\lcone<a,b<\cone$ without changing the preference ranking of $d^1$. 

i). Suppose that $(\delta_{x_1},p_1)\succ (\delta_{\lcone},\delta_{\lctwo})$ and $(\delta_{x_n},p_n)\prec (\delta_{\cone},\delta_{\ctwo})$. By reordering, there exists a partition of $\{1,...,n\}$ as $\{1,...,t_1\}, \cdots, \{t_{k-1}+1,...,n\}$ such that $(x_i,p_i)=(x_j,p_j)$ for all $t_l+1\leq i,j\leq t_{l+1}$ with $0\leq l\leq k-1$ and $t_0=0, t_k=n$. 

For $l=0$, that is, $1\leq i\leq t_1$, by continuity on $\hat{\cP}$ and \lemmaref{lemma_achievable}, we can construct $z_i>\lcone$, $\hat{z}_i\geq \lctwo$ with $(\delta_{z_i},\delta_{\hat{z}_i})\sim (\delta_{x_i},p_i)= (\delta_{x_1},p_1)$ for all $i=1,...,t_1$ and $z_i\neq z_j$ for all $i\neq j$. By applying \lemmaref{lemma_CC_strong3} repeatedly,  we derive
$$\sum_{i=1}^{t_1} \frac{\lambda_i}{\sum_{j=1}^{t_1}\lambda_j}(\delta_{z_i}, \delta_{\hat{z}_i})\sim (\delta_{x_1}, p_1). $$
The same result holds for $l=1,...,k-1$.

Now recall that $d^1= \sum_{i=1}^n \lambda_i \delta_{(x_i,p_i)}=\sum_{l=0}^{k-1}(\sum_{i=t_l+1}^{t_{l+1}}\lambda_i)\delta_{(x_i,p_i)}$.  By definition of $h$, we can find $h(d^1)\sim^* d^1$ with $h(d_1)\in \hat{\cD}^*$. Denote $h(d^1) = \sum_{l=0}^{k-1}\hat{\lambda}_{l+1}\delta_{(x'_{l+1}, \hat{x}'_{l+1})}$, where $\hat{\lambda}_{l+1}= \sum_{i=t_l+1}^{t_{l+1}}\lambda_i$ and $(\delta_{x'_{l+1}}, \delta_{\hat{x}'_{l+1}})\sim (\delta_{x_{t_l+1}}, p_{t_{l}+1})\sim \sum_{i=t_l+1}^{t_{l+1}} \frac{\lambda_i}{\sum_{j=t_l+1}^{t_{l+1}}\lambda_j}(\delta_{z_i}, \delta_{\hat{z}_i})$ for each $l$. Denote $R_{l+1}=\sum_{i=t_l+1}^{t_{l+1}} \frac{\lambda_i}{\sum_{j=t_l+1}^{t_{l+1}}\lambda_j}(\delta_{z_i}, \delta_{\hat{z}_i})$. Note that $R_l$ and $R_{l'}$ are compatible, $(\delta_{x'_{l}}, \delta_{\hat{x}'_{l}})$ and $(\delta_{x'_{l'}}, \delta_{\hat{x}'_{l'}})$ are compatible for all $l\neq l'$. By \lemmaref{lemma_CC_strong3} and the definition of $\succsim'$, 
$$d^1\sim^* h(d^1)= \sum_{l=1}^k \hat{\lambda}_l \delta_{( x'_{l}, \delta_{\hat{x}'_{l}})}\sim' \sum_{l=1}^k h( \hat{\lambda}_l R_l)=\sum_{i=1}^n \lambda_i \delta_{(z_i,\delta_{\hat{z}_i})}\in \hat{\cD}^*.$$

Similarly, we can find $z'_i,\hat{z}'_i$ for $i=1,...,n$ such that $z'_i\neq z'_j$ for all $i\neq j$, $(\delta_{z'_i}, \delta_{\hat{z}'_i})\sim (\delta_{y_i},q_i)$ for each $i$ and $$d^2\sim^* \sum_{i=1}^n \lambda_i \delta_{(z'_i,\delta_{\hat{z}'_i})}\in \hat{\cD}^*.$$
By \cororef{coro_CC_strong2}, we have 
$$d^2\sim^* \sum_{i=1}^n \lambda_i \delta_{(z'_i,\delta_{\hat{z}'_i})}\sim^* \sum_{i=1}^n \lambda_i \delta_{(z_i,\delta_{\hat{z}_i})}\sim^* d^1.$$

(ii). Now we turn to the case where $(\delta_{x_1},p_1)= (\delta_{\lcone},\delta_{\lctwo})$ or $(\delta_{x_n,p_n})= (\delta_{\cone},\delta_{\ctwo})$ or both. This implies $(\delta_{y_1},q_1)=(\delta_{\lcone},\delta_{\lctwo})$ or $(\delta_{y_n,q_n})= (\delta_{\cone},\delta_{\ctwo})$ or both. If $\lambda_1+\lambda_n=1$, then the result is trivial as $n=2$ and we are back to the special case where $(x_i,p_i)\neq (x_j,p_j), (y_i,q_i)\neq (y_j,q_j)$ for all $i\neq j$.

 Recall that $(\delta_{\cone},\delta_{\ctwo})\succ (\delta_{x_i},p_i)\succ (\delta_{\lcone},\delta_{\lctwo}), (\delta_{\cone},\delta_{\ctwo})\succ (\delta_{y_i},q_i)\succ (\delta_{\lcone},\delta_{\lctwo}) $ for all $2\leq i\leq n-2$. Define 
$$\hat{d}^1= \frac{1}{1-\lambda_1-\lambda_n}\sum_{i=2}^n\lambda_i\delta_{(x_i,p_i)}~,~\hat{d}^2= \frac{1}{1-\lambda_1-\lambda_n}\sum_{i=2}^n\lambda_i\delta_{(y_i,q_i)}. $$
Then we are back to case (i) and $\hat{d}^1\sim^* \hat{d}^2.$ 
By \lemmaref{lemma_CC_strong3} and the definition of $\succsim'$, we know $h(d^1)=\lambda_1 \delta_{(0,\delta_0)}+ \lambda_n \delta_{(\cone,\delta_{\ctwo})}+ (1-\lambda_1-\lambda_n)h(\hat{d}^1)\sim \lambda_1 \delta_{(0,\delta_0)}+ \lambda_n \delta_{(\cone,\delta_{\ctwo})}+ (1-\lambda_1-\lambda_n)h(\hat{d}^2)=h(d^2)$. Thus, by definition of $\succsim^*$, we conclude that $d^1\sim^*d^2$. \end{proof}

As a summary, we have defined a preference relation $\succsim^*$ on ${\cD}^*$, which is a mixture space. For $d,d'\in {\cD}^*$ and $\alpha\in (0,1)$, we define the $\alpha$-mixture of $d$ and $d'$ as 
$$[\alpha d + (1-\alpha) d'](x,q) = \alpha d(x,q) + (1-\alpha)d'(x,q), \forall~\alpha\in (0,1), (x,q)\in X_1\times \lxtwo.$$

We claim that $\succsim^*$ satisfies the vNM independence property and mixture continuity. 

For the independence property, fix $d,d''\in \cD^*$ and $\alpha \in (0,1)$. Denote $d=\sum_{i=1}^n \lambda_i\delta_{(x_i,p_i)}$ and $d''=\sum_{j=1}^m \eta_j\delta_{(y_j,q_j)}$. As $\supp(d_1)\cup\supp(d''_1)$ is finite, for any $i$ such that $(x_i,p_i)\neq (\delta_{\lcone},\delta_{\lctwo})$ and $ (\delta_{\cone},\delta_{\ctwo})$, we can find $(\delta_{z_i^d}, \delta_{\hat{z}_i^d})\sim (\delta_{x_i},p_i)$; for any $j$ such that $(y_j,q_j)\neq (\delta_{\lcone},\delta_{\lctwo}) $ and $ (\delta_{\cone},\delta_{\ctwo})$, we can find $(\delta_{z_j^{d''}}, \delta_{\hat{z}_j^{d''}})\sim (\delta_{y_j},q_j)$. Moreover, we require that $\cone>z_i^d,z_j^{d''}>\lcone$ and $z_i^d\neq z_j^{d''}$  for all $i,j$. For any $i,j$ with $(x_i,p_i)=(\delta_{\lcone},\delta_{\lctwo})$ or $(y_j,q_j)=(\delta_{\lcone},\delta_{\lctwo})$, denote $z_i^d=z_j^{d''}=\lcone$ and $\hat{z}_i^d=\hat{z}_j^{d''}=\lctwo$. For any $i,j$ with $(x_i,p_i)=(\delta_{\cone},\delta_{\ctwo})$ or $(y_j,q_j)=(\delta_{\cone},\delta_{\ctwo})$, denote $z_i^d=z_j^{d''}=\cone$ and $\hat{z}_i^d=\hat{z}_j^{d''}=\ctwo$.

By  \lemmaref{lemma_CC_strong4}, we know 
$$\alpha d +(1-\alpha)d'' \sim^* \alpha \hat{d} +(1-\alpha)\hat{d}''.$$
where $\hat{d}= \sum_{i=1}^n \lambda_i \delta_{(z_i^d, \delta_{\hat{z}_i^d})}\sim^* d$
and $\hat{d}''= \sum_{j=1}^m\eta_j \delta_{(z_i^{d''}, \delta_{\hat{z}_i^{d''}})}\sim^* d''$. As $\hat{d},\hat{d}''\in \hat{\cD}^*$, we can denote $P=f^{-1}[\hat{d}]$ and $R=f^{-1}[\hat{d}'']$. Easy to see that $P$ and $R$ are weakly compatible, and  $\alpha d +(1-\alpha)d''\sim^* \alpha f(P) +(1-\alpha)f(R)$ for any $\alpha\in (0,1)$.

Now we consider $d'$ with $d\succ^* d'$. Using the same argument, we can find $Q$ and $S$ such that $d'\sim^*f(Q)$, $d''\sim^* f(S)$, $Q$ and $S$ are weakly compatible and $\alpha d' +(1-\alpha)d''\sim^* \alpha f(Q) +(1-\alpha)f(S)$ for any $\alpha\in (0,1)$. Note that $d\succ^*d'$ if and only if $P\succ Q$ and $d''=d''$ implies that $R\sim S$. By \lemmaref{lemma_CC_strong3}, we know $\alpha P +(1-\alpha)R\succ \alpha Q +(1-\alpha)S$ for any $\alpha\in (0,1)$. It is easy to verify that $f(\alpha P+(1-\alpha R))= \alpha f(P) + (1-\alpha)f(R)$ and $f(\alpha Q+(1-\alpha S))= \alpha f(Q) + (1-\alpha)f(S)$ since $P$ and $R$ are weakly compatible and  $Q$ and $S$ are weakly compatible. Thus for each $\alpha\in (0,1)$,
\begin{align*}
    &\alpha P +(1-\alpha)R\succ \alpha Q +(1-\alpha)S\\
    \Longleftrightarrow & f(\alpha P +(1-\alpha)R) \succ^* f(\alpha Q +(1-\alpha)S)\\
      \Longleftrightarrow & \alpha f(P) + (1-\alpha)f(R) \succ^* \alpha f(Q) + (1-\alpha)f(S)\\
       \Longleftrightarrow & \alpha d + (1-\alpha)d'' \succ^* \alpha d' + (1-\alpha)d''.
\end{align*}
Hence $\succsim^*$ satisfies the vNM independence property on $\cD^*$. 

Next we show the mixture continuity of $\succsim^*$ on $\cD^*$.  For any $d,d',d''\in \cD^*$, by the above proof for independence, we can find $P,Q,R\in \cP$ such that $f(P)\sim^* d, f(Q)\sim^*d', f(R)\sim^*d''$ and for each $\alpha\in (0,1)$, $\alpha P +(1-\alpha)Q\in \cP$, $f(\alpha P + (1-\alpha)Q)= \alpha f(P) + (1-\alpha)f(Q)$ and $\alpha d + (1-\alpha)d'\sim^* \alpha f(P) +(1-\alpha)f(Q)$. Then 
\begin{align*}
   A & = \big\{ \alpha\in[0,1]: \alpha d + (1-\alpha)d'\succ^* d'' \big\}\\
   & = \big\{ \alpha\in[0,1]: \alpha f(P) + (1-\alpha)f(Q)\succ^* f(R) \big\}\\
      & = \big\{ \alpha\in[0,1]: f(\alpha P + (1-\alpha)Q)\succ^* f(R) \big\}\\
      &= \big\{ \alpha\in[0,1]: \alpha P + (1-\alpha)Q\succ R \big\}.
\end{align*}
By mixture continuity of $\succsim$ on $\cP$, we know $A$ is open in $[0,1]$. Similarly, $\big\{ \alpha\in[0,1]: \alpha d + (1-\alpha)d'\prec^* d'' \big\}$ is also open in $[0,1]$. Thus, $\succsim^*$ satisfies mixture continuity on $\cD^*$. 

We are now prepared to finish the proof of \lemmaref{lemma_kp}.

\begin{proof}[Proof of \lemmaref{lemma_kp}]
Since $\cD^*$ is a mixture space  and the preference relation $\succsim^*$ satisfies mixture continuity and the independence axiom, by the Mixture Space Theorem, $\succsim^*$ on $\cD^*$ admits an EU representation $U$ with a utility index $w_D:X_1\times \lxtwo \rightarrow \bR$. That is,  the expected utility of $d\in \cD^*$ is given by 
$$U(d)=\sum_{x,p}w_D(x,p)d(x,p).$$
We also know that $w_D$ is unique up to a positive affine transformation. 

Recall that $\succsim^*$ extends $\succsim'$ from $\hat{\cD}^*$ to $\cD^*$ and $d\succsim' d'$ if and only if $f^{-1}(d)\succsim f^{-1}(d')$ for all $d,d'\in \hat{\cD}^*$. Then the utility function 
$$V(P)=\sum_x P_1(x)w_D(x,P_{2|x}),\forall~P\in \cP$$
represents $\succsim$ on $\cP$. 

Then we derive the implications of Axiom CI. Recall that $\succsim$ restricted to $\{\delta_x\}\times \lxtwo$ for each $x\in X_1$ admits an EU representation with some regular utility index $v_x$. Then by uniqueness up to a positive affine transformation, there exists a continuous and monotone function $\phi_x$ such that for all $p\in \lxtwo$, 
$$V(\delta_x,p)=w_D(x,p) = \phi_x(CE_{v_x}(p))$$
Define $\hat{u}_D:X_1\times X_2\rightarrow \bR$ as $\hat{w}(x,y)= \phi_x(y)$ for all $(x,y)\in X$. Then the representation can be rewritten as 
$$V(P)=\sum_x\hat{u}_D(x,CE_{v_x}(P_{2|x})) P_1(x),\forall~P\in \cP$$
where $v_x$ is a regular function for each $x\in X_1$. This is exactly the functional form stated in \lemmaref{lemma_kp}. The final step is to verify that $\hat{u}_D$ is a regular function. 

Monotonicity can be guaranteed by Axiom M. WLOG, let $\hat{u}_D(0,0)=0$. To see why $\hat{u}_D$ is bounded, notice that $\succsim$ satisfies mixture continuity on $\hat{\cP}$. Suppose by contradiction that $\hat{u}_D$ is unbounded from above. Then $\cone >0$.  Denote $\hat{u}_D(\cone/2,0)=a>0$. For any $n$, we can find $z_n>z_{n-1}>0$ and $z_n'\geq 0$ such that $\hat{u}_D(z_n,z_n')>n^2a$, which implies $V((\frac{1}{n}\delta_{z_n}+ \frac{n-1}{n}\delta_0,\frac{1}{n}\delta_{z_n'}+ \frac{n-1}{n}\delta_0))>a$ and hence $(\frac{1}{n}\delta_{z_n}+ \frac{n-1}{n}\delta_0,\frac{1}{n}\delta_{z_n'}+ \frac{n-1}{n}\delta_0)\succ (\delta_{\cone/2},\delta_0)$ for each $n$.  However, by continuity of $\succsim$ over product lotteries, as $n\rightarrow \infty$, we must have $(\delta_0,\delta_0)\succsim  (\delta_{\cone/2},\delta_0)$, a contradiction.

Then we show that $\hat{u}_D$ is continuous. Again, we normalize $\hat{u}_D(0,0)=0$. Suppose by contradiction that $\hat{u}_D$ is not continuous, then we can find $(x,y)\in X_1\times X_2$ and a sequence $(x_n,y_n)\rightarrow (x,y)$ such that $\lim_{n\rightarrow \infty} \hat{u}_D(x_n,y_n)\neq \hat{u}_D(x,y)$. Then there exists a bounded subsequence of $\{(x_n,y_n)\}$ (still denoted as $\{(x_n,y_n)\}$ given there is no confusion) such that either $\hat{u}_D(x_n,y_n)\leq \hat{u}_D(x,y)$ for all $n$ or $\hat{u}_D(x_n,y_n)\geq \hat{u}_D(x,y)$ for all $n$. By symmetry, we will focus on the former case. 
Since $\hat{u}_D$ is bounded, $\{\hat{u}_D(x_n,y_n)\}_{n\geq 1}$ admits a convergent subsequence (again we still denote the subsequence as the sequence itself). Then it must be the case that $\lim_{n\rightarrow \infty}\hat{u}_D(x_n,y_n)=a<b = \hat{u}_D(x,y)$.

We claim that we can find some $P\in \cP$ with $V(P)=\frac{a+b}{2}$. To see this, first suppose that $x_n=x$ for all $x$ large enough. Without loss of generality, we can assume $x_n=x$ for all $x$. If $x\neq \lcone$, then fix any $\lcone<x'<x$, we have $\hat{u}_D(x',y_m)<\frac{a+b}{2}$ for some $m$ large enough since $\hat{u}_D$ is monotone and $\lim_{n\rightarrow \infty}\hat{u}_D(x_n,y_n)=a$. Then there exists $\eta\in (0,1)$ such that $V(\eta(\delta_{x'},\delta_{y_m})+(1-\eta)(\delta_{x},\delta_{y}))=\eta \hat{u}_D(x',y_m)+(1-\eta)b=\frac{a+b}{2}$. If $x=\lcone$, then fix any $x<x'<\cone$, we have $\hat{u}_D(x',y)>b$ and $\hat{u}_D(x,y_m)<\frac{a+b}{2}$ for some $m$ large enough. Again, there exists $\eta\in (0,1)$ such that $V(\eta(\delta_{x},\delta_{y_m})+(1-\eta)(\delta_{x'},\delta_{y}))=\eta \hat{u}_D(x,y_m)+(1-\eta)\hat{u}_D(x',y)=\frac{a+b}{2}$. Now suppose that we can find $m$ large enough such that $x_m\neq x$ and $\hat{u}_D(x_m,y_m)<\frac{a+b}{2}$. Then there exists $\eta\in (0,1)$ such that $V(\eta(\delta_{x_m},\delta_{y_m})+(1-\eta)(\delta_{x},\delta_{y}))=\eta \hat{u}_D(x_m,y_m)+(1-\eta)b=\frac{a+b}{2}$.

As $ \lim_{n\rightarrow \infty}\hat{u}_D(x_n,y_n)=a<\frac{a+b}{2}$, for $n$ large enough, we have $\hat{u}_D(x_n,y_n)<\frac{a+b}{2}<b$, that is, $(\delta_{x_n},\delta_{y_n})\prec P$. Let $n$ goes to infinity and we know $(\delta_{x_n},\delta_{y_n})\xrightarrow[]{w} (\delta_x,\delta_y)$. By Axiom Topological Continuity over Product Lotteries (the second part of Axiom WC), $(\delta_x,\delta_y)\precsim P$, that is, $b<\frac{a+b}{2}$, a contradiction. 
Hence $\hat{u}_D$ is continuous and this completes the proof.\end{proof}

\bigskip

{\bf Step 4: Finally we check the consistency of the previous representations on $\hat{\cP}$.} 

Now we have two representations on $\hat{\cP}$: the EU-CN, GBIB-CN and GFIB-CN representations in {\bf Step 1} and the KP-style representations in \lemmaref{lemma_kp}. Both of them represent $\succsim$ on $\hat{\cP}$ and we will explore the implications of such consistency. 

By \lemmaref{lemma_kp}, we know that $\succsim$ on $\hat{\cP}$ can be represented by 
$$U(P_1,P_2)= \sum_x \hat{u}_D(x, CE_{v_x}(P_{2}))P_1(x), \forall (P_1,P_2)\in \hat{\cP}$$
where $\hat{u}_D$ and $v_x$ for all $x\in X_1$ are regular.

First, suppose that $\succsim$ admits an EU-CN representation $w$ on $\hat{\cP}$. Assume that $\cone>0$. The case with $\cone=0$ is symmetric. Fix $0<a<\cone$. As $w$ and $\hat{u}_D$ are unique up to a positive affine transformation, we can normalize $w(0,0)=\hat{u}_D(0,0)=0$ and $w(a,0)=\hat{u}_D(a,0)=b>0$.  There exists a continuous and monotone function $\phi: w(X_1,X_2)\rightarrow \bR$ such that for all $P\in \hat{\cP}$,
$$U^{KP}(P_1,P_2) = \phi\circ U^{EU-CN}(P_1,P_2).$$

Now focus on $\lxone\times \{\delta_y\}$ for some $y\in X_2$. We know for all $p\in \lxone$,
$$U^{KP}(p,\delta_y)= \sum_{x}\hat{u}_D(x,y)p(x) = \phi\circ U^{EU-CN}(p,\delta_y) = \phi[\sum_xw(x,y)p(x)].$$
Then we know $\phi$ must be linear on $w(X_1,y)$ for each y. By continuity of $w$, by ranging over $y\in X_2$, we know  $\phi$ must be linear on its domain $w(X_1,X_2)$. That is, for all $t_1,t_2\in w(X_1,X_2)$ and $\alpha\in (0,1)$, $\phi(\alpha t_1 + (1-\alpha)t_2) = \alpha \phi(t_1) + (1-\alpha)\phi(t_2)$. Also, by our normalization,  $\phi(0)=0$ and $\phi(b)=b$. Then for any $t\in (0,b)$, $\phi(t)=\phi(\frac{t}{b}\cdot b + (1-\frac{t}{b})\cdot 0) = \frac{t}{b} \phi(b) + (1-\frac{t}{b})\phi(0)= t$. We show also that $h(t)=t$ for $t>b$ or $t<0$. Thus $w\equiv \hat{u}_D$.

Second, suppose that $\succsim$ admits a GBIB-CN representation $(w,v'_1,v'_2,H_2)$. Then for each $x\in  X_1$, on $\{\delta_x\}\times \lxtwo$, 
$$U^{GBIB-CN}(\delta_x,p)= w(x,CE_{v'_2}(p))$$
Similarly, $U^{KP}(\delta_x,p) = \hat{u}_D(x,CE_{v_x}(p))$. Since $v_x$ and $v'_2$ are regular, consistency of BIB-CN and KP on $\{\delta_x\}\times \lxtwo$ requires that $v_x$ is a positive affine transformation of $v'_2$, that is, $CE_{v_x}(p)= CE_{v'_2}(p)$ for all $x\in X_1$. Hence  for any $P\in \cP$,  \begin{align*}
   U^{KP}(P)&= \sum_x \hat{u}_D(x,CE_{v_x}(P_{2|x}))P_1(x) \\
  &= \sum_x \hat{u}_D(x,CE_{v'_2}(P_{2|x}))P_1(x)
\end{align*}
Thus,  $\succsim$ admits an BIB representation $(\hat{u}_D,v'_2)$.

Finally, suppose that $\succsim$ admits a GFIB-CN representation $(w,v'_1,v'_2,H_1)$. That is, 
$$V^{GFIB-CN}(P_1,P_2)= \begin{cases}
w(CE_{v'_1}(P_1),CE_{v'_2}(P_2)), \hbox{~if~} CE_{v'_1}(P_1)\not\in H_1\\
\sum w(CE_{v'_1}(P_1),y)P_2(y), \hbox{~if~} CE_{v'_1}(P_1)\in H_1\end{cases}$$

 As both utility functions represent $\succsim$ on $\hat{\cP}$, we can find a monotone and continuous function $\phi:w(X_1,X_2)\rightarrow \bR$ with 
$$U^{KP}(P_1,P_2)= \phi\circ U^{GFIB-CN}(P_1,P_2),\forall P\in\hat{\cP}.$$

We first focus on $\lxone \times \{p_2\}$ for some $p_2\in \lxtwo$. Based on the GFIB-CN, the preference $\succsim_{1|p_2}$ admits an EU representation with the index $v'_1$. Also, note that $U^{KP}$ is linear in the first source for fixed $p_2$. Hence for each $p_2\in \lxtwo$, $w(\cdot, CE_{v_x}(p_2))$ must be a positive affine transformation of $v'_1$. That is, there exists functions $\hat{\alpha}$ and $\hat{\beta}$ defined on $\lxtwo$ such that $\hat{a}(p_2)>0$, $\hat{b}(p_2)\in\bR$ for all $p_2\in \lxtwo$ and 
\begin{equation*}
    w(x,CE_{v_x}(p_2)) = \hat{a}(p_2)v'_1(x) + \hat{b}(p_2), \forall x\in X_1.
\end{equation*}
Specifically, if $p_2=\delta_y$ for some $y\in X_2$, then we know 
$$w(x,y)=\hat{a}(\delta_y)v'_1(x) + \hat{b}(\delta_y), \forall (x,y)\in X_1\times X_2.$$
Define $a,b$ as functions on $X_2$ by $a(y)=\hat{a}(\delta_y)$ and $b(y)=\hat{b}(\delta_y)$. This implies 
\begin{align*}
     w(x,CE_{v_x}(p_2)) &= \hat{a}(p_2)v'_1(x) + \hat{b}(p_2)\\
     &= a(CE_{v_x}(p_2)) v'_1(x) + b(CE_{v_x}(p_2)), \forall x\in X_1, q\in \lxtwo
\end{align*}

If $H_1=\emptyset$, then GFIB-CN reduces to NB, which is a special case of GBIB-CN and it has been covered in the second case. From now on, assume $H_1\neq\emptyset$. For any fixed $x_1,x_2\in X_1$ with $x_1<x_2$, we can always normalize $v'_1(x_1)=0$ and $v'_1(x_2)=1$. Plug  the two values into the previous equation, we get for any $p_2\in \lxtwo$,
$$\hat{b}(p_2)= b(CE_{v_{x_1}}(p_2))~,~ \hat{a}(p_2)= a(CE_{v_{x_2}}(p_2))+ b(CE_{v_{x_2}}(p_2)) - b(CE_{v_{x_1}}(p_2)).$$

First, suppose that there exists $x_1<x_2$ with $x_1,x_2\not\in H_1$. Then $\hat{u}_D(x_i, CE_{v_{x_i}}(p_2))= \phi(  w(x_i,CE_{v'_2}(p_2)))$ for $i=1,2$ and hence  for any $x\in X_1$, $p_2\in\lxtwo$, 
$$\hat{u}_D(x, CE_{v_{x}}(p_2)) = \phi(  w(x_2,CE_{v'_2}(p_2)))\cdot v'_1(x) + \phi(  w(x_1,CE_{v'_2}(p_2)))\cdot (1-v'_1(x)).$$

Clearly, the RHS depends on $p_2$ only through its certainty equivalent under $v'_2$. Then $v_x$ must be a positive affine transformation of $v'_2$ for all $x\in X_1$ and the KP representation reduces to a BIB representation $(\hat{u}_D,v'_2)$. 

Second, suppose that there exists $x_1<x_2$ with $x_1,x_2\in H_1$ and $w(x_1,\cdot)$ is a positive affine transformation of $w(x_2,\cdot)$. As 
$x_1,x_2\in H$, we know $\hat{u}_D(x_i, CE_{v_{x_i}}(p_2))= \phi( \sum_y w(x_i,y)p_2(y))$ for $i=1,2$, which further implies for any $x\in X_1$, $p_2\in\lxtwo$, 
$$\hat{u}_D(x, CE_{v_{x}}(p_2)) = \phi( \sum_y w(x_2,y)p_2(y))\cdot v'_1(x) + \phi( \sum_y w(x_1,y)p_2(y))\cdot (1-v'_1(x)).$$
Since $w(x_1,\cdot)$ is a positive affine transformation of $w(x_2,\cdot)$., the above equation can be rewritten as 
$$\hat{u}_D(x, CE_{v_{x}}(p_2)) = g(x, \sum_y w(x_2,y)p_2(y))$$
for some function $g$. Notice that $g$ depends on $p_2$ only through its expected value under $w(x_2,\cdot)$. Hence the $v_x$ must be a positive affine transformation of  $w(x_2,\cdot)$ for all $x\in X_1$. Denote $w(x_2,\cdot)$ as $w(\cdot)$, then the KP representation reduces to a BIB representation $(\hat{u}_D,w)$.

Finally, suppose $cl(H_1)=X_1$ and for all $x_1\neq x_2$, $w(x_1,\cdot)$ is not a positive affine transformation of $w(x_2,\cdot)$. Without loss of generality, we suppose $\cone>0$ and set $w(0,0)=0$, $x_1=0,x_2=\cone/2$. Denote $w(x_1,\cdot)=w_1(\cdot)$ and $w(x_2,\cdot)=w_2(\cdot)$. Then we know $w_1$ is not a positive affine transformation of $w_2$. We denote the preference represented by EU index $w_i$ as $\succsim_{w_i}$. Then $\succsim_{w_1}\neq \succsim_{w_2}$. Since $\phi$ is strictly increasing and continuous, $\phi$ is almost everywhere differentiable. Recall our normalization that $v'_1(x_1)=0$ and $v'_1(x_2)=1$.
Then the previous argument applies and we can show that for any $x\in X_1$, $p_2\in\lxtwo$, 
$$\hat{u}_D(x, CE_{v_{x}}(p_2)) = \phi( w_2(p_2)) v'_1(x) + \phi( w_1(p_2)) (1-v'_1(x)).$$
Denote
$$\mathcal{H}= \big\{ p\in \lxone: \exists~q\in \lxtwo, ~s.t.~\big(w_1(p)-w_1(q)\big)\cdot\big(w_2(p)-w_2(q)\big)<0 \big\}$$
That is, $\mathcal{H}$ is the set of single-source lotteries such that there exists another lottery where the two preferences represented by $w_1$ and $w_2$ disagree on the ranking of the two lotteries. We claim that $\mathcal{H}=\lxone \backslash\{\delta_{\cone},\delta_{\lcone}\}$. 
Clearly $\delta_{\cone},\delta_{\lcone}\not\in \mathcal{H}$. 

For any $q\neq \delta_{\cone},\delta_{\lcone}$ with $q\sim_{w_i}\delta_y$ for $i=1,2$, we know $y\in (\lcone,\cone)$. Since $\succsim_{w_1}\neq \succsim_{w_2}$, we can find $q'\sim_{w_1}\delta_y$ and $q'\not\sim_{w_2}\delta_y$, otherwise $\succsim_{w_1}$ and $\succsim_{w_2}$ share the same indifference curve with certainty equivalent $y\in (\lcone,\cone)$ and they should be the same EU preference. Then we can assume $q'\not\sim_{w_2}q$, otherwise, we can take a mixture between $q'$ and $\delta_y$. Without loss of generality, assume $q'\succ_{w_2}q$. Then we can choose $q''$ slightly dominated by $q'$ and by continuity, we have $q''\succ_{w_2}q$ and $q\succ_{w_1}q''$. 

Now for any $q\in \mathcal{H}$, take $p$ such that, without loss of generality, $p\succ_{w_2}q$ and $q\succ_{w_1}p$. 
Then we can find $x\in (0,1)$ such that
\begin{equation*}
    \frac{1-v'_1(x)}{v'_1(x)} = \frac{\phi(w_2(p))-\phi(w_2(q))}{\phi(w_1(q)-\phi(w_1(p))}.
\end{equation*}
Such $x$ exists as the RHS is strictly positive and the the range of $\frac{1-v'_1(x)}{v'_1(x)}$ for $x\in(0,1)$ is $(0,+\infty)$. Rearrange the above equation, we can get 
$$\phi(w_2(q))v'_1(x) + \phi(w_1(q))(1-v'_1(x)) = \phi(w_2(p))v'_1(x) + \phi(w_1(p))(1-v'_1(x)) $$
that is, $(\delta_x,p)\sim (\delta_x, q)$. By conditional independence, for any $\beta\in [0,1]$, $(\delta_x, \beta p +(1-\beta)q)\sim (\delta_x,p)\sim (\delta_x, q)$. This implies that for all $\beta\neq \beta'$
$$\frac{\phi(\beta w_2(p)+(1-\beta)w_2(q))-\phi(\beta' w_2(p)+(1-\beta')w_2(q))}{\phi(\beta' w_1(p)+(1-\beta')w_1(q))-\phi(\beta w_1(p)+(1-\beta)w_1(q))}= \frac{1-v'_1(x)}{v'_1(x)}= \frac{\phi(w_2(p))-\phi(w_2(q))}{\phi(w_1(q)-\phi(w_1(p))}.$$

Take $\beta'=0$ and let $\beta\rightarrow 0^+$, we have 
$$\frac{\partial_+\phi(w_2(q)) \slash \partial{x} }{\partial_-\phi(w_1(q))\slash \partial{x} }= \frac{\phi(w_2(p))-\phi(w_2(q))}{\phi(w_1(q)-\phi(w_1(p))} \frac{w_1(q)-w_1(p)}{w_2(p)-w_2(q)}. $$
We argue that the two semi-derivatives are well-defined. The RHS is always well-defined. Notice that by continuity of $\phi$, we can change $q$ slightly to change $w_2(q)$ without changing $w_1(q)$. This is possible as $\succsim_{w_1}\neq \succsim_{w_2}$. If the semi-derivative $\partial_-\phi(w_1(q))\slash \partial{x}$ does not exist, then $\partial_+\phi(x)\slash \partial{x}$ does not exist for $x$ in a open interval, which contradicts with the fact that $\phi$ is almost everywhere differentiable. A similar proof can show that $\partial_+\phi(w_2(q)) \slash \partial{x}$ is well-defined.

Let $\beta=1$ and $\beta'\rightarrow 1^-$, we can get 
\begin{equation}\label{eq_thm2_constant}
    \frac{\partial_-\phi(w_2(p)) \slash \partial{x} }{\partial_+\phi(w_1(p))\slash \partial{x} }= \frac{\phi(w_2(p))-\phi(w_2(q))}{\phi(w_1(q)-\phi(w_1(p))} \frac{w_1(q)-w_1(p)}{w_2(p)-w_2(q)} = \frac{\partial_+\phi(w_2(q)) \slash \partial{x} }{\partial_-\phi(w_1(q))\slash \partial{x} }.
\end{equation}
Again, all the semi-derivatives are well-defined.

Fix $p$ and choose $q'\sim_{w_2}q$. we can find $q'\not\sim_{w_1} q$ and $q'\succ_{w_1}p$. Without loss of generality, let $w_1(q')>w_1(q)$. The other case can be proved symmetrically. Then for any $\alpha\in (0,1)$, we can redo the above calculation for $\alpha q' +(1-\alpha)q \succ_{w_1} p$ and $p\succ_{w_2} q' +(1-\alpha)q\sim_{w_2}q $. Then the left equality of equation (\ref{eq_thm2_constant}) becomes: 
\begin{equation*}
    \frac{\partial_-\phi(w_2(p)) \slash \partial{x} }{\partial_+\phi(w_1(p))\slash \partial{x} }=\frac{\phi(w_2(p))-\phi(w_2(q))}{w_2(p)-w_2(q)} \frac{\alpha w_1(q) + (1-\alpha)w_1(q')-w_1(p)}{\phi(\alpha w_1(q) + (1-\alpha)w_1(q'))-\phi(w_1(p))}.
\end{equation*}
This implies that 
$$\frac{\phi(\alpha w_1(q) + (1-\alpha)w_1(q'))-\phi(w_1(p))}{\alpha w_1(q) + (1-\alpha)w_1(q')-w_1(p)}$$
 is a constant as $\alpha$ varies in $(0,1)$.
Hence $\phi'(z)=C$  for all $z\in (w_1(q),w_1(q'))$, where $C>0$ is a constant.

Moreover, for any $z_1,z_2\in w(X_1,X_2)$  with $w(\cone,\ctwo)>z_1>z_2>w(\lcone,\lctwo)$. 
Given ${y}\in (\lcone,\cone)$, for any $\beta\in (0,1)$, we know $\beta p+(1-\beta)\delta_{{y}}\succsim_{w_2} \beta q+(1-\beta)\delta_{{y}}\sim_{w_2}\beta q'+(1-\beta)\delta_{{y}}$ and $\beta q'+(1-\beta)\delta_{{y}}\succsim_{w_1} \beta q+(1-\beta)\delta_{{y}}\sim_{w_1}\beta p+(1-\beta)\delta_{{y}}$. By the above argument, we know that $\phi'(z)$ is a constant for $z\in (\beta w_1(q) + (1-\beta)w_1(\delta_y), \beta w_1(q') + (1-\beta)w_1(\delta_y))$. By continuity of $w_1$, those open intervals are intersecting with each other.  We make ${y}$ large enough and small enough respectively, so that we can get an open cover of $[z_2,z_1]$. Then there exists a finite subcover can we have $\phi'(z_1)=\phi'(z_2)=C$ for any $w(\cone,\ctwo)>z_1>z_2>w(\lcone,\lctwo)$ with $z_1,z_2\in w(X_1,X_2)$. This implies $\phi(z)=Cz+ b$ for $z\in (w(\lcone,\lctwo),w(\cone,\ctwo))\cap w(X_1,X_2)$ and by continuity of $\phi$, $\phi(z)=Cz+ b$ holds for all $z\in w(X_1,X_2)$.

Since $w$ and $\hat{u}_D$ are unique up to positive affine transformation, we can set $\phi(z)=z$ for all $z$ without loss of generality. Then we know that for all $x\in X_1,p_2\in \lxtwo$
$$\hat{u}_D(x,CE_{v_x}(p_2))=\sum_y w_2(y)p_2(y)v'_1(x) + \sum_y w_1(y)p_2(y) (1-v'_1(x))$$
Thus the representation on $\cP$ is given by 
\begin{align*}
   U^{KP}(P)&=\sum_x \hat{u}_D(x,CE_{v_x}(P_{2|x}))P_1(x)\\
   &= \sum_{x,y}\big( [w_2(y)-w_1(y)]v'_1(x) + w_1(y)\big)P(x,y), 
\end{align*}
for all $P\in\cP$. This is of course an EU representation. 

As a summary of {\bf Step 4}, in all possible cases, $\succsim$ on $\cP$ can be represented by either an EU or a BIB representation. 

Combining the results in {\bf Step 1} and {\bf Step 4}, given the axioms stated in \thmref{thm_BIB}, the relation $\succsim$ admits one of the following representations: EU, BIB, EU-CN, GBIB-CN and GFIB-CN.\end{proof}

\end{document}